\newcommand{\bra}[1]{\left\langle{#1}\right\vert}
\newcommand{\ket}[1]{\left\vert{#1}\right\rangle}
\newcommand{\qw}[1][-1]{\ar @{-} [0,#1]}
\newcommand{\qwx}[1][-1]{\ar @{-} [#1,0]}
\newcommand{\cwx}[1][-1]{\ar @{=} [#1,0]}
\newcommand{\gate}[1]{*{\xy *+<.6em>{#1};p\save+LU;+RU **\dir{-}\restore\save+RU;+RD **\dir{-}\restore\save+RD;+LD **\dir{-}\restore\POS+LD;+LU **\dir{-}\endxy} \qw}
\newcommand{\meter}{\gate{\xy *!<0em,1.1em>h\cir<1.1em>{ur_dr},!U-<0em,.4em>;p+<.5em,.9em> **h\dir{-} \POS <-.6em,.4em> *{},<.6em,-.4em> *{} \endxy}}
\newcommand{\control}{*!<0em,.025em>-=-{\bullet}}
\newcommand{\ctrl}[1]{\control \qwx[#1] \qw}
\newcommand{\targ}{*!<0em,.019em>=<.79em,.68em>{\xy {<0em,0em>*{} \ar @{ - } +<.4em,0em> \ar @{ - } -<.4em,0em> \ar @{ - } +<0em,.36em> \ar @{ - } -<0em,.36em>},<0em,-.019em>*+<.8em>\frm{o}\endxy} \qw}
\newcommand{\multigate}[2]{*+<1em,.9em>{\hphantom{#2}} \qw \POS[0,0].[#1,0];p !C *{#2},p \save+LU;+RU **\dir{-}\restore\save+RU;+RD **\dir{-}\restore\save+RD;+LD **\dir{-}\restore\save+LD;+LU **\dir{-}\restore}
\newcommand{\ghost}[1]{*+<1em,.9em>{\hphantom{#1}} \qw}
\newcommand{\lstick}[1]{*!R!<.5em,0em>=<0em>{#1}}
\newcommand{\Qcircuit}[1][0em]{\xymatrix @*[o] @*=<#1>}
\begin{document}

\newcommand{\ud}{\mathrm{d}}
\newcommand{\braket}[2]{\langle #1|#2\rangle}
\newcommand{\Bra}[1]{\left<#1\right|}
\newcommand{\Ket}[1]{\left|#1\right>}
\newcommand{\Braket}[2]{\left< #1 \right| #2 \right>}
\renewcommand{\th}{^\mathrm{th}}
\newcommand{\tr}{\mathrm{Tr}}
\newcommand{\mx}{\begin{array}{l} \includegraphics[width=0.2in]{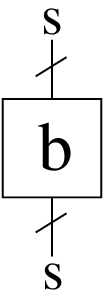} \end{array}}
\newcommand{\ttr}{\widetilde{\tr}}
\newcommand{\id}{\mathds{1}}
\newcommand{\cu}{\mathcal{U}}
\newcommand{\ca}{\mathcal{A}}
\newcommand{\hv}{\hat{V}}

\newtheorem{lemma}{Lemma}
\newtheorem{theorem}{Theorem}
\newtheorem{definition}{Definition}
\newtheorem*{prop1}{Proposition 1}
\newtheorem{proposition}{Proposition}
\newtheorem{corollary}{Corollary}
\newtheorem{A}{Proposition $A_n$}
\newtheorem{B}{Proposition $B_n$}
\newtheorem{problem}{Problem}

\title{Quantum Computation Beyond the Circuit Model}

\author{Stephen Paul Jordan}
\department{Department of Physics}
\degree{Doctor of Philosophy in Physics}
\degreemonth{May}
\degreeyear{2008}
\thesisdate{May 2008}


\supervisor{Edward H. Farhi}{Professor of Physics}

\chairman{Thomas J. Greytak}{Professor of Physics}

\maketitle



\cleardoublepage
\setcounter{savepage}{\thepage}
\begin{abstractpage}
The quantum circuit model is the most widely used model of quantum
computation. It provides both a framework for formulating quantum
algorithms and an architecture for the physical construction of
quantum computers. However, several other models of
quantum computation exist which provide useful alternative frameworks
for both discovering new quantum algorithms and devising new physical
implementations of quantum computers. In this thesis, I first present
necessary background material for a general physics audience and discuss
existing models of quantum computation. Then, I present three new
results relating to various models of quantum computation:
a scheme for improving the intrinsic fault tolerance of adiabatic
quantum computers using quantum error detecting codes, a proof that a
certain problem of estimating Jones polynomials is complete for the
one clean qubit complexity class, and a generalization of perturbative
gadgets which allows $k$-body interactions to be directly simulated
using 2-body interactions. Lastly, I discuss general principles
regarding quantum computation that I learned in the course of my
research, and using these principles I propose directions for future
research.

\end{abstractpage}


\cleardoublepage

\section*{Acknowledgments}

I first wish to thank the people who have participated most directly
in my formation as a quantum information scientist. At the top of this
list is Eddie Farhi, who I thank for offering me invaluable help on
matters both scientific and logistical, and for being a pleasure to
work with. Without him, my graduate school experience would not have
been as rich and rewarding as it was. I thank Peter Shor for being
always willing to chat about any scientific topic, being a frequent
collaborator, serving on my general exam and thesis committees, and
teaching quantum computation courses that I greatly enjoyed. In many
ways, I feel that he is like a second advisor to me. I thank Isaac
Chuang for serving on my general exam and thesis committees, teaching
quantum computation courses which I greatly benefitted from, and for
interesting conversations. I thank Seth Lloyd for interesting
conversations and for serving on my general exam committee. I thank
Al{\'a}n Aspuru-Guzik at Harvard for bringing me in on his chemical
dynamics project, and for spreading his enthusiasm for research. I
also thank each of these people for aiding me in obtaining a
postdoctoral position.

I wish to thank those who have collaborated with me directly on
papers, some of which form much of the content of this thesis: Eddie
Farhi, Peter Shor, Al\'an Aspuru-Guzik, Ivan Kassal, Peter Love,
Masoud Mohseni, David Yeung, and Richard Cleve. I am also grateful to
Richard Cleve for many long and interesting conversations about
quantum computation.  I thank John Preskill, Howard Barnum, Lov
Grover, Daniel Lidar, Joe Traub, and the people of Perimeter Institute
for inviting me to visit them, and having interesting conversations
when I did. I am grateful to many people for having interesting
conversations that shaped my thoughts on quantum computing, and for
being helpful in various ways. These include Scott Aaronson, Dave
Bacon, Jacob Biamonte, Andrew Childs, Wim van Dam, David DiVincenzo,
Pavel Etingof, Steve Flammia, Andrew Fletcher, Joe Giraci, Jeffrey
Goldstone, Daniel Gottesman, Sam Gutmann, Aram Harrow, Elham Kashefi,
Jordan Kerenidis, Ray Laflamme, Mike Mosca, Andrew Landahl, Debbie
Leung, Michael Levin, William Lopes, Carlos Mochon, Shay Mozes, Markus
Mueller, Daniel Nagaj, Ashwin Nayak, Robert Raussendorf, Mark Rudner,
Rolando Somma, Madhu Sudan, Jake Taylor, David Vogan, and Pawel
Wocjan.

I wish to thank the people and organizations who have supported me and
my research. I thank the society of presidential fellows for
supporting me during my first year at MIT. I thank Isaac Chuang and
Ulrich Becker for giving me the opportunity to TA for their junior lab
course during my second year. I thank Mark Heiligman, T.R. Govindan,
Mel Currie, and all the other people of ARO and DTO for supporting me
for the rest of my time at graduate school as a QuaCGR fellow. I thank
Senthil Todadri and Alan Guth for serving as academic advisors. I thank
the US Department of Energy for funding MIT's center for theoretical
physics, where I work, and I thank the administrative staff of CTP,
Scott Morley, Charles Suggs, and Joyce Berggren for being so helpful
and making CTP a functional and pleasant place to work.

Lastly, but by no means least importantly I wish to thank those people
who have contributed to this thesis indirectly by contributing to my
development in general: my parents Eric and Janet, my wife Sara,
my sisters Katherine and Elizabeth, my undergraduate research 
advisors, Moses Chan, Rafael Garcia, and Vincent Crespi, my high
school physics and chemistry teacher Kevin McLaughlin, and all of the
friends and teachers who I have been lucky enough to have.


\pagestyle{plain}

\tableofcontents

\chapter{Introduction}
\label{introchap}

\section{Classical Computation Preliminaries}
\label{classical_prelim}

This thesis is about quantum algorithms, complexity, and models of
quantum computation. In order to discuss these topics it
is necessary to use notations and concepts from classical computer
science, which I define in this section.

The ``big-O'' family of notations greatly aids in analyzing both
classical and quantum algorithms without getting mired in minor
details. Although it may seem like a trivial notation, it is the first
step in a chain of increasing abstraction which allows computer
scientists to analyze the general laws of computation which apply
whether the computer is using base 2 or base 20, and whether it is
made of transistors or tinker toys. By following this chain we will reach
the major open questions about complexity classes and their relations
to one another. The big-O notation is defined as follows. 

\begin{definition}
Given two functions $f(n)$ and $g(n)$, $f(n)$ is $O(g(n))$ if
there exist constants $n_0$ and $c > 0$ such that $|f(n)| < c |g(n)|$ for
all $n > n_0$.
\end{definition}

\begin{definition}
Given two functions $f(n)$ and $g(n)$, $f(n)$ is $\Omega(g(n))$ if
there exist constants $n_0$ and $c > 0$ such that $|f(n)| > c |g(n)|$
for all $n > n_0$.
\end{definition}

\begin{definition}
Given two functions $f(n)$ and $g(n)$, $f(n)$ is $\Theta(g(n))$ if it
is $O(g(n))$ and $\Omega(g(n))$.
\end{definition}

Thus, $O$ describes upper bounds, $\Omega$ describes lower bounds, and
$\Theta$ describes asymptotic behavior modulo an overall
multiplicative constant. The standard way to describe the efficiency
of an algorithm is to use ``big-O'' notation to describe the number of
computational steps the algorithm uses to solve a problem as a
function of number of bits of input. For example, the
standard method for multiplying two $n$-digit numbers that is taught in
elementary school uses $\Theta(n^2)$ elementary operations in which
individual digits are manipulated. By using big-O notation we can avoid
distinguishing between $2 n^2$ and $50 n^2$ which allows us to
disregard unnecessary details. 

Moving one level higher in abstraction, we reach computational
complexity classes. These are sets of problems solvable with a given
set of computational resources. For example, the complexity class P is
the set of problems solvable on a Turing machine in a number of steps
which scales polynomially in the number of bits of input $n$, that is,
with $O(n^c)$ steps for some constant $c$. Note that for a problem to
be contained in P, all problem instances of size $n$ must be solvable
in time $\mathrm{poly}(n)$, including highly atypical worst-case
instances.

Complexity classes are usually defined in terms of decision
problems. These are problems that admit a yes/no answer, such as the
problem of determining whether a given integer is prime. Many problems
are not of this form. For example, the problem of factoring
integers has an output which is a list of prime factors. However, it
turns out that in almost all cases, problems can be reduced with
polynomial overhead to decision versions. For example, consider the
problem where, given two numbers $a$ and $b$, you are asked to answer
whether $a$ has a prime factor smaller than $b$. Given a polynomial
time algorithm solving this problem, one can construct a polynomial
time algorithm for factoring using this algorithm as a
subroutine\footnote{This, like many reductions to decision problems,
  can be done using the process called binary search.}  Thus by
considering only decision problems (or more technically, the
associated languages), complexity theorists are simplifying things
without losing anything essential. Because problems are usually
equivalently hard to their decision versions, we will often gloss over
the distinction between the problems and their decision versions in
this thesis.

Some complexity classes describe models of computation which are
essentially realistic. P describes problems solvable in polynomial
time on Turing machines. Until recently, every plausible deterministic
model of universal computation has led to the same set of problems
solvable in polynomial time. That is, all models of universal
computation could be simulated with polynomial overhead by a Turing
machine, and vice versa. Thus the complexity class P was regarded as a
robust description of what could be efficiently computed
deterministically in the real world, which captures something
fundamental and is not just an artifact of the particular model of
computation being studied.

For example, in the standard formulation of a Turing machine, each
location on the tape can take two states. That is, it contains a
bit. If you instead allow $d$ states (a ``dit'') then the speedup is
only by a constant factor, which already disappears from our notice
when we use big-O notation. Furthermore, even parallel computation,
although useful in practice, does not generate a complexity
class distinct from P, provided one allows at most polynomially many
processors as a function of problem size.

One may ask why polynomial time is chosen as the definition of
efficiency. Certainly it would be a stretch to consider an algorithm
operating in time $n^{25}$ efficient, or an algorithm operating in time
$2^{\lceil 0.0001 \sqrt{n} \rceil}$ inefficient. There are several
reasons for using polynomial time as a mathematical formalization of
efficiency. First of all, it is mathematically convenient. It is a
robust definition which allows one to ignore many details of the
implementation. Furthermore, asymptotic complexity is more robust than
the complexity of small instances, which can be influenced by the
presence of lookup tables or other preprocessed information hidden in
the program. Secondly, it appears to do a good job of sorting the
efficient algorithms from the inefficient algorithms in practice. It
is rare to obtain a polynomial time classical algorithm with runtime
substantially greater than $n^3$ or a superpolynomial time algorithm
with runtime substantially less than $2^n$. Furthermore, whenever
polynomial time algorithms are found with large exponents, it usually
turns out that either the runtime in practice is much better than the
worst case theoretical runtime, or a more efficient algorithm is
subsequently found. 

Sometimes a problem not known to be in P seems to be efficiently
solvable in practice. This can happen either because the problem is
not in P but the worst case instances are hard to construct, or
because the problem actually is in P but the proof of this fact is
difficult. Linear programming provides an interesting and historically
important example of the latter. For this problem the best known
algorithm was for a long time the simplex method, which had
exponential worst-case complexity, but was generally quite efficient
in practice. An algorithm with polynomial time worst-case complexity
has since been discovered. 

One can also consider probabilistic computation. That is, one can give
the computer the ability to generate random bits, and demand only that
it give the correct answer to a problem with high probability. It is
clear that the set of problems solvable in this way contains P and
possibly goes beyond it. The standard formalization of this notion is
the complexity class BPP, which is defined as the set of decision problems
solvable on a probabilistic Turing machine with probability at least
$2/3$ of giving the correct answer. (BPP stands for Bounded-error
Probabilistic Polynomial-time.) Note that, like P, BPP is defined using
worst-case instances. The probabilities appear not by
randomizing over problem instances, but by randomizing over the random
bits used in the probabilistic algorithm. The probability $2/3$
appearing in the definition of BPP may appear arbitrary, and in
addition, not very high. However, choosing any other fixed probability
strictly between $1/2$ and $1$ yields the same complexity class. This
is because one can amplify the success probability arbitrarily by
running the algorithm multiple times and taking the majority vote.

Prior to the discovery of quantum computation, no plausible model of
computation was known which led to a larger complexity class than
BPP. Just as all plausible models of classical deterministic
computation turned out to be equivalent up to polynomial overhead, the
same was true for classical probabilistic computation. Furthermore, it
is now generally suspected that BPP=P. In practice, randomized
algorithms usually work just fine if the random bits are replaced by
pseudorandom bits, which although generated by deterministic
algorithms, pass most naive tests of randomness (e.g. the various means
and correlations come out as one would expect for random bits, obvious
periodicities are absent, and so forth). The conjecture that P=BPP is
currently unproven, and finding a proof is a major open problem in
computer science. (Until recently, the problem of primality testing
was known to be in BPP but not P, increasing the plausibility that the
classes are distinct. However, a deterministic polynomial-time
algorithm for this problem was recently discovered.) The notion that
BPP captures the power of polynomial time computation in the real
world was eventually formalized as the strong Church-Turing thesis,
which states:
\\ \\
\emph{Any ``reasonable'' model of computation can be efficiently
  simulated on a probabilistic Turing machine.}
\\ \\
If BPP=P then dropping the word ``probabilistic'' results in an
equivalent claim. The strong Church-Turing thesis is named in
reference to the original Church-Turing thesis, which states
\\ \\
\emph{Any ``reasonable'' model of computation can be simulated on a
  Turing machine.}
\\ \\
The original Church-Turing thesis is a statement only about what is
computable and what is not computable, where no limit is made on the
amount of time or memory which can be used. Thus we have reached a
very high level of abstraction at which runtime and memory
requirements are ignored completely. It is perhaps not intuitively
obvious that with unlimited resources there is anything one
cannot compute. The fact that uncomputable functions exist was a
profound realization with an simple proof. One can see by
Cantor diagonalization that the set of all decision problems, \emph{i.e.}
the set of all maps from bitstrings to $\{0,1\}$, is uncountably
infinite. In contrast, the set of all computer programs, which can be
represented as bitstrings, is only countably infinite. Thus,
computable functions make up an infinitely sparse subset of all
functions. 

This leaves the question of whether one can find a natural function
which is uncomputable. In 1936, Alan Turing showed that the problem of
deciding whether a given program terminates is undecidable. This can
be proven by the following simple \emph{reductio ad absurdum}. Suppose
you had a program $A$, which takes two inputs, a program, and the data
on which the program is to act. $A$ then answers whether the given
program halts when run on the given data. One could use
$A$ as a subroutine to construct another program $B$ that takes a
single input, a program. $B$ determines whether the program halts when
given itself as the data. If the answer is yes, then $B$
jumps into an infinite loop, and if the answer is no then $B$
halts. By operating $B$ on itself one thus arrives at a
contradiction.

As we shall see in section \ref{algorithms}, quantum computers provide
the first significant challenge to the strong Church-Turing thesis. In
general, it is difficult to prove that one model of computation is
stronger than another. By discovering an algorithm, one can show that
a given model of computation can solve a certain problem with a
certain number of computational steps. However, in most cases it is
not known how to show that no efficient algorithm on a given model of
computation exists for a given problem. In 1994, Peter Shor discovered
a quantum algorithm which can factor any $n$-bit number in $O(n^3)$
time\cite{Shor_factoring}. There is no proof that this problem cannot
be solved in polynomial time by a classical computer. However, no
polynomial time classical algorithm for factoring has ever been
discovered, despite being studied since at least 200BC (\emph{cf.}
sieve of Eratosthenes). Furthermore, factoring has been well-studied
in modern times because a polynomial time algorithm for factoring
would allow the decryption of the RSA public key cryptosystem, which
is used ubiquitously for electronic transactions. The fact that
quantum computers can factor efficiently and classical computers can't
is one of the strongest pieces of evidence that quantum computers are
more powerful than classical computers.

A second piece of evidence for the power of quantum computers is that
quantum algorithms are known which can efficiently simulate the time
evolution of many-body quantum systems, a task which classical
computers apparently cannot perform despite decades of effort along
these lines. The search for polynomial time classical algorithms to
simulate quantum systems has been intense because they would have
large economic and scientific impact. For example, they would greatly
aid in the design of new materials and medicines, and could aid in the
understanding of mysterious condensed-matter systems, such as
high-temperature superconductors.

Quantum computers do not provide a challenge to the original
Church-Turing thesis. As we shall see in section \ref{quantum_prelim},
the behavior of a quantum computer can be completely predicted by
muliplying together a series of exponentially large unitary
matrices. Thus any problem solvable on a quantum computer in
polynomial time is solvable on a classical computer in exponential
time. Hence quantum computers cannot solve problems such as the
halting problem.

Other complexity classes relating to realistic classical models of
computation have been defined. (See \cite{Papadimitriou} for
overview.) These are weaker than BPP. The most
important of these for the purposes of this thesis are L and NC1. L
stands for Logarithmic space, and NC stands for Nick's Class. L is
the set of problems solvable using only logarithmic memory (other than
the memory used to store the input). The class NC1 is the set of
problems solvable using classical circuits of logarithmic
depth. Similarly, NC2 is the set of problems solvable in depth
$O(\log^2(n))$, and so on. Roughly speaking, NC1 can be identified
as those problems in P which are highly parallelizable. For a detailed
explanation of why this is a reasonable intepretation of this
complexity class see \cite{Papadimitriou}. For an illustration of the
meaning of circuit depth see figure \ref{depth}.

\begin{figure}
\begin{center}
\includegraphics[width=0.3\textwidth]{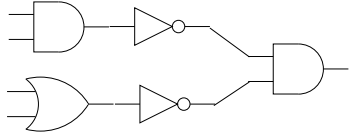}
\caption{\label{depth} The most common way to measure the complexity
  of a circuit is the number of gates, in this case five. However, one
  can also measure the depth. In this example, the circuit is three
  layers deep. The number of gates corresponds to the number of steps
  in the corresponding sequential algorithm. The depth corresponds to
  the number of steps in the corresponding parallel algorithm, since
  gates within the same layer can be performed in parallel.}
\end{center}
\end{figure}

I have described NC1 using logic circuits. The classical complexity classes
such as P and BPP can also be defined using logic circuits such
as the one shown in figure \ref{depth}. A given circuit takes a fixed
number of bits of input (four in the circuit of figure
\ref{depth}). Thus, an algorithm for a given problem corresponds to an
infinite family of circuits, one for each input size. It is tempting to
suggest that P consists of exactly those problems which can be solved
by a family of circuits in which the number of gates scales
polynomially with the input size. However, this is not quite
correct. The problem is that we have not specified how the circuits
will be generated. It is unreasonable to specify an algorithm by an
infinitely long description containing the circuit for each possible
input size. Such arbitrary families of circuits are called
``nonuniform''.

The set of problems solvable by polynomial size nonuniform circuits
may be much larger than P, because one can precompute the answers to
the problem and hide them in the circuits. One can even ``solve''
uncomputable problems this way. A uniform family of circuits is one
such that given an input size $n$, one can efficiently generate a
description of the corresponding circuit. One may for example demand
that a fixed Turing machine can produce a description of the circuit
corresponding to $n$, given $n$ as an input. In practice, a family of
circuits is usually described informally, such that it is easily
apparent that it is uniform. The set of decision problems efficiently
solvable by a uniform family of polynomial size circuits is exactly P.

While discussing circuits, it bears mentioning that the set of gates
used in figure \ref{depth}, namely AND, OR, and NOT, are
universal. That is, any function from $n$ bits to one bit (here we are
again restricting to decision problems out of convenience and not
necessity) can be computed using some circuit constructed from these
elements. The proof of this is fairly easy, and the interested reader
may work it out independently. A solution is given in appendix
\ref{classical_universality}. Note that the number of possible
functions from $n$ bits to one bit is $2^{2^n}$. In contrast the
number of possible circuits with $n$ gates is singly exponential in
$n$. Thus, most of the functions on $n$ bits must have exponentially
large circuits. Both this universality result and this counting
argument have quantum analogues, which are discussed in subsequent
sections.

For this thesis, P, BPP, L, and NC1 are a sufficient set of realistic
classical complexity classes to be familiar with. We'll now move on to
describe a few of the more fanciful classes. These describe models of
computation which are not realistic and classes of problems not
necessarily expected to be efficiently solvable in the real world. The
most important of these is NP. NP stands for Nondeterministic
Polynomial-time. Loosely speaking, it is the set of problems whose
solutions are verifiable in polynomial time. NP contains P, because if
you have a polynomial time algorithm for correctly solving a problem,
you can always verify a proposed solution in polynomial time by simply
computing the solution yourself.

More precisely, NP is defined in terms of witnesses (also sometimes
called proofs or certificates). These are simply bitstrings which
certify the correctness of an answer to a given problem. NP is the set
of decision problems such that there exists a polynomial time
algorithm (called the verifier), such that if the answer to the
instance is yes, there exist a biststring of polynomial length (the
witness), which the verifier accepts. If the answer to the instance is
no, then the verifier will reject all inputs. This definition can be
illustrated using Boolean satisfiability, which is a cannonical
example of a problem in NP. The problem of Boolean satisfiability is,
given a Boolean formula on $n$ variables, determine whether there is
some assignment of true/false to these variables which makes the
Boolean formula true. The witness in this case is a string of $n$ bits
listing the true/false values of each of the variables. The verifier
simply has to substitute these values in and evaluate the Boolean
formula, a task easily doable in polynomial time.

NP apparently does not correspond to the set of problems efficiently
solvable using any realistic model of computation. Why then would
anyone study NP? One reason is that, although there is clearly more
practical interest in understanding which problems are efficiently
solvable, there is certainly some appeal at least philosophically, in
knowing which problems have efficiently verifiable solutions. Perhaps
the most important motivation, however, is that by introducing a
strange model of computation such as nondeterministic Turing machines,
we gain a tool for classifying the difficulty of computational
problems.

When faced with a difficult computational problem, it is very
difficult to know whether one's inability to find an efficient
algorithm is fundamental or merely a failure of imagination. How can
one know whether it is time to give up, or whether the solution around
the next corner? Complexity classes give us two handles on the
difficulty of a computational problem: containment and
hardness. Containment is the more straightforward of the two. If a
problem is contained in a given complexity class, then it can be
solved by the corresponding model of computation. In a sense this
gives an upper bound on the problem's difficulty. The less obvious
concept is hardness. In computer science, ``hardness'' is a technical
term with a precise meaning different from its common usage. If a
problem is hard for a given complexity class, this means that any
problem in that class is reducible to an instance of that problem. For
example, if a problem is NP-hard, it means that any problem contained
in NP can be reduced to an instance of that problem in polynomial time
and with at most polynomial increase in problem size. Thus, up to
polynomial factors, an NP-hard problem is at least as hard as any
problem in NP. If one could solve that problem in polynomial time,
then one could solve all NP problems in polynomial time.

It not obvious that NP-hard problems exist. After all, how could one
ever show that every single problem in NP reduces to a given problem?
We don't even know what all the problems in NP are! We'll use Boolean
satisfiability as an example to see how it is in fact possible to
prove that a problem is NP-hard. As discussed earlier, logic circuits
made from AND, OR, and NOT gates form a universal model of
computation, equal in power (up to polynomial factors) to the Turing
machine model. (See appendix \ref{classical_universality}.) Thus the
verifier for a problem in NP can be constructed as a logic circuit
from such gates. Such a logic circuit corresponds directly to a
Boolean formula made from AND, OR, and NOT. This formula will be
satisfiable if and only if there exists some input (the witness) which
causes the verifier to accept. Thus we have proven that Boolean
satisfiability is NP-hard. Given this fact, one can then prove the
NP-hardness of other problems by reductions of Boolean satisfiability
to other problems. Boolean satisfiability has the property that it is
both contained in NP and it is NP-hard. Such problems are called 
NP-complete. In a well-defined sense, NP-complete problems are the
hardest problems in NP. Furthermore, if one specifies a problem and
says it is complete for class X, then that statement uniquely defines
complexity class X.

Boolean satisfiability is not the only NP-complete problem. In fact,
there are now hundreds of NP-complete problems known. (See
\cite{Garey_Johnson} for a partial catalog of these.) Remarkably,
experience has shown that if a well-defined computational problem
resists all attempts to find polynomial time classical solution, it
almost always turns out to be NP-hard. There are only a few problems
currently known which are believed to be neither in P nor
NP-hard. These include factoring, discrete logarithm, graph
isomorphism, and approximating the shortest vector in a lattice. If a
problem is NP-hard, this is taken as evidence that the problem is not
solvable in polynomial time. If it were, then all of NP would be
solvable in polynomial time. This is considered unlikely, becaue it
seems contrary to experience that verifying the solution to a problem is
fundamentally no harder than finding the solution. Furthermore, it
seems unlikely that all those hundreds of NP-complete problems really
do have polynomial-time solutions which were never discovered despite
tremendous effort by very smart people over long periods of time. On
the other hand, there is no proof that all of NP is not solvable in
polynomial time. This is the famous P vs. NP problem which, for
various reasons\footnote{In addition to the failed attempts by many
  smart people to find a proof that P $\neq$ NP, there are additional
  reasons to believe that finding a proof should be hard. Namely,
  theorems have now been proven which show that the most
  natural methods for proving whether P is equal to NP are irrefutably
  doomed from the start\cite{Razborov_Rudich}.} is thought to be very
difficult.

NP is not the only complexity class based on a non-realistic model of
computation. Another important class is coNP. This is the set of
problems which have witnesses for the no instances. In other words,
these problems are the complements of the problems in NP. NP and coNP
overlap but are believed to be distinct. The problem of factoring
integers is known to be contained in both NP and coNP. This is one
reason factoring is not believed to be NP-complete. If it were then NP
would be contained in coNP. Graph isomorphism is also suspected to be contained
in the intersection of NP and coNP. MA is the probabilistic version of
NP, where the verifier is a BPP machine rather than a P
machine. PSPACE is the set of problems solvable using polynomial
memory. Polynomial space is a very powerful model of computation. The
class PSPACE contains both NP and coNP and is believed to be strictly
larger than either. \#P is like NP except to answer a \#P problem one
must count the number of witnesses rather than just answering
whether any witnesses exist. \#P is therefore not a decision class. To
make comparisons between \#P and decision classes one often uses
$P^{\#P}$, which is the set of problems solvable by a polynomial time
machine with access to an ``oracle'' which at any timestep can be queried
to solve a \#P problem. Many more complexity classes have been defined
(see \cite{complexity_zoo}). However the ones described above will
suffice for this thesis.

As is apparent from the preceeding discussion, many
complexity-theoretic results are founded on widely accepted
conjectures, such as the conjecture that P is not equal to NP. This is
perhaps an unfamiliar situation. These conjectures are neither proven
mathematical facts, nor are they the familiar sort of empirical facts
based on physical experiments. They are instead empirical facts based
on mathematical evidence. How can one assign probability of
correctness to mathematical conjectures? Does it even make sense to do
so\footnote{To give a more specific example, suppose
  you conjectured that P $\neq$ NP. Then you proposed various
  polynomial time algorithms for NP-hard problems. Whether each of
  these algorithms work depends on various calculations the result of
  which are not obvious \emph{a priori}. Upon performing the calculations,
  one finds in every case that they come out in just such a way that
  the polynomial-time algorithms for the NP-hard problems fail. Can
  one somehow use Bayesian reasoning in this case, regarding the calculations
  as experiments and their outcomes as evidence in favor of the
  conjecture P $\neq$ NP?}? These are interesting philosophical
questions, but to my knowledge unresolved ones. In any case, they are
beyond the scope of this thesis. In practice the conjecture that P is
not equal to NP is almost universally believed by the relevant
experts. Many other similar complexity-theoretic conjectures are often
also considered be well-founded, although not necessarily as much so
as P $\neq$ NP.

In the presence of all this conjecturing, it is worth mentioning that
some relationships between complexity classes are known with
certainty. One thing that is known in general is that the class defined
by a space bound of $n$ is contained in the class defined by a time
bound of $2^n$. This is because any algorithm running for time longer
than $2^n$ with only $n$ bits of memory necessarily revisits a state
it has already been in, and is therefore in an infinite loop. Thus any
problem solvable in logarithmic space is solvable in polynomial time,
and any problem solvable in polynomial space is solvable in
exponential time. In general, containments are easier to prove the
separations. For example, it is trivial to show that P is contained in
NP, but nobody has ever succeeded in showing that NP is larger than
P. An exception to this is that separations are not hard to prove
between classes of the same \emph{type}. For example, it is proven
that exponential time (EXP) is a strictly larger class than polynomial
time (P), and polynomial space (PSPACE) is a strictly larger class
than logarithmic space (L). In fact, it is even possible to prove that
there exist some problems solvable in time $O(n^4)$ not solvable in
time $O(n^2)$. This is done using a method called
diagonalization\cite{Papadimitriou}. However, the argument is
essentially non-constructive, and it is generally not known how to prove
unconditional lower bounds on the amount of time needed to solve a
given problem.

\begin{figure}
\begin{center}
\includegraphics[width=0.6\textwidth]{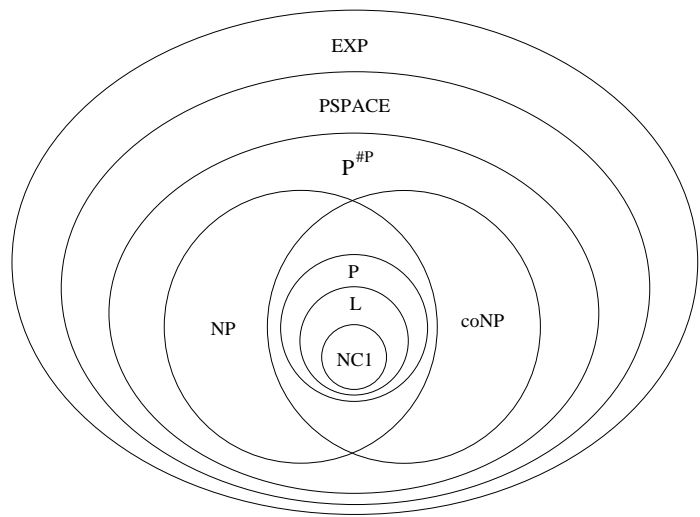}
\caption{ This diagram summarizes known and conjectured relationships
  between the classical complexity classes discussed in this
  section. All of the containments shown have been
  proven. However, none of the containments have been proven strict
  other than P is a strict subset of EXP and  L is a strict subset of
  PSPACE.}
\end{center}
\end{figure}

All of the complexity theory described so far has been about problems
where the input is given as a string of bits. However, one can also
imagine providing the input in the form of an oracle. An oracle is a
subroutine whose code is hidden. One then computes some property of
the oracle by making queries to it. For example, the oracle might
implement some function $f: \{1,2,\ldots,m \} \to \{1,2,\ldots,n\}$,
and we want to compute $\sum_{x=1}^m f(x)$. We can do this by querying
the oracle $m$ times, once for each value of $x$, and summing up the
results. It is also clear that this cannot be done by querying the
oracle fewer than $m$ times. This demonstrates a very nice feature of
the oracular setting, which is that it is often possible to prove
lower bounds on the number of queries necessary for computing a given
property.

The oracular model of computation is artificial, in the sense that we
have artificially prohibited access to the code implementing the
oracle. However, in many settings it seems unlikely that examining the
source code would help. Even simple functions that can be written down
using a small number of algebraic symbols often lack analytical
antiderivatives, and to find the definite integral there seems to be
nothing better to do than evaluate the function at a series of points
and use the trapezoid rule or other similar techniques. This is exactly the
oracular case. Similarly, Newton's method for finding roots, and
gradient descent methods for finding minima are both oracular
algorithms. If the function is implemented by some large and
complicated numerical calculation then it seems even more likely that
for finding integrals, derivatives, extrema, and so on, there is
nothing better to be done than simply querying the function at
various points and performing computations with the resulting
data. For these reasons, and because query complexity is much more
easily analyzed than computational complexity, the oracular setting is
an important area of study in both classical and quantum computation.

\section{Quantum Computation Preliminaries}
\label{quantum_prelim}

Because this is a physics thesis, I'll assume familiarity with
quantum mechanics. Many standard books exist on the subject
\cite{Cohen-Tannoudji, Messiah, Sakurai, Griffiths}. However, the
emphasis in these books is not necessarily placed on the aspects of
quantum mechanics which are most necessary for quantum computing. A
nice brief quantum-computing oriented introduction to quantum
mechanics is given in the second chapter of \cite{Nielsen_Chuang}.

To reason about quantum computers, one needs a mathematical model of
them. In fact, as I will argue in this thesis, it is helpful to have
several mathematical models of quantum computers. The most
widely used model of quantum computation is the quantum circuit model,
and I will now describe it. 

The first concept needed to define a quantum circuit is the
qubit. Physically, a qubit is a two state quantum mechanical
system, such as a spin-$1/2$ particle. As such, its state is given by
a normalized vector in $\mathbb{C}^2$. One normally imagines
doing quantum computation by performing unitary operations on an array of
qubits. One could of course use $d$-state systems with $d >
2$. Using $d$-dimensional units (called qudits) generally
results in only a speedup by a constant factor, which will not even be
noticed if one is using big-O notation. Since it makes no difference
algorithmically, people almost always choose the lowest dimensional
nontrivial systems for simplicity, and these are qubits. This is
analogous to the classical case. In addition to their physical
interpretation, qubits have meaning as the basic unit of quantum
information. This meaning arises from the study of quantum
communication, sometimes known as quantum Shannon theory. Quantum
Shannon theory will not be discussed in this thesis. For this see
\cite{Aramsthesis, Nielsen_Chuang}.

Next, we need some way of acting upon qubits. Upon thinking about the
Coulombic forces between charged particles, the gravitational forces
between massive objects, the interaction between magnetic dipoles, and
so forth, one sees that most interactions appearing in nature are
pairwise. That is, the total energy of a configuration of $n$
particles is of the form
\[
\sum_{i,j = 1}^n E_{ij}
\]
where $E_{ij}$ depends only on the states of particles $i$ and
$j$. This carries over into quantum mechanical systems. Thus,
one expects only to directly enact operations on single
qubits or pairs of qubits. As a simple model of quantum
computation, one may suppose that one can apply arbitrary unitary
operations on individual qubits and pairs of qubits. A quantum
computation then consists of a polynomially long sequence of such
operations. From an algorithmic point of view this is considered to be
a perfectly acceptable definition of a quantum computer. The
individual one-qubit and two-qubit unitaries are called gates, by
analogy to the classical logic gates. The entire sequence of unitaries
is called a quantum circuit.

In the quantum circuit model, the input to the computation (the
problem instance) is the initial state of the qubits prior to being
acted upon by the series of unitaries. Since human minds are
apparently classical, the problems we wish to solve are
classical. Thus, we will only consider problems whose inputs and
outputs are classical bitstrings. We can choose two orthogonal states
of a given qubit as corresponding to classical 0 and 1. These states
form a basis for the Hilbert space of the qubit, known as the
computational basis. The computational basis states of a qubit are
conventionally labelled $\ket{0}$ and $\ket{1}$. The $2^n$ states
obtained by putting each qubit into $\ket{0}$ or $\ket{1}$ form the
computational basis basis for the $2^n$-dimensional Hilbert space of
the entire system. Rather than labelling these states by
\[
\begin{array}{c}
\ket{0} \otimes \ket{0} \otimes \ldots \otimes \ket{0} \\
\ket{0} \otimes \ket{0} \otimes \ldots \otimes \ket{1} \\
\vdots \\
\ket{1} \otimes \ket{1} \otimes \ldots \otimes \ket{1}
\end{array}
\]
it is conventional to simply write them as
\[
\begin{array}{c}
\ket{00\ldots0} \\
\ket{00\ldots1} \\
\vdots \\
\ket{11\ldots1}
\end{array}.
\]
The input to the computation is the computational basis state
corresponding to the classical bitstring which specifies the problem
instance. The output of the computation is the result of a measurement
in the computational basis.

This is not a matter of mere notation. Arbitrary quantum states are
hard to produce, and measurements in arbitrary bases are hard to
perform. The special feature of the computational basis is that it is
a basis if tensor product states, that is, in every computational
basis state, the qubits are completely unentangled. Such states are
easy to generate, since the qubits need only be put into their states
individually without interacting them. Similarly, the measurement at
the end can be performed by measuring the qubits one by one.

\begin{figure}
\begin{center}
\includegraphics[width=0.6\textwidth]{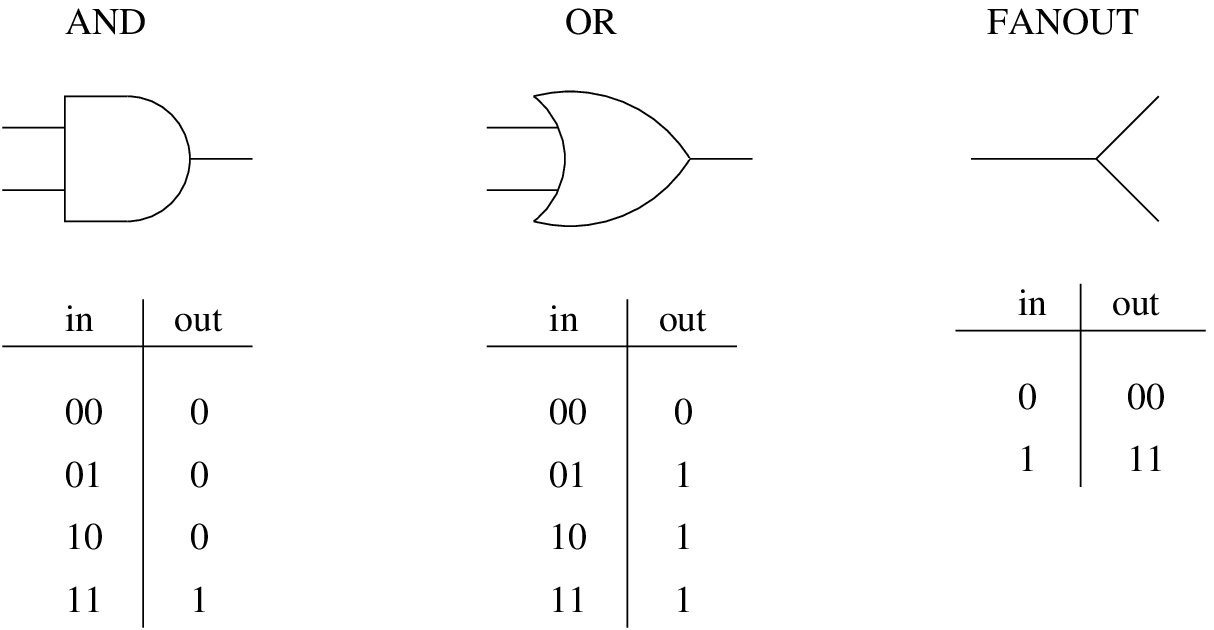}
\caption{\label{bools} These gates are universal for classical
  computation. Below each gate is the corresponding ``truth table''
  giving the dependence of output on input.}
\end{center}
\end{figure}

Classically, any Boolean function can be constructed using only
AND, NOT, and FANOUT, as shown in figure \ref{bools}. Thus, this set
of gates are said to be universal. Similarly, the set of two-qubit
quantum gates is universal in the sense that any unitary on $n$-qubits
can be constructed as a product of such gates. In general, this can
require exponentially many gates, much like the classical case. The
proof that two-qubit gates are universal is given in detail in
\cite{Nielsen_Chuang}, so I will only sketch it here. The approach is
to first show that the set of two level unitaries are universal. A
two-level unitary is one which interacts two basis states unitarily
and leaves all other states untouched, as shown below.
\[
\left[ \begin{array}{cccccccc}
1 &        &        &   &        &    &        &  \\
  & \ddots &        &   &        &    &        &  \\
  &        & u_{11} &   & u_{12} &    &        &  \\
  &        &        & 1 &        &    &        &  \\
  &        & u_{21} &   & u_{22} &    &        &  \\
  &        &        &   &        & 1  &        &  \\
  &        &        &   &        &    & \ddots &  \\
  &        &        &   &        &    &        & 1
\end{array}
\right]
\]
Given any arbitrary $2^n \times 2^n$ unitary, one can left-multiply by
a sequence of two-level unitaries to eliminate off-diagonal matrix
elements one by one. This process is somewhat analogous to Gaussian
elimination. At the same time, one can also ensure that the remaining
diagonal elements are all equal to 1. That is, for any unitary $U$ on
$n$ qubits, there is some sequence of two-level unitaries $U_m U_{m-1}
\ldots U_2 U_1$ such that
\[
U_m U_{m-1} \ldots U_2 U_1 U = I.
\]
Thus, for any $U$, there is a product of two-level unitaries equal to
$U^{-1}$, which shows that two-level unitaries are universal. 

For any given pair of basis states $\ket{x},\ket{y}$, one can construct
the two level unitary that acts on them according to 
\[
U = \left[ \begin{array}{cc}
u_{11} & u_{12} \\
u_{21} & u_{22} \\
\end{array} \right]
\]
by conjugating the single-qubit gate for $U$ with a matrix $U_{\pi}$
that permutes the basis so that 
$U_{\pi} \ket{x}$ and $U_{\pi} \ket{y}$ differ on a single bit. It is
a simple exercise to show that $U_{\pi}$ can always be constructed by
a sequence of controlled-not (CNOT) gates. CNOT is a two-qubit gate
that act on two-qubits according to:
\begin{equation}
\label{CNOT_def}
\left[ \begin{array}{cccc}
1 & 0 & 0 & 0 \\
0 & 1 & 0 & 0 \\
0 & 0 & 0 & 1 \\
0 & 0 & 1 & 0
\end{array} \right] 
\begin{array}{c}
00 \\
01 \\
10 \\
11
\end{array}.
\end{equation}
The bitstrings on the right label the four computational basis states
of the two qubits. The controlled-not gets its name from the fact that
a NOT gate is applied to the second bit (the target bit) only if the
first bit (the control bit) is 1.

Although this gate universality result is a very nice first step, it
is still not fully satisfying. The set of two-qubit gates ($4 \times
4$ unitary matrices) forms a continuum. An infinite number of bits
would be necessary to exactly specify particular gate. The same goes
for one-qubit gates ($2 \times 2$ unitary matrices). However, this is
a surmountable problem. The reason is that small deviations from the
desired gate will cause only small probability of error in the final
measurement. This is because the deviations from the desired state
caused by each gate add at most linearly,  which we no show, following
\cite{Nielsen_Chuang}. 

Suppose we wish to perform the gate $V$ followed by the gate
$U$. In reality we perform imprecise versions of these, $V'$ followed by
$U'$. We'll quantify the error introduced by the imprecise gates by
\[
E(U'V') \equiv \max_{\braket{\psi}{\psi} = 1} \| U'V' \ket{\psi} - UV
\ket{\psi} \|,
\]
which equals
\[
= \max_{\braket{\psi}{\psi} = 1} \| (U V \ket{\psi} - U V' \ket{\psi})
+ (U V' \ket{\psi} - U' V' \ket{\psi}) \|.
\]
By the triangle inequality this is at most
\[
\leq \max_{\braket{\psi}{\psi} = 1} \| U V \ket{\psi} - U V' \ket{\psi} \|
+ \| U V' \ket{\psi} - U' V' \ket{\psi} \|.
\]
By unitarity, this is at most
\[
\leq \max_{\braket{\psi}{\psi} = 1} \| V \ket{\psi} - V' \ket{\psi} \| +
\max_{\braket{\phi}{\phi} = 1} \| U \ket{\phi} - U' \ket{\phi} \|
\]
\[
= E(V') + E(U').
\]
Thus
\begin{equation}
\label{error_add}
E(U'V') \leq E(V') + E(U').
\end{equation}

By equation \ref{error_add}, one sees that it is not necessary to
obtain higher than polynomial accuracy in the gates in order to
implement quantum circuits of polynomial size. Hence only
logarithmically many bits are needed to specify a gate. This result
can be improved upon in two ways. First, it turns out that it is
unnecessary to have even a polynomially large set of gates. Instead,
arbitrary one and two qubit gates can always be constructed with
polynomial accuracy using a sequence of logarithmically many gates
chosen from some finite set of \emph{universal} quantum
gates. This result is known as the Solovay-Kitaev theorem, which we
state formally below. Universal sets of quantum gates are known with
as few as two gates. Secondly, the fault tolerance threshold theorem
shows (among other things) that it is in fact unnecessary to achieve
higher than constant accuracy in implementing each gate. Fault
tolerance thresholds are discussed in section \ref{FT}.

The following is a formal statement of the Solovay-Kitaev theorem
adapted from\cite{Kitaev}.
\begin{theorem}[Solovay-Kitaev]
Suppose matrices $U_1, \ldots, U_r$ generate a dense subgroup in
$SU(d)$. Then, given a desired unitary $U \in SU(d)$, and a precision
parameter $\delta > 0$, there is an algorithm to find a product $V$ of
$U_1, \ldots, U_r$ and their inverses such that $\| V - U \| \leq
\delta$. The length of the product and the runtime of the algorithm
are both polynomial in $\log(1/\delta)$.
\end{theorem}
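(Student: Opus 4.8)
The plan is to prove the statement by an iterative ``$\varepsilon$-net refinement'' argument: construct a sequence of approximations $V_0, V_1, V_2, \ldots$ with $\|V_n - U\|$ shrinking doubly-exponentially fast in $n$, so that only $\log\log(1/\delta)$ refinement stages are needed, and control the word length at each stage. First I would establish a base case: because $U_1,\ldots,U_r$ generate a dense subgroup of $SU(d)$, for any fixed accuracy $\varepsilon_0$ there is a finite set $S_0$ of words in the generators and their inverses that forms an $\varepsilon_0$-net of $SU(d)$. This $S_0$ is a fixed lookup table, independent of the target $U$, so it costs $O(1)$ (depending on $\varepsilon_0$ and $d$ but not $\delta$). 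Thus we can find $V_0 \in S_0$ with $\|V_0 - U\| \le \varepsilon_0$.

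The heart of the argument is the inductive step: given an $\varepsilon_n$-approximation $V_n$ of $U$, produce an $\varepsilon_{n+1}$-approximation with $\varepsilon_{n+1} = C\varepsilon_n^{3/2}$ (the precise exponent $3/2$ is what makes the iteration count logarithmic in $\log(1/\delta)$). The idea is to write the ``error'' $\Delta = U V_n^{-1}$, which is within $\varepsilon_n$ of the identity, and approximate $\Delta$ by a group commutator. The key geometric lemma (the ``shrinking lemma'') states that any $W \in SU(d)$ with $\|W - I\| \le \varepsilon$ can be written as $W = \tilde A \tilde B \tilde A^{-1} \tilde B^{-1}$ where $\tilde A, \tilde B$ are each within $O(\sqrt{\varepsilon})$ of $I$; this follows by analyzing the commutator map near the identity and invoking that $\mathfrak{su}(d)$ is closed under the Lie bracket. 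Now using the base-case net we approximate $\tilde A$ and $\tilde B$ to accuracy $\varepsilon_0 \cdot O(\sqrt{\varepsilon_n}) \sim O(\varepsilon_n)$ by words $A_n, B_n$, and set $V_{n+1} = A_n B_n A_n^{-1} B_n^{-1} V_n$. The commutator structure causes the leading-order errors to cancel — one must check via a short computation with the inequality $E(U'V') \le E(V') + E(U')$ from equation~\ref{error_add}, together with a commutator-specific cancellation, that $\|V_{n+1} - U\| \le C\varepsilon_n^{3/2}$, choosing $\varepsilon_0$ small enough that the iteration contracts.

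To finish, I would iterate: $\varepsilon_n \le (C^2\varepsilon_0)^{(3/2)^n}/C^2$, so reaching $\varepsilon_n \le \delta$ requires $n = O(\log\log(1/\delta))$ stages. The word length multiplies by a factor of $5$ at each stage (four commutator letters plus the previous word), so the total length is $5^n \cdot (\text{length of a net word}) = O\big((\log(1/\delta))^{\log_{3/2} 5}\big)$, which is polynomial in $\log(1/\delta)$; likewise each stage does $O(1)$ work beyond the recursion, so the runtime is polynomial in $\log(1/\delta)$ as well. The main obstacle I anticipate is the shrinking lemma: proving that near-identity elements factor as commutators of near-identity elements \emph{with the right square-root error bound}, uniformly over $SU(d)$, requires a careful quantitative inverse-function-theorem argument on the commutator map $\phi(A,B) = ABA^{-1}B^{-1}$, handling the fact that $\phi$ is degenerate to first order (its derivative at $(I,I)$ vanishes) so one must work at second order where the Lie bracket appears. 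Everything else — the net existence, the error bookkeeping via \ref{error_add}, the length and runtime counting — is routine once that lemma is in hand.
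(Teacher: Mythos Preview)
The paper does not actually prove the Solovay--Kitaev theorem; it merely states it (twice, as Theorem~1 and later as Theorem~\ref{Solovay-Kitaev}) with a citation to \cite{Kitaev}, and then invokes it as a black box---notably in Section~\ref{app_efficiency} to handle the base case of the recursive construction. So there is no paper proof to compare against.

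That said, your proposal is the standard and correct outline of the proof: the $\varepsilon_0$-net base case, the commutator-based shrinking lemma giving error $C\varepsilon^{3/2}$, and the iteration yielding word length $O((\log(1/\delta))^c)$ with $c = \log_{3/2} 5 \approx 3.97$. Your identification of the shrinking lemma as the main technical content is right; the quantitative statement you need is that for $W$ with $\|W-I\| \le \varepsilon$ one can solve $W = [\tilde A,\tilde B]$ with $\|\tilde A - I\|, \|\tilde B - I\| = O(\sqrt{\varepsilon})$, and this indeed requires working at second order in the commutator map since the bracket $[a,b]$ in $\mathfrak{su}(d)$ appears only quadratically. One small correction: in the inductive step you do not approximate $\tilde A, \tilde B$ using only the base net---you recursively call the level-$(n-1)$ procedure on them (this is where the factor of~5 in word length comes from: five recursive calls, one for $V_n$ via $U$ and four for the commutator letters). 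If you used only the fixed net at each stage the errors would not shrink fast enough.
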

Combining this with the universality of two-qubit unitaries, one sees
that any set of one-qubit and two-qubit gates that generates a dense
subgroup of $SU(4)$ is universal for quantum computation. A convenient
universal set of quantum gates is the CNOT, Hadamard, and $\pi/8$
gates. The CNOT gate we have encountered already in equation
\ref{CNOT_def}. The Hadamard gate is
\[
H = \frac{1}{\sqrt{2}} \left[ \begin{array}{cc}
1 & 1 \\
1 & -1
\end{array} \right],
\]
and the $\pi/8$ gate is
\[
T = \left[ \begin{array}{cc}
e^{-i \pi/8} & 0 \\
0            & e^{i \pi/8}
\end{array} \right]. 
\]

Although two-qubit gates are universal, it does not follow that
arbitrary unitaries can be constructed efficiently from two-qubit
gates. In fact, even the set of $2^n \times 2^n$ permutation matrices
(corresponding to reversible computations) is doubly exponentially
large, whereas the set of polynomial size quantum circuits is only
singly exponentially large, given any discrete set of gates. Thus,
some unitaries on $n$-qubits require exponentially many gates to
construct as a function of $n$. 

We have now seen that using a discrete set of quantum gates we can
construct arbitrary unitaries, although some $n$-qubit unitaries
require exponentially many gates. This is in some sense a universality
result. However, what we are really interested in is computational
universality. At present it is not yet obvious that one can even
efficiently perform universal classical computation with such a set of
gates. However, it turns out that this is indeed possible. It would be
surprising if this were not possible, since classical physics, upon
which classical computaters are based, is a limiting case of quantum
physics. Nevertheless showing how to specifically implement classical
computation with a quantum circuit is not trivial. The essential
difficulty is that the standard sets of universal classical gates
include gates which lose information. For example, the AND gate takes
two bits of input and produces only a single bit of output. There is
no way of deducing what the input was just by reading the output. In
contrast, the quantum mechanical time evolution of a closed system is
unitary and therefore never loses any information.

The solution to this conundrum actually predates the field of quantum
computation and goes by the name of reversible circuits. It turns out
that universal classical computation can be achieved using gates that
have the same number of output bits as input bits, and which
furthermore never lose any information. That is, the map of inputs to
outputs is injective. These are called reversible gates. The CNOT gate
described in equation \ref{CNOT_def} is an example of a classical
reversible gate which has truth table
\[
\begin{array}{c}
00 \to 00 \\
01 \to 01 \\
10 \to 11 \\
11 \to 10
\end{array}.
\]
By itself, CNOT is not universal. However, the Fredkin gate, or
controlled SWAP is. This gate has the truth table
\[
\begin{array}{c}
000 \to 000 \\
001 \to 001 \\
010 \to 010 \\
011 \to 011 \\
100 \to 100 \\
101 \to 110 \\
110 \to 101 \\
111 \to 111
\end{array}.
\]
The second pair of bits are swapped only if the first bit is 1. As
shown in figure \ref{Fredkin}, AND, NOT, and FANOUT can all be
implemented using the Fredkin gate. In standard non-reversible
classical circuits one normally takes FANOUT for granted,
considering it to be achieved by splitting a wire. In the context of
reversible computing one must be more careful. The FANOUT operation
requires the use of an additional bit initialized to the 0 state to
take the copied value of the bit undergoing FANOUT. In fact, each of
the constructions shown in figure \ref{Fredkin} require initialized
work bits, known as ancilla bits. This is a generic feature of
reversible computation because ``garbage'' bits cannot be erased and
instead are simply carried to the end of the computation.

\begin{figure}
\begin{center}
\includegraphics[width=0.5\textwidth]{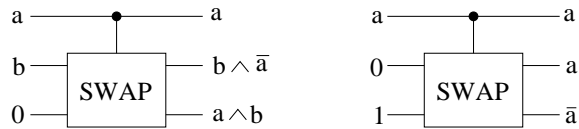}
\caption{\label{Fredkin} The left circuit uses a controllod-SWAP
  (\emph{i.e.} a Fredkin gate) to achieve AND using one ancilla bit
  initialized to zero. The right circuit uses a Fredkin gate to
  achieve NOT and FANOUT using two ancilla bits initialized to zero
  and one. AND, NOT, and FANOUT are universal for classical
  computation, thus classical computation can be performed reversibly
  using Fredkin gates and ancilla bits.}
\end{center}
\end{figure}

Because AND, NOT, and FANOUT can each be constructed from a single
Fredkin gate, it follows that taking classical circuits and making
them reversible incurs only constant overhead. Thus, the set of
problems solvable in polynomial time on uniform families of reversible
circuits is exactly P. On a quantum computer, a reversible 3-qubit
gate such as the Fredkin gate corresponds to an 3-qubit quantum gate
which is an $8 \times 8$ permutation matrix, permuting the basis
states in accordance with the gate's truth table. Hence reversible
computation is efficiently achievable on quantum computers. Because of
this generic construction, current research on quantum algorithms
focuses on quantum algorithms which beat the best classical
algorithms. Quantum algorithms matching the performance of classical
algorithms can always be achieved using reversible circuits.

For the purpose of quantum computation it is often important to remove
the garbage qubits accumulated at the end of a reversible computation,
because these can destroy the interference needed in quantum
algorithms. It is always possible to remove the garbage bits by first
performing the reversible computation, then using CNOT gates to copy
the result into a register of ancilla bits initialized to zero, and
then reversing the computation, as illustrated in figure
\ref{uncomputation}. The process of reversing the computation is known
as uncomputation.

\begin{figure}
\begin{center}
\includegraphics[width=0.9\textwidth]{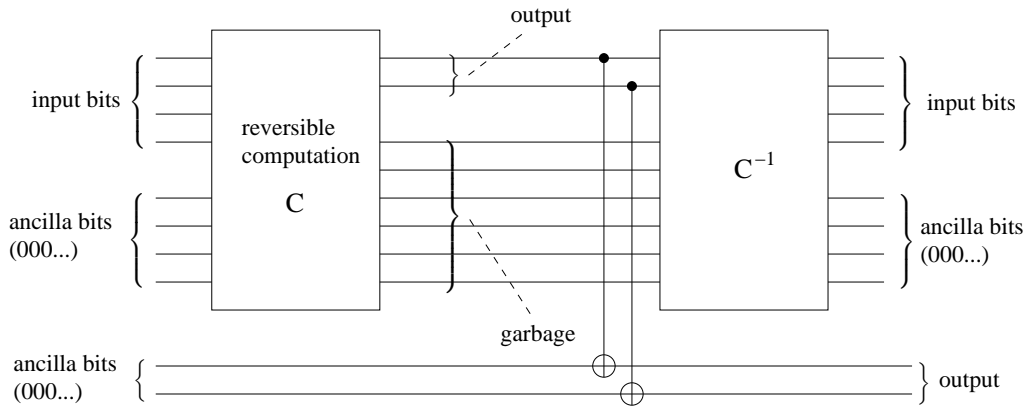}
\caption{\label{uncomputation} Garbage bits can be reset to zero. This
  is done by first performing the computation, then copying the output
  into an ancilla register, then using the inverse computation to
  ``uncompute'' the garbage bits.}
\end{center}
\end{figure}

The quantum circuit model is used as the standard definition of
quantum computers. The class of problems solvable in polynomial time with
quantum circuits is called BQP, which stands for Bounded-error
Quantum Polynomial-time. The initial state given to the quantum
circuit must be a computational basis state corresponding to a
bitstring encoding the problem instance, plus optionally a supply of
polynomially many ancilla qubits initialized to $\ket{0}$. The output
is obtained by measuring a single qubit in the computational
basis. BQP is a class of decision problems, and the measurement
outcome is considered to be yes or no depending on whether the
measurement yields one or zero. A decision problem belongs to BQP 
if there exists a uniform family of quantum circuits whose number of
gates scales polynomially with the input size $n$, such that the
output is correct with probability at least 2/3 for every problem
instance. BQP is thus the quantum analogue of BPP. 

A family of quantum circuits is considered to be uniform if the
circuit for any given $n$ can be generated in $\mathrm{poly}(n)$ time
by a classical computer. Allowing the family of circuits to be
generated by a quantum computer does not increase the power of the
model. This is a consequence of the principle of deferred measurement,
as discussed in appendix \ref{deferred}.

Because probabilities arise naturally in quantum mechanics, most
studies of quantum computation focus on probabilistic computations and
complexity classes. Deterministic quantum computation can certainly be
defined, and some quantum algorithms succeed with probability one
while still achieving a speedup over classical
computation\footnote{For example, the Bernstein-Vazirani algorithm
  achieves this.}. However, restricting to deterministic quantum
algorithms seems somewhat artificial. Most of the literature on
quantum algorithms and complexity assumes the probabilistic setting by
default, as does this thesis.

Recall that MA is the probabilistic version of NP. That is, it is the
class of problems whose YES instances have probabilistically
verifiable witnesses. There are two quantum analogues to MA, depending
on whether the witnesses are classical or quantum. The set of decision
problems whose solutions are efficiently verifiable on a quantum
computer given a classical bitstring as a witness is called QCMA. The
set of decision problems whose solutions are efficiently verifiable on
a quantum computer given a quantum state as a witness is called
QMA. Many important physical problems are now known to be QMA
complete, such as computing the ground state energy of arbitrary
Hamiltonians made from two-body interactions\cite{Kempe}, and
determining the consistency of a set of density
matrices\cite{Liu}. One can also define space bounded quantum
computation. BQPSPACE is the class of problems solvable with bounded
error on a quantum computer with polynomial space and unlimited
time. Perhaps surprisingly, BQPSPACE = PSPACE
\cite{Watrous_space}. (As an aside, NPSPACE = PSPACE \cite{Savitch}!)

The class of problems solvable by logarithmic depth quantum circuits
is called BQNC1. This class is potentially relevant for physical
implementation of quantum computers because if quantum gates can be
performed in parallel, then the BQNC1 computations can be carried out in
logarithmic time. This greatly reduces the time one needs to
maintain the coherence of the qubits. Interestingly, an approximate
quantum Fourier transform can be done using a logarithmic depth
quantum circuit. As a result, factoring can be done with polynomially
many uses of logarithmic depth quantum circuits, followed by a
polynomial amount of classical
postprocessing\cite{Cleve_parallel}.  

As mentioned previously, it is easy to see that problems solvable in
classical space $f(n)$ are solvable in classical time $2^{f(n)}$
ecause there are only $2^{f(n)}$ states that the computer can be
in. Thus, after $2^{f(n)}$ steps the computer must reenter a previously used
state and repeat itself. Quantum mechanically the situation is
different. For any fixed $\epsilon \ll 1$, in a Hilbert space of
dimension $d$ one can fit exponentially many nonoverlapping patches of
size $\epsilon$ as a function of $d$. (We could define a patch of size
$\epsilon$ centered at $\ket{\psi}$ as $\{ \ket{\phi} : \| \ket{\phi}
- \ket{\psi} \| < \epsilon \}$.) Thus there are doubly exponentially
many  reasonably distinct states of $n$ qubits. Hence there is not an
analogous argument to show that problems solvable in quantum space
$f(n)$ are solvable in quantum time
$\mathrm{exp}(f(n))$. Nevertheless, this statement is true. It can be
proven using the previously described universality construction based
on two level unitaries. Working through the construction in detail one
finds that any $2^n \times 2^n$ unitary can be constructed from
$O(2^{2n})$ two level unitaries, and any 2-level unitary on the
Hilbert space of $n$ qubits can be achieved using $O(n^2)$ CNOT gates
plus one arbitrary single-qubit gate. Thus no computation on
$n$-qubits can require more than $O(2^{2n} n^2)$ gates. (However,
finding the appropriate gate sequence may be difficult.)

\section{Quantum Algorithms}
\label{algorithms}

\subsection{Introduction}

By now it is well-known that quantum computers can
solve certain problems much faster than the best known classical
algorithms. The most famous example is that quantum computers can
factor $n$-bit numbers in time polynomial in $n$\cite{Shor_factoring},
whereas no known classical algorithm can do this. The quantum
algorithm which achieves this is known as Shor's factoring
algorithm. As discussed in section \ref{quantum_prelim}, a quantum
algorithm can be defined as a uniform family of quantum circuits, and
the running time is the number of gates as a function of number of
bits of input.

Quantum algorithms can be categorized into two types based on the
method by which the problem instance is given to the quantum
computer. The most most obvious and fundamental way to provide the
input is as a bitstring. This is how the input is provided to the factoring
algorithm. The second way of providing the input to a quantum
algorithm is through an oracle. The oracular setting is very much
analogous to the classical oracular setting, with the additional
restriction  that the oracle must be unitary. Any classical oracle can
be made unitary by the general technique of reversible computation, as
shown in figure \ref{reversible_oracle}.

\begin{figure}
\begin{center}
\includegraphics[width=0.6\textwidth]{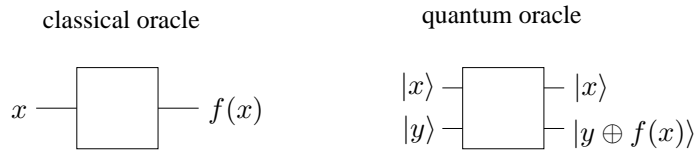}
\caption{\label{reversible_oracle} Quantum oracles must be
  unitary. One can always achieve this by using separate input and
  output registers. The input register is left unchanged and the
  output is added into the output register bitwise modulo 2. If the
  input register $y$ is initialized $0000\ldots$, than after applying
  the oracle it will contain $f(x)$.}
\end{center}
\end{figure}

The second most famous quantum algorithm is oracular. The oracle
implements the function $f:\{0,1,\ldots,N\} \to \{0,1\}$ defined by
\[
f(x) = \left\{ 
\begin{array}{ll}
1 & \textrm{if $x = w$} \\
0 & \textrm{otherwise}
\end{array}
\right.
\]
The task is to find the ``winner'' $w$. Classically, the only way to
do this with guaranteed success is to query all $N$ values of
$x$. Even on average, one needs to query $N/2$ values. On a quantum 
computer this can be achieved using $O(\sqrt{N})$
queries\cite{Grover_search}. The algorithm which achieves this is
known as Grover's searching algorithm. The queries made to the oracle
are superpositions of multiple inputs. Quantum computers cannot solve
this problem using fewer than $\Omega(\sqrt{N})$
queries\cite{BBBV}. Brute-force searching is a common subroutine
in classical algorithms. Thus, many classical algorithms can be sped
up by using Grover search as a subroutine. Furthermore, quantum
algorithms achieve quadratic speedups for searching in the presence
of more than one winner\cite{BBHT98}, evaluating sum of an
arbitrary function\cite{BBHT98,BHT98,Mos98}, finding the global
minimum of arbitrary function\cite{DH96,NW99}, and approximating definite
integrals\cite{integration}. These algorithms are based on Grover's
search algorithm.

From a complexity point of view, a quantum algorithm provides an upper
bound on the quantum complexity of a given problem. It is also
interesting to look for lower bounds on the quantum complexity
problems, or in other words upper limits on the power of quantum
computers. The techniques for doing so are very different in the
oracular versus nonoracular settings.

In the oracular setting, several powerful methods are known for
proving lower bounds on the quantum query complexity of
problems\cite{Ambainis_adversary, BBCMW98}. The $\Omega(\sqrt{N})$
lower bound for searching is one example of this. For some oracular
problems it is proven that quantum computers do not offer any speedup
over classical computers beyond a constant factor. For example,
suppose we are given an oracle computing an arbitrary function of the
form $f:\{0,1,\ldots,N\} \to \{0,1\}$, and we wish to compute the
parity 
\[
\bigoplus_{x=1}^N f(x).
\]
Both quantum and classical computers need $\Omega(N)$ queries to
achive this\cite{parity}.

In the non-oracular setting there are essentially\footnote{One can
  prove very weak statements such as the fact that most problems
  cannot be solved in less than the time it takes to read the entire
  input. Also certain extremely difficult problems, such as optimally
  playing generalized chess, are EXP-complete. These problems provably
  are not in P.}  no known techniques for proving lower bounds on quantum (or
classical) complexity. One can see however that quantum computers
cannot achieve superexponential speedups over classical computers
because they can be classically simulated with exponential
overhead. Extending this reasoning, it is clear that one could show by
diagonalization\cite{Papadimitriou} that for any polynomial $p(n)$
there exist problems in EXP which cannot be solved on a quantum
computer in time less than $p(n)$. A different type of upper bound on
the power of quantum computers is that $\textrm{BQP} \in P^{\# P}$, as
shown in\cite{Bernstein_Vazirani}.

Arguably the most important class of known quantum algorithms from a
practical point of view are those for quantum simulation. The problem
of simulating quantum systems has great economic and scientific
significance. Many problems, such as the design of new drugs and
materials, and understanding condensed matter systems such as high
temperature superconductors, would likely be much easier if quantum
many-body systems could be efficiently simulated. It seems that this
cannot be done on classical computers because the dimension of the
Hilbert space grows exponentially with the number of degrees of
freedom. Thus, even writing down the wavefunction would require
exponential resources.

In contrast to classical computers, it is generally believed that
standard quantum computers can efficiently simulate all
nonrelativistic quantum systems. That is, the number of gates and
number of qubits needed to simulate a system of $n$ particles for time
$t$ should scale polynomially in $n$ and $t$. The essential reason for
this is that Hamiltonians arising in nature generally consist of
few-body interactions. Few-body interactions can be simulated using
few-body quantum gates via the Trotter formula. The exact form of the
few-body interactions is irrelevant due to gate
universality. Furthermore, even if a Hamiltonian is not a sum of
few-body terms, it can still be efficiently simulated provided that
each row of the matrix has at most polynomially many nonzero entries
and these entries can be computed efficiently. Methods for quantum
simulation are described in
\cite{Feynman, Childs_thesis, Zalka_sim, Wiesner_sim, Aharonov_Tashma,
  Abrams_sim, Cleve_sim, Kassal_sim, Lidar_sim}. If a physical
system were discovered that could not be simulated in polynomial time
by a quantum computer, and that systems could be reliably
controlled, then it could presumably be used to construct a
computer more powerful standard quantum computers. Currently, it is
not fully known whether relativistic quantum field theory can be
efficiently simulated by quantum computers. In fact, the task of
formulating a well-defined mathematical theory of computation based on
quantum field theory appears to be difficult.

Not all quantities that arise in the study of physics
are easily computable using quantum computers. For example, finding
the ground energy of an arbitrary local Hamiltonian is
QMA-hard\cite{Kempe}, and evaluating the partition function of
the classical Potts model is \#P-hard\footnote{The Potts model
  partition function is a special case of the Tutte polynomial, as
  discussed in \cite{Aharonov3}. It was shown in \cite{Jaeger}
  that exact evaluation of the Tutte polynomial at all but a few
  points is \#P-hard.}. Therefore it is unlikely that these problems
can be solved in general on a quantum computer in polynomial
time. It is perhaps not surprising that some partition functions
cannot be efficiently evaluated, because partition functions are not
directly measurable by physical means, and thus not computable by the
simulation of a physical process. In contrast, information about the
eigenenergies of physical systems can be measured by spectroscopy. The
problem is, for some systems, the time needed to cool them into the
ground state may be extremely long. Correspondingly, on a quantum
computer, the energy of a given eigenstate can be efficiently
determined to polynomial precision by the method of phase
estimation (see appendix \ref{phase_estimation}), but there may be no
efficient method to prepare the ground state.

Several other quantum algorithms are known. A list of known quantum
algorithms is given below. I have attempted to be comprehensive,
although there are probably a few oversights. By known results
regarding reversible computation, any classical algorithm can be
implemented on a quantum computer with only constant overhead. Thus, I
only list quantum algorithms achieving a speedup over the fastest
known classical algorithm. Furthermore, any quantum circuit solves the
problem of computing its own output. Thus to keep the list meaningful,
I include only quantum algorithms achieving a speedup for a problem
that could have been stated prior to the concept of quantum
computation (although not all of these problems necessarily
were). Most quantum algorithms in the literature meet this
criterion.\\ \\

\subsection{Algebraic and Number Theoretic Problems}

\noindent
\begin{minipage}[c]{\textwidth}
\noindent
\textbf{Algorithm:} Factoring \\
\textbf{Type:} Non-oracular \\
\textbf{Speedup:} Superpolynomial \\
\textbf{Description:} Given an $n$-bit integer, find the prime
factorization. The quantum algorithm of Peter Shor solves this in
$\mathrm{poly}(n)$ time\cite{Shor_factoring}. The fastest known classical 
algorithm requires time superpolynomial in $n$. This algorithm breaks the
RSA cryptosystem. At the core of this algorithm is order finding,
which can be reduced to the Abelian hidden subgroup problem. \\ \\
\end{minipage}

\noindent
\begin{minipage}[c]{\textwidth}
\noindent
\textbf{Algorithm:} Discrete-log \\
\textbf{Type:} Non-oracular \\
\textbf{Speedup:} Superpolynomial \\
\textbf{Description:} We are given three $n$-bit numbers $a$, $b$, and
$N$, with the promise that $b = a^s \mod N$ for some $s$. The task is
to find $s$. As shown by Shor\cite{Shor_factoring}, this can be achieved
on a quantum computer in $\mathrm{poly}(n)$ time. The fastest known
classical algorithm requires time superpolynomial in $n$. See also Abelian
hidden subgroup.
\\ \\
\end{minipage}

\noindent
\begin{minipage}[c]{\textwidth}
\noindent
\textbf{Algorithm:} Pell's Equation \\
\textbf{Type:} Non-oracular \\
\textbf{Speedup:} Superpolynomial \\
\textbf{Description:} Given a positive nonsquare integer $d$, Pell's
equation is $x^2 - d y^2 = 1$. For any such $d$ there are infinitely
many pairs of integers $(x,y)$ solving this equation. Let $(x_1,y_1)$
be the pair that minimizes $x+y\sqrt{d}$. If $d$ is an $n$-bit integer
(\emph{i.e.} $0 \leq d < 2^n$), then $(x_1,y_1)$ may in general require
exponentially many bits to write down. Thus it is in general
impossible to find $(x_1,y_1)$ in polynmial time. Let $R =
\log(x_1+y_1 \sqrt{d})$. $\lfloor R \rceil$ uniquely identifies $(x_1,y_1)$. As shown
by Hallgren\cite{Hallgren_Pell}, given a $n$-bit number $d$, a quantum
computer can find $\lfloor R \rceil$ in $\mathrm{poly}(n)$ time. No polynomial
time classical algorithm for this problem is known. Factoring reduces
to this problem. This algorithm breaks the Buchman-Williams
cryptosystem. See also Abelian hidden subgroup.
\\ \\
\end{minipage}

\noindent
\begin{minipage}[c]{\textwidth}
\noindent
\textbf{Algorithm:} Principal Ideal \\
\textbf{Type:} Non-oracular \\
\textbf{Speedup:} Superpolynomial \\
\textbf{Description:} We are given an $n$-bit integer $d$ and an
invertible ideal $I$ of the ring $\mathbb{Z}[\sqrt{d}]$. $I$ is a
principal ideal if there exists $\alpha \in \mathbb{Q}(\sqrt{d})$
such that $I = \alpha \mathbb{Z}[\sqrt{d}]$. $\alpha$ may be
exponentially large in $d$. Therefore $\alpha$ cannot in general even
be written down in polynomial time. However, $\lfloor \log \alpha 
\rceil$ uniquely identifies $\alpha$. The task is to determine whether
$I$ is principal and if so find $\lfloor \log \alpha \rceil$. As shown
by Hallgren, this can be done in polynomial time on a quantum
computer\cite{Hallgren_Pell}. Factoring reduces to solving Pell's
equation, which reduces to the principal ideal problem. Thus the
principal ideal problem is at least as hard as factoring and therefore
is probably not in P. See also Abelian hidden subgroup.
\\ \\
\end{minipage}

\noindent
\begin{minipage}[c]{\textwidth}
\noindent
\textbf{Algorithm:} Unit Group \\
\textbf{Type:} Non-oracular \\
\textbf{Speedup:} Superpolynomial \\
\textbf{Description:} The number field $\mathbb{Q}(\theta)$ is said to
be of degree $d$ if the lowest degree polynomial of which $\theta$ is
a root has degree $d$. The set $\mathcal{O}$ of elements of
$\mathbb{Q}(\theta)$ which are roots of monic polynomials in
$\mathbb{Z}[x]$ forms a ring, called the ring of integers of
$\mathbb{Q}(\theta)$. The set of units (invertible elements) of the
ring $\mathcal{O}$ form a group denoted $\mathcal{O}^*$. As shown by
Hallgren \cite{Hallgren_unit}, for any $\mathbb{Q}(\theta)$ of fixed
degree, a quantum computer can find in polynomial time a set of
generators for $\mathcal{O}^*$, given a description of $\theta$. No
polynomial time classical algorithm for this problem is known. See
also Abelian hidden subgroup.\\ \\
\end{minipage}

\noindent
\begin{minipage}[c]{\textwidth}
\noindent
\textbf{Algorithm:} Class Group \\
\textbf{Type:} Non-oracular \\
\textbf{Speedup:} Superpolynomial \\
\textbf{Description:} The number field $\mathbb{Q}(\theta)$ is said to
be of degree $d$ if the lowest degree polynomial of which $\theta$ is
a root has degree $d$. The set $\mathcal{O}$ of elements of
$\mathbb{Q}(\theta)$ which are roots of monic polynomials in
$\mathbb{Z}[x]$ forms a ring, called the ring of integers of
$\mathbb{Q}(\theta)$. For a ring, the ideals modulo the prime ideals
form a group called the class group. As shown by
Hallgren\cite{Hallgren_unit}, a quantum computer can find in
polynomial time a set of generators for the class group of the ring of
integers of any constant degree number field, given a description of
$\theta$. No polynomial time classical algorithm for this problem is
known. See also Abelian hidden subgroup.\\ \\
\end{minipage}

\noindent
\begin{minipage}[c]{\textwidth}
\noindent
\textbf{Algorithm:} Hidden Shift \\
\textbf{Type:} Oracular \\
\textbf{Speedup:} Superpolynomial \\
\textbf{Description:} We are given oracle access to some function
$f(x)$ on a domain of size $N$. We know that $f(x) = g(x+s)$ where $g$
is a known function and $s$ is an unknown shift. The hidden shift
problem is to find $s$. By reduction from Grover's problem it is clear
that at least $\sqrt{N}$ queries are necessary to solve hidden shift
in general. However, certain special cases of the hidden shift problem
are solvable on quantum computers using $O(1)$ queries. In particular,
van Dam \emph{et al.} showed that this can be done if $f$ is a
multiplicative character of a finite ring or
field\cite{vanDam_shift}. The previously discovered shifted Legendre 
symbol algorithm\cite{vanDam_Legendre, vanDam_weighing} is subsumed as
a special case of this, because the Legendre symbol $\left(
\frac{x}{p} \right)$ is a multiplicative character of
$\mathbb{F}_p$. No classical algorithm running in time
$O(\mathrm{polylog}(N))$ is known for these problems. Furthermore, the quantum
algorithm for the shifted Legendre symbol problem breaks certain classical
cryptosystems\cite{vanDam_shift}. \\ \\
\end{minipage}

\noindent
\begin{minipage}[c]{\textwidth}
\noindent
\textbf{Algorithm:} Gauss Sums \\
\textbf{Type:} Non-oracular \\
\textbf{Speedup:} Superpolynomial \\
\textbf{Description:} Let $\mathbb{F}_q$ be a finite
field. The elements other than zero of $\mathbb{F}_q$ form a group
$\mathbb{F}_q^\times$ under multiplication, and the elements of
$\mathbb{F}_q$ form an (Abelian but not necessarily cyclic) group
$\mathbb{F}_q^+$ under addition. We can choose some representation
$\rho^\times$ of $\mathbb{F}_q^\times$ and some representation
$\rho^+$ of $\mathbb{F}_q^+$. Let $\chi^\times$ and $\chi^+$ be the
characters of these representations. The Gauss sum corresponding to
$\rho^\times$ and $\rho^+$ is the inner product of these characters:
$\sum_{x \neq 0 \in \mathbb{F}_q} \chi^+(x) \chi^\times(x)$. As shown
by van Dam and Seroussi\cite{vanDam_Gauss}, Gauss sums can be
estimated to polynomial precision on a quantum computer in polynomial
time. Although a finite ring does not form a group under
multiplication, its set of units does. Choosing a representation for
the additive group of the ring, and choosing a representation for the
multiplicative group of its units, one can obtain a Gauss sum over the
units of a finite ring. These can also be estimated to polynomial
precision on a quantum computer in polynomial
time\cite{vanDam_Gauss}. No polynomial time classical algorithm for
estimating Gauss sums is known. Furthermore, discrete log reduces to
Gauss sum estimation. \\ \\
\end{minipage}

\noindent
\begin{minipage}[c]{\textwidth}
\noindent
\textbf{Algorithm:} Abelian Hidden Subgroup \\
\textbf{Type:} Oracular \\
\textbf{Speedup:} Exponential \\
\textbf{Description:} Let $G$ be a finitely generated Abelian group,
and let $H$ be some subgroup of $G$ such that $G/H$ is finite. Let $f$
be a function on $G$ such that for any $g_1,g_2 \in G$, $f(g_1) =
f(g_2)$ if and only if $g_1$ and $g_2$ are in the same coset of
$H$. The task is to find $H$ (\emph{i.e.} find a set of generators for
$H$) by making queries to $f$. This is solvable on a quantum computer
using $O(\log |G|)$ queries, whereas classically $\Omega(|G|)$ are
required. This algorithm was first formulated in full generality by
Boneh and Lipton in \cite{BL95}. However, proper attribution of this
algorithm is difficult because, as described in chapter 5 of
\cite{Nielsen_Chuang}, it subsumes many historically important quantum
algorithms as special cases, including Simon's algorithm, which was
the inspiration for Shor's period finding algorithm, which forms the
core of his factoring and discrete-log algorithms. The Abelian hidden
subgroup algorithm is also at the core of the Pell's equation,
principal ideal, unit group, and class group algorithms. In certain
instances, the Abelian hidden subgroup problem can be solved using a
single query rather than $\log(|G|)$, see \cite{Beaudrap}.\\ \\
\end{minipage}

\noindent
\begin{minipage}[c]{\textwidth}
\noindent
\textbf{Algorithm:} Non-Abelian Hidden Subgroup \\
\textbf{Type:} Oracular \\
\textbf{Speedup:} Exponential \\
\textbf{Description:} Let $G$ be a finitely generated group, and let
$H$ be some subgroup of $G$ that has finitely many left cosets.  Let
$f$ be a function on $G$ such that for any $g_1,g_2 \in G$, 
$f(g_1) =  f(g_2)$ if and only if $g_1$ and $g_2$ are in the same left
coset of $H$. The task is to find $H$ (\emph{i.e.} find a set of generators for
$H$) by making queries to $f$. This is solvable on a quantum computer
using $O(\log(|G|)$ queries, whereas classically $\Omega(|G|)$ are
required\cite{Ettinger, Hallgren_Russell}. However, this 
does not qualify as an efficient quantum algorithm because in general,
it may take exponential time to process the quantum states obtained
from these queries. Efficient quantum algorithms for the hidden
subgroup problem are known for certain specific non-Abelian
groups\cite{RB_NAHS, IMS_NAHS, MRRS_NAHS, IlG_NAHS, BCvD_NAHS,
  CKL_NAHS, ISS_NAHS, CP_NAHS, ISS2_NAHS, FIMSS_NAHS, G_NAHS,
  CvD_NAHS}. A slightly outdated survey is given in
\cite{Survey_NAHS}. Of particular interest are the symmetric group and
the dihedral group. A solution for the symmetric group would solve
graph isomorphism. A solution for the dihedral group would solve
certain lattice problems\cite{Regev_lattice}. Despite much effort, no
polynomial-time solution for these groups is known. However,
Kuperburg\cite{Kuperberg} found a time $O(2^{C \sqrt{\log N}})$
algorithm for finding a hidden subgroup of the dihedral group
$D_N$. Regev subsequently improved this algorithm so that it uses not
only subexponential time but also polynomial
space\cite{Regev_dihedral}.
\\ \\
\end{minipage}

\subsection{Oracular Problems}

\noindent
\begin{minipage}[c]{\textwidth}
\noindent
\textbf{Algorithm:} Searching \\
\textbf{Type:} Oracular \\
\textbf{Speedup:} Polynomial \\
\textbf{Description:} We are given an oracle with $N$ allowed
inputs. For one input $w$ (``the winner'') the corresponding output is
1, and for all other inputs the corresponding output is 0. The task is
to find $w$. On a classical computer this requires $\Omega(N)$
queries. The quantum algorithm of Lov Grover achieves this using
$O(\sqrt{N})$ queries\cite{Grover_search}.This has algorithm has
subsequently been generalized to search in the presence of multiple
``winners''\cite{BBHT98}, evaluate the sum of an arbitrary
function\cite{BBHT98,BHT98,Mos98}, find the global minimum of an
arbitrary function\cite{DH96,NW99}, and approximate definite
integrals\cite{integration}. The generalization of Grover's algorithm
known as amplitude estimation\cite{Amplitude} is now an
important primitive in quantum algorithms. Amplitude estimation forms
the core of most known quantum algorithms related to collision finding
and graph properties.\\ \\
\end{minipage}

\noindent
\begin{minipage}[c]{\textwidth}
\noindent
\textbf{Algorithm:} Bernstein-Vazirani \\
\textbf{Type:} Oracular \\
\textbf{Speedup:} Polynomial \\
\textbf{Description:} We are given an oracle whose input is $n$ bits
and whose output is one bit. Given input $x \in \{0,1\}^n$, the output
is $x \odot h$, where $h$ is the ``hidden'' string of $n$ bits, and
$\odot$ denotes the bitwise inner product modulo 2. The task is to
find $h$. On a classical computer this requires $n$ queries. As shown
by Bernstein and Vazirani\cite{Bernstein_Vazirani}, this can be
achieved on a quantum computer using a single query. Furthermore, one
can construct a recursive version of this problem, called recursive
Fourier sampling, such that quantum computers require exponentially
fewer queries than classical computers\cite{Bernstein_Vazirani}. \\ \\
\end{minipage}

\noindent
\begin{minipage}[c]{\textwidth}
\noindent
\textbf{Algorithm:} Deutsch-Josza \\
\textbf{Type:} Oracular \\
\textbf{Speedup:} Polynomial \\
\textbf{Description:} We are given an oracle whose input is $n$ bits
and whose output is one bit. We are promised that out of the $2^n$
possible inputs, either all of them, none of them, or half of them
yield output 1. The task is to distinguish the balanced case (half of
all inputs yield output 1) from the constant case (all or none of the
inputs yield output 1). It was shown by Deutsch\cite{Deutsch} that for
$n=1$, this can be solved on a quantum computer using one query,
whereas any deterministic classical algorithm requires two. This was
historically the first well-defined quantum algorithm achieving a
speedup over classical computation. The generalization to arbitrary
$n$ was developed by Deutsch and Josza in
\cite{Deutsch_Josza}. Although probabilistically easy to solve with
$O(1)$ queries, the Deutsch-Josza problem has exponential worst case
deterministic query complexity classically. \\ \\
\end{minipage}

\noindent
\begin{minipage}[c]{\textwidth}
\noindent
\textbf{Algorithm:} NAND Tree \\
\textbf{Type:} Oracular \\
\textbf{Speedup:} Polynomial \\
\textbf{Description:} A NAND gate takes two bits of input and produces
one bit of output. By connecting together NAND gates, one can thus
form a binary tree of depth $n$ which has $2^n$ bits of input and
produces one bit of output. The NAND tree problem is to evaluate the
output of such a tree by making queries to an oracle which stores the
values of the $2^n$ bits and provides any specified one of them upon
request. Farhi \emph{et al.} used a continuous time quantum walk model
to show that a quantum computer can solve this problem using
$O(2^{0.5n})$ time whereas a classical computer requires
$\Omega(2^{0.753n})$ time\cite{Farhi_NAND}. It was soon shown that this
result carries over into the conventional model of circuits and
queries\cite{Childs_Jordan}. The algorithm was subsequently
generalized for NAND trees of varying fanin and noniform
depth\cite{ragged}, and to trees involving larger gate
sets\cite{Reichardt_Spalek}, and MIN-MAX trees \cite{Cleve_tree}.\\ \\
\end{minipage}

\noindent
\begin{minipage}[c]{\textwidth}
\noindent
\textbf{Algorithm:} Gradients \\
\textbf{Type:} Oracular \\
\textbf{Speedup:} Polynomial \\
\textbf{Description:} We are given a oracle for computing some
smooth function $f:\mathbb{R}^d \to \mathbb{R}$. The inputs and
outputs to $f$ are given to the oracle with finitely many bits of
precision. The task is to estimate $\nabla f$ at some specified point
$\mathbf{x}_0 \in \mathbb{R}^d$. As I showed in \cite{Jordan_gradient},
a quantum computer can achieve this using one query, whereas a
classical computer needs at least $d+1$ queries. In \cite{Bulger},
Bulger suggested potential applications for optimization
problems\cite{Bulger}. As shown in appendix \ref{quadratic_forms},
a quantum computer can use the gradient algorithm to find the minimum of a
quadratic form in $d$ dimensions using $O(d)$ queries, whereas, as
shown in \cite{Yao}, a classical computer needs at least $\Omega(d^2)$
queries. \\ \\
\end{minipage}

\noindent
\begin{minipage}[c]{\textwidth}
\noindent
\textbf{Algorithm:} Ordered Search \\
\textbf{Type:} Oracular \\
\textbf{Speedup:} Constant \\
\textbf{Description:} We are given oracle access to a list of $N$ numbers
in order from least to greatest. Given a number $x$, the task is to
find out where in the list it would fit. Classically, the best
possible algorithm is binary search which takes $\log_2 N$
queries. Farhi \emph{et al.} showed that a quantum computer can
achieve this using $0.53 \log(N)$ queries\cite{FGGS99}. Currently, the
best known deterministic quantum algorithm for this problem uses
$0.433 \log_2 N$ queries. A lower bound of $\frac{1}{\pi} \log_2 N$
quantum queries has been proven for this problem\cite{Childs_Lee}. In
\cite{Ben_Or_Search}, a randomized quantum algorithm is given whose
expected query complexity is less than $\frac{1}{3} \log_2 N$.\\ \\
\end{minipage}

\noindent
\begin{minipage}[c]{\textwidth}
\noindent
\textbf{Algorithm:} Graph Properties \\
\textbf{Type:} Oracular \\
\textbf{Speedup:} Polynomial \\
\textbf{Description:} A common way to specify a graph is by an oracle,
which given a pair of vertices, reveals whether they are connected by
an edge. This is called the adjacency matrix model. It generalizes
straightforwardly for weighted and directed graphs. Building  on
previous work \cite{DH96,prev2,prev3}, D\"urr \emph{et al.}
\cite{Durr_graphs} show that the quantum query complexity of finding a
minimum spanning tree of weighted graphs, and deciding connectivity
for directed and undirected graphs have $\Theta(n^{3/2})$ quantum
query complexity, and that finding lowest weight paths has $O(n^{3/2}
\log^2 n)$ quantum query complexity. Berzina \emph{et al.}
\cite{Berzina} show that deciding whether a graph is bipartite can be
achieved using $O(n^{3/2})$ quantum queries. All of these problems are
thought to have $\Omega(n^2)$ classical query complexity. For many of
these problems, the quantum complexity is also known for the case
where the oracle provides an array of neighbors rather than entries of
the adjacency matric\cite{Durr_graphs}. See also triangle finding.\\ \\
\end{minipage}

\noindent
\begin{minipage}[c]{\textwidth}
\noindent
\textbf{Algorithm:} Welded Tree \\
\textbf{Type:} Oracular \\
\textbf{Speedup:} Exponential \\
\textbf{Description:} Some computational problems can be phrased in
terms of the query complexity of finding one's way through a
maze. That is, there is some graph $G$ to which one is given oracle
access. When queried with the label of a given node, the oracle
returns a list of the labels of all adjacent nodes. The task is,
starting from some source node (\emph{i.e.} its label), to find the
label of a certain marked destination node. As shown by Childs
\emph{et al.}\cite{Childs_weld}, quantum computers can exponentially
outperform classical computers at this task for at least some
graphs. Specifically, consider the graph obtained by joining together
two depth-$n$ binary trees by a random ``weld'' such that all nodes
but the two roots have degree three. Starting from one root, a quantum
computer can find the other root using $\mathrm{poly}(n)$ queries,
whereas this is provably impossible using classical queries.\\ \\
\end{minipage}

\noindent
\begin{minipage}[c]{\textwidth}
\noindent
\textbf{Algorithm:} Collision Finding \\
\textbf{Type:} Oracular \\
\textbf{Speedup:} Polynomial \\
\textbf{Description:} Suppose we are given oracle access to a two to
one function $f$ on a domain of size $N$. The collision problem is to
find a pair $x,y \in \{1,2,\ldots,N\}$ such that $f(x)=f(y)$. The
classical randomized query complexity of this problem is
$\Theta(\sqrt{N})$, whereas, as shown by Brassard \emph{et al.}, a
quantum computer can achieve this using $O(N^{1/3})$
queries\cite{Brassard_collision}. Buhrman \emph{et al.} subsequently
showed that a quantum computer can also find a collision in an
arbitrary function on domain of size $N$, provided that one exists,
using $O(N^{3/4} \log N)$ queries\cite{Buhrman_collision}, whereas the
classical query complexity is $\Theta(N \log N)$.  The decision version
of collision finding is called element distinctness, and also has $\Theta(N
\log N)$ classical query complexity. Ambainis subsequently improved
upon\cite{Brassard_collision}, achieving a quantum query complexity of
$O(N^{2/3})$ for element distinctness, which is optimal, and extending
to the case of $k$-fold collisions\cite{Ambainis_distinctness}. Given
two functions $f$ and $g$, each on a domain of size $N$, a claw is a
pair $x,y$ such that $f(x) = g(y)$. A quantum computer can find claws
using $O(N^{3/4} \log N)$ queries\cite{Buhrman_collision}. \\ \\
\end{minipage}

\noindent
\begin{minipage}[c]{\textwidth}
\noindent
\textbf{Algorithm:} Triangle Finding \\
\textbf{Type:} Oracular \\
\textbf{Speedup:} Polynomial \\
\textbf{Description:} Suppose we are given oracle access to a
graph. When queried with a pair of nodes, the oracle reveals whether
an edge connects them. The task is to find a triangle (\emph{i.e.} a
clique of size three) if one exists. As shown by Buhrman \emph{et al.}
\cite{Buhrman_collision}, a quantum computer can accomplish this using
$O(N^{3/2})$ queries, whereas it is conjectured that
classically one must query all $\binom{n}{2}$ edges. Magniez \emph{et
  al.} subsequently improved on this, finding a triangle with
$O(N^{13/10})$ quantum queries\cite{Magniez_triangle}.
\end{minipage}

\noindent
\begin{minipage}[c]{\textwidth}
\noindent
\textbf{Algorithm:} Matrix Commutativity \\
\textbf{Type:} Oracular \\
\textbf{Speedup:} Polynomial \\
\textbf{Description:} We are given oracle access to $k$ matrices, each
of which are $n \times n$. Given integers $i,j \in \{1,2,\ldots,n\}$,
and $x \in \{1,2,\ldots,k\}$ the oracle returns the $ij$ matrix element
of the $x\th$ matrix. The task is to decide whether all of these $k$
matrices commute. As shown by Itakura\cite{Itakura}, this can be
achieved on a quantum computer using $O(k^{4/5}n^{9/5})$ queries,
whereas classically this requires $O(k n^2)$ queries. \\ \\
\end{minipage}

\noindent
\begin{minipage}[c]{\textwidth}
\noindent
\textbf{Algorithm:} Hidden Nonlinear Structures \\
\textbf{Type:} Oracular \\
\textbf{Speedup:} Exponential \\
\textbf{Description:} Any Abelian groups $G$ can be visualized as
a lattice. A subgroup $H$ of $G$ is a sublattice, and the cosets of $H$
are all the shifts of that sublattice. The Abelian hidden subgroup
problem is normally solved by obtaining superposition over a random
coset of the Hidden subgroup, and then taking the Fourier transform so
as to sample from the dual lattice. Rather than generalizing to
non-Abelian groups (see non-Abelian hidden subgroup), one can instead
generalize to the problem of identifying hidden subsets other than
lattices. As shown by Childs \emph{et al.}\cite{Childs_nonlinear} this
problem is efficiently solvable on quantum computers for certain
subsets defined by polynomials, such as spheres. Decker \emph{et al.}
showed how to efficiently solve some related problems
in\cite{Wocjan_nonlinear}.\\ \\ 
\end{minipage}

\noindent
\begin{minipage}[c]{\textwidth}
\noindent
\textbf{Algorithm:} Order of Blackbox Group \\
\textbf{Type:} Oracular \\
\textbf{Speedup:} Exponential \\
\textbf{Description:} Suppose a finite group $G$ is given oracularly
in the following way. To every element in $G$, one assigns a
corresponding label. Given an ordered pair of labels of group
elements, the oracle returns the label of their product. The task is to
find the order of the group, given the labels of a set of
generators. Classically, this problem cannot be solved using
$\mathrm{polylog}(|G|)$ queries even if $G$ is Abelian. For Abelian
groups, quantum computers can solve this problem using
$\mathrm{polylog}(|G|)$ queries by reducing it to the Abelian hidden
subgroup problem, as shown by Mosca\cite{Mosca_thesis}. Furthermore,
as shown by Watrous\cite{Watrous_solvable}, this problem can be solved
in $\mathrm{polylog}(|G|)$ queries for any solvable group. \\ \\ 
\end{minipage}

\subsection{Approximation and BQP-complete Problems}

\noindent
\begin{minipage}[c]{\textwidth}
\noindent
\textbf{Algorithm:} Quantum Simulation \\
\textbf{Type:} Non-oracular \\
\textbf{Speedup:} Exponential \\
\textbf{Description:} It is believed that for any physically realistic
Hamiltonian $H$ on $n$ degrees of freedom, the corresponding time
evolution operator $e^{-i H t}$ can be implemented using
$\mathrm{poly}(n,t)$ gates. Unless BPP=BQP, this problem is not
solvable in general on a classical computer in polynomial time. Many
techniques for quantum simulation have been developed for different
applications\cite{Childs_thesis, Zalka_sim, Wiesner_sim,
  Aharonov_Tashma, Abrams_sim, Cleve_sim, Kassal_sim, Lidar_sim}. The
exponential complexity of classically simulating quantum systems led
Feynman to first propose that quantum computers might outperform
classical computers on certain tasks\cite{Feynman}.  \\ \\
\end{minipage}

\noindent
\begin{minipage}[c]{\textwidth}
\noindent
\textbf{Algorithm:} Jones Polynomial \\
\textbf{Type:} Non-oracular \\
\textbf{Speedup:} Exponential \\
\textbf{Description:} As shown by Freedman\cite{Freedman, Freedman2},
\emph{et al.}, finding a certain additive approximation to the Jones
polynomial of the plat closure of a braid at $e^{i 2 \pi/5}$ is a
BQP-complete problem. This result was reformulated and extended to
$e^{i 2 \pi/k}$ for arbitrary $k$ by Aharonov \emph{et
  al.}\cite{Aharonov1, Aharonov2}. Wocjan and Yard further generalized
this, obtaining a  quantum algorithm to estimate the HOMFLY
polynomial\cite{Wocjan}, of which the Jones polynomial is a special
case. Aharonov \emph{et al.} subsequently showed that quantum
computers can in polynomial time estimate a certain additive
approximation to the even more general Tutte polynomial for planar
graphs\cite{Aharonov3}. The hardness of the additive
approximation obtained in \cite{Aharonov3} is not yet fully
understood. As discussed in chapter \ref{Jones} of this thesis, the
problem of finding a certain additive approximation to the Jones
polynomial of the trace closure of a braid at $e^{i 2 \pi/5}$ is
DQC1-complete. \\ \\
\end{minipage}

\noindent
\begin{minipage}[c]{\textwidth}
\noindent
\textbf{Algorithm:} Zeta Functions \\
\textbf{Type:} Non-oracular \\
\textbf{Speedup:} Superpolynomial \\
\textbf{Description:} As shown by Kedlaya\cite{Kedlaya}, quantum
computers can determine the zeta function of a genus $g$ curve over a
finite field $\mathbb{F}_q$ in time polynomial in $g$ and $\log
q$. No polynomial time classical algorithm for this problem is known. 
More speculatively, van Dam has conjectured that due to a connection
between the zeros of zeta functions and the eigenvalues of certain
quantum operators, quantum computers might be able to efficiently
approximate the number of solutions to equations over finite
fields\cite{vanDam_zeros}. Some evidence supporting this conjecture is
given in \cite{vanDam_zeros}.
\\ \\
\end{minipage}

\noindent
\begin{minipage}[c]{\textwidth}
\noindent
\textbf{Algorithm:} Weight Enumerators \\
\textbf{Type:} Non-oracular \\
\textbf{Speedup:} Exponential \\
\textbf{Description:} Let $C$ a code on $n$ bits,
\emph{i.e.} a subset of $\mathbb{Z}_2^n$. The weight enumerator of
$C$ is $S_C(x,y) = \sum_{c \in C} x^{|c|} y^{n-|c|}$, where $|c|$ 
denotes the Hamming weight of $c$. Weight enumerators have many uses
in the study of classical codes. If $C$ is a linear code, it can be
defined by $C = \{c: Ac = 0\}$ where $A$ is a matrix over
$\mathbb{Z}_2$. In this case $S_C(x,y) = \sum_{c:Ac=0} x^{|c|}
y^{n-|c|}$. Quadratically signed weight enumerators (QWGTs) are a
generalization of this: 
$S(A,B,x,y) = \sum_{c:Ac=0} (-1)^{c^T B c} x^{|c|} y^{n-|c|}$. Now 
consider the following special case. Let $A$ be an $n \times n$
matrix over $\mathbb{Z}_2$ such that $\mathrm{diag}(A) = I$. Let
$\mathrm{lwtr}(A)$ be the lower triangular matrix resulting from
setting all entries above the diagonal in $A$ to zero. Let $l,k$ be
positive integers. Given the promise that
$|S(A,\mathrm{lwtr}(A),k,l)| \geq \frac{1}{2} (k^2+l^2)^{n/2}$, the
problem of determining the sign of $S(A,\mathrm{lwtr}(A),k,l)$ is 
BQP-complete, as shown by Knill and Laflamme in \cite{Knill_QWGT}. The
evaluation of QWGTs is also closely related to the evaluation of
Ising and Potts model partition functions\cite{Lidar_Ising,
  Geraci_QWGT1, Geraci_QWGT2, Geraci_exact}.\\ \\
\end{minipage}

\noindent
\begin{minipage}[c]{\textwidth}
\noindent
\textbf{Algorithm:} Simulated Annealing \\
\textbf{Type:} Non-oracular \\
\textbf{Speedup:} Polynomial \\
\textbf{Description:} In simulated annealing, one has a series of
Markov chains defined by stochastic matrices $M_1, M_2,\ldots
,M_n$. These are slowly varying in the sense that their limiting
distributions $\pi_1, \pi_2, \ldots, \pi_n$ satisfy $|\pi_{t+1} -
\pi_t| < \epsilon$ for some small $\epsilon$. These distributions can often
be though of as thermal distributions at successively lower
temperatures. If $\pi_1$ can be easily prepared then by applying this
series of Markov chains one can sample from $\pi_n$. Typically, one
wishes for $\pi_n$ to be a distribution over good solutions to some
optimization problem. Let $\delta_i$ be the gap between the largest
and second largest eigenvalues of $M_i$. Let $\delta = \min_i
\delta_i$. The run time of this classical algorithm is proportional to
$1/\delta$. Building upon results of Szegedy\cite{Szegedy}, Somma
\emph{et al.} have shown\cite{Somma} that quantum computers can sample
from $\pi_n$ with a runtime proportional to $1/\sqrt{\delta}$.\\ \\
\end{minipage}

\noindent
\begin{minipage}[c]{\textwidth}
\noindent
\textbf{Algorithm:} String Rewriting \\
\textbf{Type:} Non-oracular \\
\textbf{Speedup:} Exponential \\
\textbf{Description:} String rewriting is a fairly general model of
computation. String rewriting systems (sometimes called grammars) are
specified by a list of rules by which certain substrings are allowed
to be replaced by certain other substrings. For example, context free
grammars, are equivalent to the pushdown automata. In
\cite{Wocjan_strings}, Janzing and Wocjan showed that
a certain string rewriting problem is PromiseBQP-complete. Thus
quantum computers can solve it in polynomial time, but classical
computers probably cannot. Given three strings $s$, $t$, and $t'$, and
a set of string rewriting rules satisfying certain promises, the
problem is to find a certain approximation to the difference between
the number of ways of obtaining $t$ from $s$ and the number of ways of
obtaining $t'$ from $s$. Similarly, certain problems of approximating
the difference in number of paths between pairs of vertices in a graph, and
difference in transition probabilities between pairs of states in a
random walk are also BQP-complete\cite{Wocjan_walks}.\\ \\
\end{minipage}

\noindent
\begin{minipage}[c]{\textwidth}
\noindent
\textbf{Algorithm:} Matrix Powers \\
\textbf{Type:} Non-oracular \\
\textbf{Speedup:} Exponential \\
\textbf{Description:} Quantum computers have an exponential advantage
in approximating matrix elements of powers of exponentially large
sparse matrices. Suppose we are have an $N \times N$ symmetric
matrix $A$ such that there are at most $\mathrm{polylog}(N)$ nonzero
entries in each row, and given a row index, the set of nonzero entries
can be efficiently computed. The task is, for any $1 < i < N$, and any
$m$ polylogarithmic in $N$, to approximate $(A^m)_{ii}$, the $i\th$
diagonal matrix element of $A^m$. The approximation is additive to
within $b^m \epsilon$, where $b$ is a given upper bound on $\|A \|$
and $\epsilon$ is of order $1/\mathrm{polylog}(N)$. As shown by
Janzing and Wocjan, this problem is PromiseBQP-complete, as is the
corresponding problem for off-diagonal matrix
elements\cite{Wocjan_matrix}. Thus, quantum computers can solve it in
polynomial time, but classical computers probably cannot.\\ \\ 
\end{minipage}

\noindent
\begin{minipage}[c]{\textwidth}
\noindent
\textbf{Algorithm:} Verifying Matrix Products \\
\textbf{Type:} Non-oracular \\
\textbf{Speedup:} Polynomial \\
\textbf{Description:} Given three $n \times n$ matrices, $A$, $B$, and
$C$, the matrix product verification problem is to decide whether
$AB=C$. Classically, the best known algorithm achieves this in time
$O(n^2)$, whereas the best known classical algorithm for matrix
multiplication runs in time $O(n^{2.376})$. Ambainis
\emph{et al.} discovered a quantum algorithm for this problem with
runtime $O(n^{7/4})$ \cite{Ambainis_matrix}. Subsequently, Buhrman and
\v Spalek improved upon this, obtaining a quantum algorithm for this
problem with runtime $O(n^{5/3})$ \cite{Buhrman_matrix}. This latter
algorithm is based on results regarding quantum walks that were proven
in\cite{Szegedy}. \\ \\
\end{minipage}

\subsection{Commentary}

As noted above, some of these algorithms break existing
cryptosystems. I mention this not because I care about breaking
cryptosystems, but because public key cryptosystems serve as a useful
indicator of the general consensus regarding the computational
difficulty of certain mathematical problems. For each known public key
cryptosystem, the security proof rests on an assumption that a certain
mathematical problem cannot be solved in polynomial time. As discussed
in section \ref{classical_prelim}, nobody knows how to prove that
these problems cannot be solved in polynomial time. However, some of
these problems, such as factoring, have resisted many years of
attempts at polynomial time solution. Thus, many people consider
it to be a safe assumption that factoring is hard. People and
corporations also effectively wager money on this, as nearly all
monetary transactions on the internet are encoded using the RSA
cryptosystem, which is based on the assumption that factoring is
hard to solve classically.

Many of the problems solved by these quantum algorithms may seem
somewhat esoteric. Upon hearing that quantum computers can approximate
Tutte polynomials or solve Pell's equation, one may ask ``Why should I
care?''. One answer to this is that mathematical algorithms sometimes
have applications which are not discovered until long after the
algorithm itself. A deeper answer is that complexity theory has shown
that the ability to solve hard computational problems is to some
degree a fungible resource. That is, many hard problems reduce to one
another with polynomial overhead. By finding a polynomial time
solution for one hard problem, one obtains polynomial time solutions
for a class of problems that can appear unrelated. An interesting
example of this is the LLL algorithm, for the apparently esoteric
problem of finding a basis of short vectors for a lattice. This has
subsequently found application in cryptography, error correction, and
finding integer relations between numbers. LLL and subsequent variants
for integer relation finding have even found use in computer assisted
mathematics. Their achievements include, among other things, the
discovery of a new formula for the digits of $\pi$ such that any digit
of $\pi$ can be calculated using a constant amount of computation
without having to calculate the preceeding digits\cite{pi_digits}!

In light of such history, and in light of known properties of
computational complexity, it makes sense to search for quantum
algorithms that provide polynomial speedups for problems of direct
practical relevance, and to try to find exponential speedups for
\emph{any} problem.

\section{What makes quantum computers powerful?}

The quantum algorithms described in section \ref{algorithms} establish
rigorously that quantum computers can solve some problems using far
fewer queries than classical computers, and establish convincingly
that quantum computers can solve certain problems using far fewer
computational steps than classical computers. It is natural to ask
what aspect of quantum mechanics gives quantum computers their
extra computational power. At first glance this appears to be a vague
and ill-posed question. However, we can approach this question in a
concrete way by taking away different aspects of quantum mechanics one
at a time, and seeing whether the resulting models of computation
retain the power of quantum computers.

One necessary ingredient for the power of quantum computing is the
exponentially high-dimensional Hilbert space. If we take this away,
then the resulting model of computation can be simulated in polynomial
time by a classical compute. To simulate the action of each
gate, one would need only to multiply the state vector by a unitary matrix
of polynomial dimension. Interestingly, classical optics can be
described by a formalism that is nearly identical to quantum
mechanics, the only difference being that the amplitudes are a
function of three spatial dimensions rather than an arbitrary number
of degrees of freedom. As described in appendix \ref{optical}, this
analogy was fruitful in that it led me to discover a quantum
algorithm for estimating gradients faster than is possible
classically\cite{Jordan_gradient}. Intuitions from optics were
apparently also used in the development of quantum algorithms for the
identification of hidden nonlinear structures\cite{Childs_nonlinear}.

We have seen that an exponentially large state space is a necessary
ingredient for the power of quantum computers. However, the states of
a probabilistic computer live in a vector space of exponentially high
dimension too. The state of a probabilistic computer with $n$ bits is
a vector in the $2^n$ dimensional space of probability distributions
over its possible configurations. The essential difference is that
quantum systems can exhibit interference due to the cancellation of
amplitudes. In contrast probabilities are all positive and cannot
interfere.

In light of this comparison between quantum and probabilistic
computers, it is natural to ask whether it is necessary that the
amplitudes be complex for quantum computers to retain their
power. After all, real amplitudes can still interfere as long as they
are allowed to be both positive and negative. It turns out that real
amplitudes are sufficient to obtain BQP\cite{Shi_real}. The proof of
this is based on the following simple idea. Take an arbitrary state of
$n$-qubits
\[
\ket{\psi} = \sum_{x=0}^{2^n-1} a_x \ket{x}.
\]
One can encode it by the following real state on $n+1$ qubits
\[
\ket{\psi_\mathbb{R}} = \sum_{x=0}^{2^n-1} \mathrm{Re}(a_x)
  \ket{x}\ket{0} + \mathrm{Im}(a_x) \ket{x}\ket{1}.
\]
As shown in \cite{Shi_real}, for each quantum gate, an equivalent
version on the encoded states can be efficiently constructed. As a
result, arbitrary quantum computations can be simulated using only
real amplitudes.

It is often said that entanglement is a key to the power of quantum
computers. A completely unentangled pure state on $n$-qubits is always
of the form 
\[
\ket{\psi_1} \otimes \ket{\psi_2} \otimes \ldots \otimes \ket{\psi_n}
\]
where $\ket{\psi_1},\ldots,\ket{\psi_n}$ are each single-qubit
states. Each of these states can be described by a pair of complex
amplitudes. Thus, the unentangled states are described by $2n$ complex
numbers in contrast to arbitrary states which in general require
$2^n$. Hence, it is not surprising that quantum computers must use
entangled states in order to obtain speedup over classical
computation.

\begin{figure}
\begin{center}
\includegraphics[width=0.6\textwidth]{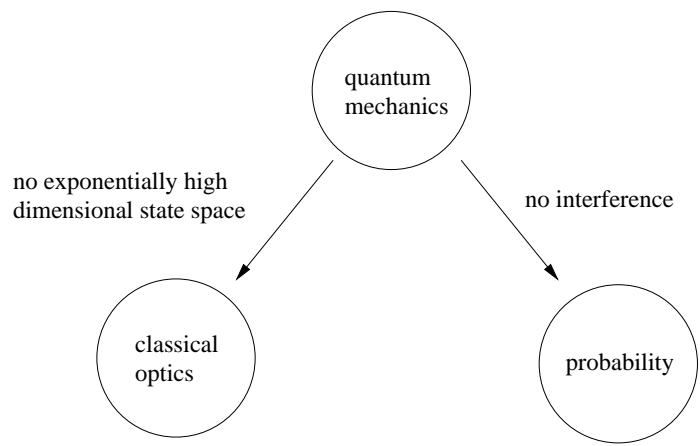}
\caption{\label{powers} A diagram of (loose) conceptual relationships
  between quantum mechanics, classical optics, and
  probability. Correspondingly, by taking away interference from
  quantum computers, one is left with the power of probabilistic
  computers, and by taking away the exponentially high dimensional
  space of quantum states, one is left with the power of optical
  computing. Interestingly, the Fourier transform is an important
  primitive in both quantum and optical computing.}
\end{center}
\end{figure}

Both interference and an exponentially high-dimensional state space
seem to be necessary to the power of quantum
computation. Nevertheless, there are classes of quantum processes 
which involve both of these characteristics yet can be simulated
classically in polynomial time. Certain quantum states admit concise
group-theoretic description. The Pauli group $P_n$ on $n$ qubits is
the group of $n$-fold tensor products of the four Pauli matrices
$\{X,Y,Z,I\}$ with phases of $\pm 1$ and $\pm i$.
\[
\begin{array}{cc}
X = \left[ \begin{array}{cc}
0 & 1 \\
1 & 0
\end{array} \right]
&
Y = \left[ \begin{array}{cc}
0 & -i \\
i & 0
\end{array} \right] \\
 & \\
Z = \left[ \begin{array}{cc}
1 & 0 \\
0 & -1
\end{array} \right] &
I = \left[ \begin{array}{cc}
1 & 0 \\
0 & 1
\end{array} \right]
\end{array}
\]
The states stabilized by subgroups of the Pauli group are called
stabilizer states. Any stabilizer state on $n$ qubits can be concisely
described using $\mathrm{poly}(n)$ bits by listing a set of generators
for its stabilizer subgroup.  The Clifford group is the normalizer of
the Pauli group. As discussed in \cite{Gottesman_pra}, it is generated
by CNOT, Hadamard, and
\[
S = \left[ \begin{array}{ll}
1 & 0 \\
0 & i
\end{array} \right]
\]
Applying a Clifford group operation to a stabilizer
state results in another stabilizer state. Thus quantum circuits made
from gates in the Clifford group can be efficiently simulated on a
classical computer\cite{Gottesman_thesis}, provided the initial state
is a stabilizer state. This is possible even though stabilizer
states can be highly entangled and can involve both positive and
negative amplitudes. This result is known as the Gottesman-Knill
theorem.

In addition, many quantum states with limited but nonzero entanglement
can be concisely described using the matrix product state (MPS) and
projected entangled pair state (PEPS) formalisms. Matrix product
states can have amplitudes of all phases. Nevertheless, processes on
MPS and PEPS with limited entaglement can be efficiently simulated on
classical computers\cite{MPS, PEPS}. 

\section{Fault Tolerance}
\label{FT}

Quantum computation is not the first model of physical computation to
offer an apparent exponential advantage over standard digital
computers. Certain analog circuits and even mechanical devices have
seemed to achive exponential speedups for some problems. However,
closer inspection has always shown that these devices depend on
exponential precision to operate, thus the speedups offered are
physically unrealistic (see introduction of \cite{Shor_factoring} and
references therein). This is one reason why we rarely see discussion of
analog computers today.

One of the most sensible objections raised in the early days of quantum
computing was that quantum computers might be a form of analog
computer, dependent on exponential precision in order to achieve
exponential speedup. The discussion of section \ref{quantum_prelim}
shows that this is not true. To perform a computation on
$\mathrm{poly}(n)$ gates, one needs only to perform each gate with
$1/\mathrm{poly}(n)$ precision. However, from a practical point of
view, this seems not entirely satisfactory. Presumably the precision
achievable in the laboratory is limited. Thus even if the precision
necessary for computations of size $n$ is only $1/\mathrm{poly}(n)$,
the achievable computations will be limited to some maximum size. In
addition to gate imperfections, errors can arise from stray couplings
to the environment, which is ignored in the analysis of section
\ref{quantum_prelim}.

The threshold theorem shows that both of these problems are solvable
in principle. More precisely, the threshold theorem shows that if
errors are below a certain fixed threshold, then quantum computations
of unlimited length can be carried out reliably(see chapter 10 of
\cite{Nielsen_Chuang}). This is achieved by encoding the quantum
information using quantum error correcting codes, and continually
correcting errors as they occur throughout the computation. It does
not matter whether the error arises from gate imperfections or from
stray influence from the environment, as long as the total error rate
is below the fault tolerance threshold.

Any operator on $n$ qubits can be uniquely decomposed as a linear
combination of $n$-fold tensor products of the Pauli matrices
$\{X,Y,Z,I\}$. The number of non-identity Pauli matrices in a given
tensor product is called its weight. If the Pauli decomposition of an
operator consists only of tensor products of weight at most $k$, then
the operator is said to be $k$-local.

The essence of quantum error correction is to take advantage of the
fact that errors encountered are likely to be of low Pauli
weight. Suppose for example, that each qubit gets flipped in the
computational basis (\emph{i.e.} acted on by an $X$ operator) with
probability $p$. Then, the probability that the resulting error is of
weight $k$ is of order $p^k$. If $p$ is small, then with high
probability the errors will be of low weight. The error model
described here is an essentially classical one, but as discussed
in\cite{Nielsen_Chuang}, the same conclusion carries through when
considering errors other than bitflips, coherent superpositions rather
than probabilistic mixtures of corrupted and uncorrupted states, and
errors arising from persistent perturbations to the control
Hamiltonian rather than discrete ``kicks''.

An $[n,k]$ quantum code is a $2^k$-dimensional subspace of the
$2^n$-dimensional Hilbert space of $n$ qubits. Thus, it encodes $k$
logical qubits using $n$ physical qubits. Let $P$ be the projector on
to the code. Suppose that there is a discrete set of possible errors
$\{E_1,E_2,\ldots,E_m\}$ that we wish to correct. Then for
error correction it suffices for  $E_1 P, E_2 P, \ldots,E_m P$ to be
mutually orthogonal, because then the errors can be distinguished and
hence corrected. Of course, in the quantum setting, the
possible errors may form a continuum, but as discussed
in\cite{Nielsen_Chuang}, this problem can be avoided by making an
appropriate measurement to collapse the system into one of a discrete
set of errors.

As an alternative to the active correction of errors, schemes have
been proposed in which the physical system from which the quantum
computer is constructed has intrinsic resistance to errors. One of the
earliest examples of this is the Kitaev's quantum memory based on
toric codes\cite{Kitaev_memory}. The toric code is an $[l^2,2]$ code
defined on an  $l \times l$ square lattice of qubits on a torus. It
has the property that any error of weight less than $l$ is
correctable. Furthermore, one can construct a 4-local Hamiltonian
with a 4-fold degenerate ground space equal to the code space. The
Hamiltonion provides an energy penalty against any error of weight $l$
or less. Thus, if the ambient temperature is small compared to this
energy penalty, the system is unlikely to get kicked out of the
ground space. Furthermore, $c$-local error terms in the Hamiltonian
only cause splitting of the ground space degeneracy at order $l/c$ in
perturbation theory. Topological quantum computation is closely
related to toric codes and is also a promising candidate for
intrinsically robust quantum computation\cite{Nayak}.

The active schemes of quantum error correction generally yield very
low fault tolerance thresholds which are difficult to achieve
experimentally. Furthermore, the amount of overhead incurred by the
error correction process can be very large if the noise only slightly
below the threshold. The passive schemes of error protection may
reduce or eliminate the need for costly active error correction. In
chapter \ref{FT_AQC}, I investigate the fault tolerance of adiabatic
quantum computers and find that such passive error protection schemes
show promise for the adiabatic model of quantum computation. Although
adiabatic quantum computation has attractive features for
experimentalists, particularly regarding solid state qubits, no
threshold theorem for the adiabatic model of quantum computation is
currently known. I see the establishment of a threshold theorem for
adiabatic quantum computers as a major open problem in the theory of
quantum fault tolerance.

\section{Models of Quantum Computation}

In section \ref{quantum_prelim} I discussed the
universality of quantum circuits, and the reasons to believe that no
model of quantum computation is more powerful than the quantum circuit
model. That is, no discrete nonrelativistic quantum system is capable
of efficiently solving problems outside of BQP. Furthermore, as
discussed in section \ref{FT}, quantum circuits can be fault tolerant in
the sense that they can accurately perform arbitrarily long
computations provided the error rate is below a certain
threshold. Thus, in principle, the quantum circuit model is the only
model we need for the both study quantum algorithms, and the physical
implementation of quantum computers. In practice however, for both the
development of new quantum algorithms and the physical construction of
quantum computers it has proven useful to have alternative models of
quantum computation.

\subsection{Adiabatic}

In the adiabatic model of quantum computation, one starts with an
initial Hamiltonian with an easy to prepare ground state, such as
$\ket{0}^{\otimes n}$. Then, the Hamiltonian is slowly varied until it
reaches some final Hamiltonian whose ground state encodes the solution
to some computational problem. The adiabatic theorem shows that if the
Hamiltonian is varied sufficiently slowly and the energy gap between
the ground state and first excited state is sufficiently large, then
the system will track the instantaneous ground state of the
time-varying Hamiltonian. More precisely, suppose the Hamiltonian is
$H(t)$, and the evolution is from $t=0$ to $t=T$. Let $\gamma(t)$ be
the gap between the ground energy and first excited energy at time
$t$. Let $\gamma = \min_{0 \leq t \leq T} \gamma(t)$. Then the
necessary runtime to ensure high overlap of the final state with the
final ground state scales as $1/\mathrm{poly}(\gamma)$. A rough
analysis\cite{Messiah} suggests that the runtime should in fact scale
as $1/\gamma^2$. However, it is not clear that this holds as a rigorous
theorem for all cases. Nevertheless, rigorous versions of the
adiabatic theorem are known. For example, in appendix
\ref{adiabatic_theorem} we reproduce an elegant proof due to Jeffrey
Goldstone of the following theorem:

\begin{theorem}
Let $H(s)$ be a finite-dimensional twice differentiable Hamiltonian on
$0 \leq s \leq 1$ with a nondegenerate ground state $\ket{\phi_0(s)}$
separated by an energy energy gap $\gamma(s)$. Let $\ket{\psi(t)}$ be
the state obtained by Schr\"odinger  time evolution with Hamiltonian
$H(t/T)$ starting with state $\ket{\phi_0(0)}$ at $t = 0$. Then, with
appropriate choice of phase for $\ket{\phi_0(t)}$,
\[
\begin{array}{l}
\| \Ket{\psi(T)} - \Ket{\phi_0(T)} \| \leq
\frac{1}{T} \left[ \frac{1}{\gamma(0)^2} \left\| \frac{\ud H}{\ud s} \right\|_{s=0}
+ \frac{1}{\gamma(1)^2} \left\| \frac{\ud H}{\ud s} \right\|_{s=1}
+ \int_0^1 \ud s \left( \frac{5}{\gamma^3} \left\| \frac{\ud H}{\ud s}
\right\|^2 + \frac{1}{\gamma^2} \left\| \frac{\ud^2 H}{\ud s^2}
\right\| \right) \right]. 
\end{array}
\]
\end{theorem}

Schr\"odinger's equation shows that, for any constant $g$, the
time-dependent Hamiltonian $gH(gt)$ yields the same time evolution
from time 0 to $T/g$ that $H(t)$ yields from 0 to $T$. Thus, the
running time of an adiabatic algorithm would not appear to be well
defined. However, in any experimental realization there will be a
limit to the magnitude of the fields and couplings. Thus it is
reasonable to limit the norm of each local term in $H(t)$. Such a
restriction enables one to make statements about how the running time
of an adiabatic algorithm scales with some measure of the problem
size. An alternative convention is to simply normalize $\| H(t) \|$
to 1. 

Adiabatic quantum computation was first proposed as a method to solve
combinatorial optimization problems\cite{Farhi_adiabatic}. The
spectral gap, and hence the runtime, of the proposed adiabatic
algorithms for combinatorial optimization remain unknown. Quantum
circuits can simulate adiabatic quantum computers with polynomial
overhead using standard techniques of quantum simulation. In 
\cite{Aharonov_adiabatic} it was shown that adiabatic quantum
computers can simulate arbitrary quantum circuits with polynomial
overhead. Thus, up to a polynomial factor, adiabatic quantum computers
are equivalent to the quantum circuit model. In other words, the set
of problems solvable in polynomial time by adiabatic quantum computers
is exactly BQP.

In \cite{Aharonov_adiabatic}, Aharonov \emph{et al.} present a
construction for doing universal quantum computation with a 5-local
Hamiltonian. The minimum eigenvalue gap is proportional to $1/g^2$,
where $g$ is the number of gates in the circuit being
simulated. Assuming quadratic scaling of runtime with the inverse gap,
this implies a quartic overhead. They also show how to achieve
universal adiabatic quantum computation with 3-local Hamiltonians and
a runtime of $O(1/g^{14})$. Using the perturbative gadgets of
\cite{Kempe} this can be reduced to 2-local with further overhead in
runtime. These runtimes were subsequently greatly improved. Using
a clever construction of Nagaj and Moses\cite{Nagaj}, one can achieve
universal adiabatic quantum computation using a 3-local Hamiltonian
with a gap of order $1/g^2$ throughout the
computation\footnote{Surprisingly, the authors do not explicitly state
  in \cite{Nagaj} that their construction can be used for this
  purpose.}.

When using adiabatic quantum computation as a method of devising
algorithms rather than as an architecture for building quantum
computers, one can consider simulatable Hamiltonians, which are a
larger class than physically realistic Hamiltonians. As shown
in\cite{Aharonov_Tashma, Cleve_sim}, sparse Hamiltonians can be
efficiently simulated on quantum circuits even if they are not
local. Furthermore, if the adiabatic algorithm runs in time $T$ then,
the simulation can be accomplished in time $T^{1+1/k}$ using a $k\th$
order Suzuki-Trotter formula.

Several reasons have been proposed for why adiabatic quantum computers
might be easier to physically implement than standard quantum
computers. The standard architecture for physically implementing quantum
computation is based on the quantum circuit model. Each gate is
performed by applying a pulse to the relevant qubits. For example, in
an ion trap quantum computer, a laser pulses are used to manipulate the
electronic state of ions. In any such pulse-based scheme, it takes a
large bandwidth to transmit the control pulses to the qubits. This 
therefore leaves a large window open for noise to enter the system and
disturb the qubits. In contrast, in an adiabatic quantum computer, all
the control is essentially DC, and therefore much of the noise other
than that at extremely low frequencies can be filtered
out\cite{DiVincenzo_personal, Rose_personal}. Secondly, as a
consequence of the adiabatic theorem, if the Hamiltonian $H(t)$ drifts
off course during the computation, then the adiabatic algorithm will
still succeed provided that the initial and final Hamiltonians are
correct, and adiabaticity is maintained. Furthermore, dephasing in the
eigenbasis of $H(t)$ causes no decrease in the success
probability. Lastly, $H(t)$ is applied constantly. If the minimum
energy gap between the ground and first excited states is $\gamma$ and
the ambient temperature $kT$ is less than $\gamma$, then the system
will be unlikely to get thermally excited out of its ground
state\cite{Childs_robustness}. Unfortunately, in most adiabatic
algorithms, $\gamma$ scales inversely with the problem size,
apparently necessitating progressively lower temperatures to solve
larger problems. A technique for getting around this problem is
discussed in chapter \ref{FT_AQC}.

\subsection{Topological}

Topological quantum computation is a model of quantum computation
based on the braiding of a certain type of quasiparticles called
anyons, which can arise in quasi-two-dimensional many-body
systems. The energy of the system depends only on how many
quasiparticles are present. By adiabatically dragging these particles
around one another in two dimensions and back to their original
locations, one may incur a Berry's phase. In certain systems, this
phase has the special property that it depends only on the topology of
the path and not on the geometry. The phase induced by the braiding of
$n$ identical particles will thus be a representation of the
$n$-strand braid group. Particles with this property are called
anyons. If the space of $n$-particle states is $d$-fold degenerate,
then by braiding the particles around each other, one can move the
system within this degenerate space. The ``phase'' in this case is a
$d$-dimensional unitary representation of the $n$-strand braid
group. The representation can thus be non-Abelian, in which case the
particles are said to be non-Abelian anyons.

Not all representations of the braid group correspond to anyons that
are physically realized. This is because in addition to winding around
each other, anyons can be fused. For example, consider the
Abelian representation of the braid group where the clockwise swapping
of a pair of particles induces a phase of $e^{i\phi}$. Now, we may fuse
a pair of anyons into a bound pair, which can be thought of as another
species of anyon. Winding two of these clockwise around each other
must induce a phase of $e^{i 4 \phi}$, because each anyon in each pair
has wound around each anyon in the other pair. The non-Abelian case is
analogous but more complicated. Such fusion rules and the condition
that the theory be purely topological create constraints on which
representations of the braid group can arise from braiding of
anyons. A set of braiding rules and fusion rules satisfying the
consistency constraints is called a topological quantum field theory
(TQFT). Topological quantum field theories can also be formulated in
the more traditional language of Lagrangians and path
integrals. However, in this thesis I will not need to use the
Lagrangian formulation of TQFTs.

Topological quantum field theories have been well studied by both
mathematicians and physicists. One remarkable result is that the
complete set of consistency relations between braiding and fusion are
completely captured by just two identities, known as the pentagon and
hexagon identities\cite{Nayak}. Despite this progress, a full classification of
topological quantum field theories is not known. However, several
interesting and nontrivial examples of quantum field theories are
known. Of particular interest for quantum computing is the TQFT whose
particles are called Fibonacci anyons. A set of $n$ Fibonoacci anyons
lives in a degenerate eigenspace whose dimension is $f_{n+1}$, the
$(n+1)\th$ Fibonacci number. Freedman \emph{et al.}
showed\cite{Freedman} that the representation of the braid group
induced by the braiding of Fibonacci anyons is dense in $SU(f_{n+1})$,
and furthermore that quantum circuits on $n$ qubits with
$\mathrm{poly}(n)$ gates can be efficiently simulated by a braid on
$\mathrm{poly}(n)$ Fibonacci anyons with $\mathrm{poly}(n)$
crossings. This is made possible by the fact that $f_{n+1}$ is
exponential in $n$, and that the Fibonacci representation has some
local structure onto which the tensor pruduct structure of quantum
circuits can be efficiently mapped. (More detail is given in chapter
\ref{Jones}.)

The upshot of this correspondence between braids and quantum
circuits is that one in principle can solve any problem in BQP by
dragging Fibonacci anyons around each other. Conversely, it has also
been shown that quantum circuits can simulate topological quantum
field theories\cite{Freedman2}. Thus topological quantum computing
with Fibonacci anyons is equivalent to BQP. If non-Abelian anyons can
be detected and manipulated, they may provide a useful medium for
quantum computation. Topological quantum computations are believed to
have a high degree of inherent fault tolerance. As long as the anyons
are kept well separated, small deviations in their trajectories will
not change the topology of their braiding, and hence will not change
the encoded quantum circuit. Furthermore, the degenerate eigenspace in
which the anyons live is protected by an energy gap. Thus thermal
transitions out of the space are unlikely. Thermal transisions between
states within the degenerate space, while not protected against by an
energy gap, are also unlikely because they can only be induced by
nonlocal, topologocal operations.

The topological model of quantum computation has also been useful in
the development of new quantum algorithms. In 1989, Witten showed that
the Jones polynomial (a powerful and important knot invariant) arises
as a Wilson loop in a particular quantum field theory called
Chern-Simons theory\cite{Witten}. The subsequent discovery by Freedman \emph{et
  al.}\cite{Freedman2} that quantum computers can simulate topological quantum field
theories thus implicitly showed that quantum computers can efficiently
approximate Jones polynomials. Furthermore, the discovery by Freedman
\emph{et al.} that topological quantum field theories can simulate
quantum circuits implicitly showed that a certain problem of
estimating Jones polynomials at the fifth root of unity is
BQP-hard. As discussed in chapter \ref{Jones}, this has since led to a
whole new class of exponential speedups by quantum computation for the
approximation of various knot invariants and other
polynomials. Furthermore, these speedups are very different from
previously known exponential quantum speedups, most of which are in
some way based on the hidden subgroups.

\subsection{Quantum Walks}
\label{walks}

In a continuous time quantum walk, one chooses a graph with nodes that
correspond to orthogonal states in a Hilbert space. The Hamiltonian is
then chosen to be either the adjacency matrix or Laplacian of this
graph. (For regular graphs these are equivalent up to an overall
energy shift). The quantum walk is the unitary time evolution induced
by this Hamiltonian.

Continuous time quantum walks were introduced
in\cite{Farhi_decision}. They have been found to provide an 
exponential speedup over classical computation for at least one
oracular problem\cite{Childs_weld}. Discrete time quantum walks have
also been formulated, and appear to be comparable in power to
continuous time quantum walks. Quantum walks have now been used
to find polynomial speedups for several natural oracular problems, as
discussed in section \ref{algorithms}. No result exists in the
literature answering the the question as to whether quantum
walks are BQP-complete (\emph{i.e.} universal). However, recent
progress suggests that, to my surprise, quantum walks may in fact be
BQP-complete \cite{Childs_personal}. 

Quantum walks are probably not useful in devising physical models of
quantum computation. The most obvious approach is to lay out the nodes
in space and couple them together along the edges. However, in a
quantum walk, the nodes form the basis of the Hilbert space, which
usually has exponentially high dimension. In quantum walk algorithms,
the Hamiltonians is usually not $k$-local for any fixed
$k$. Nevertheless, as discussed in \cite{Childs_thesis,
  Aharonov_Tashma, Childs_Jordan}, these Hamiltonians are efficiently
simulable by quantum circuits since they are sparse and efficiently
row-computable. 

\subsection{One Clean Qubit}

In the one clean qubit model of quantum computation, one is given a
single qubit in a pure state, and $n$ qubits in the maximally mixed
state. One then applies a polynomial size quantum circuit to this
intial state and afterwards performs a single-qubit measurement. The
one clean qubit model was originally proposed as an idealization of
quantum computation on highly mixed states, such as appear in NMR
implementations \cite{Knill_DQC1, Ambainis_DQC1, Chuang_DQC1}.

It is not surprising that one clean qubit computers appear to be weaker than
standard quantum computers. The amazing fact is that they can
nevertheless solve certain problems for which no efficient classical
algorithm is known. These problems include estimating the Pauli
decomposition of the unitary matrix corresponding to a polynomial-size
quantum circuit\footnote{This includes estimating the trace of the
  unitary as a special case.}, \cite{Knill_DQC1, Shepherd}, estimating
quadratically signed weight enumerators\cite{Knill_QWGT}, and estimating
average fidelity decay of quantum maps\cite{decay1, decay2}, and as
shown in chapter \ref{Jones}, approximating certain Jones polynomials.

The one clean qubit complexity class consists of the decision problems
which can be solved in polynomial time by a one clean qubit machine
with correctness probability of at least $2/3$ by running a one clean
qubit computer polynomially many times. In the original
definition\cite{Knill_DQC1} of DQC1 it is assumed that a classical
computer generates the quantum circuits to be applied to the initial
state $\rho$. By this definition DQC1 automatically contains
P. However, it is also interesting to consider a slightly weaker one
clean qubit model, in which the classical computer controlling the
quantum circuits has only the power of NC1. The resulting complexity
class appears to have the interesting property that it it is not
contained in P nor does P contain it. One clean qubit computers and
DQC1 are discussed in more detail in chapter \ref{Jones}.

\subsection{Measurement-based}
\label{measurement_based}

Amazingly, algorithm dependent unitary operations are not necessary
for universal quantum computation. Building on previous
work\cite{Gottesman_Chuang, Nielsen_measurement, Leung_measurement},
Raussendorf and Briegel showed in \cite{Raussendorf_Briegel} that one can perform
universal quantum computation by performing a series of single-qubit
projective measurements on a special entangled initial state. The
initial state need not depend on the computation to be performed,
other than its total size. The basis of a given single-qubit
measurements depends on the quantum circuit to be simulated, and on
the outcomes of the preceeding measurements. This dependence is
efficiently computable classically.

The measurement-based model is a promising candidate for the physical
implementation of quantum computers. The measurement-based model has a
fault tolerance threshold, which can be shown in a simple way by
adapting the existing threshold theorem for the circuit
model\cite{Leung_threshold}. It seems unlikely that the
measurement-based model will be useful for the design of algorithms,
because of its very direct relationship to the circuit model. However,
the class of initial states used in the measurement model, called
graph states, have many interesting properties both physical and
information theoretic. For example, they form the basis (literally as
well as figuratively) of a broad class ``nonadditive'' quantum codes,
which go beyond the stabilizer formalism\cite{Yu_nonadditive,
  Cross_nonadditive}. (Graph states are stabilizer states. However,
quantum error correcting codes can be obtained as the span of a
set of graph states. In general such a span is not equal to the
subspace stabilized by any subgroup of the Pauli group.)

\subsection{Quantum Turing Machines}

Quantum Turing machines were first formulated in \cite{Deutsch} and
further studied in \cite{Bernstein_Vazirani}. Quantum Turing machines
are defined analogously to classical Turing machines except with a
tape of qubits instead of a tape of bits, and with transition
amplitudes instead of deterministic transition rules. Quantum Turing
machines are usually somewhat cumbersome to work with, and have been
replaced by quantum circuits for most applications. However, quantum
Turing machines remain the only known model by which to define quantum
Kolmogorov complexity.

\section{Outline of New Results}

In this thesis I present three main results relating to different
models of quantum computation.

Recently, there has been growing interest in using adiabatic quantum
computation as an architecture for experimentally realizable quantum
computers. One of the reasons for this is the idea that the energy gap
should provide some inherent resistance to noise. It is now known that
universal quantum computation can be achieved adiabatically using
2-local Hamiltonians. The energy gap in these Hamiltonians scales as
an inverse polynomial in the problem size. In chapter \ref{FT_AQC} I
present stabilizer codes that can be used to produce a constant
energy gap against 1-local and 2-local noise. The corresponding
fault-tolerant universal Hamiltonians are 4-local and 6-local
respectively, which is the optimal result achievable within this
framework. I did this work in collaboration with Edward Farhi and
Peter Shor.

It is known that evaluating a certain approximation to the Jones
polynomial for the plat closure of a braid is a BQP-complete
problem. In chapter \ref{Jones} I show that evaluating a certain
additive approximation to the Jones polynomial at a fifth root of
unity for the trace closure of a braid is a complete problem for the
one clean qubit complexity class DQC1. That is, a one clean qubit 
computer can approximate these Jones polynomials in time polynomial in
both the number of strands and number of crossings, and the problem of
simulating a one clean qubit computer is reducible to approximating
the Jones polynomial of the trace closure of a braid. I did this work
in collaboration with Peter Shor.

Adiabatic quantum algorithms are often most easily formulated using
many-body interactions. However, experimentally available interactions
are generally two-body. In 2004, Kempe, Kitaev, and Regev introduced
perturbative gadgets, by which arbitrary three-body effective
interactions can be obtained using Hamiltonians consisting only of
two-body interactions\cite{Kempe}. These three-body effective
interactions arise from the third order in perturbation theory. Since
their introduction, perturbative gadgets have become a standard tool
in the theory of quantum computation. In chapter \ref{gadgets} I
construct generalized gadgets so that one can directly obtain
arbitrary $k$-body effective interactions from two-body Hamiltonians
using $k\th$ order in perturbation theory. I did this work in
collaboration with Edward Farhi.

\chapter{Fault Tolerance of Adiabatic Quantum Computers}
\label{FT_AQC}

\section{Introduction}

Recently, there has been growing interest in using adiabatic quantum
computation as an architecture for experimentally realizable quantum
computers. Aharonov \emph{et al.}\cite{Aharonov_adiabatic}, building on ideas by
Feynman\cite{Feynman2} and Kitaev\cite{Kitaev_book}, showed that any quantum
circuit can be simulated by an adiabatic quantum algorithm. The energy
gap for this algorithm scales as an inverse polynomial in $G$,  the
number of gates in the original quantum circuit. $G$ is identified as
the running time of the original circuit. By the adiabatic theorem,
the running time of the adiabatic simulation is polynomial in
$G$. Because the slowdown is only polynomial, adiabatic quantum
computation is a form of universal quantum computation.

Most experimentally realizable Hamiltonians involve only few-body
interactions. Thus theoretical models of quantum computation are
usually restricted to involve interactions between at most some
constant number of qubits $k$. Any Hamiltonian on $n$ qubits can be
expressed as a linear combination of terms, each of which is a tensor
product of $n$ Pauli matrices, where we include the $2 \times 2$
identity as a fourth Pauli matrix. If each of these tensor products
contains at most $k$ Pauli matrices not equal to the identity then the
Hamiltonian is said to be $k$-local. The Hamiltonian used in the
universality construction of \cite{Aharonov_adiabatic} is 3-local throughout the
time evolution. Kempe \emph{et al.} subsequently improved this to
2-local in \cite{Kempe}.

Schr\"odinger's equation shows that, for any constant $g$, $g H(g t)$
yields the same time evolution from time $0$ to $T/g$ that $H(t)$
yields from $0$ to $T$. Thus, the running time of an adiabatic
algorithm would not appear to be well defined. However, in any
experimental realization there will be a limit to the magnitude of the
fields and couplings. Thus it is reasonable to limit the norm of each
term in $H(t)$. Such a restriction enables one to make statements
about how the running time of an adiabatic algorithm scales with some
measure of the problem size, such as $G$. 

One of the reasons for interest in adiabatic quantum computation
as an architecture is the idea that adiabatic quantum computers may
have some inherent fault tolerance \cite{Childs_robustness, Sarandy, Aberg,
  Roland, Kaminsky} . Because the final state depends only on the final
Hamiltonian, adiabatic quantum computation may be resistant to slowly
varying control errors, which cause $H(t)$ to vary from its intended
path, as long as the final Hamiltonian is correct. An exception to this
would occur if the modified path has an energy gap small enough to
violate the adiabatic condition. Unfortunately, it is generally quite
difficult to evaluate the energy gap of arbitrary local Hamiltonians.

Another reason to expect that adiabatic quantum computations may be
inherently fault tolerant is that the energy gap should provide some
inherent resistance to noise caused by stray couplings to the
environment. Intuitively, the system will be unlikely to get excited
out of its ground state if $k_b T$ is less than the energy
gap. Unfortunately, in most proposed applications of adiabatic quantum
computation, the energy gap scales as an inverse polynomial in the
problem size. Such a gap only affords protection if the temperature
scales the same way. However, a temperature which shrinks polynomially
with the problem size may be hard to achieve experimentally.

To address this problem, we propose taking advantage of the
possibility that the decoherence will act independently on the
qubits. The rate of decoherence should thus depend on the energy gap
against local noise. We construct a class of stabilizer codes such
that encoded Hamiltonians are guaranteed to have a constant energy gap
against single-qubit excitations. These stabilizer codes are designed
so that adiabatic quantum computation with 4-local Hamiltonians is
universal for the encoded states. We illustrate the usefulness of
these codes for reducing decoherence using a noise model, proposed in
\cite{Childs_robustness}, in which each qubit independently couples to a photon
bath.

\section{Error Detecting Code}

To protect against decoherence we wish to create an energy gap against
single-qubit disturbances. To do this we use a quantum error
correcting code such that applying a single Pauli operator to any
qubit in a codeword will send this state outside of the codespace. Then we
add an extra term to the Hamiltonian which gives an energy penalty to
all states outside the codespace. Since we are only interested in
creating an energy penalty for states outside the codespace, only the
fact that an error has occurred needs to be detectable. Since we are
not actively correcting errors, it is not necessary for distinct
errors to be distinguishable. In this sense, our code is not truly an
error correcting code but rather an error \emph{detecting} code. Such
passive error correction is similar in spirit to ideas suggested for
the circuit model in \cite{Bacon}.

It is straightforward to verify that the 4-qubit code
\begin{eqnarray}
\label{zero_logical}
\ket{0_L} & = & \frac{1}{2} \left( \ket{0000} + i\ket{0011}
+i\ket{1100} + \ket{1111} \right)
\\
\label{one_logical}
\ket{1_L} & = & \frac{1}{2} \left( -\ket{0101} + i\ket{0110} +
i\ket{1001} - \ket{1010} \right)
\end{eqnarray}
satisfies the error-detection requirements, namely
\begin{equation}
\label{detection}
\bra{0_L} \sigma \ket{0_L} = \bra{1_L} \sigma \ket{1_L} = \bra{0_L}
\sigma \ket{1_L} = 0
\end{equation}
where $\sigma$ is any of the three Pauli operators acting on one
qubit. Furthermore, the following 2-local operations act as encoded
Pauli $X$, $Y$, and $Z$ 
operators.
\begin{equation}
\label{logical_operators}
\begin{array}{lll}
X_L & = & Y \otimes I \otimes Y \otimes I
\\
Y_L & = & -I \otimes X \otimes X \otimes I
\\
Z_L & = & Z \otimes Z \otimes I \otimes I
\end{array}
\end{equation}
That is,
\[
\begin{array}{llllll}
X_L \ket{0_L} & = & \ket{1_L}, & X_L \ket{1_L}   & = & \ket{0_L}, \\
Y_L \ket{0_L} & = & i \ket{1_L}, & Y_L \ket{1_L} & = & - i \ket{0_L}, \\
Z_L \ket{0_L} & = & \ket{0_L}, & Z_L \ket{1_L}   & = & - \ket{1_L}.
\end{array}
\]
An arbitrary state of a single qubit $\alpha \ket{0} + \beta \ket{1}$
is encoded as $\alpha \ket{0_L} + \beta \ket{1_L}$. 

Starting with an arbitrary 2-local Hamiltonian $H$ on $N$ bits, we
obtain a new fault tolerant Hamiltonian on $4N$ bits by the following
procedure. An arbitrary 2-local Hamiltonian can be written as a sum of
tensor products of pairs of Pauli matrices acting on different
qubits. After writing out $H$ in this way, make the following
replacements
\[
\begin{array}{llll}
I \to I^{\otimes 4}, & X \to X_L, & Y \to Y_L, & Z \to Z_L
\end{array}
\] 
to obtain a new 4-local Hamiltonian $H_{SL}$ acting on $4N$
qubits. The total fault tolerant Hamiltonian $H_S$ is
\begin{equation}
\label{total}
H_S = H_{SL} + H_{SP}
\end{equation}
where $H_{SP}$ is a sum of penalty terms, one acting on each encoded
qubit, providing an energy penalty of at least $E_p$ for going outside
the code space. We use the subscript $S$ to indicate that the
Hamiltonian acts on the system, as opposed to the environment, which
we introduce later. Note that $H_{SL}$ and $H_{SP}$ commute, and thus
they share a set of simultaneous eigenstates.

If the ground space of $H$ is spanned by  $\ket{\psi^{(1)}} \ldots
\ket{\psi^{(m)}}$ then the ground space of $H_S$ is spanned by the
encoded states $\ket{\psi_L^{(1)}} \ldots
\ket{\psi_L^{(m)}}$. Furthermore, the penalty terms provide an energy
gap against 1-local noise which does not shrink as the size of the
computation grows.

The code described by equations \ref{zero_logical} and
\ref{one_logical} can be obtained using the stabilizer formalism
\cite{Gottesman_thesis, Nielsen_Chuang}. In this formalism, a quantum code is not
described by explicitly specifying a set of basis states for the code
space. Rather, one specifies the generators of the stabilizer group for
the codespace. Let $G_n$ be the Pauli group on $n$ qubits (\emph{i.e.}
the set of all tensor products of $n$ Pauli operators with 
coefficients of $\pm 1$ or $\pm i$). The stabilizer group of a
codespace $C$ is the subgroup $S$ of $G_n$ such that $x \ket{\psi} =
\ket{\psi}$ for any $x \in S$ and any $\ket{\psi} \in C$.

A $2^k$ dimensional codespace over $n$ bits can be specified by
choosing $n-k$ independent commuting generators for the stabilizer
group $S$. By independent we mean that no generator
can be expressed as a product of others. In our case we are encoding a
single qubit using 4 qubits, thus $k=1$ and $n=4$, and we need 3
independent commuting generators for $S$. 

To satisfy the orthogonality conditions, listed in equation
\ref{detection}, which are necessary for error detection, it suffices
for each Pauli operator on a given qubit to anticommute with at least
one of the generators of the stabilizer group. The generators
\begin{eqnarray}
\label{generators}
g_1 & = & X \otimes X \otimes X \otimes X \nonumber \\
g_2 & = & Z \otimes Z \otimes Z \otimes Z \nonumber \\
g_3 & = & X \otimes Y \otimes Z \otimes I
\end{eqnarray}
satisfy these conditions, and generate the stabilizer group for
the code given in equations \ref{zero_logical} and \ref{one_logical}.

Adding one term of the form
\begin{equation}
\label{penalty}
H_p = -E_p (g_1+g_2+g_3)
\end{equation}
to the encoded Hamiltonian for each encoded qubit yields an energy
penalty of at least $E_p$ for any state outside the codespace.

2-local encoded operations are optimal. None of the encoded
operations can be made 1-local, because they would then have the same
form as the errors we are trying to detect and penalize. Such an
operation would not commute with all of the generators. 

\section{Noise Model}

Intuitively, one expects that providing an energy gap against a Pauli
operator applied to any qubit protects against 1-local noise. We
illustrate this using a model of decoherence proposed in
\cite{Childs_robustness}. In this model, the quantum computer is a set of
spin-$1/2$ particles weakly coupled to a large photon bath. The
Hamiltonian for the combined system is
\[
H = H_S + H_E + \lambda V,
\]
where $H_S(t)$ is the adiabatic Hamiltonian that implements the
algorithm by acting only on the spins, $H_E$ is the Hamiltonian which
acts only on the photon bath, and $\lambda V$ is a weak coupling
between the spins and the photon bath. Specifically, $V$ is assumed to
take the form
\[
V = \sum_i \int_0^\infty \ud \omega \left[ g(\omega) a_\omega
  \sigma_+^{(i)} + g^*(\omega)a_\omega^\dag \sigma_-^{(i)} \right],
\]
where $\sigma_{\pm}^{(i)}$ are raising and lowering operators for the
$i$th spin, $a_\omega$ is the annihilation operator for the photon
mode with frequency $\omega$, and $g(\omega)$ is the spectral density. 

From this premise Childs \emph{et al.} obtain the following master
equation 
\begin{equation}
\label{master_equation}
\frac{\ud \rho}{\ud t} = -i[H_S,\rho]-\sum_{a,b} M_{ab} \ 
\mathcal{E}_{ab}(\rho)
\end{equation}
where
\begin{eqnarray*}
M_{ab} & = &  \sum_i \left[ N_{ba}
  |g_{ba}|^2 \bra{a} \sigma_-^{(i)} \ket{b} \bra{b} \sigma_+^{(i)}
  \ket{a} \right. \\
  & & \left. + (N_{ab}+1) |g_{ab}|^2 \bra{b} \sigma_-^{(i)} \ket{a}
      \bra{a} \sigma_+^{(i)} \ket{b} \right]
\end{eqnarray*}
is a scalar,
\[
\mathcal{E}_{ab}(\rho) = \ket{a} \bra{a} \rho
+ \rho \ket{a} \bra{a} - 2 \ket{b} \bra{a} \rho \ket{a} \bra{b}
\]
is an operator, $\ket{a}$ is the instantaneous eigenstate of $H_S$
with energy $\omega_a$,
\[
N_{ba} = \frac{1}{\exp \left[ \beta (\omega_b-\omega_a)\right]-1}
\]
is the Bose-Einstein distribution at temperature $1/\beta$, and
\begin{equation}
\label{g}
g_{ba} = \left\{ \begin{array}{ll}
\lambda g(\omega_b - \omega_a), & \omega_b > \omega_a, \\
0, & \omega_b \leq \omega_a. \end{array} \right.
\end{equation}

Suppose that we encode the original $N$-qubit Hamiltonian as a
$4N$-qubit Hamiltonian as described above. As stated in equation
\ref{total}, the total spin Hamiltonian $H_S$ on $4N$ spins
consists of the encoded version $H_{SL}$ of the original Hamiltonian
$H_S$ plus the penalty terms $H_{SP}$.  

Most adiabatic quantum computations use an initial Hamiltonian with an
eigenvalue gap of order unity, independent of problem size. In such
cases, a nearly pure initial state can be achieved at constant
temperature. Therefore, we'll make the approximation that the spins
start in the pure ground state of the initial Hamiltonian, which we'll
denote $\ket{0}$. Then we can use equation \ref{master_equation} to
examine $\ud \rho/ \ud t$ at $t=0$. Since the initial state is $\rho =
\ket{0} \bra{0}$, $\mathcal{E}_{ab}(\rho)$ is zero unless $\ket{a} =
\ket{0}$. The master equation at $t=0$ is therefore 
\begin{equation}
\label{intermediate}
\left. \frac{\ud \rho}{\ud t}\right|_{t=0} = -i[H_S,\rho]-\sum_b
M_{0b} \ \mathcal{E}_{0b}(\rho).
\end{equation}

$H_{SP}$ is given by a sum of terms of the form \ref{penalty}, and it
commutes with $H_{SL}$. Thus, $H_S$ and $H_{SP}$ share a complete set
of simultaneous eigenstates. The eigenstates of $H_S$ can thus be
separated into those which are in the codespace $C$ (\emph{i.e.} the
ground space of $H_{SP}$) and those which are in the orthogonal space
$C^\perp$. The ground state $\ket{0}$ is in the codespace. $M_{0b}$
will be zero unless $\ket{b} \in C^\perp$, because $\sigma_{\pm} = (X
\pm i Y)/2$, and any Pauli operator applied to a single bit takes us
from $C$ to $C^\perp$. Equation \ref{intermediate} therefore becomes  
\begin{equation}
\label{master_equation2}
\left. \frac{\ud \rho}{\ud t} \right|_{t=0} = -i
     [H_S,\rho]+\sum_{b \in C^\perp} M_{0b} \ \mathcal{E}_{0b}(\rho)
\end{equation}

Since $\ket{0}$ is the ground state, $\omega_b \geq \omega_0$, thus
equation \ref{g} shows that the terms in $M_{0b}$ proportional to
$|g_{0b}|^2$ will vanish, leaving only
\[
M_{0b} = \sum_i N_{b0} |g_{b0}|^2 \bra{0}\sigma_-^{(i)}\ket{b} \bra{b}
\sigma_+^{(i)} \ket{0}.
\]

Now let's examine $N_{b0}$.
\[
\omega_b - \omega_0 = \bra{b} (H_{SL} + H_{SP}) \ket{b} - \bra{0}
(H_{SL} + H_{SP}) \ket{0}.
\] 
$\ket{0}$ is in the ground space of $H_{SL}$, thus 
\[
\bra{b} H_{SL} \ket{b} - \bra{0} H_{SL} \ket{0} \geq 0,
\]
and so
\[
\omega_b - \omega_0 \geq \bra{b} 
H_{SP} \ket{b} - \bra{0} H_{SP} \ket{0}.
\]
Since $\ket{b} \in C^\perp$ and $\ket{0} \in C$, 
\[
\bra{b} H_{SP} \ket{b} - \bra{0} H_{SP} \ket{0}
= E_p,
\]
thus $\omega_b - \omega_0 \geq E_p$.

A sufficiently large $\beta E_p$ will make $N_{ba}$ small enough that
the term $\sum_{b \in C^\perp} M_{0b} \mathcal{E}(\rho)$ can be
neglected from the master equation, leaving
\[
\left. \frac{\ud \rho}{\ud t} \right|_{t=0} \approx - i [H_S,\rho]
\]
which is just Schr\"odinger's equation with a Hamiltonian equal to
$H_S$ and no decoherence. Note that the preceding derivation did
not depend on the fact that $\sigma_\pm^{(i)}$ are raising and
lowering operators, but only on the fact that they act on a single
qubit and can therefore be expressed as a linear combination of Pauli
operators.

$N_{b0}$ is small but nonzero. Thus, after a sufficiently long time,
the matrix elements of $\rho$ involving states other than $\ket{0}$
will become non-negligible and the preceding picture will break down. How
long the computation can be run before this happens depends on the
magnitude of $\sum_{b \in C^\perp} M_{ob} \mathcal{E}(\rho)$, which
shrinks exponentially with $E_p/T$ and grows only polynomially
with the number of qubits $N$. Thus it should be sufficient for
$1/T$ to grow logarithmically with the problem size for the noise due
to the terms present in equation \ref{master_equation} to be supressed. In
contrast, one expects that if the Hamiltonian had only an inverse
polynomial gap against 1-local noise, the temperature would need to
shrink polynomially rather than logarithmically. 

One should note that equation \ref{master_equation} is derived by truncating
at second order in the coupling between the system and bath. Thus, at
sufficiently long timescales, higher order couplings may become
relevant. Physical 1-local terms can give rise to $k$-local virtual
terms at $k\th$ order in perturbation theory, thus protecting against
these may require an extension of the technique present
here. Nevertheless, the technique presented here protects against the
lowest order noise terms, which should be the largest ones provided
that the coupling to the environment is weak.

\section{Higher Weight Errors}

Now that we know how to obtain a constant gap against 1-local noise,
we may ask whether the same is possible for 2-local noise. To 
accomplish this we need to find a stabilizer group such that any pair
of Pauli operators on two bits anticommutes with at least one of the
generators. This is exactly the property satisfied by the
standard\cite{Nielsen_Chuang} 5-qubit stabilizer code, whose stabilizer group
is generated by
\begin{eqnarray}
\label{fivegenerators}
g_1 & = & X \otimes Z \otimes Z \otimes X \otimes I \nonumber \\
g_2 & = & I \otimes X \otimes Z \otimes Z \otimes X \nonumber \\
g_3 & = & X \otimes I \otimes X \otimes Z \otimes Z \nonumber \\
g_4 & = & Z \otimes X \otimes I \otimes X \otimes Z. 
\end{eqnarray}
The codewords for this code are
\begin{eqnarray*}
\ket{0_L} & = & \frac{1}{4} \left[ \ \ket{00000} + \ket{10010} +
  \ket{01001} +\ket{10100} \right. \\
& & + \ket{01010} - \ket{11011} - \ket{00110} - \ket{11000} \\
& & - \ket{11101} -\ket{00011} - \ket{11110} - \ket{01111} \\
& & \left. - \ket{10001} - \ket{01100} - \ket{10111} + \ket{00101}
\  \right]
\end{eqnarray*}
\begin{eqnarray*}
\ket{1_L} & = & \frac{1}{4} \left[ \ \ket{11111} + \ket{01101} +
  \ket{10110} + \ket{01011} \right. \\
& & + \ket{10101} - \ket{00100} - \ket{11001} - \ket{00111} \\
& & - \ket{00010} - \ket{11100} - \ket{00001} - \ket{10000} \\
& & \left. - \ket{01110} - \ket{10011} - \ket{01000} + \ket{11010}
\ \right]. 
\end{eqnarray*}
The encoded Pauli operations for this code are conventionally
expressed as
\begin{eqnarray*}
X_L & = & X \otimes X \otimes X \otimes X \otimes X \\
Y_L & = & Y \otimes Y \otimes Y \otimes Y \otimes Y \\
Z_L & = & Z \otimes Z \otimes Z \otimes Z \otimes Z.
\end{eqnarray*}
However, multiplying these encoded operations by members of the
stabilizer group doesn't affect their action on the codespace. Thus we
obtain the following equivalent set of encoded operations.
\begin{eqnarray}
\label{five_logical}
X_L & = & -X \otimes I \otimes Y \otimes Y \otimes I  \nonumber \\
Y_L & = & -Z \otimes Z \otimes I \otimes Y \otimes I \nonumber \\
Z_L & = & -Y \otimes Z \otimes Y \otimes I \otimes I
\end{eqnarray}
These operators are all 3-local. This is the best that can be hoped
for, because the code protects against 2-local operations and
therefore any 2-local operation must anticommute with at least one of
the generators.

Besides increasing the locality of the encoded operations, one can
seek to decrease the number of qubits used to construct the
codewords. The quantum singleton bound\cite{Nielsen_Chuang} shows that the
five qubit code is already optimal and cannot be improved in this
respect.

The distance $d$ of a quantum code is the minimum number of qubits of
a codeword which need to be modified before obtaining a nonzero inner
product with a different codeword. For example, applying $X_L$, which
is 3-local, to $\ket{0_L}$ of the 5-qubit code converts it into
$\ket{1_L}$, but applying any 2-local operator to any of the codewords
yields something outside the codespace. Thus the distance of the
5-qubit code is 3. Similarly the distance of our 4-qubit code is 2. To
detect $t$ errors a code needs a distance of $t+1$, and to correct $t$
errors, it needs a distance of $2t+1$.

The quantum singleton bound states that the distance of any quantum
code which uses $n$ qubits to encode $k$ qubits will satisfy
\begin{equation}
\label{singleton}
n-k \geq 2(d-1).
\end{equation}
To detect 2 errors, a code must have distance 3. A code which
encodes a single qubit with distance 3 must use at least 5 qubits, by
equation \ref{singleton}. Thus the 5-qubit code is optimal. To detect
1 error, a code must have distance 2. A code which encodes a single
qubit with distance 2 must have at least 3 qubits, by equation
\ref{singleton}. Thus it appears possible that our 4-qubit code is not
optimal. However, no 3-qubit stabilizer code can detect all
single-qubit errors, which we show as follows.

The stabilizer group for a 3-qubit code would have two independent
generators, each being a tensor product of 3 Pauli operators. 
\begin{eqnarray*}
g_1 & = & \sigma_{11} \otimes \sigma_{12} \otimes \sigma_{13} \\
g_2 & = & \sigma_{21} \otimes \sigma_{22} \otimes \sigma_{23}
\end{eqnarray*} \\
These must satisfy the following two conditions: (1) they commute, and
(2) an $X, Y$, or $Z$ on any of the three qubits anticommutes with at
least one of the generators. This is impossible, because condition (2)
requires $\sigma_{1i} \neq \sigma_{2i} \neq I$ for each
$i=1,2,3$. In this case $g_1$ and $g_2$ anticommute.

The stabilizer formalism describes most but not all currently known
quantum error correcting codes. We do not know whether a 3-qubit code
which detects all single-qubit errors while still maintaining 2-local
encoded operations can be found by going outside the stabilizer
formalism. It may also be interesting to investigate whether there
exist computationally universal 3-local or 2-local adiabatic
Hamiltonians with a constant energy gap against local noise.

\chapter{DQC1-completeness of Jones Polynomials}
\label{Jones}

\section{Introduction}

It is known that evaluating a certain approximation to the Jones
polynomial for the plat closure of a braid is a BQP-complete
problem. That is, this problem exactly captures the power of the
quantum circuit model\cite{Freedman, Aharonov1, Aharonov2}. The one
clean qubit model is a model of quantum computation in which all but
one qubit starts in the maximally mixed state. One clean qubit
computers are believed to be strictly weaker than standard quantum
computers, but still capable of solving some classically intractable
problems \cite{Knill_DQC1}. Here we show that evaluating a certain
approximation to the Jones polynomial at a fifth root of unity for the
trace closure of a braid is a complete problem for the one clean qubit
complexity class. That is, a one clean qubit computer can approximate
these Jones polynomials in time polynomial in both the number of
strands and number of crossings, and the problem of simulating a one
clean qubit computer is reducible to approximating the Jones
polynomial of the trace closure of a braid.

\section{One Clean Qubit}
\label{DQC1}

The one clean qubit model of quantum computation originated as an
idealized model of quantum computation on highly mixed initial states,
such as appear in NMR implementations\cite{Knill_DQC1, Ambainis_DQC1}. In this
model, one is given an initial quantum state consisting of a
single qubit in the pure state $\ket{0}$, and $n$ qubits in the
maximally mixed state. This is described by the density matrix
\[
\rho = \ket{0}\bra{0} \otimes \frac{I}{2^n}.
\]

One can apply any polynomial-size quantum circuit to $\rho$, and
then measure the first qubit in the computational basis. Thus, if the
quantum circuit implements the unitary transformation $U$, the
probability of measuring $\ket{0}$ will be 
\begin{equation}
\label{experiment}
p_0 = \tr[(\ket{0}\bra{0}\otimes I) U \rho U^\dag] =
2^{-n} \tr[(\ket{0}\bra{0} \otimes I) U (\ket{0}\bra{0} \otimes I)  U^\dag].
\end{equation}

Computational complexity classes are typically described using
decision problems, that is, problems which admit yes/no
answers. This is mathematically convenient, and the implications for
the complexity of non-decision problems are usually straightforward to
obtain (\emph{cf.} \cite{Papadimitriou}). The one clean qubit
complexity class consists of the decision problems which can be solved
in polynomial time by a one clean qubit machine with correctness 
probability of at least $2/3$. The experiment described in 
equation  \ref{experiment} can be repeated polynomially many
times. Thus, if $p_1 \geq 1/2 + \epsilon$ for instances to which the
answer is yes, and $p_1 \leq 1/2 - \epsilon$ otherwise, then by
repeating the experiment $\mathrm{poly}(1/\epsilon)$ times and taking
the majority vote one can achieve $2/3$ probability of
correctness. Thus, as long as $\epsilon$ is at least an inverse
polynomial in the problem size, the problem is contained in the one
clean qubit complexity class. Following \cite{Knill_DQC1}, we will refer to
this complexity class as DQC1.

A number of equivalent definitions of the one clean qubit complexity
class can be made. For example, changing the pure part of the
initial state and the basis in which the final measurement is
performed does not change the resulting complexity class. Less
trivially, allowing logarithmically many clean qubits results in
the same class, as discussed below. It is essential that on a given
copy of $\rho$, measurements are performed only at the end of the
computation. Otherwise, one could obtain a pure state
by measuring $\rho$ thus making all the qubits ``clean'' and
re-obtaining BQP. Remarkably, it is not necessary to have even one
fully polarized qubit to obtain the class DQC1. As shown in
\cite{Knill_DQC1}, a single partially polarized qubit suffices.

In the original definition\cite{Knill_DQC1} of DQC1 it is assumed that a
classical computer generates the quantum circuits to be applied to the
initial state $\rho$. By this definition DQC1 automatically contains
the complexity class P. However, it is also interesting to consider a
slightly weaker one clean qubit model, in which the classical computer
controlling the quantum circuits has only the power of NC1. The
resulting complexity class appears to have the interesting property
that it is incomparable to P. That is, it is not contained in P nor
does P contain it. We suspect that our algorithm and hardness proof
for the Jones polynomial carry over straightforwardly to this
NC1-controlled one clean qubit model. However, we have not pursued
this point.

Any $2^n \times 2^n$ unitary matrix can be decomposed as a linear
combination of $n$-fold tensor products of Pauli matrices. As
discussed in \cite{Knill_DQC1}, the problem of estimating a coefficient in
the Pauli decomposition of a quantum circuit to polynomial accuracy is
a DQC1-complete problem. Estimating the normalized trace of a quantum
circuit is a special case  of this, and it is also DQC1-complete. This
point is discussed in \cite{Shepherd}. To make our presentation
self-contained, we will sketch here a proof that trace estimation is
DQC1-complete. Technically, we should consider the decision problem of
determining whether the trace is greater than a given
threshold. However, the trace estimation problem is easily reduced to
its decision version by the method of binary search, so we will
henceforth ignore this point.

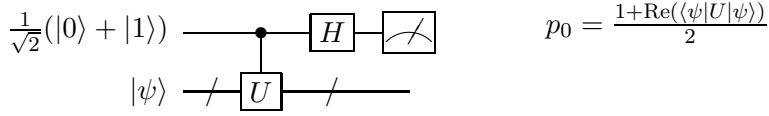
\begin{figure}
\begin{center}
\[
\begin{array}{lll}
\mbox{\Qcircuit @C=1em @R=.7em {
  \lstick{\frac{1}{\sqrt{2}}(\ket{0}+\ket{1})} & \qw & \ctrl{1} & \gate{H} & \meter \\
  \lstick{\ket{\psi}} & {/} \qw  & \gate{U} & {/} \qw & \qw & }} &
  \quad & p_0 = \frac{1+\mathrm{Re}(\bra{\psi}U\ket{\psi})}{2}
\end{array} \]
\caption{\label{Hadamard} This circuit implements the Hadamard
  test. A horizontal line represents a qubit. A horizontal line with a
  slash through it represents a register of multiple qubits. The
  probability $p_0$ of measuring $\ket{0}$ is as shown  above. Thus,
  one can obtain the real part of $\bra{\psi}U\ket{\psi}$ 
  to precision $\epsilon$ by making $O(1/\epsilon^2)$ measurements and
  counting what fraction of the measurement outcomes are
  $\ket{0}$. Similarly, if the control bit is instead initialized to 
  $\frac{1}{\sqrt{2}} (\ket{0} - i \ket{1})$, one can estimate the
  imaginary part of $\bra{\psi}U\ket{\psi}$.}
\end{center}
\end{figure}

First we'll show that trace estimation is contained in DQC1. Suppose we
are given a quantum circuit on $n$ qubits which consists of
polynomially many gates from some finite universal gate set. Given a
state $\ket{\psi}$ of $n$ qubits, there is a standard technique for
estimating $\bra{\psi} U \ket{\psi}$, called the Hadamard
test\cite{Aharonov1}, as shown in figure \ref{Hadamard}. Now suppose
that we use the circuit from figure \ref{Hadamard}, but choose
$\ket{\psi}$ uniformly at random from the $2^n$ computational basis
states. Then the probability of getting outcome $\ket{0}$ for a given
measurement will be
\[
p_0 = \frac{1}{2^n} \sum_{x \in \{0,1\}^n} \frac{1 +
  \mathrm{Re}(\bra{x} U \ket{x})}{2} = \frac{1}{2} +
\frac{\mathrm{Re}(\tr \ U)}{2^{n+1}}.
\]
Choosing $\ket{\psi}$ uniformly at random from the $2^n$ computational
basis states is exactly the same as inputting the density matrix $I/2^n$
to this register. Thus, the only clean qubit is the control
qubit. Trace estimation is therefore achieved in the one clean qubit
model by converting the given circuit for $U$ into a circuit for
controlled-$U$ and adding Hadamard gates on the control bit. One can
convert a circuit for $U$ into a circuit for controlled-$U$ by
replacing each gate $G$ with a circuit for controlled-$G$. The
overhead incurred is thus bounded by a constant factor
\cite{Nielsen_Chuang}. 

Next we'll show that trace estimation is hard for DQC1. Suppose we are
given a classical description of a quantum circuit implementing some
unitary transformation $U$ on $n$ qubits. As shown in equation
\ref{experiment}, the probability of obtaining outcome $\ket{0}$ from
the one clean qubit computation of this circuit is proportional to the
trace of the non-unitary operator $(\ket{0}\bra{0} \otimes I) U
(\ket{0}\bra{0} \otimes I)  U^\dag$, which acts on $n$
qubits. Estimating this can be achieved by estimating the trace of
\[
U' = \begin{array}{l} \Qcircuit @C=1em @R=.7em {
 &  \qw   & \multigate{1}{U^\dag} & \ctrl{2} & \multigate{1}{U} & \ctrl{3} & \qw    & \qw \\
 & {/}\qw & \ghost{U^\dag}        & \qw      & \ghost{U}        & \qw      & {/}\qw & \qw \\
 &  \qw   & \qw                   & \targ    & \qw              & \qw      & \qw    & \qw \\
 &  \qw   & \qw                   & \qw      & \qw              & \targ    & \qw    & \qw
} \end{array}
\]
which is a unitary operator on $n+2$ qubits. This suffices because
\begin{equation}
\label{tracform}
\tr[(\ket{0}\bra{0} \otimes I) U (\ket{0}\bra{0} \otimes I)
  U^\dag] = \frac{1}{4} \tr[ U' ].
\end{equation}
To see this, we can think in terms of the computational basis:
\[
\tr[U'] = \sum_{x \in \{0,1\}^n} \bra{x} U' \ket{x}.
\]
If the first qubit of $\ket{x}$ is $\ket{1}$, then
the rightmost CNOT in $U'$ will flip the lowermost qubit. The resulting state
will be orthogonal to $\ket{x}$ and the corresponding  matrix element
will not contribute to the trace. Thus this CNOT gate simulates the
initial projector $\ket{0}\bra{0} \otimes I$ in equation
\ref{tracform}. Similarly, the other CNOT in $U'$ simulates the other
projector in equation \ref{tracform}.

The preceding analysis shows that, given a description of a quantum
circuit implementing a unitary transformation $U$ on $n$-qubits, the
problem of  approximating $\frac{1}{2^n} \tr \ U$ to within $\pm
\frac{1}{\mathrm{poly}(n)}$ precision is DQC1-complete.

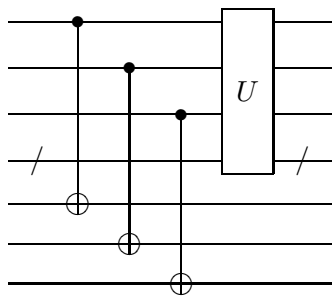
\begin{figure}
\begin{center}
\mbox{ 
\Qcircuit @C=1em @R=.7em {
  & \qw    & \ctrl{4} & \qw      & \qw      & \multigate{3}{U} & \qw    & \qw \\
  & \qw    & \qw      & \ctrl{4} & \qw      & \ghost{U}        & \qw    & \qw \\
  & \qw    & \qw      & \qw      & \ctrl{4} & \ghost{U}        & \qw    & \qw \\
  & {/}\qw & \qw      & \qw      & \qw      & \ghost{U}        & {/}\qw & \qw \\ 
  & \qw    & \targ    & \qw      & \qw      & \qw              & \qw    & \qw \\
  & \qw    & \qw      & \targ    & \qw      & \qw              & \qw    & \qw \\
  & \qw    & \qw      & \qw      & \targ    & \qw              & \qw    & \qw
} }
\caption{\label{ancillas} Here CNOT gates are used to simulate 3 clean
  ancilla qubits.}
\end{center}
\end{figure}

Some unitaries may only be efficiently implementable using ancilla
bits. That is, to implement $U$ on $n$-qubits using a quantum circuit,
it may be most efficient to construct a circuit on $n+m$ qubits which
acts as $U \otimes I$, provided that the $m$ ancilla qubits are all
initialized to $\ket{0}$. These ancilla qubits are used as work bits
in intermediate steps of the computation. To estimate the trace of
$U$, one can construct a circuit $U_a$ on $n+2m$ qubits by adding CNOT
gates controlled by the $m$ ancilla qubits and acting on $m$ extra
qubits, as shown in figure \ref{ancillas}. This simulates the presence
of $m$ clean ancilla qubits, because if any of the ancilla qubits is
in the $\ket{1}$ state then the CNOT gate will flip the corresponding
extra qubit, resulting in an orthogonal state which will not
contribute to the trace.

With one clean qubit, one can estimate the trace of $U_a$ to a
precision of $\frac{2^{n+2m}}{\mathrm{poly}(n,m)}$. By construction, 
$\tr[U_a] = 2^m \tr[U]$. Thus, if $m$ is logarithmic in $n$, then
one can obtain $\tr[U]$ to precision $\frac{2^n}{\mathrm{poly}(n)}$,
just as can be obtained for circuits not requiring ancilla
qubits. This line of reasoning also shows that the $k$-clean qubit
model gives rise to the same complexity class as the one clean qubit
model, for any constant $k$, and even for $k$ growing logarithmically
with $n$.

It seems unlikely that the trace of these exponentially large unitary
matrices can be estimated to this precision on a classical computer in
polynomial time. Thus it seems unlikely that DQC1 is contained in
P. (For more detailed analysis of this point see \cite{Datta}.)
However, it also seems unlikely that DQC1 contains all of BQP. In
other words, one clean qubit computers seem to provide exponential
speedup over classical computation for some problems despite being
strictly weaker than standard quantum computers.

\section{Jones Polynomials}

A knot is defined to be an embedding of the circle in $\mathbb{R}^3$
considered up to continuous transformation (isotopy). More
generally, a link is an embedding of one or more circles in
$\mathbb{R}^3$ up to isotopy. In an oriented knot or link, one of
the two possible traversal directions is chosen for each circle.
Some examples of knots and links are shown in figure \ref{knots}. One
of the fundamental tasks in knot theory is, given two representations
of knots, which may appear superficially different, determine whether
these both represent the same knot. In other words, determine whether
one knot can be deformed into the other without ever cutting the
strand.

\begin{figure}
\begin{center}
\includegraphics[width=0.43\textwidth]{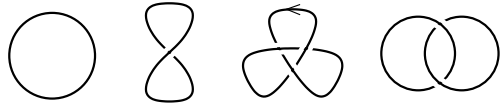}
\caption{\label{knots} Shown from left to right are the unknot,
  another representation of the unknot, an oriented trefoil knot, and
  the Hopf link. Broken lines indicate undercrossings.}
\end{center}
\end{figure} 

Reidemeister showed in 1927 that two knots are the same if and only if
one can be deformed into the other by some sequence constructed from
three elementary moves, known as the Reidemeister moves, shown in
figure \ref{Reidemeister}. This reduces the problem of distinguishing
knots to a combinatorial problem, although one for which no efficient
solution is known. In some cases, the sequence of Reidemeister moves
needed to show equivalence of two knots involves intermediate steps
that increase the number of crossings. Thus, it is very difficult to
show upper bounds on the number of moves necessary. The most
thoroughly studied knot equivalence problem is the problem of deciding
whether a given knot is equivalent to the unknot. Even showing the
decidability of this problem is highly nontrivial. This was achieved
by Haken in 1961\cite{Haken}. In 1998 it was shown by Hass, Lagarias,
and Pippenger that the problem of recognizing the unknot is
contained in NP\cite{Hass}.

\begin{figure}
\begin{center}
\includegraphics[width=0.85\textwidth]{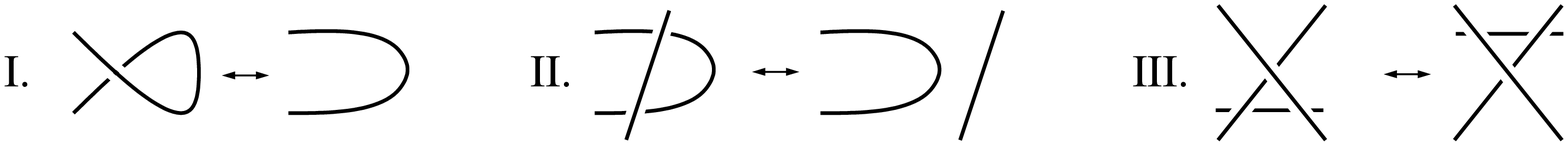}
\caption{\label{Reidemeister} Two knots are the same
  if and only if one can be deformed into the other by some sequence
  of the three Reidemeister moves shown above.}
\end{center}
\end{figure} 

A knot invariant is any function on knots which is invariant under the
Reidemeister moves. Thus, a knot invariant always takes the same
value for different representations of the same knot, such as the two
representations of the unknot shown in figure \ref{knots}. In general,
there can be distinct knots which a knot invariant fails to
distinguish. 

One of the best known knot invariants is the Jones polynomial,
discovered in 1985 by Vaughan Jones\cite{Jones}. To any oriented knot
or link, it associates a Laurent polynomial in the variable
$t^{1/2}$. The Jones polynomial has a degree in $t$ which grows at
most linearly with the number of crossings in the link. The
coefficients are all integers, but they may be exponentially
large. Exact evaluation of Jones polynomials at all but a few special
values of $t$ is \#P-hard\cite{Jaeger}. The Jones polynomial can be
defined recursively by a simple ``skein'' relation. However, for our
purposes it will be more convenient to use a definition in terms of a
representation of the braid group, as discussed below.

\begin{figure}[!htbp]
\begin{center}
\includegraphics[width=0.6\textwidth]{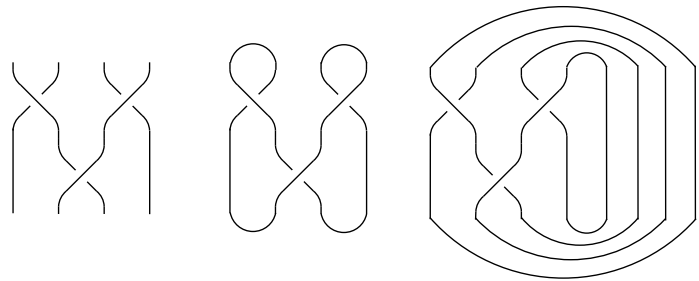}
\caption{\label{trace_plat} Shown from left to right are a braid, its
  plat closure, and its trace closure.}
\end{center}
\end{figure} 

To describe in more detail the computation of Jones polynomials we
must specify how the knot will be represented on the
computer. Although an embedding of a circle in $\mathbb{R}^3$ is a
continuous object, all the topologically relevant information about a
knot can be described in the discrete language of the braid
group. Links can be constructed from braids by joining the free
ends. Two ways of doing this are taking the plat closure and
the trace closure, as shown in figure \ref{trace_plat}. Alexander's
theorem states that any link can be constructed as the trace closure
of some braid. Any link can also be constructed as the plat closure of
some braid. This can be easily proven as a corollary to Alexander's
theorem, as shown in figure \ref{trace_to_plat}.

\begin{figure}[!htbp]
\begin{center}
\includegraphics[width=0.45\textwidth]{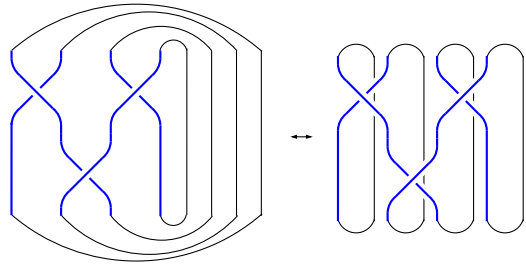}
\caption{\label{trace_to_plat} A trace closure of a braid on $n$
  strands can be converted to a plat closure of a braid on $2n$
  strands by moving the ``return'' strands into the braid.}
\end{center}
\end{figure} 

\begin{figure}[!htbp]
\begin{center}
\includegraphics[width=0.8\textwidth]{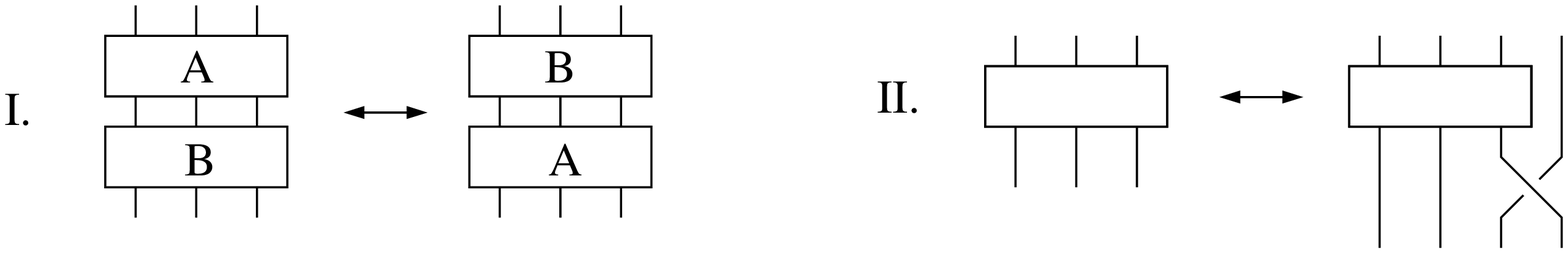}
\caption{\label{Markov} Shown are the two Markov moves. Here the
  boxes represent arbitrary braids. If a function on braids is
  invariant under these two moves, then the corresponding function on
  links induced by the trace closure is a link invariant.}
\end{center}
\end{figure} 

Given that the trace closure provides a correspondence between links
and braids, one may attempt to find functions on braids which yield
link invariants via this correspondence. Markov's theorem shows that a
function on braids will yield a knot invariant provided it is invariant
under the two Markov moves, shown in figure \ref{Markov}. Thus the
Markov moves provide an analogue for braids of the Reidemeister moves
on links. The constraints imposed by invariance under the Reidemeister
moves are enforced in the braid picture jointly by invariance under
Markov moves and by the defining relations of the braid group.

A linear function $f$ satisfying $f(AB) = f(BA)$ is called a
trace. The ordinary trace on matrices is one such function. Taking a
trace of a representation of the braid group yields a function on
braids which is invariant under Markov move I. If the trace and
representation are such that the resulting function is also invariant
under Markov move II, then a link invariant will result. The Jones
polynomial can be obtained in this way.

In \cite{Aharonov1}, Aharonov, \emph{et al.} show that an
additive approximation to the Jones polynomial of the plat or trace
closure of a braid at $t=e^{i 2 \pi/k}$ can be computed on a quantum
computer in time which scales polynomially in the number of strands
and crossings in the braid and in $k$. In \cite{Aharonov2, Wocjan}, it
is shown that for plat closures, this problem is BQP-complete. The
complexity of approximating the Jones polynomial for trace closures
was left open, other than showing that it is contained in BQP.

The results of \cite{Aharonov1, Aharonov2, Wocjan} reformulate and
generalize the previous results of Freedman \emph{et al.}
\cite{Freedman, Freedman2}, which show that certain
approximations of Jones polynomials are BQP-complete. The work of
Freedman \emph{et al.} in turn builds upon Witten's discovery of a
connection between Jones polynomials and topological quantum field
theory \cite{Witten}. Recently, Aharonov \emph{et al.} have
generalized further, obtaining an efficient quantum algorithm for
approximating the Tutte polynomial for any planar graph, at any point
in the complex plane, and also showing BQP-hardness at some points 
\cite{Aharonov3}. As special cases, the Tutte polynomial includes the
Jones polynomial, other knot invariants such as the HOMFLY polynomial,
and partition functions for some physical models such as the Potts
model.

The algorithm of Aharonov \emph{et al.} works by obtaining the Jones
polynomial as a trace of the path model representation of the braid
group. The path model representation is unitary for $t=e^{i 2\pi/k}$
and, as shown in \cite{Aharonov1}, can be efficiently implemented by
quantum circuits. For computing the trace closure of a braid the
necessary trace is similar to the ordinary matrix trace except that
only a subset of the diagonal elements of the unitary implemented by
the quantum circuit are summed, and there is an additional weighting
factor. For the plat closure of a braid the computation instead
reduces to evaluating a particular matrix element of the quantum
circuit. Aharonov \emph{et al.} also use the path model representation
in their proof of BQP-completeness.

Given a braid $b$, we know that the problem of approximating the Jones
polynomial of its plat closure is BQP-hard. By Alexander's
theorem, one can obtain a braid $b'$ whose trace closure is the same
link as the plat closure of $b$. The Jones polynomial depends only on
the link, and not on the braid it was derived from. Thus, one may ask
why this doesn't immediately imply that estimating the Jones
polynomial of the trace closure is a BQP-hard problem. The answer lies
in the degree of approximation. As discussed in section
\ref{conclusion}, the BQP-complete problem for plat closures is to
approximate the Jones polynomial to a certain precision which depends
exponentially on the number of strands in the braid. The number of
strands in $b'$ can be larger than the number of strands in $b$, hence
the degree of approximation obtained after applying Alexander's
theorem may be too poor to solve the original BQP-hard problem.

The fact that computing the Jones polynomial of the trace closure of a
braid can be reduced to estimating a generalized trace of a unitary
operator and the fact that trace estimation is DQC1-complete suggest a
connection between Jones polynomials and the one clean qubit
model. Here we find such a connection by showing that evaluating a
certain approximation to the Jones polynomial of the
trace closure of a braid at a fifth root of unity is
DQC1-complete. The main technical difficulty is obtaining the Jones 
polynomial as a trace over the entire Hilbert space rather than as a
summation of some subset of the diagonal matrix elements. To do this
we will not use the path model representation of the braid group, but
rather the Fibonacci representation, as described in the next section.

\section{Fibonacci Representation}
\label{Fibonacci}

The Fibonacci representation $\rho_F^{(n)}$ of the braid group $B_n$
is described in \cite{Kauffman} in the context of Temperley-Lieb
recoupling theory. Temperley-Lieb recoupling theory describes two
species of idealized ``particles'' denoted by $p$ 
and $*$. We will not delve into the conceptual and mathematical
underpinnings of Temperley-Lieb recoupling theory. For present
purposes, it will be sufficient to regard it as a formal procedure for
obtaining a particular unitary representation of the braid group whose
trace yields the Jones polynomial at $t = e^{i 2 \pi/5}$. Throughout
most of this paper it will be more convenient to express the Jones
polynomial in terms of $A = e^{-i 3 \pi/5}$, with $t$ defined by $t =
A^{-4}$. 

It is worth noting that the Fibonacci representation is a special case
of the path model representation used in \cite{Aharonov1}. The path
model representation applies when $t = e^{i 2 \pi/k}$ for any integer
$k$, whereas the Fibonacci representation is for $k=5$. The
relationship between these two representations is briefly discussed in
section \ref{rep_relations}. However, for the sake of making our
discussion self contained, we will derive all of our results directly
within the Fibonacci representation.

\begin{figure}
\begin{center}
\includegraphics[width=0.25\textwidth]{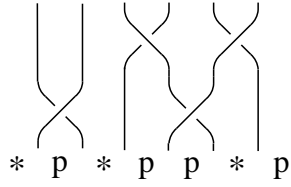}
\caption{\label{pstar} For an $n$-strand braid we can write a length
  $n+1$ string of $p$ and $*$ symbols across the base. The string may
  have no two $*$ symbols in a row, but can be otherwise arbitrary.}
\end{center}
\end{figure} 

\begin{figure}
\begin{center}
\includegraphics[width=0.5\textwidth]{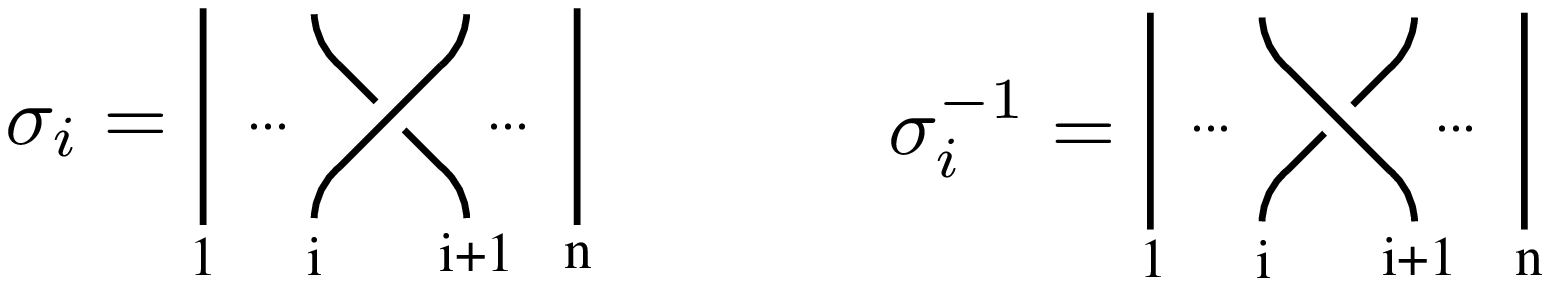}
\caption{\label{braids} $\sigma_i$ denotes the elementary crossing of
  strands $i$ and $i+1$. The braid group on $n$ strands $B_n$ is
  generated by $\sigma_1 \ldots \sigma_{n-1}$, which satisfy the
  relations $\sigma_i \sigma_j = \sigma_j \sigma_i$ for $|i-j|>1$ and
  $\sigma_{i+1} \sigma_i \sigma_{i+1} = \sigma_i \sigma_{i+1}
  \sigma_i$ for all $i$. The group operation corresponds to
  concatenation of braids.}
\end{center}
\end{figure}

Given an $n$-strand braid $b \in B_n$, we can write a length $n+1$
string of $p$ and $*$ symbols across the base as shown in figure
\ref{pstar}. These strings have the restriction that no two $*$
symbols can be adjacent. The number of such strings is $f_{n+3}$,
where $f_n$ is the $n\th$ Fibonacci number, defined so that $f_1 = 1$,
$f_2 = 1$, $f_3 = 2,\ldots$ Thus the formal linear combinations of
such strings form an $f_{n+3}$-dimensional vector space. For each $n$,
the Fibonacci representation $\rho_F^{(n)}$ is a homomorphism from
$B_n$ to the group of unitary linear transformations on this
space. We will describe the Fibonacci representation in terms of its
action on the elementary crossings which generate the braid group, as
shown in figure \ref{braids}.

The elementary crossings correspond to linear operations which mix
only those strings which differ by the symbol beneath the
crossing. The linear transformations have a local structure, so that
the coefficients for the symbol beneath the crossing to be changed or
unchanged depend only on that symbol and its two neighbors. For
example, using the notation of \cite{Kauffman}, 
\begin{equation}
\label{picto}
\includegraphics[width=0.35\textwidth]{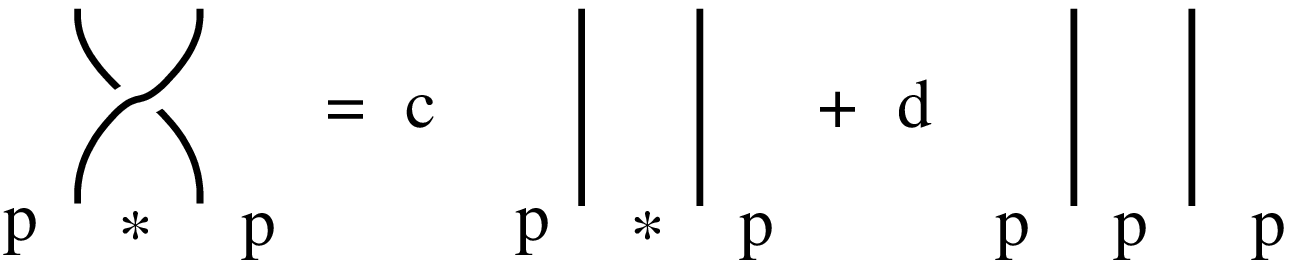}
\end{equation}
which means that the elementary crossing $\sigma_i$ corresponds to a
linear transformation which takes any string whose $i\th$ through 
$(i+2)\th$ symbols are $p*p$ to the coefficient $c$ times the same
string plus the coefficient $d$ times the same string with the $*$ at
the $(i+1)\th$ position replaced by $p$. (As shown in figure 9, the
$i\th$ crossing is over the $(i+1)\th$ symbol.) To compute the linear
transformation that the representation of a given braid applies to a
given string of symbols, one can write the symbols across the base of
the braid, and then apply rules of the form \ref{picto} until all the
crossings are removed, and all that remains are various coefficients
for different strings to be written across the base of a set of
straight strands.

For compactness, we will use $(p \widehat{*} p) = c(p * p) + d(p p p)$
as a shorthand for equation \ref{picto}. In this notation, the
complete set of rules is as follows.

\begin{eqnarray}
\label{rules}
(* \widehat{p} p) & = & a(*pp) \nonumber \\
(* \widehat{p} *) & = & b(*p*) \nonumber \\
(p \widehat{*} p) & = & c(p*p) + d(ppp) \nonumber \\
(p \widehat{p} *) & = & a(pp*) \nonumber \\
(p \widehat{p} p) & = & d(p*p) + e(ppp),
\end{eqnarray}
where
\begin{eqnarray}
\label{constants}
a & = & -A^4 \nonumber \\
b & = & A^8 \nonumber \\
c & = & A^8 \tau^2 - A^4 \tau \nonumber \\
d & = & A^8 \tau^{3/2} + A^4 \tau^{3/2} \nonumber \\
e & = & A^8 \tau - A^4 \tau^2 \nonumber \\
A & = & e^{-i3 \pi/5} \nonumber \\
\tau & = & 2/(1+\sqrt{5}).
\end{eqnarray}

Using these rules we can calculate any matrix from the Fibonacci
representation of the braid group. Notice that this is a reducible
representation. These rules do not allow the rightmost symbol or
leftmost symbol of the string to change. Thus the vector space
decomposes into four invariant subspaces, namely the subspace spanned
by strings which begin and end with $p$, and the $* \ldots *$,
$p \ldots *$, and $*\ldots p$ subspaces. As an example, we can use the
above rules to compute the action of $B_3$ on the $* \ldots p$
subspace. 

\begin{equation}
\label{examp}
\rho_{*p}^{(3)}(\sigma_1) = \left[ \begin{array}{cc}
b & 0 \\
0 & a
\end{array} \right]
\begin{array}{c}
\textrm{$*$p$*$p} \\
\textrm{$*$ppp}
\end{array}
\quad \quad \quad
\rho_{*p}^{(3)}(\sigma_2) = \left[ \begin{array}{cc}
c & d \\
d & e
\end{array} \right]
\begin{array}{c}
\textrm{$*$p$*$p} \\
\textrm{$*$ppp}
\end{array}
\end{equation}

In section \ref{proof} we prove that the Jones polynomial evaluated
at $t = e^{i 2 \pi/5}$ can be obtained as a weighted trace of the Fibonacci
representation over the $* \ldots *$ and $*\ldots p$ subspaces.

\section{Computing the Jones Polynomial in DQC1}
\label{containment}

As mentioned previously, the Fibonacci representation acts on the
vector space of formal linear combinations of strings of $p$ and $*$
symbols in which no two $*$ symbols are adjacent. The set of length
$n$ strings of this type, $P_n$, has $f_{n+2}$ elements, where $f_n$ is
the $n\th$ Fibonacci number: $f_1 = 1$, $f_2 = 1$, $f_3 = 2$, and so
on. As shown in section \ref{bijective}, one can construct a
bijective correspondence between these strings and the integers from
$0$ to $f_{n+2}-1$ as follows. If we think of $*$ as $1$ and $p$ as
$0$, then with a string $s_n s_{n-1} \ldots s_1$ we associate the
integer
\begin{equation}
\label{correspondence}
z(s) = \sum_{i=1}^n s_i f_{i+1}. 
\end{equation}
This is known as the Zeckendorf representation.

Representing integers as bitstrings by the usual method of place
value, we thus have a correspondence between the elements of $P_n$ and
the bitstrings of length $b = \lceil \log_2(f_{n+2}) \rceil$. This
correspondence will be a key element in computing the Jones polynomial
with a one clean qubit machine. Using a one clean qubit machine, one
can compute the trace of a unitary over the entire Hilbert space of
$2^n$ bitstrings. Using CNOT gates as above, one can also compute with
polynomial overhead the trace over a subspace whose dimension is a
polynomially large fraction of the dimension of the entire Hilbert
space. However, it is probably not possible in general for a one clean
qubit computer to compute the trace over subspaces whose dimension is
an exponentially small fraction of the dimension of the total Hilbert
space. For this reason, directly mapping the strings of $p$ and $*$
symbols to strings of $1$ and $0$ will not work. In contrast, the
correspondence described in equation \ref{correspondence} maps $P_n$
to a subspace whose dimension is at least half the dimension of the
full $2^b$-dimensional Hilbert space.

In outline, the DQC1 algorithm for computing the Jones polynomial works
as follows. Using the results described in section \ref{DQC1}, we will
think of the quantum circuit as acting on $b$ maximally mixed qubits
plus $O(1)$ clean qubits.  Thinking in terms of the computational
basis, we can say that the first $b$ qubits are in a uniform
probabilistic mixture of the $2^b$ classical bitstring states. By
equation \ref{correspondence}, most of these bitstrings correspond to
elements of $P_n$. In the Fibonacci representation, an elementary
crossing on strands $i$ and $i-1$ corresponds to a linear
transformation which can only change the value of the $i\th$ symbol in
the string of $p$'s and $*$'s. The coefficients for changing this
symbol or leaving it fixed depend only on the two neighboring
symbols. Thus, to simulate this linear transformation, we will use a
quantum circuit which extracts the $(i-1)\th$, $i\th$, and $(i+1)\th$
symbols from their bitstring encoding, writes them into an ancilla
register while erasing them from the bitstring encoding, performs the
unitary transformation prescribed by equation \ref{rules} on the
ancillas, and then transfers this symbol back into the bitstring
encoding while erasing it from the ancilla register. Constructing one
such circuit for each crossing, multiplying them together, and
performing DQC1 trace-estimation yields an approximation to the Jones
polynomial.

Performing the linear transformation demanded by equation \ref{rules}
on the ancilla register can be done easily by invoking gate set
universality (\emph{cf.} Solovay-Kitaev theorem \cite{Nielsen_Chuang}) since
it is just a three-qubit unitary operation. The harder steps are transferring
the symbol values from the bitstring encoding to the ancilla register and
back.

It may be difficult to extract an arbitrary symbol from
the bitstring encoding. However, it is relatively easy to extract the
leftmost ``most significant'' symbol, which determines whether the
Fibonacci number $f_n$ is present in the sum shown in equation
\ref{correspondence}. This is because, for a string $s$ of length $n$,
$z(s) \geq f_{n-1}$ if and only if the leftmost symbol is $*$. Thus,
starting with a clean $\ket{0}$ ancilla qubit, one can transfer the
value of the leftmost symbol into the ancilla as follows. First,
check whether $z(s)$ (as represented by a bitstring using place value)
is $\geq f_{n-1}$. If so flip the ancilla qubit. Then, conditioned on
the value of the ancilla qubit, subtract $f_{n-1}$ from the
bitstring. (The subtraction will be done modulo $2^b$ for
reversibility.) 

Any classical circuit can be made reversible with only constant
overhead. It thus corresponds to a unitary matrix which permutes the
computational basis. This is the standard way of implementing classical
algorithms on a quantum computer\cite{Nielsen_Chuang}. However, the resulting
reversible circuit may require clean ancilla qubits as work space in
order to be implemented efficiently. For a reversible circuit to be
implementable on a one clean qubit computer, it must be efficiently
implementable using at most logarithmically many clean
ancillas. Fortunately, the basic operations of arithmetic and
comparison for integers can all be done classically by NC1 circuits
\cite{Wegener}. NC1 is the complexity class for problems solvable by
classical circuits of logarithmic depth. As shown in \cite{Ambainis_DQC1},
any classical NC1 circuit can be converted into a reversible circuit
using only three clean ancillas. This is a consequence of Barrington's
theorem. Thus, the process described above for extracting the leftmost
symbol can be done efficiently in DQC1.

More specifically, Krapchenko's algorithm for adding two $n$-bit
numbers has depth $\lceil \log n \rceil + O(\sqrt{\log n})$
\cite{Wegener}. A lower bound of depth $\log n$ is also known, so this
is essentially optimal \cite{Wegener}. Barrington's construction
\cite{Barrington} yields a sequence of $2^{2 d}$ gates on 3 clean
ancilla qubits \cite{Ambainis_DQC1} to simulate a circuit of depth
$d$. Thus we obtain an addition circuit which has quadratic size (up
to a subpolynomial factor). Subtraction can be obtained analogously,
and one can determine whether $a \geq b$ can be done by subtracting
$a$ from $b$ and looking at whether the result is negative.

Although the construction based on Barrington's theorem has polynomial
overhead and is thus sufficient for our purposes, it seems worth
noting that it is possible to achieve better efficiency. As shown by
Draper \cite{Draper}, there exist ancilla-free  
quantum circuits for performing addition and subtraction, which
succeed with high probability and have nearly linear
size. Specifically, one can add or subtract a hardcoded number $a$
into an $n$-qubit register $\ket{x}$ modulo $2^n$ by performing
quantum Fourier transform, followed by $O(n^2)$ controlled-rotations,
followed by an inverse quantum Fourier transform. Furthermore, using
approximate quantum Fourier transforms\cite{Coppersmith, Barenco},
\cite{Draper} describes an approximate version of the circuit, which,
for any value of parameter $m$, uses a total of only $O(m n \log n)$
gates\footnote{A linear-size quantum circuit for exact ancilla-free
  addition is known, but it does not generalize easily to the case of
  hardcoded summands \cite{Cuccaro}.} to produce an output having an
inner product with $\ket{x+a \mod 2^n}$ of $1-O(2^{-m})$.

Because they operate modulo $2^n$, Draper's quantum circuits for
addition and subtraction do not immediately yield fast ancilla-free quantum
circuits for comparison, unlike the classical case. Instead, start with
an $n$-bit number $x$ and then introduce a single clean ancilla qubit
initialized to $\ket{0}$. Then subtract an $n$-bit hardcoded number
$a$ from this register modulo $2^{n+1}$. If $a > x$ then the result
will wrap around into the range $[2^n,2^{n+1}-1]$, in which case the
leading bit will be 1. If $a \leq x$ then the result will be in the
range $[0,2^n-1]$. After copying the result of this leading qubit and
uncomputing the subtraction, the comparison is
complete. Alternatively, one could use the linear size quantum 
comparison circuit devised by Takahashi and Kunihiro, which uses $n$
uninitialized ancillas but no clean ancillas\cite{Takahashi}.

\begin{figure}
\[
\begin{array}{cccc|cccccc}
1 & 2 & 3 & 5 & 13 & 8 & 5 & 3 & 2 & 1 \\
* & p & p & * & p  & p & * & p & p & p
\end{array}
\quad \leftrightarrow \quad (6,5)
\]
\caption{\label{pairs} Here we make a correspondence between strings
  of $p$ and $*$ symbols and ordered pairs of integers. The string of
  9 symbols is split into substrings of length 4 and 5, and each one
  is used to compute an integer by adding the $(i+1)\th$ Fibonacci
  number if $*$ appears in the $i\th$ place. Note the two strings are
  read in different directions.}
\end{figure}

Unfortunately, most crossings in a given braid will not be acting on
the leftmost strand. However, we can reduce the problem of extracting
a general symbol to the problem of extracting the leftmost
symbol. Rather than using equation \ref{correspondence} to make a
correspondence between a string from $P_n$ and a single integer, we 
can split the string at some chosen point, and use equation
\ref{correspondence} on each piece to make a correspondence between
elements of $P_n$ and ordered pairs of integers, as shown in figure
\ref{pairs}. To extract the $i\th$ symbol, we thus convert encoding
\ref{correspondence} to the encoding where the string is split between
the $i\th$ and $(i-1)\th$ symbols, so that one only needs to extract
the leftmost symbol of the second string. Like equation
\ref{correspondence}, this is also an efficient encoding, in which the
encoded bitstrings form a large fraction of all possible bitstrings.

To convert encoding \ref{correspondence} to a split encoding with the
split at an arbitrary point, we can move the split rightward by one
symbol at a time. To introduce a split between the leftmost and
second-to-leftmost symbols, one must extract the leftmost symbol as
described above. To move the split one symbol to the right, one must
extract the leftmost symbol from the right string, and if it is $*$
then add the corresponding Fibonacci number to the left string. This
is again a procedure of addition, subtraction, and comparison of
integers. Note that the computation of Fibonacci
numbers in NC1 is not necessary, as these can be hardcoded into the
circuits. Moving the split back to the left works analogously. As
crossings of different pairs of strands are being simulated, the split
is moved to the place that it is needed. At the end it is moved all
the way leftward and eliminated, leaving a superposition of bitstrings
in the original encoding, which have the correct coefficients
determined by the Fibonacci representation of the given braid.

Lastly, we must consider the weighting in the trace, as described by
equation \ref{weightdef}. Instead of weight $W_s$, we will use
$W_s/\phi$ so that the possible weights are 1 and $1/\phi$ both of
which are $\leq 1$. We can impose any weight $\leq 1$ by doing a
controlled rotation on an extra qubit. The CNOT trick for simulating
a clean qubit which was described in section \ref{DQC1} can be viewed
as a special case of this. All strings in which that qubit takes the
value $\ket{1}$ have weight zero, as imposed by a $\pi/2$ rotation on
the extra qubit. Because none of the weights are smaller than
$1/\phi$, the weighting will cause only a constant overhead
in the number of measurements needed to get a given precision.

\section{DQC1-hardness of Jones Polynomials}

We will prove DQC1-hardness of the problem of estimating the Jones
polynomial of the trace closure of a braid by a reduction from the
problem of estimating the trace of a quantum circuit. To do this, we
will specify an encoding, that is, a map $\eta:Q_n \to S_m$
from the set $Q_n$ of strings of $p$ and $*$ symbols which start with
$*$ and have no two $*$ symbols in a row, to $S_m$, the set of
bitstrings of length $m$. For a given quantum circuit, we will
construct a braid whose Fibonacci representation implements the
corresponding unitary transformation on the encoded bits. The Jones
polynomial of the trace closure of this braid, which is the trace of
this representation, will equal the trace of the encoded quantum
circuit.

Unlike in section \ref{containment}, we will not use a one to one
encoding between bit strings and strings of $p$ and $*$ symbols. All
we require is that a sum over all strings of $p$ and $*$ symbols
corresponds to a sum over bitstrings in which each bitstring appears
an equal number of times. Equivalently, all bitstrings $b \in S_m$
must have a preimage $\eta^{-1}(b)$ of the same size. This insures an
unbiased trace in which no bitstrings are overweighted. To achieve
this we can divide the symbol string into blocks of three symbols and
use the encoding

\begin{equation}
\label{codon}
\begin{array}{ccc}
\textrm{ppp} & \to & 0 \\
\textrm{p$*$p} & \to & 1 \\
\end{array}
\end{equation}

The strings other than ppp and p$*$p do not correspond to any bit
value. Since both the encoded 1 and the encoded 0 begin and end with
p, they can be preceded and followed by any allowable string. Thus,
changing an encoded 1 to an encoded zero does not change the number of
allowed strings of $p$ and $*$ consistent with that encoded
bitstring. Thus the condition that $|\eta^{-1}(b)|$ be independent of
$b$ is satisfied.

We would also like to know \emph{a priori} where in the string of p
and $*$ symbols a given bit is encoded. This way, when we need to
simulate a gate acting on a given bit, we would know which strands the
corresponding braid should act on. If we were to simply divide our
string of symbols into blocks of three and write down the
corresponding bit string (skipping every block which is not in one of
the two coding states ppp and p$*$p) then this would not be the
case. Thus, to encode $n$ bits, we will instead divide the string of
symbols into $n$ superblocks, each consisting of $c \log n$ blocks of
three for some constant $c$. To decode a superblock, scan it from left
to right until you reach either a ppp block or a p$*$p block. The
first such block encountered determines whether the superblock encodes
a 1 or a 0, according to equation \ref{codon}. Now imagine we choose a
string randomly from $Q_{3cn \log n}$. By choosing the constant
prefactor $c$ in our superblock size we can ensure that in the entire
string of $3cn \log n$ symbols,  the probability of there being any
noncoding superblock which contains neither a ppp block nor a p$*$p
block is polynomially small. If this is the case, then these noncoding
strings will contribute only a polynomially small additive error to
the estimate of the circuit trace, on par with the other sources of
error.

The gate set consisting of the CNOT, Hadamard, and $\pi/8$ gates is
known to be universal for BQP \cite{Nielsen_Chuang}. Thus, it suffices to
consider the simulation of 1-qubit and 2-qubit gates. Furthermore, it
is sufficient to imagine the qubits arranged on a line and to allow
2-qubit gates to act only on neighboring qubits. This is because
qubits can always be brought into neighboring positions by applying a
series of SWAP gates to nearest neighbors. By our encoding a
unitary gate applied to qubits $i$ and $i+1$ will correspond to a
unitary transformation on symbols $i3c \log n$ through $(i+2)3c \log
n-1$. The essence of our reduction is to take each quantum gate
and represent it by a corresponding braid on logarithmically many 
symbols whose Fibonacci representation performs that gate on the
encoded qubits. 

Let's first consider the problem of simulating a gate on the first
pair of qubits, which are encoded in the leftmost two superblocks of
the symbol string. We'll subsequently consider the more difficult case
of operating on an arbitrary pair of neighboring encoded qubits. As
mentioned in section \ref{Fibonacci}, the Fibonacci representation 
$\rho_F^{(n)}$ is reducible. Let $\rho_{**}^{(n)}$ denote the
representation of the braid group $B_n$ defined by the action of
$\rho_F^{(n)}$ on the vector space spanned by strings which begin
and end with $*$. As shown in section \ref{density},
$\rho_{**}^{(n)}(B_n)$ taken modulo phase is a dense subgroup of
$SU(f_{n-1})$, and $\rho_{*p}^{(n)}(B_n)$ modulo phase is a dense
subgroup of $SU(f_n)$. 

In addition to being dense, the $**$ and $*$p blocks of the
Fibonacci representation can be controlled independently. This is a
consequence of the decoupling lemma, as discussed in section
\ref{density}. Thus, given a string of symbols beginning with $*$, and
any desired pair of unitaries on the corresponding $*$p and $**$
vector spaces, a braid can be constructed whose Fibonacci
representation approximates these unitaries to any desired level of
precision. However, the number of crossings necessary may in general
be large. The space spanned by strings of logarithmically many symbols
has only polynomial dimension. Thus, one might guess that the braid
needed to approximate a given pair of unitaries on the $*$p and $**$
vector spaces for logarithmically many symbols will have only
polynomially many crossings. It turns out that this guess is correct,
as we state formally below.
\begin{proposition}
\label{efficiency}
Given any pair of elements $U_{*p} \in SU(f_{k+1})$ and $U_{**} \in
SU(f_k)$, and any real parameter $\epsilon$, one can in polynomial
time find a braid $b \in B_k$ with $\mathrm{poly}(n,\log(1/\epsilon))$
crossings whose Fibonacci representation satisfies
$\| \rho_{*p}(b) - U_{*p} \| \leq \epsilon$ and 
$\| \rho_{**}(b) - U_{**} \| \leq \epsilon$, provided that $k = O(\log
n)$. By symmetry, the same holds when considering $\rho_{p*}$ rather
than $\rho_{*p}$.
\end{proposition}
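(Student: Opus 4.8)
The plan is to read this proposition as an efficient, dimension-aware refinement of the Solovay-Kitaev theorem stated above, specialized to the Fibonacci representation, and to assemble it from three ingredients: the density, modulo overall phase, of $\rho_{**}^{(k)}(B_k)$ in $SU(f_k)$ and of $\rho_{*p}^{(k)}(B_k)$ in $SU(f_{k+1})$ proved in section \ref{density}; the decoupling lemma of that same section; and the Solovay-Kitaev theorem. First I would use the decoupling lemma to reduce to a single approximation problem: the image of $B_k$ acting on the span of strings that begin with $*$ is, modulo global phase, dense in $SU(f_{k+1}) \times SU(f_k)$, so it suffices to approximate to within $\epsilon$ an arbitrary prescribed element $(U_{*p},U_{**})$ of this direct-product group, the global phase being pinned down by also approximating a fixed reference. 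Writing $d = f_{k+1}+f_k$, the hypothesis $k = O(\log n)$ gives $d = \Theta(\phi^{\,k}) = \mathrm{poly}(n)$, so the entire discussion takes place in a group of polynomial dimension.

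The heart of the matter is that one cannot simply invoke Solovay-Kitaev on $SU(d)$ as a black box, because the word length and runtime of the standard theorem carry an implicit constant depending on the dimension $d$ and the generating set, and one must verify this dependence is only polynomial. I would therefore exhibit, once and for all, a ``seed'' family of braid words in $B_k$ of length $\mathrm{poly}(k) = \mathrm{poly}(\log n)$ whose Fibonacci images form an $\epsilon_0$-net of $SU(f_{k+1}) \times SU(f_k)$ for some fixed $\epsilon_0$ independent of $n$, together with a polynomial-time procedure that, given a target, outputs a seed word whose image is within $\epsilon_0$ of it. This is where the local, quasi-tensor structure of the Fibonacci representation --- the same structure used to prove density in section \ref{density} --- is indispensable: one encodes $\Theta(k) = O(\log n)$ logical qubits into the $k$ strands in the standard anyonic way, routes between basis strings using the elementary crossings $\sigma_i$ (exactly as $U_\pi$ is built from CNOT gates in section \ref{quantum_prelim}), decomposes any target as a product of $O(4^{\Theta(k)}) = \mathrm{poly}(n)$ one- and two-logical-qubit gates together with finitely many ``leakage'' generators needed to fill out the non-tensor directions of $SU(f_{k+1})$, and realizes each of these elementary pieces by a fixed short braid obtained from a single application of the Solovay-Kitaev theorem to a \emph{constant}-dimensional group. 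Feeding this $\epsilon_0$-net into the recursive Solovay-Kitaev refinement then drives the error from $\epsilon_0$ down to $\epsilon$ at the cost of only $\mathrm{polylog}(1/\epsilon)$ further nesting, so the final braid has $\mathrm{poly}(n) \cdot \mathrm{polylog}(1/\epsilon) = \mathrm{poly}(n,\log(1/\epsilon))$ crossings and is produced in comparable time.

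The main obstacle I expect is the simultaneous bookkeeping of the dimension dependence of Solovay-Kitaev and the accumulation of error across the $\mathrm{poly}(n)$ elementary pieces: each of those pieces must be approximated to precision $\sim \epsilon_0/\mathrm{poly}(n)$ for their product to be $\epsilon_0$-accurate, which by the linear error-addition bound (equation \ref{error_add}) costs only $\mathrm{polylog}(n/\epsilon_0)$ extra crossings per piece, but one must check that the constant-dimensional Solovay-Kitaev constants do not secretly reintroduce $n$-dependence through the encoding, and that the routing braids between basis strings really have length $\mathrm{poly}(k)$. Once this single, polynomial-dimension case is settled, the full proposition follows by recombining the two sectors as above, and the version with $\rho_{p*}$ in place of $\rho_{*p}$ is obtained from the evident reflection symmetry of braids and of the rules \ref{rules}, read with the symbol string traversed in the opposite direction, as already indicated in the statement.
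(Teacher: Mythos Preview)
Your overall strategy---reduce to constant-dimensional Solovay--Kitaev and assemble---is the right shape, but the proposal has a genuine gap at exactly the point where the work lies. You write that one ``decomposes any target as a product of $O(4^{\Theta(k)})$ one- and two-logical-qubit gates together with finitely many `leakage' generators needed to fill out the non-tensor directions of $SU(f_{k+1})$,'' and then realizes each piece by a short braid. But the Fibonacci space on $k$ strands is \emph{not} a tensor product of qubits; its dimension is $f_{k+1}$, not a power of two, and the non-tensor directions are not a small finite leftover---they are an essential and growing part of the space. Saying ``finitely many leakage generators'' does not tell us what they are, how many there are as $k$ grows, or why each is realizable by a braid on $O(1)$ strands. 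This is precisely the content of the proposition: the density result of section~\ref{density} already gives you \emph{existence} of approximating braids, and what you must supply is an efficient \emph{compilation} procedure covering the full group, not just an encoded-qubit subspace. Your sketch does not do this. (There is also a slip: the seed words you describe have length $\mathrm{poly}(n)$, not $\mathrm{poly}(\log n)$, since you are multiplying $\mathrm{poly}(n)$ elementary pieces together.)

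The paper's proof takes a quite different route that avoids qubit encodings altogether. It is an explicit recursion on the number of strands: assuming one can realize arbitrary pairs $(U_{*p},U_{**})$ on $k$ strands, it shows how to do so on $k+1$ strands using only a \emph{constant} number of calls to the $k$-strand construction together with the new elementary crossing $\sigma_k$. The key technical steps are (i) building a clean swap between two of the natural blocks of the $(k{+}1)$-strand representation by an explicit $SO(3)$ Euler-angle argument in a single $2\times 2$ sub-block, and (ii) a lemma showing that any $m\times m$ unitary decomposes as a product of seven unitaries each supported on at most $m/2$ basis states, so that one can assemble the full unitary from the block-diagonal pieces available at the previous level. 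Because each recursion level multiplies the braid length by only a constant, the total length is exponential in the recursion depth $k = O(\log n)$, i.e.\ $\mathrm{poly}(n)$, and the only invocation of Solovay--Kitaev is at the fixed-size base case. Your final Solovay--Kitaev refinement on the full $\mathrm{poly}(n)$-dimensional group is therefore unnecessary---and it is fortunate that it is, since you correctly note that controlling its dimension dependence would itself require an argument.
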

Note that proposition \ref{efficiency} is a property of the Fibonacci
representation, not a generic consequence of density, since it is in
principle possible for the images of group generators in a dense
representation to lie exponentially close to some subgroup of the
corresponding unitary group. We prove this proposition in section
\ref{app_efficiency}.

With proposition \ref{efficiency} in hand, it is apparent that any
unitary gate on the first two encoded bits can be efficiently
performed. To similarly simulate gates on arbitrary pairs of
neighboring encoded qubits, we will need some way to unitarily bring a
$*$ symbol to a known location within logarithmic distance of the
relevant encoded qubits. This way, we ensure that we are acting in the
$*p$ or $**$ subspaces.

To move $*$ symbols to known locations we'll use an ``inchworm''
structure which brings a pair of $*$ symbols rightward to where they
are needed. Specifically, suppose we have a pair of superblocks which each
have a $*$ in their exact center. The presence of the left $*$ and the
density of $\rho_{*p}$ allow us to use proposition \ref{efficiency} to
unitarily move the right $*$ one superblock to the right by adding
polynomially many crossings to the braid. Then, the presence of the
right $*$ and the density of $\rho_{p*}$ allow us to similarly move
the left $*$ one superblock to the right, thus bringing it into the superblock
adjacent to the one which contains the right $*$. This is illustrated
in figure \ref{inchfig}. To move the inchworm to the left we use the
inverse operation.

\begin{figure}
\begin{center}
\includegraphics[width=0.45\textwidth]{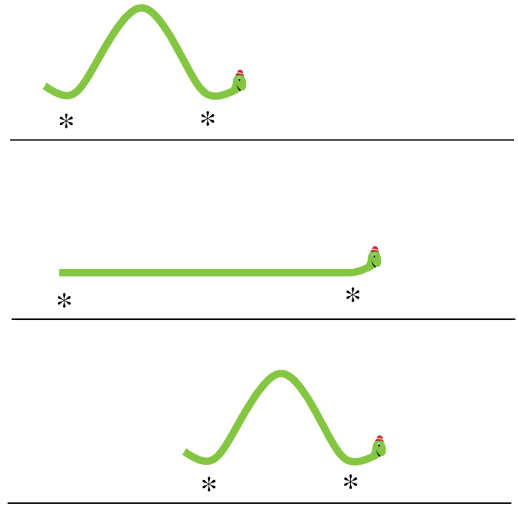}
\caption{\label{inchfig} This sequence of unitary steps is used to
  bring a $*$ symbol where it is needed in the symbol string to ensure
  density of the braid group representation. The presence of the left
  $*$ ensures density to allow the movement of the right $*$ by
  proposition \ref{efficiency}. Similarly, the presence of the right $*$
  allows the left $*$ to be moved.} 
\end{center}
\end{figure}

To simulate a given gate, one first uses the previously described
procedure to make the inchworm crawl to the superblocks just to the
left of the superblocks which encode the qubits on which the gate
acts. Then, by the density of $\rho_{*p}$ and proposition
\ref{efficiency}, the desired gate can be simulated using polynomially
many braid crossings.

To get this process started, the leftmost two superblocks must each contain
a $*$ at their center. This occurs with constant probability. The
strings in which this is not the case can be prevented from
contributing to the trace by a technique analogous to that used in
section \ref{DQC1} to simulate logarithmically many clean
ancillas. Namely, an extra encoded qubit can be conditionally flipped
if the first two superblocks do not both have $*$ symbols at their
center. This can always be done using proposition \ref{efficiency},
since the leftmost symbol in the string is always $*$, and the
$\rho_{*p}$ and $\rho_{**}$ representations are both dense.

It remains to specify the exact unitary operations which move the
inchworm. Suppose we have a current superblock and a target superblock. The
current superblock contains a $*$ in its center, and the target superblock is
the next superblock to the right or left. We wish to move the $*$ to the
center of the target superblock. To do this, we can select the smallest
segment around the center such that in each of these superblocks, the
segment is bordered on its left and right by p symbols. This segment
can then be swapped, as shown in figure \ref{swap}.

\begin{figure}
\begin{center}
\includegraphics[width=0.55\textwidth]{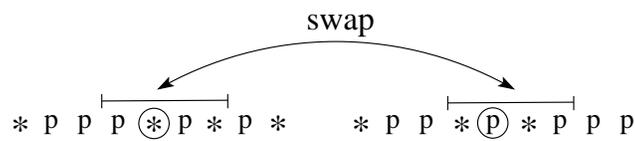}
\caption{\label{swap} This unitary procedure starts with a $*$ in the
  current superblock and brings it to the center of the target superblock.} 
\end{center}
\end{figure}

For some possible strings this procedure will not be well
defined. Specifically there may not be any segment which contains the
center and which is bordered by p symbols in both superblocks. On such
strings we define the operation to act as the identity. For random
strings, the probability of this decreases exponentially with the
superblock size. Thus, by choosing $c$ sufficiently large we can make this
negligible for the entire computation.

As the inchworm moves rightward, it leaves behind a trail. Due to the
swapping, the superblocks are not in their original state after the
inchworm has passed. However, because the operations are unitary, when
the inchworm moves back to the left, the modifications to the superblocks
get undone. Thus the inchworm can shuttle back and forth, moving where
it is needed to simulate each gate, always stopping just to the left
of the superblocks corresponding to the encoded qubits.

The only remaining detail to consider is that the trace appearing
in the Jones polynomial is weighted depending on whether the last
symbol is $p$ or $*$, whereas the DQC1-complete trace estimation
problem is for completely unweighted traces. This problem is easily
solved. Just introduce a single extra superblock at the end of the
string. After bringing the inchworm adjacent to the last superblock,
apply a unitary which performs a conditional rotation on the qubit
encoded by this superblock. The rotation will be by an angle so that
the inner product of the rotated qubit with its original state is
$1/\phi$ where $\phi$ is the golden ratio. This will be done only if
the last symbol is $p$. This exactly cancels out the weighting which
appears in the formula for the Jones polynomial, as described in
section \ref{proof}.

Thus, for appropriate $\epsilon$, approximating the Jones polynomial
of the trace closure of a braid to within $\pm \epsilon$ is
DQC1-hard.

\section{Conclusion}
\label{conclusion}

The preceding sections show that the problem of approximating the
Jones polynomial of the trace closure of a braid with $n$ strands and
$m$ crossings to within $\pm \epsilon$ at $t=e^{i 2 \pi/5}$ is a
DQC1-complete problem for appropriate $\epsilon$. The proofs are based
on the problem of evaluating the Markov trace of the Fibonacci
representation of a braid to $\frac{1}{\mathrm{poly}(n,m)}$
precision. By equation \ref{jones}, we see that this corresponds to
evaluating the Jones polynomial with $\pm
\frac{|D^{n-1}|}{\mathrm{poly}(n,m)}$ precision, where
$D=-A^2-A^{-2}=2\cos (6 \pi/5)$. Whereas approximating the Jones
polynomial of the plat closure of a braid was known\cite{Aharonov2} to
be BQP-complete, it was previously only known that the problem of
approximating the Jones polynomial of the trace closure of a braid was
in BQP. Understanding the complexity of approximating the Jones
polynomial of the trace closure of a braid to precision  
$\pm \frac{|D^{n-1}|}{\mathrm{poly}(n,m)}$ was posed as an open
problem in \cite{Aharonov1}. This paper shows that for $A=e^{-i 3
  \pi/5}$, this problem is DQC1-complete. Such a completeness result
improves our understanding of both the difficulty of the Jones
polynomial problem and the power one clean qubit computers by finding
an equivalence between the two.

It is generally believed that DQC1 is not contained in P and does not
contain all of BQP. The DQC1-completeness result shows that if this
belief is true, it implies that approximating the Jones polynomial of
the trace closure of a braid is not so easy that it can be done
classically in polynomial time, but is not so difficult as to be
BQP-hard.

To our knowledge, the problem of approximating the Jones polynomial of
the trace closure of a braid is one of only four known candidates for
classically intractable problems solvable on a one clean qubit
computer. The others are estimating the Pauli decomposition of the
unitary matrix corresponding to a polynomial-size quantum
circuit\footnote{This includes estimating the trace of the unitary as
  a special case.}, \cite{Knill_DQC1, Shepherd}, estimating quadratically
signed weight enumerators\cite{Knill_QWGT}, and estimating average fidelity
decay of quantum maps\cite{decay1, decay2}.

\section{Jones Polynomials by Fibonacci Representation}
\label{proof}
For any braid $b \in B_n$ we will define $\ttr(b)$ by:
\begin{equation}
\label{tracedef}
\ttr(b) = \frac{1}{\phi f_n + f_{n-1}} \sum_{s \in Q_{n+1}}
W_s \mx
\end{equation}
We will use $\mid$ to denote a strand and $\nmid$ to denote multiple
strands of a braid (in this case $n$). $Q_{n+1}$ is the set of all
strings of $n+1$ $p$ and $*$ symbols which start with $*$ and contain
no two $*$ symbols in a row. The symbol
\[
\mx
\]
denotes the $s,s$ matrix element of the Fibonacci representation of
braid $b$. The weight $W_s$ is
\begin{equation}
\label{weightdef}
W_s = \left\{ \begin{array}{ll}
\phi & \textrm{if $s$ ends with $p$} \\
1    & \textrm{if $s$ ends with $*$}.
\end{array} \right.
\end{equation}
$\phi$ is the golden ratio $(1+\sqrt{5})/\sqrt{2}$. 

As discussed in \cite{Aharonov1}, the Jones polynomial of the trace
closure of a braid $b$ is given by
\begin{equation}
\label{jones}
V_{b^{\mathrm{tr}}}(A^{-4}) = (-A)^{3w(b^{\mathrm{tr}})} D^{n-1} 
\tr(\rho_A(b^{\mathrm{tr}})).
\end{equation}
$b^{\mathrm{tr}}$ is the link obtained by taking the trace
closure of braid $b$. $w(b^{\mathrm{tr}})$ is denotes the writhe of
the link $b^{\mathrm{tr}}$. For an oriented link, one assigns a
value of $+1$ to each crossing of the form
\includegraphics[width=0.2in]{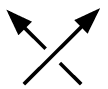}, and the value $-1$
to each crossing of the form
\includegraphics[width=0.2in]{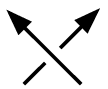}. The writhe of a link is
defined to be the sum of these values over all crossings. $D$ is
defined by $D = -A^2-A^{-2}$. $\rho_A: B_n \to \mathrm{TL}_n(D)$ is a
representation from the braid group to the Temperley-Lieb algebra with
parameter $D$. Specifically, 
\begin{equation}
\label{rhoa}
\rho_A(\sigma_i) = A E_i + A^{-1} \id
\end{equation}
where $E_1 \ldots E_n$ are the generators of $\mathrm{TL}_n(D)$, which
satisfy the following relations.
\begin{eqnarray}
E_i E_j & = & E_j E_i \quad \textrm{ for } |i-j| > 1 \label{rel1} \\
E_i E_{i \pm 1} E_i & = & E_i \label{rel2} \\
E_i^2 & = & D E_i \label{rel3}
\end{eqnarray}
The Markov trace on $\mathrm{TL}_n(D)$ is a linear map $\tr:
\mathrm{TL}_n(D) \to \mathbb{C}$ which satisfies 
\begin{eqnarray}
\tr(\id) & = & 1 \label{mark1} \\
\tr(XY) & = & \tr(YX) \label{mark2} \\
\tr(X E_{n-1}) & = & \frac{1}{D} \tr(X') \label{mark3}
\end{eqnarray}
On the left hand side of equation \ref{mark3}, the trace is on
$\mathrm{TL}_n(D)$, and $X$ is an element of $\mathrm{TL}_n(D)$ not
containing $E_{n-1}$. On the right hand side of equation \ref{mark3},
the trace is on $\mathrm{TL}_{n-1}(D)$, and $X'$ is the element of
$\mathrm{TL}_{n-1}(D)$ which corresponds to $X$ in the obvious way
since $X$ does not contain $E_{n-1}$.

We'll show that the Fibonacci representation satisfies the properties
implied by equations \ref{rhoa}, \ref{rel1}, \ref{rel2}, and
\ref{rel3}. We'll also show that $\ttr$ on the Fibonacci
representation satisfies the properties corresponding to \ref{mark1},
\ref{mark2}, and \ref{mark3}. It was shown in \cite{Aharonov1} that
properties \ref{mark1}, \ref{mark2}, and \ref{mark3}, along with
linearity, uniquely determine the map $\tr$. It will thus follow that
$\ttr(\rho_F^{(n)}(b)) = \tr(\rho_A(b))$, which proves that
the Jones polynomial is obtained from the trace $\ttr$ of
the Fibonacci representation after multiplying by the appropriate
powers of $D$ and $(-A)$ as shown in equation \ref{jones}. Since these
powers are trivial to compute, the problem of approximating the Jones
polynomial at $A = e^{-i 3 \pi/5}$ reduces to the problem of computing
this trace.

\newcommand{\symbox}[1]
{\begin{array}{l} \includegraphics[width=0.5in]{#1.eps} \end{array}}

$\ttr$ is equal to the ordinary matrix trace on the subspace of
strings ending in $*$ plus $\phi$ times the matrix trace on the
subspace of strings ending in p. Thus the fact that the matrix trace
satisfies property \ref{mark2} immediately implies that $\ttr$ does
too. Furthermore, since the dimensions of these subspaces are
$f_{n-1}$ and $f_n$ respectively, we see from equation \ref{tracedef}
that $\ttr(\id) = 1$. To address property \ref{mark3}, we'll
first show that
\begin{equation}
\label{toshow}
\ttr \left( \symbox{braidbox3} \right) =
  \frac{1}{\delta} \ttr \left( \begin{array}{l}
  \includegraphics[width=0.3in]{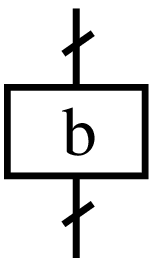} \end{array} \right)
\end{equation}
for some constant $\delta$ which we will calculate. We will then use
equation \ref{rhoa} to relate $\delta$ to $D$.

Using the definition of $\ttr$ we obtain
\begin{eqnarray*}
\ttr \left( \symbox{braidbox3} \right) & = &\frac{1}{f_n \phi + f_{n-1}}
\left[ \phi \sum_{s \in Q_{n-2}} \symbox{psp} + \phi \sum_{s \in
    Q_{n-2}} \symbox{ppp} \right.\\
 & & \left. + \sum_{s \in Q_{n-2}} \symbox{pps} + \sum_{s \in
    Q'_{n-2}} \symbox{sps} + \phi \sum_{s \in Q'_{n-2}} \symbox{spp} \right]
\end{eqnarray*}
where $Q'_{n-2}$ is the set of length $n-2$ strings of $*$ and $p$
symbols which begin with $*$, end with $p$, and have no two $*$
symbols in a row.

Next we expand according to the braiding rules described in equations
\ref{picto} and \ref{rules}.
\begin{eqnarray*}
& = &\frac{1}{f_n \phi + f_{n-1}}
\left[ \sum_{s \in Q_{n-2}} \left( \phi c \symbox{straight_psp} + \phi
  e \symbox{straight_ppp} + a \symbox{straight_pps} \right) \right.\\
 & & \left. + \sum_{s \in Q'_{n-2}} \left( b \symbox{straight_sps} +
  \phi a \symbox{straight_spp} \right) \right]
\end{eqnarray*}
We know that matrix elements in which differing string symbols are
separated by unbraided strands will be zero. To obtain the preceding
expression we have omitted such terms.  Simplifying yields
\newcommand{\smallbox}[1]
{\begin{array}{l} \includegraphics[width=0.35in]{#1.eps} \end{array}}
\[
= \frac{1}{f_n + \phi f_{n-1}} \left[ \sum_{s \in Q_{n-2}} \left( \phi
  c \smallbox{ps} + (\phi e + a) \smallbox{pp} \right) + \sum_{s \in
  Q'_{n-2}} (b+ \phi a) \smallbox{sp} \right].
\]
By the definitions of $A$, $a$, $b$, and $e$, given in equation
\ref{constants}, we see that $\phi e + a = b + \phi a$. Thus the above
expression simplifies to
\newcommand{\minibox}[1]
{\begin{array}{l} \includegraphics[width=0.3in]{#1.eps} \end{array}}
\[
= \frac{1}{f_n \phi + f_{n-1}} \left[ \sum_{s \in Q'_{n-1}} \phi c
  \minibox{s} + \sum_{s \in Q_{n-1}} (\phi e + a) \minibox{p} \right]
\]
Now we just need to show that
\begin{equation}
\label{delta1}
\frac{\phi c}{f_n \phi + f_{n-1}} = 
\frac{1}{\delta}\frac{1}{f_{n-1} \phi + f_{n-2}}
\end{equation}
and
\begin{equation}
\label{delta2}
\frac{\phi e + a}{f_n \phi + f_{n-1}} =
\frac{1}{\delta} \frac{\phi}{f_{n-1} \phi + f_{n-2}}.
\end{equation}
The Fibonacci numbers have the property
\[
\frac{f_n \phi + f_{n-1}}{f_{n-1} \phi + f_{n-2}} = \phi
\]
for all $n$. Thus equations \ref{delta1} and \ref{delta2} are
equivalent to
\begin{equation}
\label{equiv1}
\phi c = \frac{1}{\delta} \phi
\end{equation}
and
\begin{equation}
\label{equiv2}
\phi e + a = \frac{1}{\delta} \phi^2
\end{equation}
respectively. For $A=e^{-i 3 \pi/5}$ these both yield $\delta =
A-1$. Hence
\[
\ttr \left( \symbox{braidbox3} \right) = \frac{1}{\delta}
\frac{1}{f_{n-1} \phi + f_{n-2}} \left[ \sum_{s \in Q'_{n-1}}
  \minibox{s} + \sum_{s \in Q_{n-1}} \phi \minibox{p} \right]
\]
\[
= \frac{1}{\delta} \ttr(b),
\]
thus confirming equation \ref{toshow}. 

Now we calculate $D$ from $\delta$. Solving \ref{rhoa} for $E_i$
yields
\begin{equation}
\label{ei}
E_i = A^{-1} \rho_A(\sigma_i) - A^{-2} \id
\end{equation}
Substituting this into \ref{mark3} yields
\[
\tr(X (A^{-1} \rho_A(\sigma_i)-A^{-2} \id)) = \frac{1}{D} \tr(X)
\]
\[
\Rightarrow A^{-1} \tr(X \rho_A(\sigma_i)) - A^{-2} \tr(X) =
\frac{1}{D} \tr(X).
\]
Comparison to our relation $\tr(X \rho_A(\sigma_i)) = \frac{1}{\delta}
\tr(X)$ yields
\[
A^{-1} \frac{1}{\delta} - A^{-2} = \frac{1}{D}.
\]
Solving for $D$ and substituting in $A=e^{-i 3 \pi/5}$ yields
\[
D = \phi.
\]
This is also equal to $-A^2-A^{-2}$ consistent with the usage
elsewhere.

Thus we have shown that $\ttr$ has all the necessary properties. We
will next show that the image of the representation $\rho_F$ of the
braid group $B_n$ also forms a representation of the Temperley-Lieb
algebra $TL_n(D)$. Specifically, $E_i$ is represented by
\begin{equation}
\label{rep}
E_i \to A^{-1} \rho_F^{(n)}(\sigma_i) - A^{-2} \id.
\end{equation}
To show that this is a representation of $TL_n(D)$ we must show that
the matrices described in equation \ref{rep} satisfy the properties
\ref{rel1}, \ref{rel2}, and \ref{rel3}. By the theorem of
\cite{Aharonov1} which shows that a Markov trace on any representation
of the Temperley-Lieb algebra yields the Jones polynomial, it will
follow that the trace of the Fibonacci representation yields the Jones
polynomial.

Since $\rho_F$ is a representation of the braid group and $\sigma_i
\sigma_j = \sigma_j \sigma_i$ for $|i-j| > 1$, it immediately follows
that the matrices described in equation \ref{rep} satisfy condition
\ref{rel1}. Next, we'll consider condition \ref{rel3}. By inspection
of the Fibonacci representation as given by equation 
\ref{rules}, we see that by appropriately ordering the
basis\footnote{We will have to choose different orderings for
  different $\sigma_i$'s.} we can bring $\rho_A(\sigma_i)$ into block
diagonal form, where each block is one of the following $1 \times 1$
or $2 \times 2$ possibilities.
\[
\left[ a \right] \quad \left[ b \right] \quad 
\left[ \begin{array}{cc}
c & d \\
d & e
\end{array} \right]
\]
Thus, by equation \ref{rep}, it suffices to show that
\[
\left( A^{-1} \left[ \begin{array}{cc}
c & d \\
d & e
\end{array} \right]
- A^{-2} \left[ \begin{array}{cc}
1 & 0 \\
0 & 1
\end{array} \right] \right)^2 = D \left( A^{-1}
\left[ \begin{array}{cc}
c & d \\
d & e
\end{array} \right]
- A^{-2} \left[ \begin{array}{cc}
1 & 0 \\
0 & 1
\end{array} \right] \right),
\]
\[
(A^{-1} a - A^{-2})^2 = D (A^{-1} a - A^{-2} ),
\]
and
\[
(A^{-1} b - A^{-2})^2 = D (A^{-1} b - A^{-2}),
\]
\emph{i.e.} each of the blocks square to $D$ times themselves. These
properties are confirmed by direct calculation.

Now all that remains is to check that the correspondence \ref{rep}
satisfies property \ref{rel2}. Using the rules described in equation
\ref{rules} we have
\[
\rho_F^{(3)}(\sigma_1) = \left[ \begin{array}{cccccccc}
b & 0 &   &   &   &   &   &   \\
0 & a &   &   &   &   &   &   \\
  &   & a &   &   &   &   &   \\
  &   &   & e & 0 & d &   &   \\
  &   &   & 0 & a & 0 &   &   \\
  &   &   & d & 0 & c &   &   \\
  &   &   &   &   &   & e & d \\
  &   &   &   &   &   & d & c
\end{array} \right] 
\begin{array}{c}
\textrm{$*$p$*$p} \\
\textrm{$*$ppp} \\
\textrm{$*$pp$*$} \\
\textrm{pppp} \\
\textrm{pp$*$p} \\
\textrm{p$*$pp} \\
\textrm{ppp$*$} \\
\textrm{p$*$p$*$} \\
\end{array}
\quad \quad
\rho_F^{(3)}(\sigma_2) = \left[ \begin{array}{cccccccc}
c & d &   &   &   &   &   &   \\
d & e &   &   &   &   &   &   \\
  &   & a &   &   &   &   &   \\
  &   &   & e & d & 0 &   &   \\
  &   &   & d & c & 0 &   &   \\
  &   &   & 0 & 0 & a &   &   \\
  &   &   &   &   &   & a & 0 \\
  &   &   &   &   &   & 0 & b
\end{array} \right]
\begin{array}{c}
\textrm{$*$p$*$p} \\
\textrm{$*$ppp} \\
\textrm{$*$pp$*$} \\
\textrm{pppp} \\
\textrm{pp$*$p} \\
\textrm{p$*$pp} \\
\textrm{ppp$*$} \\
\textrm{p$*$p$*$} \\
\end{array}
\]
(Here we have considered all four subspaces unlike in equation
\ref{examp}.) Substituting this into equation \ref{rep} yields
matrices which satisfy condition \ref{rel3}. It follows that equation
\ref{rep} yields a representation of the Temperley-Lieb algebra. This
completes the proof that
\[
V_{b^{\mathrm{tr}}}(A^{-4}) = (-A)^{3w(b^{\mathrm{tr}})} D^{n-1} 
\ttr(\rho_F^{(n)}(b^{\mathrm{tr}}))
\]
for $A=e^{-i 3 \pi/5}$.

\section{Density of the Fibonacci representation}
\label{density}

In this section we will show that $\rho_{**}^{(n)}(B_n)$ is a dense
subgroup of $SU(f_{n-1})$ modulo phase, and that
$\rho_{*p}^{(n)}(B_n)$ and $\rho_{p*}^{(n)}(B_n)$ are dense subgroups
of $SU(f_n)$ modulo phase. Similar results regarding the path model
representation of the braid group were proven in \cite{Aharonov2}. Our
proofs will use many of the techniques introduced there.

We'll first show that $\rho_{**}^{(4)}(B_4)$ modulo phase is a dense
subgroup of $SU(2)$. We can then use the bridge lemma from
\cite{Aharonov2} to extend the result to arbitrary $n$.

\begin{proposition} \label{SU2} $\rho_{**}^{(4)}(B_4)$ modulo phase is
  a dense subgroup of $SU(2)$.
\end{proposition}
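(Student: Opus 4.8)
The plan is to compute the two generators $\rho_{**}^{(4)}(\sigma_1)$ and $\rho_{**}^{(4)}(\sigma_2)$ explicitly and then invoke the classification of closed subgroups of $SO(3)\cong PU(2)$. First I would pin down the $**$-subspace for $n=4$: a length-$5$ string of $p,*$ symbols that begins and ends with $*$ and has no two $*$'s adjacent must be either $*ppp*$ or $*p*p*$, so $\rho_{**}^{(4)}$ is two-dimensional and, working modulo phase, its image lies in $PU(2)$. Applying the local rules of equation \ref{rules} with the constants of equation \ref{constants} at $A=e^{-i3\pi/5}$: the crossing $\sigma_1$ acts only on the second symbol, whose neighbours ($*$ on the left, $p$ or $*$ on the right) are fixed, so it acts diagonally, $\rho_{**}^{(4)}(\sigma_1)=\mathrm{diag}(a,b)$ with $a=-A^4=e^{i3\pi/5}$ and $b=A^8=e^{-i4\pi/5}$; the crossing $\sigma_2$ acts on the third symbol, whose neighbours are $p$ and $p$, so in the basis $(\,*ppp*\,,\,*p*p*\,)$ it is the symmetric unitary with diagonal entries $e,c$ and off-diagonal entry $d=\tau^{3/2}(A^4+A^8)$.

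Second, I would read off the rotation data. The eigenvalues of $\rho_{**}^{(4)}(\sigma_1)$ are $a,b$, with ratio $a/b=e^{i7\pi/5}$, so modulo phase it is a rotation by $3\pi/5$, of order exactly $10$ in $SO(3)$; in particular $a\neq b$ so it is non-scalar. Since $\sigma_1$ and $\sigma_2$ are conjugate in $B_4$ (indeed $(\sigma_1\sigma_2)\sigma_1(\sigma_1\sigma_2)^{-1}=\sigma_2$ by the braid relation), $\rho_{**}^{(4)}(\sigma_2)$ has the same eigenvalue pair and hence is also an order-$10$ rotation; alternatively one checks directly that its trace $c+e=(A^8-A^4)(\tau+\tau^2)=A^8-A^4$ equals $a+b$. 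Moreover $d=\tau^{3/2}(A^4+A^8)=2\tau^{3/2}\cos(\pi/5)\,e^{-i3\pi/5}\neq 0$, so $\rho_{**}^{(4)}(\sigma_2)$ is not diagonal while $\rho_{**}^{(4)}(\sigma_1)$ is diagonal and non-scalar; hence the two matrices do not commute, i.e.\ the two rotations have distinct axes.

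Third, I would conclude by elimination. Let $G\le SO(3)$ be the subgroup generated by the images of $\sigma_1,\sigma_2$ modulo phase (adjoining $\sigma_3$ only enlarges $G$, so it is harmless). Every finite subgroup of $SO(3)$ is cyclic, dihedral, $A_4$, $S_4$, or $A_5$; none of $A_4,S_4,A_5$ contains an element of order $10$, and in a cyclic or dihedral group any two elements of order $10$ must share the rotation axis. Hence no finite subgroup of $SO(3)$ contains two order-$10$ rotations with distinct axes, so $G$ is infinite and $\overline{G}$ is a closed subgroup of positive dimension. A positive-dimensional proper closed subgroup of $SO(3)$ is conjugate into $O(2)$, whose elements outside the maximal torus all have order $2$; such a subgroup again cannot contain two order-$10$ rotations with distinct axes. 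Therefore $\overline{G}=SO(3)$, which is the assertion that $\rho_{**}^{(4)}(B_4)$ is dense in $SU(2)$ modulo phase. (Density in $PU(2)\cong SO(3)$ is equivalent to density of the determinant-normalised image in $SU(2)$: the $2$-to-$1$ cover $SU(2)\to SO(3)$ is closed, so a preimage subgroup surjecting onto a dense set is $3$-dimensional, hence all of $SU(2)$.)

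The main obstacle is purely the bookkeeping of the explicit computation — listing the $**$-basis, tracking which symbol each $\sigma_i$ acts on and which rule in equation \ref{rules} applies, simplifying the $A=e^{-i3\pi/5}$ expressions for $a,b,c,d,e$ (using $\tau+\tau^2=1$), and being careful about the passage between $U(2)$, $SU(2)$ and $PU(2)$. Once the two generators are identified as non-commuting order-$10$ rotations, the density conclusion is essentially immediate, and there is \emph{no} exceptional angle to worry about because order $10$ simply does not occur in the exceptional finite rotation groups.
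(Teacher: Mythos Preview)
Your proof is correct and follows essentially the same route as the paper: compute the two generators of the two-dimensional $**$-representation explicitly, pass to $SO(3)\cong PU(2)$, and eliminate every proper closed subgroup via the classification. The only difference is in the final step ruling out the exceptional groups $T$, $O$, $I$: you observe directly that none of them contains an element of order $10$, whereas the paper instead forms the composite $\phi(A)^5\phi(B)^5$, computes its rotation angle $2\theta_{12}=2\cos^{-1}(2-\sqrt{5})$, and checks this is not a multiple of $2\pi/k$ for $k\le 5$ --- your order-$10$ shortcut is a bit more economical.
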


\begin{proof}
Using equation \ref{rules} we have:
\[
\rho_{**}^{(4)}(\sigma_1) = \rho_{**}^{(4)}(\sigma_3) =
\left[ \begin{array}{cc}
b & 0 \\
0 & a
\end{array} \right] \begin{array}{c}
\textrm{$*$p$*$p$*$} \\
\textrm{$*$ppp$*$}
\end{array}
\quad \quad
\rho_{**}^{(4)}(\sigma_2) =
\left[ \begin{array}{cc}
c & d \\
d & e
\end{array} \right] \begin{array}{c}
\textrm{$*$p$*$p$*$} \\
\textrm{$*$ppp$*$}
\end{array}
\]
We do not care about global phase so we will take
\[
\rho_{**}^{(n)}(\sigma_i) \quad \to \quad \frac{1}{ \left( \det
  \rho_{**}^{(n)}(\sigma_i) \right)^{1/f_{n-1}}}
\ \rho_{**}^{(n)}(\sigma_i) 
\]
to project into $SU(f_{n-1})$. Thus we must show the group $\langle
A,B \rangle$ generated by
\begin{equation}
\label{AB}
A = \frac{1}{\sqrt{ab}} \left[ \begin{array}{cc}
b & 0 \\
0 & a
\end{array} \right] \quad \quad
B = \frac{1}{\sqrt{ce-d^2}} \left[ \begin{array}{cc}
c & d \\
d & e
\end{array} \right]
\end{equation}
is a dense subgroup of $SU(2)$. To do this we will use the well known
surjective homomorphism $\phi:SU(2) \to SO(3)$ whose kernel is $\{ \pm
  \id \}$ (\emph{cf.} \cite{Artin}, pg. 276). A general element of
  $SU(2)$ can be written as
\[
\cos\left( \frac{\theta}{2} \right) \id + i \sin \left( \frac{\theta}{2}
\right) \left[ x \sigma_x + y \sigma_y + z \sigma_z \right]
\]
where $\sigma_x$, $\sigma_y$, $\sigma_z$ are the Pauli matrices,
and $x$, $y$, $z$ are real numbers satisfying $x^2+y^2+z^2=1$. $\phi$
maps this element to the rotation by angle $\theta$ about the axis 
\[
\vec{x} = \left[ \begin{array}{c} 
x \\ 
y \\ 
z \end{array} \right].
\]

Using equations \ref{AB} and \ref{constants}, one finds that $\phi(A)$
and $\phi(B)$ are both rotations by $7 \pi /5$. These rotations are
about different axes which are separated by angle
\[
\theta_{12} = \cos^{-1} (2-\sqrt{5})
\simeq
1.8091137886\ldots
\]
To show that $\rho_{**}^{(4)}(B_4)$ modulo phase is a dense subgroup
of $SU(2)$ it suffices to show that $\phi(A)$ and $\phi(B)$ generate
a dense subgroup of $SO(3)$. To do this we take advantage of the fact
that the finite subgroups of $SO(3)$ are completely known.

\begin{theorem}$\mathrm{(\cite{Artin} \ pg. \ 184)}$
Every finite subgroup of $SO(3)$ is one of the following:
\begin{itemize}
\item[] $C_k$: the cyclic group of order $k$
\item[] $D_k$: the dihedral group of order $k$
\item[] $T$: the tetrahedral group (order 12)
\item[] $O$: the octahedral group (order 24)
\item[] $I$: the icosahedral group (order 60)
\end{itemize}
\end{theorem}
The infinite proper subgroups of $SO(3)$ are all isomorphic to $O(2)$
or $SO(2)$. Thus, since $\phi(A)$ and $\phi(B)$ are rotations about
different axes, $\langle \phi(A), \phi(B) \rangle$ can only be
$SO(3)$ or a finite subgroup of $SO(3)$. If we can show that $\langle
\phi(A), \phi(B) \rangle$ is not contained in any of the finite
subgroups of $SO(3)$ then we are done.

Since $\phi(A)$ and $\phi(B)$ are rotations about different axes we
know that $\langle \phi(A), \phi(B) \rangle$ is not $C_k$ or
$D_k$. Next, we note that $R=\phi(A)^5 \phi(B)^5$ is a rotation by $2
\theta_{12}$. By direct calculation, $2 \theta_{12}$ is not an integer
multiple of $2 \pi / k$ for $k = 1,2,3,4,$ or 5. Thus $R$ has order
greater than 5. As mentioned on pg. 262 of \cite{Jones2}, $T$, $O$,
and $I$ do not have any elements of order greater than 5. Thus,
$\langle \phi(A), \phi(B) \rangle$ is not contained in $C$, $O$, or
$I$, which completes the proof. Alternatively, using more arithmetic
and less group theory, we can see that $2 \theta_{12}$ is not any
integer multiple of $2 \pi / k$ for any $k \leq 30$, thus $R$ cannot
be in $T$, $O$, or $I$ since its order does not divide the order of
any of these groups.
\end{proof}

Next we'll consider $\rho_{**}^{(n)}$ for larger $n$. These will be
matrices acting on the strings of length $n+1$. These can be divided
into those which end in pp$*$ and those which end in $*$p$*$. The
space upon which $\rho_{**}^{(n)}$ acts can correspondingly be divided
into two subspaces which are the span of these two sets of
strings. From equation \ref{rules} we can see that
$\rho_{**}^{(n)}(\sigma_1)\ldots\rho_{**}^{(n)}(\sigma_{n-3})$ will
leave these subspaces invariant. Thus if we order our basis to respect
this grouping of strings,
$\rho_{**}^{(n)}(\sigma_1)\ldots\rho_{**}^{(n)}(\sigma_{n-3})$ will
appear block-diagonal with a block corresponding to each of these
subspaces.

The possible prefixes of $*$p$*$ are all strings of length $n-2$ that
start with $*$ and end with p. Now consider the strings acted upon by
$\rho_{**}^{(n-2)}$. These have length $n-1$ and must end in $*$. The 
possible prefixes of this $*$ are all strings of length $n-2$ that
begin with $*$ and end with p. Thus these are in one to one
correspondence with the strings acted upon by $\rho_{**}^{(n)}$ that
end in $*$p$*$. Furthermore, since the rules \ref{rules} depend only
on the three symbols neighboring a given crossing, the block of 
$\rho_{**}^{(n)}(\sigma_1)\ldots\rho_{**}^{(n)}(\sigma_{n-3})$
corresponding to the $*$p$*$ subspace is exactly the same as 
$\rho_{**}^{(n-2)}(\sigma_1)\ldots\rho_{**}^{(n-2)}(\sigma_{n-3})$. By
a similar argument, the block of 
$\rho_{**}^{(n)}(\sigma_1)\ldots\rho_{**}^{(n)}(\sigma_{n-3})$
corresponding to the pp$*$ is exactly the same as 
$\rho_{**}^{(n-1)}(\sigma_1)\ldots\rho_{**}^{(n-1)}(\sigma_{n-3})$.

For any $n> 3$, $\rho_{**}^{(n)}(\sigma_{n-2})$ will not leave these
subspaces invariant. This is because the crossing $\sigma_{n-2}$ spans
the $(n-1)\th$ symbol. Thus if the $(n-2)\th$ and $n\th$ symbols are
$p$, then by equation \ref{rules}, $\rho_{**}^{(n)}$ can flip the
value of the $(n-1)\th$ symbol. The $n\th$ symbol is guaranteed to be
$p$, since the $(n+1)\th$ symbol is the last one and is therefore $*$
by definition. For any $n > 3$, the space acted upon by
$\rho_{**}^{(n)}(\sigma_{n-1})$ will include some strings in which the 
$(n-2)\th$ symbol is $p$. 

As an example, for five strands:
\[
\rho_{**}^{(5)}(\sigma_1) = \left[ \begin{array}{ccc}
b & 0 & 0 \\
0 & a & 0 \\
0 & 0 & a
\end{array} \right]
\begin{array}{c}
\textrm{$*$p$*$pp$*$} \\
\textrm{$*$pppp$*$} \\
\textrm{$*$pp$*$p$*$}
\end{array} \quad \quad
\rho_{**}^{(5)}(\sigma_2) = \left[ \begin{array}{ccc}
c & d & 0 \\
d & e & 0 \\
0 & 0 & a
\end{array} \right]
\begin{array}{c}
\textrm{$*$p$*$pp$*$} \\
\textrm{$*$pppp$*$} \\
\textrm{$*$pp$*$p$*$}
\end{array}
\]
\[
\rho_{**}^{(5)}(\sigma_3) = \left[ \begin{array}{ccc}
a & 0 & 0 \\
0 & e & d \\
0 & d & c
\end{array} \right]
\begin{array}{c}
\textrm{$*$p$*$pp$*$} \\
\textrm{$*$pppp$*$} \\
\textrm{$*$pp$*$p$*$}
\end{array} \quad \quad
\rho_{**}^{(5)}(\sigma_4) = \left[ \begin{array}{ccc}
a & 0 & 0 \\
0 & a & 0 \\
0 & 0 & b
\end{array} \right]
\begin{array}{c}
\textrm{$*$p$*$pp$*$} \\
\textrm{$*$pppp$*$} \\
\textrm{$*$pp$*$p$*$}
\end{array}
\]
We recognize the upper $2 \times 2$ blocks of
$\rho_{**}^{(5)}(\sigma_1)$, and $\rho_{**}^{(5)}(\sigma_2)$ from
equation \ref{examp}. The lower $1 \times 1$ block matches
$\rho_{**}^{(3)}(\sigma_1)$ and $\rho_{**}^{(3)}(\sigma_2)$, which are
both easily calculated to be $[a]$. $\rho_{**}^{(5)}(\sigma_3)$ mixes
these two subspaces.

We can now use the preceding observations about the recursive
structure of $\{ \rho_{**}^{(n)}| n = 4,5,6,7\ldots \}$ to show
inductively that $\rho_{**}^{(n)}(B_n)$ forms a dense subgroup of
$SU(f_{n-1})$ for all $n$. To perform the induction step we use the
bridge lemma and decoupling lemma from \cite{Aharonov2}.

\begin{lemma}[Bridge Lemma]
Let $C=A \oplus B$ where $A$ and $B$ are vector spaces with
$\dim{B} > \dim{A} \geq 1$. Let $W \in SU(C)$ be a linear
transformation which mixes the subspaces $A$ and $B$. Then the group
generated by $SU(A)$, $SU(B)$, and $W$ is dense in $SU(C)$.
\end{lemma}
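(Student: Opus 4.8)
The plan is to work at the level of Lie algebras. Let $G$ be the closure of $\langle SU(A), SU(B), W\rangle$ inside $SU(C)$; since $G$ is a compact Lie group, it suffices to prove that its Lie algebra $\mathfrak{g}$ equals $\mathfrak{su}(C)$, for then $G \supseteq \exp\mathfrak{su}(C) = SU(C)$ by connectedness, hence $G = SU(C)$. Because $SU(A)$ and $SU(B)$ are connected subgroups of $G$, we have $\mathfrak{su}(A)\oplus\mathfrak{su}(B)\subseteq\mathfrak{g}$ (with the convention that $\mathfrak{su}$ of a line is $0$), and since $SU(A)\times SU(B)$ and $W$ all lie in $G$, the subspace $\mathfrak{g}$ is invariant under $\mathrm{Ad}(SU(A)\times SU(B))$ and under $\mathrm{Ad}(W)$. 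Note also that $\dim B > \dim A \ge 1$ forces $\dim B \ge 2$, so $\mathfrak{su}(B)\ne 0$.

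Decompose $\mathfrak{su}(C)$ with respect to $C = A\oplus B$ as $(\mathfrak{su}(A)\oplus\mathfrak{su}(B))\oplus\mathbb{R}h_0\oplus\mathfrak{m}$, where $h_0$ is the (unique up to scale) traceless anti-Hermitian operator that is a scalar on $A$ and a scalar on $B$, and $\mathfrak{m}\cong\mathrm{Hom}(B,A)$ is the space of anti-Hermitian operators supported on the off-diagonal blocks, on which $(u,v)\in SU(A)\times SU(B)$ acts by $\mathrm{Ad}$ as $Q\mapsto uQv^{-1}$. The first step is to show $\mathfrak{g}\cap\mathfrak{m}\ne\{0\}$. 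Write $W$ in block form; since $W$ is unitary and, by hypothesis, does not preserve the decomposition, its off-diagonal blocks are nonzero. Conjugating elements of $\mathfrak{su}(A)$ and $\mathfrak{su}(B)$ (the latter alone suffices when $\dim A = 1$) by $W$ lands in $\mathfrak{g}$, and a short linear-algebra computation with the blocks of $W$ — using that $\mathfrak{su}$ spans the full matrix algebra modulo scalars, and that unitarity of $W$ together with $\dim A < \dim B$ rules out the degenerate rank configurations — shows that one of these conjugates has nonzero off-diagonal part. Hence $\mathfrak{g}$ contains a nonzero element of $\mathfrak{m}$.

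Now $\mathfrak{g}\cap\mathfrak{m}$ is a nonzero $\mathrm{Ad}(SU(A)\times SU(B))$-invariant subspace of $\mathfrak{m}$. As a real representation of $SU(A)\times SU(B)$, $\mathfrak{m}$ is the external tensor product of the standard representation of $SU(A)$ and the conjugate standard representation of $SU(B)$; this complex representation is irreducible, and under the hypothesis $\dim B > \dim A$ it is never of real type — it is of complex type unless $(\dim A,\dim B) = (1,2)$, in which case it is the quaternionic-type standard $SU(2)$ representation — so it remains irreducible as a real representation. (This is exactly where $\dim B > \dim A$ is used: it excludes the real-type cases such as $\dim A = \dim B = 1$, where the conclusion genuinely fails.) Therefore $\mathfrak{g}\cap\mathfrak{m} = \mathfrak{m}$, i.e.\ $\mathfrak{m}\subseteq\mathfrak{g}$. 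Finally, for off-diagonal $m, m'\in\mathfrak{m}$ with blocks $Q, Q'$, the bracket $[m,m']\in\mathfrak{g}$ is block-diagonal with $A$-block trace $-\mathrm{tr}(QQ'^\dagger)+\mathrm{tr}(Q'Q^\dagger)$, which can be made nonzero by a suitable choice of $Q, Q'$; modulo $\mathfrak{su}(A)\oplus\mathfrak{su}(B)$ this is a nonzero multiple of $h_0$, so $h_0\in\mathfrak{g}$. Thus $\mathfrak{g}$ contains $\mathfrak{su}(A)\oplus\mathfrak{su}(B)$, $\mathbb{R}h_0$, and $\mathfrak{m}$, so $\mathfrak{g} = \mathfrak{su}(C)$ and $G = SU(C)$.

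The main obstacle is the representation-theoretic core: establishing $\mathfrak{g}\cap\mathfrak{m}\ne\{0\}$ with its edge case $\dim A = 1$, and verifying the $\mathbb{R}$-irreducibility of $\mathfrak{m}$, where one must track the real/complex/quaternionic type of $\square_A\boxtimes\overline{\square_B}$ and pin down precisely how $\dim B > \dim A$ is invoked. By comparison, the block computations in the second step and the bracket computation in the last step are routine.
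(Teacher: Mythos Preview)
The paper does not prove the Bridge Lemma; it quotes it from \cite{Aharonov2} and uses it as a black box in the induction of Proposition~\ref{**density}. So there is no in-paper proof to compare against, and your Lie-algebraic argument stands on its own.

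Your approach is sound and is the standard route to such density statements. One step deserves tightening. In the second paragraph you show that $\mathrm{Ad}(W)Y$ has nonzero off-diagonal part for some $Y\in\mathfrak{su}(B)$, and then conclude ``$\mathfrak{g}$ contains a nonzero element of $\mathfrak{m}$.'' What you have actually produced is an element of $\mathfrak{g}$ whose $\mathfrak{m}$-\emph{component} is nonzero; to pass to $\mathfrak{g}\cap\mathfrak{m}\ne 0$ you need $\mathfrak{g}$ to respect the splitting $\mathfrak{su}(C)=D\oplus\mathfrak{m}$ (with $D$ the block-diagonal part). This follows because $D$ and $\mathfrak{m}$ share no common irreducible constituent as real $SU(A)\times SU(B)$-representations --- $\mathfrak{m}$ has no trivial summand and no summand on which either factor acts trivially --- so any $\mathrm{Ad}(SU(A)\times SU(B))$-invariant subspace decomposes along $D\oplus\mathfrak{m}$. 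You have all the ingredients for this; it just needs to be said.

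The rest is correct. The block computation does force a nonzero off-diagonal: with $W=\begin{pmatrix}a&b\\c&d\end{pmatrix}$ unitary and mixing, one has $b\ne 0$, and $d=0$ would make $b$ an isometry $B\to A$, impossible since $\dim A<\dim B$; then $Y\mapsto bYd^\dagger$ is $\mathbb{C}$-linear and nonzero on $\mathfrak{sl}(B)$, hence nonzero on $\mathfrak{su}(B)$. Your type analysis of $\square_A\boxtimes\overline{\square_B}$ is right --- complex type except for $(\dim A,\dim B)=(1,2)$, which is quaternionic --- and this is exactly where $\dim B>\dim A$ excludes the real-type cases $(1,1)$ and $(2,2)$. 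For the final step, taking $Q'=iQ$ with $Q\ne 0$ makes $\mathrm{tr}(Q'Q^\dagger)-\mathrm{tr}(QQ'^\dagger)=2i\|Q\|_F^2\ne 0$, so $[m,m']$ has nonzero scalar part on $A$ and hence yields $h_0$ modulo $\mathfrak{su}(A)\oplus\mathfrak{su}(B)$.
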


\begin{lemma}[Decoupling Lemma]
Let $G$ be an infinite discrete group, and let $A$ and $B$ be two
vector spaces with $\dim(A) \neq \dim(B)$. Let $\rho_a: G \to SU(A)$
and $\rho_b: G \to SU(B)$ be homomorphisms such that $\rho_a(G)$ is
dense in $SU(A)$ and $\rho_b(G)$ is dense in $SU(B)$.  Then for any $U_a
\in SU(A)$ there exist a series of $G$-elements $\alpha_n$ such that
$\lim_{n \to \infty} \rho_a(\alpha_n) = U_a$ and $\lim_{n \to \infty}
\rho_b(\alpha_n) = \id$. Similarly, for any $U_b \in SU(B)$, there exists a
series $\beta_n \in G$ such that $\lim_{n \to \infty} \rho_a(\beta_n) = \id$
and $\lim_{n \to \infty} \rho_a(\beta_n) = U_b$.
\end{lemma}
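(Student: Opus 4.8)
The plan is to study the single joint homomorphism $\rho := \rho_a \times \rho_b \colon G \to SU(A)\times SU(B)$ and its closed image $H := \overline{\rho(G)}$. Since $SU(A)\times SU(B)$ is compact, $H$ is a closed, hence compact, subgroup; by Cartan's closed-subgroup theorem it is a compact Lie group. The key observation is that both assertions of the lemma are equivalent to the single statement $H = SU(A)\times SU(B)$. Indeed, if $\overline{\rho(G)}$ is the full product, then $(U_a,\id)\in\overline{\rho(G)}$ for every $U_a\in SU(A)$, which furnishes a sequence $\alpha_n\in G$ with $\rho_a(\alpha_n)\to U_a$ and $\rho_b(\alpha_n)\to\id$; symmetrically $(\id,U_b)$ produces the sequence $\beta_n$. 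The case where $\dim A=1$ or $\dim B=1$ is immediate, since then the corresponding special unitary group is trivial, so I will assume $\dim A,\dim B\ge 2$.

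To prove $H = SU(A)\times SU(B)$ I would run a Goursat-type argument. First, since the projection $\pi_A\colon SU(A)\times SU(B)\to SU(A)$ is continuous and $\pi_A(\rho(G))=\rho_a(G)$ is dense in $SU(A)$, the set $\pi_A(H)$ is compact and dense, hence equals $SU(A)$; likewise $\pi_B(H)=SU(B)$. Next define $N_A := \{X\in SU(A) : (X,\id)\in H\}$ and $N_B := \{Y\in SU(B) : (\id,Y)\in H\}$. Each is closed because $H$ is closed, and each is normal: given $g\in SU(A)$, surjectivity of $\pi_A|_H$ provides a lift $(g,g')\in H$, and then $(g,g')(X,\id)(g,g')^{-1}=(gXg^{-1},\id)\in H$ for every $X\in N_A$, so $gN_Ag^{-1}=N_A$. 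At this point I invoke the standard structural fact that, for $n\ge 2$, every closed normal subgroup of $SU(n)$ is either contained in the finite center or equal to $SU(n)$: the identity component of such a subgroup has Lie algebra an ideal of the simple algebra $\mathfrak{su}(n)$, and a discrete normal subgroup of a connected group is central.

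Now split into cases. If $N_A = SU(A)$, then $SU(A)\times\{\id\}\subseteq H$; multiplying an arbitrary $(Y,U_b)\in H$ by $(Y^{-1},\id)$ shows $\{\id\}\times SU(B)\subseteq H$ as well, so $H = SU(A)\times SU(B)$. The case $N_B = SU(B)$ is symmetric. In the remaining case $N_A$ and $N_B$ are both finite, hence $\ker(\pi_B|_H)=N_A\times\{\id\}$ and $\ker(\pi_A|_H)=\{\id\}\times N_B$ are finite; a surjective Lie-group homomorphism with finite kernel is a local diffeomorphism, so $\dim H = \dim SU(A) = (\dim A)^2-1$ and simultaneously $\dim H = \dim SU(B) = (\dim B)^2-1$, forcing $\dim A=\dim B$ and contradicting the hypothesis. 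Therefore $H = SU(A)\times SU(B)$ in every case, and the sequences $\alpha_n,\beta_n$ are extracted as indicated in the first paragraph.

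The genuinely nontrivial ingredients are all standard structural facts about compact Lie groups rather than anything special to the Fibonacci setting: that $H$ is a Lie group, that $\mathfrak{su}(n)$ is simple so $SU(n)$ has no proper nontrivial closed connected normal subgroup, and that a surjective homomorphism with finite kernel preserves dimension. The one place the hypothesis $\dim A\ne\dim B$ is actually used is to eliminate the "graph" alternative of Goursat's lemma; this is the crux of the argument, and everything else is routine bookkeeping. I would present the proof in exactly this order: reduce both claims to $H=SU(A)\times SU(B)$, show the projections are onto, introduce $N_A,N_B$ and their normality, apply the classification of normal subgroups of $SU(n)$, and finish by the dimension count.
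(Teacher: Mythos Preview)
Your argument is correct and is the standard Goursat-type proof of this fact. Note, however, that the paper does not actually prove the Decoupling Lemma: it is quoted verbatim from \cite{Aharonov2} (Aharonov and Arad) and used as a black box in the density induction, so there is no ``paper's own proof'' to compare against. Your route---taking the closure $H=\overline{\rho_a\times\rho_b(G)}$, showing both projections are onto, classifying the normal subgroups $N_A,N_B\trianglelefteq SU(n)$ via simplicity of $\mathfrak{su}(n)$, and eliminating the graph case by the dimension count $(\dim A)^2-1=(\dim B)^2-1$---is exactly the argument one finds in \cite{Aharonov2} and is the natural way to prove such a statement. The only place the hypothesis $\dim A\neq\dim B$ enters is precisely where you use it, to rule out the possibility that $H$ is the graph of an isomorphism $SU(A)/N_A\cong SU(B)/N_B$.
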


With these in hand we can prove the main proposition of this
section.
\begin{proposition}
\label{**density}
For any $n \geq 3$, $\rho_{**}^{(n)}(B_n)$ modulo phase is a dense subgroup
of $SU(f_{n-1})$.
\end{proposition}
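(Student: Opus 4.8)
The plan is to prove Proposition \ref{**density} by strong induction on $n$, exploiting the recursive block structure of the Fibonacci representation together with the bridge and decoupling lemmas. It is cleanest to prove the slightly stronger statement that $\rho_{**}^{(n)}\big(\langle\sigma_1,\dots,\sigma_{n-2}\rangle\big)$ — the image of the sub-braid-group generated by all but the last generator — is already dense in $SU(f_{n-1})$ modulo phase; since $\langle\sigma_1,\dots,\sigma_{n-2}\rangle\subseteq B_n$ this implies the proposition. The base cases are $n=3$, where $f_{n-1}=f_2=1$ and $SU(1)$ is trivial, and $n=4$, where $\rho_{**}^{(4)}(\sigma_3)=\rho_{**}^{(4)}(\sigma_1)$ (both equal $\mathrm{diag}(b,a)$ by equation \ref{rules}), so $\rho_{**}^{(4)}(\langle\sigma_1,\sigma_2\rangle)=\rho_{**}^{(4)}(B_4)$, which is dense in $SU(2)$ by Proposition \ref{SU2}.

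For the induction step ($n\geq 5$) I would split the $f_{n-1}$-dimensional space $V_n$ on which $\rho_{**}^{(n)}$ acts according to whether a basis string ends in $*p*$ or in $pp*$; call the spans of these sets $A$ and $B$. A short count gives $\dim A=f_{n-3}$ and $\dim B=f_{n-2}$, so $\dim A+\dim B=f_{n-1}$, the two dimensions are unequal, and $\dim B>\dim A\geq 1$ because the Fibonacci sequence is strictly increasing past $f_2$. Since the rules \ref{rules} are local — the coefficients depend only on a symbol and its two neighbours — the generators $\sigma_1,\dots,\sigma_{n-3}$, which only touch positions up to $n-2$, leave both $A$ and $B$ invariant; matching the relevant local neighbourhoods (as in the $n=5$ computation around equation \ref{examp}) gives bijections $A\cong V_{n-2}$ and $B\cong V_{n-1}$ under which $\sigma_1,\dots,\sigma_{n-3}$ act on $A$ exactly as the full generating set of $\rho_{**}^{(n-2)}$ (all of $B_{n-2}$) and on $B$ exactly as the first $n-3$ generators of $\rho_{**}^{(n-1)}$, i.e. as the subgroup $\langle\sigma_1,\dots,\sigma_{n-3}\rangle\subset B_{n-1}$. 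By the induction hypothesis the former image is dense in $SU(A)\cong SU(f_{n-3})$, and the latter — which is precisely the strengthened statement at level $n-1$ — is dense in $SU(B)\cong SU(f_{n-2})$. Applying the decoupling lemma to these two homomorphisms (legitimate since $\dim A\neq\dim B$) shows that the closure of $\rho_{**}^{(n)}(\langle\sigma_1,\dots,\sigma_{n-3}\rangle)$ contains $SU(A)\oplus I$ and $I\oplus SU(B)$.

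It remains to bring in $\sigma_{n-2}$, which acts on the symbol at position $n-1$: for a string in $A$ that symbol is $*$ with both neighbours $p$, so by the rule $(p\,\widehat{*}\,p)=c(p*p)+d(ppp)$ with $d=\tau^{3/2}(A^8+A^4)\neq 0$ (equation \ref{constants}), $\rho_{**}^{(n)}(\sigma_{n-2})$ sends it to a combination involving a string in $B$; hence, after rescaling by its determinant to land in $SU(V_n)$ (which only removes a global phase), it genuinely mixes $A$ and $B$. Feeding $SU(A)$, $SU(B)$ and this mixing operator into the bridge lemma — whose hypothesis $\dim B>\dim A\geq 1$ we have checked — yields that $\rho_{**}^{(n)}(\langle\sigma_1,\dots,\sigma_{n-2}\rangle)$ is dense in $SU(f_{n-1})$ modulo phase, closing the induction and hence establishing the proposition. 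The main obstacle, and the reason both lemmas are needed, is exactly the $B$-block: restricting to $\sigma_1,\dots,\sigma_{n-3}$ only exhibits the image of a \emph{proper} sub-braid-group of $B_{n-1}$, so one cannot invoke the plain density statement for $\rho_{**}^{(n-1)}$ — this is what forces the strengthened inductive hypothesis, and it is the decoupling lemma that then lets the two already-dense blocks be driven independently so that the bridge lemma applies. A secondary point to verify carefully is that the claimed bijections really do intertwine the block actions with the lower-$n$ Fibonacci representations (a consequence of locality of \ref{rules}, but worth spelling out symbol by symbol) and that the off-block matrix element of $\rho_{**}^{(n)}(\sigma_{n-2})$ is not accidentally zero, which is guaranteed by $d\neq 0$.
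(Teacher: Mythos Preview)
Your proof is correct and follows essentially the same route as the paper: induction on $n$ using the recursive block decomposition of $\rho_{**}^{(n)}$ into $\rho_{**}^{(n-1)}$- and $\rho_{**}^{(n-2)}$-blocks, the decoupling lemma to drive the blocks independently, and the bridge lemma applied to the mixing generator. Your version is in fact slightly cleaner than the paper's in one respect: you make explicit the strengthened inductive hypothesis (density already from $\langle\sigma_1,\dots,\sigma_{n-2}\rangle$), which the paper's proof uses implicitly and only articulates afterward in the corollary, and which is exactly what is needed since the $pp*$-block sees only $\sigma_1,\dots,\sigma_{n-3}$ of $B_{n-1}$.
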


\begin{proof}
As mentioned previously, the proof will be inductive. The base cases are
$n=3$ and $n=4$. As mentioned previously, $\rho_{**}^{(3)}(\sigma_1) =
\rho_{**}^{(3)}(\sigma_2) = [a]$. Trivially, these generate a dense
subgroup of (indeed, all of) $SU(1) = \{ \id \}$ modulo phase. By
proposition \ref{SU2}, $\rho_{**}^{(4)}(\sigma_1)$, and
$\rho_{**}^{(4)}(\sigma_2)$ generate a dense subgroup of $SU(2)$
modulo phase. Now for induction assume that $\rho_{**}^{(n-1)}(B_{n-1})$ is
a dense subgroup of $SU(f_{n-2})$ and $\rho_{**}^{(n-2)}(B_{n-2})$ is
a dense subgroup of $SU(f_{n-3})$. As noted above, these correspond to
the upper and lower blocks of $\rho_{**}^{(n)}(\sigma_1) \ldots
\rho_{**}^{(n)}(\sigma_{n-2})$. Thus, by the decoupling lemma,
$\rho_{**}^{(n)}(B_n)$ contains an element arbitrarily close to $U
\oplus \id$ for any $U \in SU(f_{n-2})$ and an element arbitrarily
close to $\id \oplus U$ for any $U \in SU(f_{n-3})$. Since, as
observed above, $\rho_{**}^{(n)}(\sigma_{n-1})$ mixes these two
subspaces, the bridge lemma shows that $\rho_{**}^{(n)}(B_n)$ is dense
in $SU(f_{n-1})$.
\end{proof}

From this, the density of $\rho_{*p}^{(n)}$ and $\rho_{p*}^{(n)}$
easily follow.

\begin{corollary}
$\rho_{*p}^{(n)}(B_n)$ and $\rho_{p*}^{(n)}(B_n)$ are dense subgroups
  of $SU(f_n)$ modulo phase.
\end{corollary}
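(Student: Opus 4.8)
The plan is to mimic the inductive structure used for Proposition \ref{**density}, but now decomposing the space on which $\rho_{*p}^{(n)}$ acts according to the value of the \emph{second-to-last} symbol rather than rebuilding everything from the $**$ blocks.

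First I would record the recursive decomposition. The representation $\rho_{*p}^{(n)}$ acts on the span of length-$(n+1)$ strings that begin with $*$, end with $p$, and contain no two adjacent $*$; there are $f_n$ of these. Splitting by the $n$th symbol writes this space as $A\oplus B$, where $A$ (strings ending $*p$) has dimension $f_{n-2}$ and $B$ (strings ending $pp$) has dimension $f_{n-1}$. Since the braiding rules \ref{rules} are local, the generators $\sigma_1,\ldots,\sigma_{n-2}$ — none of which can flip the $n$th symbol — act block-diagonally on $A\oplus B$; on the first $n$ symbols of a string in $A$ they act exactly as $\rho_{**}^{(n-1)}$ of the corresponding generators, and on the first $n$ symbols of a string in $B$ exactly as $\rho_{*p}^{(n-1)}$. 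By Proposition \ref{**density}, $\rho_{**}^{(n-1)}(B_{n-1})$ is dense in $SU(f_{n-2})$, and by the inductive hypothesis $\rho_{*p}^{(n-1)}(B_{n-1})$ is dense in $SU(f_{n-1})$. The base cases $n=2$ (where $SU(f_2)=SU(1)$ is trivial) and $n=3$ (where $\rho_{*p}^{(3)}(\sigma_1),\rho_{*p}^{(3)}(\sigma_2)$ from equation \ref{examp} are, modulo phase, precisely the matrices $A,B$ shown to generate a dense subgroup of $SU(2)$ in Proposition \ref{SU2}) are immediate.

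Next I would invoke the Decoupling Lemma with the infinite discrete group $G=\langle\sigma_1,\ldots,\sigma_{n-2}\rangle\cong B_{n-1}$: because $f_{n-2}\neq f_{n-1}$ for $n\geq 4$ and both block representations are dense, the closure of $\rho_{*p}^{(n)}(B_n)$ contains $SU(f_{n-2})\oplus\id$ and $\id\oplus SU(f_{n-1})$. Then I would check that $W:=\rho_{*p}^{(n)}(\sigma_{n-1})$ mixes $A$ and $B$: $\sigma_{n-1}$ spans the $n$th symbol and the $(n+1)$th symbol is fixed to $p$, so on the string $*pp\cdots p$ (whose symbols $n-1,n,n+1$ are all $p$) the rule $(p\widehat{p}p)=d(p*p)+e(ppp)$ with $d\neq 0$ produces a vector with a nonzero component in $A$, hence $W$ does not preserve the decomposition. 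Applying the Bridge Lemma, with $\dim B=f_{n-1}>\dim A=f_{n-2}\geq 1$, then shows that $SU(A)$, $SU(B)$, and $W$ together generate a dense subgroup of $SU(A\oplus B)=SU(f_n)$, closing the induction for $\rho_{*p}^{(n)}$. Finally, each rule in \ref{rules} is invariant under reversing the order of the string, so the flip automorphism $\sigma_i\mapsto\sigma_{n-i}$ of $B_n$ intertwines $\rho_{*p}^{(n)}$ with $\rho_{p*}^{(n)}$; density therefore transfers to $\rho_{p*}^{(n)}(B_n)$.

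I expect the main obstacle to be purely bookkeeping: matching the block-diagonal pieces of $\rho_{*p}^{(n)}$ correctly with the smaller representations $\rho_{**}^{(n-1)}$ and $\rho_{*p}^{(n-1)}$ — keeping the string lengths, Fibonacci indices, and surviving generators consistent — and verifying the small-$n$ dimension inequalities ($f_{n-2}\neq f_{n-1}$ and $f_{n-2}\geq 1$) that license the Decoupling and Bridge Lemmas. The genuine mathematical content is already in hand from Proposition \ref{**density} and the two lemmas imported from \cite{Aharonov2}.
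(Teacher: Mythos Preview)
Your argument is correct, but the paper takes a shorter route. The paper simply observes that the map sending a length-$(n+1)$ string $* \ldots p$ to the length-$(n+2)$ string $* \ldots p*$ (appending a final $*$) is a bijection between the basis of $\rho_{*p}^{(n)}$ and the basis of $\rho_{**}^{(n+1)}$, and that under this identification $\rho_{*p}^{(n)}(\sigma_i) = \rho_{**}^{(n+1)}(\sigma_i)$ for $i=1,\ldots,n-1$ (since $\sigma_i$ with $i\leq n-1$ never touches the appended symbol). It then points out that the inductive proof of Proposition~\ref{**density} actually establishes the stronger fact that $\langle \rho_{**}^{(n+1)}(\sigma_1),\ldots,\rho_{**}^{(n+1)}(\sigma_{n-1})\rangle$ is already dense in $SU(f_n)$ --- the last generator $\sigma_n$ was never used --- so density of $\rho_{*p}^{(n)}(B_n)$ follows immediately. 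The $p*$ case is handled by the same left--right symmetry you use.

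Your approach instead runs an independent induction on $\rho_{*p}^{(n)}$, splitting on the $n$th symbol and feeding in Proposition~\ref{**density} on one block and the inductive hypothesis on the other. This is perfectly valid and arguably more self-contained: it does not require the reader to go back and inspect the proof of Proposition~\ref{**density} for an unstated stronger conclusion. The paper's approach, on the other hand, is a two-line observation once that stronger conclusion is recognized, and it explains \emph{why} the $*p$ case should be no harder than the $**$ case --- they are literally the same matrices.
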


\begin{proof}
It is not hard to see that
\[
\begin{array}{rcl}
\rho_{*p}^{(n)}(\sigma_1) & = & \rho_{**}^{(n+1)}(\sigma_1) \\
 & \vdots & \\
\rho_{*p}^{(n)}(\sigma_{n-1}) & = & \rho_{**}^{(n+1)}(\sigma_{n-1})
\end{array}.
\]
As we saw in the proof of proposition \ref{**density},
$\rho_{**}^{(n+1)}(\sigma_n)$ is not necessary to obtain density in
$SU(f_n)$, that is, $\langle \rho_{**}^{(n+1)}(\sigma_1) , \ldots,
\rho_{**}^{(n+1)}(\sigma_{n-1}) \rangle$ is a dense subgroup of
$SU(f_n)$ modulo phase. Thus, the density of $\rho_{*p}^{(n)}$ in
$SU(f_n)$ follows immediately from the proof of proposition
\ref{**density}. By symmetry, $\rho_{p*}^{(n)}(B_n)$ is isomorphic to
$\rho_{*p}^{(n)}(B_n)$, thus this is a dense subgroup of $SU(f_n)$
modulo phase as well.
\end{proof}

\section{Fibonacci and Path Model Representations}
\label{rep_relations}

For any braid group $B_n$, and any root of unity $e^{i 2 \pi/k}$, the
path model representation is a homomorphism from $B_n$ to to a set of
linear operators. The vector space that these linear operators act is
the space of formal linear combinations of $n$ step paths on the rungs
of a ladder of height $k-1$ that start on the bottom rung. As an
example, all the paths for $n=4$, $k=5$ are shown in below.
\[
\includegraphics[width=5in]{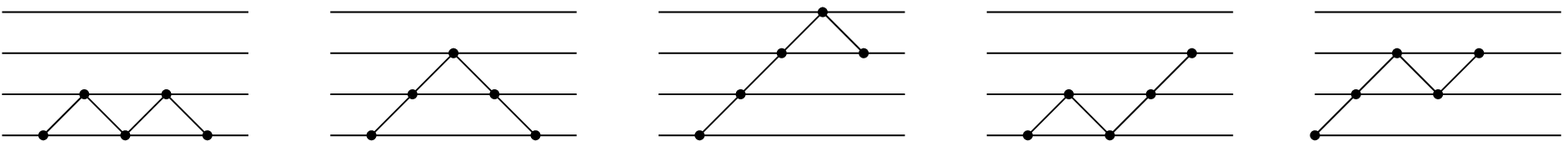}
\]
Thus, the $n=4, k=5$ path model representation is on a five
dimensional vector space. For $k=5$ we can make a bijective
correspondence between the allowed paths of $n$ steps and the set of
strings of $p$ and $*$ symbols of length $n+1$ which start with $*$
and have no to $*$ symbols in a row. To do this, simply label the
rungs from top to bottom as $*$, $p$, $p$, $*$, and directly read off
the symbol string for each path as shown below.
\[
\includegraphics[width=5in]{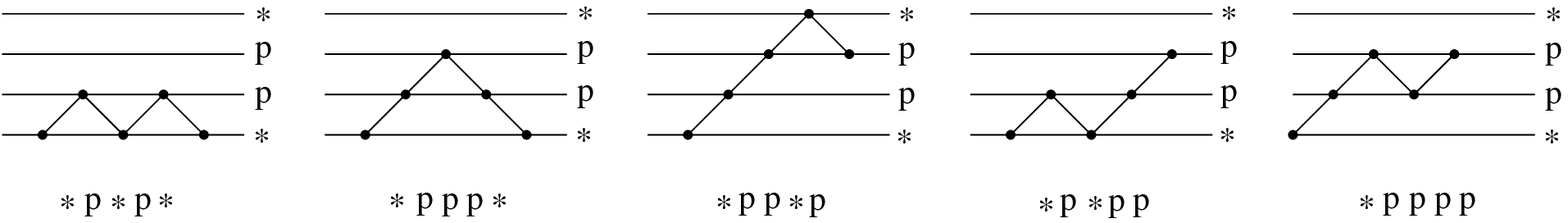}
\]
In \cite{Aharonov1}, it is explained in detail for any given braid how
to calculate the corresponding linear transformation on paths. Using
the correspondence described above, one finds that the path model
representation for $k=5$ is equal to the $-1$ times Fibonacci
representation described in this paper. This sign difference is a
minor detail which arises only because \cite{Aharonov1} chooses a
different fourth root of $t$ for $A$ than we do. This sign difference
is automatically compensated for in the factor of $(-A)^{3
  \cdot \mathrm{writhe}}$, so that both methods yield the correct Jones
polynomial.

\section{Unitaries on Logarithmically Many Strands}
\label{app_efficiency}

In this section we'll prove the following proposition.
\begin{prop1}
Given any pair of elements $U_{*p} \in SU(f_{k+1})$ and $U_{**} \in
SU(f_k)$, and any real parameter $\epsilon$, one can in polynomial
time find a braid $b \in B_k$ with $\mathrm{poly}(n,\log(1/\epsilon))$
crossings whose Fibonacci representation satisfies
$\| \rho_{*p}(b) - U_{*p} \| \leq \epsilon$ and 
$\| \rho_{**}(b) - U_{**} \| \leq \epsilon$, provided that $k = O(\log
n)$. By symmetry, the same holds when considering $\rho_{p*}$ rather
than $\rho_{*p}$.
\end{prop1}
To do so, we'll use a recursive construction. Suppose that we already
know how to achieve proposition \ref{efficiency} on $n$ symbols, and
we wish to extend this to $n+1$ symbols. Using the construction for
$n$ symbols we can efficiently obtain a unitary of the form
\begin{equation}
\label{generic}
M_{n-1}(A,B) = \begin{array}{l}
  \includegraphics[width=2in]{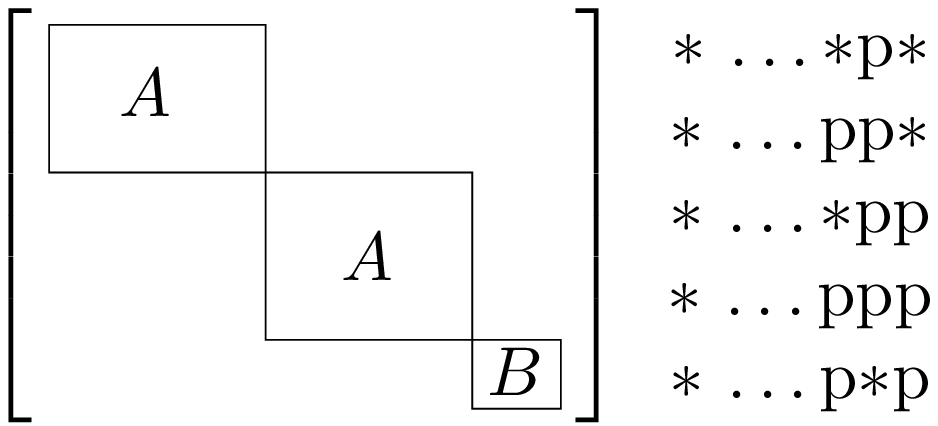} \end{array}
\end{equation}
where $A$ and $B$ are arbitrary unitaries of the appropriate
dimension. The elementary crossing $\sigma_n$ on the last two strands
has the representation\footnote{Here and throughout this section when
  we write a scalar $\alpha$ in a block of the matrix we really mean
  $\alpha I$ where $I$ is the identity operator of appropriate dimension.}
\[
M_n = \left[ \begin{array}{ccccc}
b &   &   &   &   \\
  & a &   &   &   \\
  &   & a &   &   \\
  &   &   & e & d \\
  &   &   & d & c
\end{array} \right]
\begin{array}{c}
\textrm{$*$ \ldots $*$p$*$} \\
\textrm{$*$ \ldots pp$*$} \\
\textrm{$*$ \ldots $*$pp} \\
\textrm{$*$ \ldots ppp} \\
\textrm{$*$ \ldots p$*$p}
\end{array}.
\]
As a special case of equation \ref{generic}, we can obtain
\[
M_{\mathrm{diag}}(\alpha) = \left[ \begin{array}{ccccc}
e^{i \alpha/2} &   &   &   &   \\
  & e^{i \alpha/2} &   &   &   \\
  &   & e^{i \alpha/2} &   &   \\
  &   &   & e^{i \alpha/2} &   \\
  &   &   &   & e^{-i \alpha/2}
\end{array} \right]
\begin{array}{c}
\textrm{$*$ \ldots $*$p$*$} \\
\textrm{$*$ \ldots pp$*$} \\
\textrm{$*$ \ldots $*$pp} \\
\textrm{$*$ \ldots ppp} \\
\textrm{$*$ \ldots p$*$p}
\end{array}.
\]
Where $0 \leq \alpha < 2 \pi$. We'll now show the following.
\begin{lemma}
\label{twodim}
For any element 
\[
\left[ \begin{array}{cc}
V_{11} & V_{12} \\
V_{21} & V_{22}
\end{array} \right] \in SU(2),
\]
one can find some product $P$ of $O(1)$ $M_{\mathrm{diag}}$ matrices and
$M_n$ matrices such that for some phases $\phi_1$ and $\phi_2$,   
\[
P = \left[ \begin{array}{ccccc}
\phi_1 &        &        &        &   \\
       & \phi_2 &        &        &   \\
       &        & \phi_2 &        &   \\
       &        &        & V_{11} & V_{12} \\
       &        &        & V_{21} & V_{22}
\end{array} \right]
\begin{array}{c}
\textrm{$*$ \ldots $*$p$*$} \\
\textrm{$*$ \ldots pp$*$} \\
\textrm{$*$ \ldots $*$pp} \\
\textrm{$*$ \ldots ppp} \\
\textrm{$*$ \ldots p$*$p}
\end{array}.
\]
\end{lemma}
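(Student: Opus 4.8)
The plan is to reduce the whole statement to the action on the two--dimensional block spanned by the strings $*\cdots ppp$ and $*\cdots p*p$, and then to realize an arbitrary $V\in SU(2)$ there by a short generalized Euler--angle word. The first observation is that $M_{\mathrm{diag}}(\alpha)$ and $M_n$ both already have the block form displayed in the lemma: each is block--diagonal with scalars on the three subspaces $*\cdots *p*$, $*\cdots pp*$, $*\cdots *pp$ (the second and third scalars equal, namely $e^{i\alpha/2},e^{i\alpha/2}$ resp.\ $a,a$) and a $2\times 2$ block on $\mathrm{span}(*\cdots ppp,\,*\cdots p*p)$, namely $D(\alpha):=\mathrm{diag}(e^{i\alpha/2},e^{-i\alpha/2})$ resp.\ $N_0:=\left(\begin{smallmatrix} e & d\\ d & c\end{smallmatrix}\right)$. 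Since $a$, $b$ and $e^{\pm i\alpha/2}$ are unit complex numbers, any product of $M_{\mathrm{diag}}$'s and $M_n$'s again has this form with $\phi_1,\phi_2$ automatically phases; so it suffices to produce $V$ on the $2\times 2$ block. Products of $M_{\mathrm{diag}}$'s give the full diagonal torus of $SU(2)$ there, and a routine computation (using $E_i^2=DE_i$ and $\rho_A(\sigma_i)=AE_i+A^{-1}\id$ as in section \ref{proof}) gives $\det N_0 = D+A^{-2}=e^{-i\pi/5}$, so $N_0 = e^{-i\pi/5}\widehat V$ for some $\widehat V\in SU(2)$.

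Next I would record what $N_0$ is modulo phase. The computations of section \ref{density} show that $\widehat V$ projects, under $SU(2)\to SO(3)$, to a rotation $\bar N$ by $7\pi/5$ about an axis $\hat n$ with $|\hat n\cdot\hat z|=\sqrt5-2$; in particular $\hat n\neq\pm\hat z$, and $\bar N$ has order $10$. Two consequences I will use at the end: $N_0^{10}$ acts as $-\id$ on the block, and $M_{\mathrm{diag}}(\pi)^2$ acts as $-\id$ on the block with determinant $1$ there; so both give a controllable $-\id$.

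The core step is to show that the words $D(a_0)\,N_0\,D(a_1)\,N_0\,D(a_2)$, whose images in $SO(3)$ are $R_z(a_0)\,\bar N\,R_z(a_1)\,\bar N\,R_z(a_2)$ with $R_z$ an arbitrary rotation about $\hat z$, already exhaust $SO(3)$. First I would check that $\langle SO(2)_z,\bar N\rangle$ acts transitively on $S^2$: one verifies from the explicit axis and angle that $\langle\bar N e_z,e_z\rangle\leq 0$, i.e.\ $\bar N e_z$ lies at colatitude $\gamma_1\geq\pi/2$; then the $\bar N$--image of the colatitude--$\gamma_1$ circle around $\hat z$ is a circle through the north pole reaching colatitude $\pi$, so $\{R_z(a_0)\bar N R_z(a_1)\bar N R_z(a_2)\}\,e_z$ already sweeps every colatitude and, via the outer $R_z$, every longitude --- hence all of $S^2$. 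Given $\bar g\in SO(3)$, pick such a word $\bar w$ with $\bar w e_z=\bar g e_z$; then $\bar w^{-1}\bar g$ fixes $e_z$, hence equals $R_z(a_3)$, and absorbing $R_z(a_3)$ into the last factor shows $\bar g\in\{R_z(a_0)\bar N R_z(a_1)\bar N R_z(a_2)\}$. So two copies of $\bar N$ suffice at the $SO(3)$ level.

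Finally I would fix the phase so the block is \emph{exactly} $V$. Using $\bar N^{10}=\id$, write the chosen word as $\bar w\cdot\bar N^{8}$ where $\bar w$ realizes $\bar g\,\bar N^{-8}$ (possible since two--$\bar N$ words give all of $SO(3)$); this is a word with exactly ten $\bar N$'s. Lifting it to $SU(2)$ by $\bar N\mapsto N_0$, $R_z\mapsto D$, the block of the resulting product of $M_n$'s and $M_{\mathrm{diag}}$'s equals $(e^{-i\pi/5})^{10}$ times a lift of $\bar g$, i.e.\ $\pm V$ with determinant $1$; multiplying by one more $M_{\mathrm{diag}}(\pi)^2$ if the sign is wrong gives exactly $V$ and leaves the count of $M_n$'s at ten. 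The total number of factors is a fixed constant, and the induced scalars on the other three subspaces are whatever they come out to be --- unit moduli, i.e.\ valid $\phi_1,\phi_2$. The step I expect to demand the most care is precisely the passage from ``the generated group is $SU(2)$'' to exact reachability by a \emph{bounded} word together with the determinant bookkeeping; what makes it routine is that $N_0$ has finite order with $\det N_0$ a tenth root of unity, so padding by $N_0^{8}$ is free, and that the single inequality $\langle\bar N e_z,e_z\rangle\leq 0$ lets two copies of $\bar N$ already cover $SO(3)$.
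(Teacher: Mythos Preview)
Your overall strategy matches the paper's: both reduce to the $2\times 2$ block on $\mathrm{span}(*\cdots ppp,\,*\cdots p*p)$, pass to $SO(3)$ via the standard $SU(2)\to SO(3)$ map, and exploit that $M_{\mathrm{diag}}(\alpha)$ already supplies the full $z$-axis torus. The paper then builds Euler angles by conjugation: $\phi(B_n^5)$ is a $\pi$-rotation, and conjugating $R_z$ by it (twice) produces a one-parameter family of rotation axes from which one can pick two at exactly $\pi/2$, giving an honest Euler decomposition with $O(1)$ factors.

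Your alternative route---showing directly that words $R_z\,\bar N\,R_z\,\bar N\,R_z$ exhaust $SO(3)$ via transitivity on $S^2$---has a concrete geometric error. You claim that since $\gamma_1>\pi/2$, the circle $\bar N C_1$ (spherical radius $\gamma_1$ about the point $\bar N e_z$ at colatitude $\gamma_1$) ``reaches colatitude~$\pi$''. It does not: a circle of radius $r$ about a point at colatitude $c$ attains extremal colatitudes $|c-r|$ and $\arccos\cos(c+r)$; with $c=r=\gamma_1=\arccos(2-\sqrt5)\approx 103.6^\circ$ these are $0$ and $2\pi-2\gamma_1\approx 153^\circ$, not $\pi$. (Only the degenerate case $\gamma_1=\pi/2$ gives a great circle through both poles.) Hence $\{R_z\bar N R_z\bar N\}\,e_z$ is merely the cap $[0,\,2\pi-2\gamma_1]$, missing a neighbourhood of the south pole, so two copies of $\bar N$ do \emph{not} cover $SO(3)$, and the padding $\bar w\cdot\bar N^8$ built on that assumption fails. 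The repair is cheap and preserves $O(1)$: one more $\bar N$ already yields full transitivity (the south pole lies at distance $\pi-\gamma_1<2\pi-2\gamma_1$ from $\bar N e_z$, hence inside the image disk), after which your stabilizer and determinant bookkeeping go through with the count adjusted. As a separate minor slip, $N_0^{10}=+\id$ on the block, not $-\id$: with $\det N_0=e^{-i\pi/5}$ one has $N_0^{10}=(\det N_0)^5\,B^{10}=(-1)(-\id)=\id$.
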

\begin{proof}
Let $B_{\mathrm{diag}}(\alpha)$ and $B_n$ be the following $2 \times
2$ matrices 
\[
B_{\mathrm{diag}}(\alpha) = \left[ \begin{array}{cc}
e^{i \alpha/2} & 0 \\
0 & e^{- i \alpha/2}
\end{array} \right]
\quad \mathrm{and} \quad
B_n = \left[ \begin{array}{cc}
e & d \\
d & c
\end{array}
\right]
\]
We wish to show that we can approximate an arbitrary element of $SU(2)$ as
a product of $O(1)$ $B_{\mathrm{diag}}$ and $B_n$ matrices. To do this, we will
use the well known homomorphism $\phi: SU(2) \to SO(3)$ whose kernel
is $\{ \pm \id \}$ (see section \ref{density}). To obtain an arbitrary
element $V$ of $SU(2)$ modulo phase it suffices to show that
the we can use $\phi(B_n)$ and $\phi(B_{\mathrm{diag}}(\alpha))$ to
obtain an arbitrary $SO(3)$ rotation. In section \ref{density} we
showed that
\[
\left[ \begin{array}{cc}
a & 0 \\
0 & b
\end{array} \right]
\quad \mathrm{and} \quad
\left[ \begin{array}{cc}
e & d \\
d & c
\end{array}
\right]
\]
correspond to two rotations of $7 \pi/5$ about axes which are
separated by an angle of $\theta_{12} \simeq 1.8091137886\ldots$
By the definition of $\phi$, $\phi(B_{\mathrm{diag}}(\alpha))$ is a rotation
by angle $\alpha$ about the same axis that 
$\phi \left( \left[ \begin{array}{cc} a & 0 \\ 0 & b \end{array}
  \right] \right)$ rotates about. $\phi(B_n^5)$ is a $\pi$
rotation. Hence, $R(\alpha) \equiv \phi(B_n^5 
B_{\mathrm{diag}}(\alpha) B_n^5)$ is a rotation by angle $\alpha$
about an axis which is separated by angle\footnote{We subtract $\pi$
  because the angle between axes of rotation is only defined modulo
  $\pi$. Our convention is that these angles are in $[0,\pi)$.} $2
\theta_{12}-\pi$ from the 
axis that $\phi(B_{\mathrm{diag}}(\alpha))$ rotates about. $Q \equiv
R(\pi) \phi(B_{\mathrm{diag}}(\alpha)) R(\pi)$ is a rotation by angle
$\alpha$ about some axis whose angle of separation from the axis that
$\phi(B_{\mathrm{diag}}(\alpha))$ rotates about is $2 (2
\theta_{12}-\pi) \simeq 0.9532$. Similarly, by geometric
visualization, $\phi(B_{\mathrm{diag}}(\alpha')) Q
\phi(B_{\mathrm{diag}}(-\alpha'))$ is a rotation by $\alpha$
about an axis whose angle of separation from the axis that $Q$ rotates
about is anywhere from $0$ to $2 \times 0.9532$ depending on the value
of $\alpha'$. Since $2 \times 0.9532 > \pi/2$, there exists 
some choice of $\alpha'$ such that this angle of separation is
$\pi/2$. Thus, using the rotations we have constructed we can perform
Euler rotations to obtain an arbitrary rotation. 
\end{proof}

As a special case of lemma \ref{twodim}, we can obtain, up to global
phase,
\[
M_{\mathrm{swap}} = \left[ \begin{array}{ccccc}
\phi_1 &        &        &   &   \\
       & \phi_2 &        &   &   \\
       &        & \phi_2 &   &   \\
       &        &        & 0 & 1 \\
       &        &        & 1 & 0
\end{array} \right]
\begin{array}{c}
\textrm{$*$ \ldots $*$p$*$} \\
\textrm{$*$ \ldots pp$*$} \\
\textrm{$*$ \ldots $*$pp} \\
\textrm{$*$ \ldots ppp} \\
\textrm{$*$ \ldots p$*$p}
\end{array}.
\]
Similarly, we can produce $M_{\mathrm{swap}}^{-1}$. Using
$M_{\mathrm{swap}}$, $M_{\mathrm{swap}}^{-1}$, and equation
\ref{generic} we can produce the matrix
\[
M_C = \begin{array}{l} \includegraphics[width=2in]{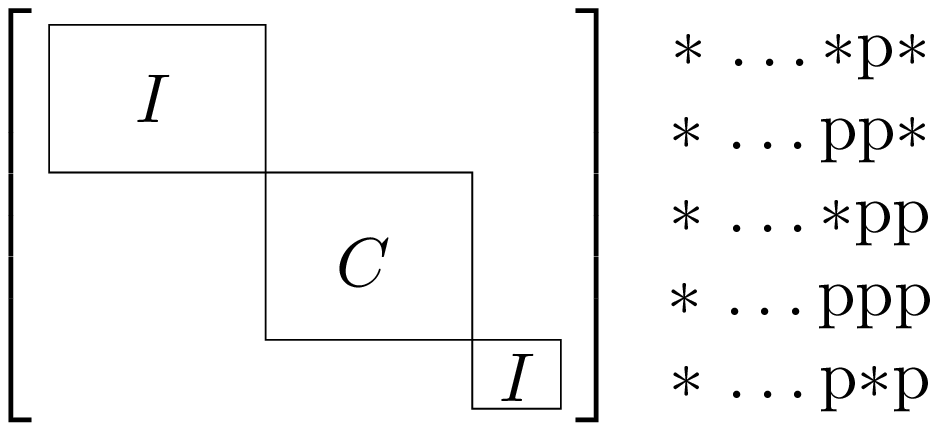} \end{array}
\]
for any unitary $C$. We do it as follows. Since $C$ is a normal
operator, it can be unitarily diagonalized. That is, there exists some
unitary $U$ such that $U C U^{-1} = D$ for some diagonal unitary
$D$. Next, note that in equation \ref{generic} the dimension of $B$ is
more than half that of $A$. Let $d = \mathrm{dim}(A) -
\mathrm{dim}(B)$, and let $I_d$ be the identity operator of dimension
$d$. We can easily construct two diagonal unitaries $D_1$ and $D_2$ of
dimension $\mathrm{dim}(B)$ such that  $(D_1 \oplus I_d)(I_d \oplus
D_2) = D$. As special cases of equation \ref{generic} we can obtain
\[
M_{D_1} = \left[ \begin{array}{ccccc}
1 &   &   &   &   \\
  & 1 &   &   &   \\
  &   & 1 &   &   \\
  &   &   & 1 &  \\
  &   &   &   & D_1
\end{array} \right]
\begin{array}{c}
\textrm{$*$ \ldots $*$p$*$} \\
\textrm{$*$ \ldots pp$*$} \\
\textrm{$*$ \ldots $*$pp} \\
\textrm{$*$ \ldots ppp} \\
\textrm{$*$ \ldots p$*$p}
\end{array}
\]
and
\[
M_{D_2} = \left[ \begin{array}{ccccc}
1 &   &   &   &   \\
  & 1 &   &   &   \\
  &   & 1 &   &   \\
  &   &   & 1 &  \\
  &   &   &   & D_2
\end{array} \right]
\begin{array}{c}
\textrm{$*$ \ldots $*$p$*$} \\
\textrm{$*$ \ldots pp$*$} \\
\textrm{$*$ \ldots $*$pp} \\
\textrm{$*$ \ldots ppp} \\
\textrm{$*$ \ldots p$*$p}
\end{array}
\]
and
\[
M_P = \begin{array}{l} \includegraphics[width=2in]{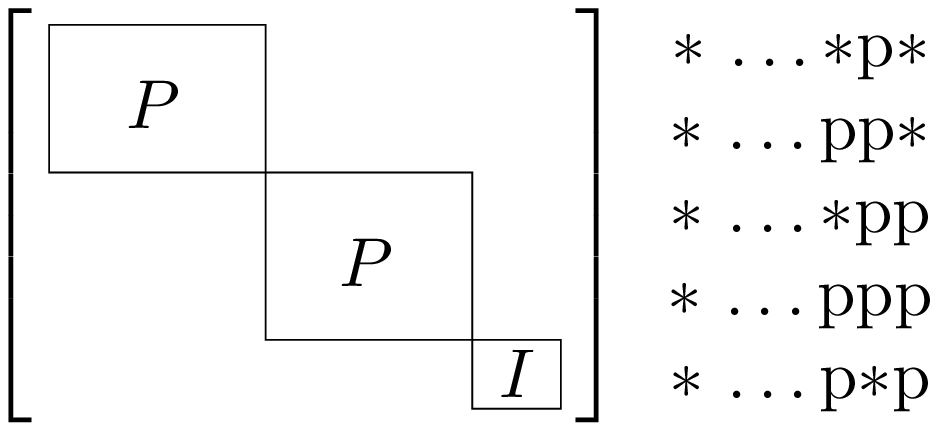} \end{array}
\]
where $P$ is a permutation matrix that shifts the lowest $\dim(B)$
basis states from the bottom of the block to the top of the block.
Thus we obtain
\[
M_2 \equiv M_{\mathrm{swap}} M_{D_2} M_{\mathrm{swap}}^{-1} =
\begin{array}{l}
\includegraphics[width=2in]{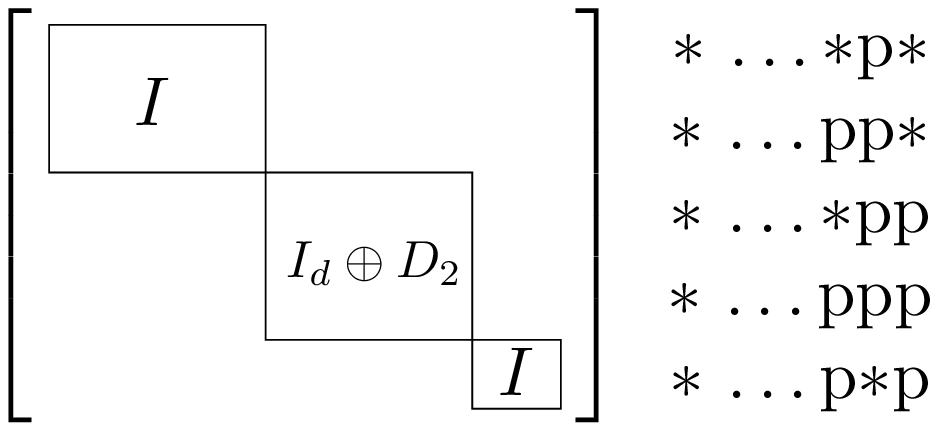}
\end{array}
\]
and
\[
M_1 \equiv M_P M_{\mathrm{swap}} M_{D_1}
M_{\mathrm{swap}}^{-1} M_P^{-1} =
\begin{array}{l}
\includegraphics[width=2in]{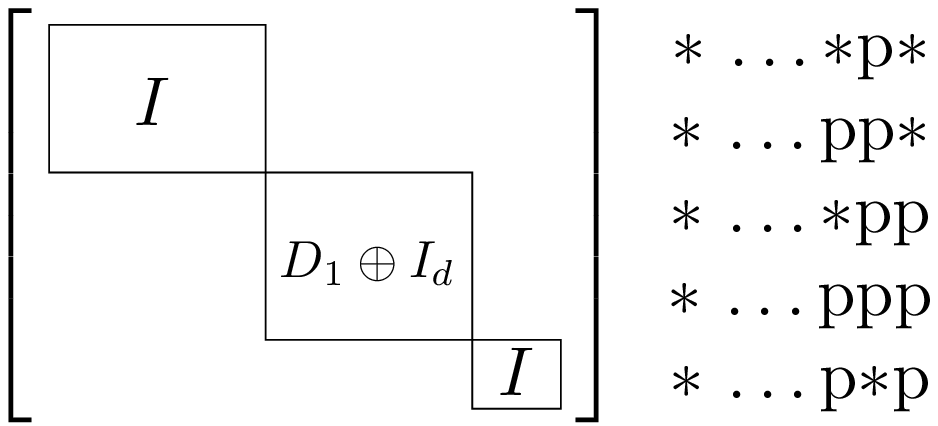}
\end{array}.
\]
Thus
\[
M_1 M_2 = \begin{array}{l}
 \includegraphics[width=2in]{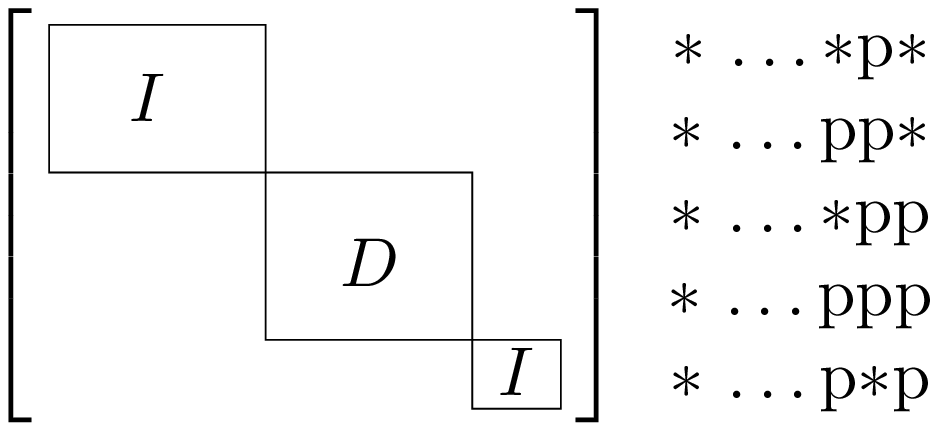}
 \end{array}.
\]
As a special case of equation \ref{generic} we can obtain
\[
M_U = \begin{array}{l}
 \includegraphics[width=2in]{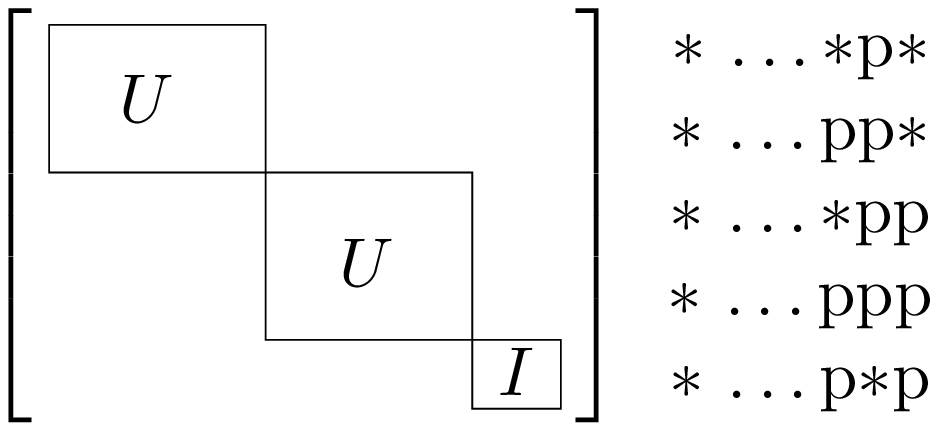}
 \end{array}.
\]
Thus we obtain $M_C$ by the construction $M_C = M_U M_1 M_2
M_U^{-1}$. By multiplying together $M_C$ and $M_{n-1}$ we can control
the three blocks independently. For arbitrary unitaries $A,B,C$ of
appropriate dimension we can obtain
\begin{equation}
\label{indep}
M_{ACB} = \begin{array}{l}
 \includegraphics[width=2in]{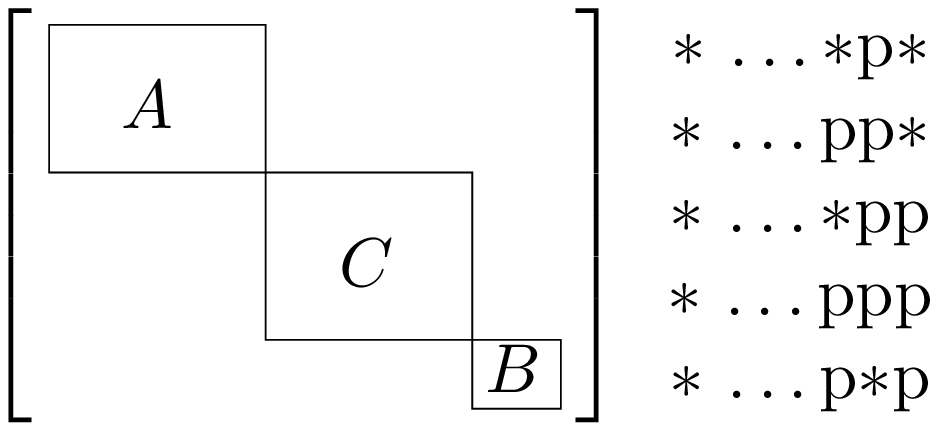}
 \end{array}.
\end{equation}
As a special case of equation \ref{indep} we can obtain
\[
M_{\mathrm{unphase}} =
\left[ \begin{array}{ccccc}
\phi_1^* &          &          &   &   \\
         & \phi_2^* &          &   &   \\
         &          & \phi_2^* &   &   \\
         &          &          & 1 &   \\
         &          &          &   & 1
\end{array} \right]
\begin{array}{c}
\textrm{$*$ \ldots $*$p$*$} \\
\textrm{$*$ \ldots pp$*$} \\
\textrm{$*$ \ldots $*$pp} \\
\textrm{$*$ \ldots ppp} \\
\textrm{$*$ \ldots p$*$p}
\end{array}.
\]
Thus, we obtain a clean swap
\begin{equation}
\label{clean}
M_{\mathrm{clean}} = M_{\mathrm{unphase}} M_{\mathrm{swap}} =
\left[ \begin{array}{ccccc}
1 &   &   &   &   \\
  & 1 &   &   &   \\
  &   & 1 &   &   \\
  &   &   & 0 & 1 \\
  &   &   & 1 & 0
\end{array} \right]
\begin{array}{c}
\textrm{$*$ \ldots $*$p$*$} \\
\textrm{$*$ \ldots pp$*$} \\
\textrm{$*$ \ldots $*$pp} \\
\textrm{$*$ \ldots ppp} \\
\textrm{$*$ \ldots p$*$p}
\end{array}.
\end{equation}
We'll now use $M_{\mathrm{clean}}$ and $M_{ACB}$ as our building
blocks to create the maximally general unitary
\begin{equation}
\label{maxgen}
M_{\mathrm{gen}}(V,W) = \begin{array}{l}
 \includegraphics[width=2in]{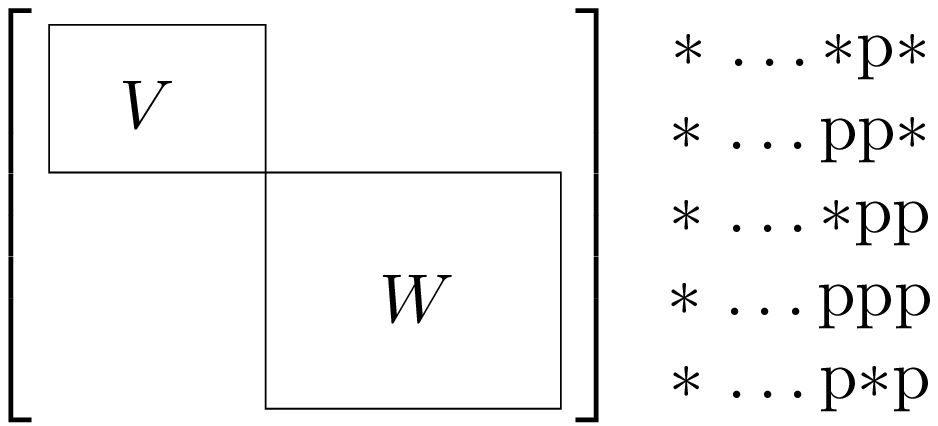}
 \end{array}.
\end{equation}

For $n+1$ symbols, the $* \ldots *pp$
subspace has dimension $f_{n-3}$, and the $* \ldots  p*p$ and $*
\ldots ppp$ subspaces each have dimension $f_{n-2}$. Thus, in equation
\ref{indep}, the block $C$ has dimension $f_{n-2} + f_{n-3} =
f_{n-1}$, and the block $B$ has dimension $f_{n-2}$. To construct
$M_{\mathrm{gen}}(V,W)$ we will choose a subset of the $f_n$ basis
states acted upon by the $B$ and $C$ blocks and permute them
into the $C$ block. Then using $M_{ACB}$, we'll perform an arbitrary unitary
on these basis states. At each such step we can act upon a subspace
whose dimension is a constant fraction of the dimension of the entire
$f_n$ dimensional space on which we wish to apply an arbitrary
unitary. Furthermore, this constant fraction is more than
half. Specifically, $f_n/f_{n-1} \simeq 1/\phi \simeq 0.62$ for large
$n$. We'll show that an arbitrary unitary can be built up as a product
of a constant number of unitaries each of which act only on half the
basis states. Thus our ability to act on approximately $62\%$ of the
basis states at each step is more than sufficient. 

Before proving this, we'll show how to permute an arbitrary set of
basis states into the $C$ block of $M_{ACB}$. Just use
$M_{\mathrm{clean}}$ to  swap the $B$ block into the $* \ldots ppp$
subspace of the $C$ block. Then, as a special case of equation
\ref{indep}, choose $A$ and $B$ to be the identity, and $C$ to be a
permutation which swaps some states between the $* \ldots *pp$ and $*
\ldots ppp$ subspaces of the $C$ block. The states which we swap up
from the $* \ldots ppp$ subspace are the ones from $B$ which we wish
to move into $C$. The ones which we swap down from the $* \ldots *pp$
subspace are the ones from $C$ which we wish to move into $B$. This
process allows us to swap a maximum of $f_{n-3}$ states between the
$B$ block and the $C$ block. Since $f_{n-3}$ is more than half the
dimension of the $B$ block, it follows that any desired permutation of
states between the $B$ and $C$ blocks can be achieved using two
repetitions of this process.

We'll now show the following.
\begin{lemma}
Let $m$ by divisible by 4. Any $m \times m$ unitary can be obtained as
a product of seven unitaries, each of which act only on the space
spanned by $m/2$ of the basis states, and leave the rest of the basis
states undisturbed.
\end{lemma}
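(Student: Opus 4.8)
The plan is to derive this from the cosine--sine (CS) decomposition of a unitary matrix. Write $m = 2k$; the hypothesis $4 \mid m$ means $k$ is even, and this is the only place the hypothesis will be used. Fix once and for all the splitting of the $m$ basis states into a ``first half'' $F = \{1,\dots,k\}$ and a ``second half'' $L = \{k+1,\dots,m\}$, and regard the given $m\times m$ unitary $U$ as a $2\times 2$ array of $k\times k$ blocks with respect to this splitting.

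The first step is to invoke the CS decomposition to write
\[
U \;=\; \begin{bmatrix} A_1 & 0 \\ 0 & A_2 \end{bmatrix}
\begin{bmatrix} C & -S \\ S & C \end{bmatrix}
\begin{bmatrix} B_1 & 0 \\ 0 & B_2 \end{bmatrix},
\]
where $A_1,A_2,B_1,B_2 \in U(k)$ and $C=\mathrm{diag}(c_1,\dots,c_k)$, $S=\mathrm{diag}(s_1,\dots,s_k)$ are real nonnegative diagonal matrices with $c_j^2+s_j^2=1$. Each of the two block-diagonal factors splits into two half-supported unitaries: $\mathrm{diag}(A_1,A_2) = \mathrm{diag}(A_1,I_k)\cdot\mathrm{diag}(I_k,A_2)$, where the first factor acts only on the $k = m/2$ basis states of $F$ and the second only on the $m/2$ states of $L$, and likewise $\mathrm{diag}(B_1,B_2)=\mathrm{diag}(B_1,I_k)\cdot\mathrm{diag}(I_k,B_2)$. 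This uses four of the seven permitted factors.

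The middle factor is the direct sum of the $2\times 2$ rotations with cosine $c_j$ and sine $s_j$ acting on the disjoint index pairs $\{j,k+j\}$ for $j=1,\dots,k$. Because $k$ is even, I split these $k$ rotations into two groups of $k/2$: the rotations with $j\le k/2$ act only on the $m/2$ basis states $\{1,\dots,k/2\}\cup\{k+1,\dots,k+k/2\}$, and those with $j>k/2$ act only on the complementary $m/2$ states. Hence the middle factor is itself a product of two half-supported unitaries, which gives six factors in total; one inserts a trivial seventh factor (or regroups) to match the stated count. Arranging the factors in the order dictated by the displayed equation finishes the argument, and verifying that each listed factor really has its support inside a coordinate set of size exactly $m/2$ is the routine bookkeeping that makes the $4\mid m$ hypothesis do its job.

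The only substantive ingredient is the CS decomposition, so the main obstacle is whether it may simply be quoted. If a self-contained proof is wanted instead, the work is to establish the factorization $U = (\text{block diagonal})\cdot(\text{block rotation})\cdot(\text{block diagonal})$ directly: diagonalize the positive semidefinite matrix $U_{21}U_{21}^{\dagger}$ to choose $A_2$, diagonalize $U_{21}^{\dagger}U_{21}$ to choose $B_1$, and check by the unitarity relations among the blocks that the conjugated matrix has precisely the claimed paired-rotation form with nonnegative diagonal $C,S$. I expect that verification — keeping track of phases and confirming $C,S\ge 0$ — to be the fiddliest part, but it is standard linear algebra and requires nothing beyond what is already available.
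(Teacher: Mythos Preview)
Your proof is correct and takes a genuinely different route from the paper. The paper argues by a Gaussian-elimination scheme: it views $U$ as a $4\times 4$ array of $(m/4)\times(m/4)$ blocks and successively left-multiplies by half-supported unitaries (each built from two-level row operations) to clear blocks and drive $U$ to the identity, counting seven such sweeps. Your approach instead invokes the cosine--sine decomposition to factor $U$ outright into two block-diagonal unitaries and one paired-rotation unitary, then splits each of these into half-supported pieces.

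What each buys: your argument is considerably shorter, and in fact yields six factors rather than seven (your seventh is an inserted identity). It also makes the role of the hypothesis $4\mid m$ completely transparent --- it is needed only to split the $k$ Givens rotations into two equal groups. The paper's argument, by contrast, is entirely self-contained and elementary: it never appeals to an external decomposition theorem, only to the two-level reduction of a unitary to the identity already established earlier in the thesis. The paper's approach also makes it visibly robust to the case where the ``half'' fraction is replaced by something slightly different, which is what the surrounding construction actually needs (the $C$-block has dimension $f_{n-1}$ out of $f_n$, not exactly half); your CS-based argument generalizes too, but one would have to redo the bookkeeping for unequal block sizes.
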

It will be obvious from the proof that even if the dimension of the
matrix is not divisible by four, and the fraction of the basis states on
which the individual unitaries act is not exactly $1/2$ it will still
be possible to obtain an arbitrary unitary using a constant number of
steps independent of $m$. Therefore, we will not explicitly work out this
straightforward generalization.
\\
\begin{proof}
In \cite{Nielsen_Chuang} it is shown that for any unitary $U$, one can always
find a series of unitaries $L_n, \ldots, L_1$ which each act on only
two basis states such that $L_n \ldots L_1 U$ is the identity. Thus
$L_n \ldots L_1 = U^{-1}$. It follows that any unitary can be
obtained as a product of such two level unitaries. The individual
matrices $L_1, \ldots, L_n$ each perform a (unitary) row
operation on $U$. The sequence $L_n \ldots L_1$ reduces $U$ to
the identity by a method very similar to Gaussian elimination. We will
use a very similar construction to prove the present lemma. The
essential difference is that we must perform the two level unitaries
in groups. That is, we choose some set of $m/2$ basis states, perform
a series of two level unitaries on them, then choose another set of
$m/2$ basis states, perform a series of two level unitaries on them,
and so on. After a finite number of such steps (it turns out that
seven will suffice) we will reduce $U$ to the identity.

Our two-level unitaries will all be of the same type. We'll fix our
attention on two entries in $U$ taken from a particular column:
$U_{ik}$ and $U_{jk}$. We wish to perform a unitary row operation,
\emph{i.e.} left multiply by a two level unitary, to set
$U_{jk}=0$. If $U_{ik}$ and $U_{jk}$ are not both zero, then the
two-level unitary which acts on the rows $i$ and $j$ according to
\begin{equation}
\label{row_op}
\frac{1}{\sqrt{|U_{ik}|^2 + |U_{jk}|^2}} 
\left[ \begin{array}{cc}
U_{ik}^* & U_{jk}^* \\
U_{jk} & - U_{ik}
\end{array} \right]
\end{equation}
will achieve this. If $U_{ik}$ and $U_{jk}$ are both zero there is
nothing to be done.

We can now use this two level operation within groups of basis states
to eliminate matrix elements of $U$ one by one. As in Gaussian
elimination, the key is that once you've obtained some zero matrix
elements, your subsequent row operations must be chosen so that they
do not make these nonzero again, undoing your previous work.

As the first step, we'll act on the top $m/2$ rows in order to reduce
the upper-left quadrant of $U$ to upper triangular form. We can do
this as follows. Consider the first and second entries in the first
column. Using the operation \ref{row_op} we can make the second entry
zero. Next consider the first and third entries in the first
column. By operation \ref{row_op} we can similarly make the third
entry zero. Repeating this procedure, we get all of the entries in the
top half of the first column to be zero other than the top
entry. Next, we perform the same procedure on the second column except
leaving out the top row. These row operations will not alter the first
column since the rows being acted upon all have zero in the first
column. We can then repeat this procedure for each column in the left
half of $U$ until the upper-left block is upper triangular. 

We'll now think of U in terms of 16 blocks of size
$(m/4) \times (m/4)$. In the second step we'll eliminate the matrix
elements in the third block of the first column. The second step is
shown schematically as
\[
\begin{array}{l} \includegraphics[width=2.0in]{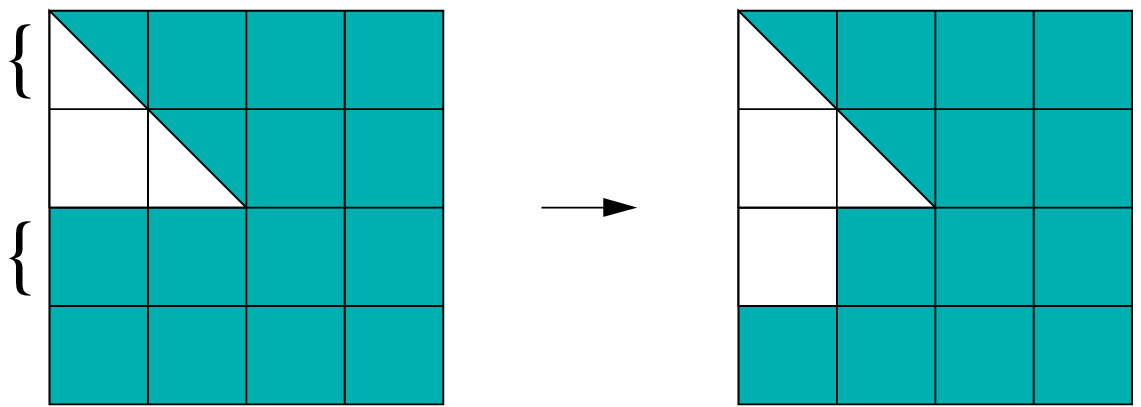} \end{array}
\]
The curly braces indicate the rows to be acted upon, and the unshaded
areas represent zero matrix elements. This step can be performed very
similarly to the first step. The nonzero matrix elements in the bottom
part of the first column can be eliminated one by one by interacting
with the first row. The nonzero matrix elements in the bottom part of
the second column can then be eliminated one by one by interacting
with the second row. The first column will be undisturbed by this
because the rows being acted upon in this step have zero matrix
elements in the first column. Similarly acting on the remaining
columns yields the desired result.

The next step, as shown below, is nearly identical and can be done the
same way.
\begin{equation}
\label{step3_eq}
\begin{array}{l} \includegraphics[width=2.0in]{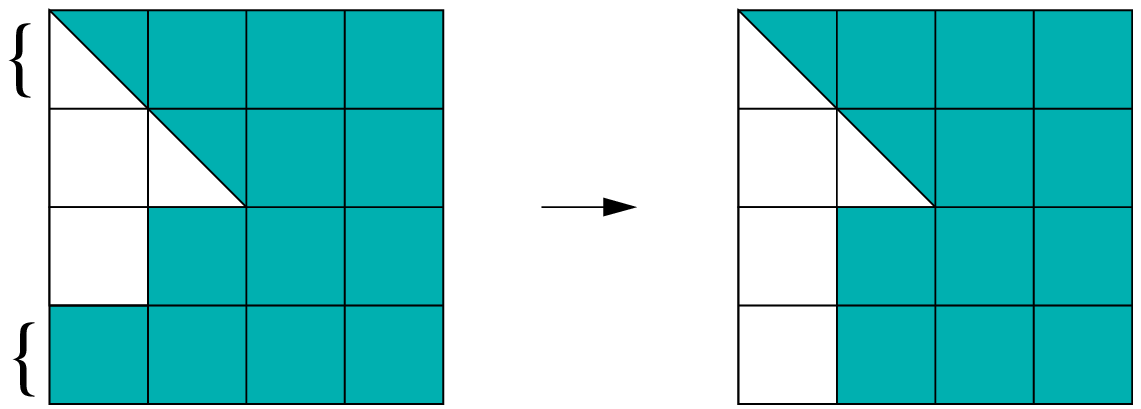} \end{array}
\end{equation}
The matrix on the right hand side of \ref{step3_eq} is unitary. It
follows that it must be of the form
\[
\begin{array}{l} \includegraphics[width=0.7in]{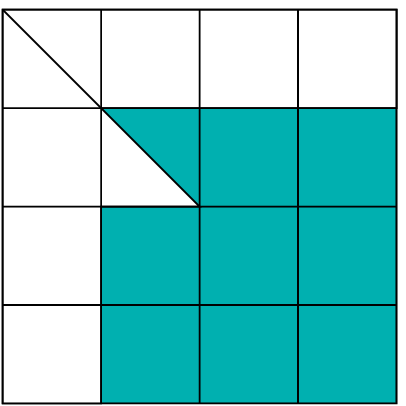} \end{array}
\]
where the upper-leftmost block is a diagonal unitary. We can next
apply the same sorts of steps to the lower $3 \times 3$ blocks, as
illustrated below.
\[
\begin{array}{l} \includegraphics[width=2.9in]{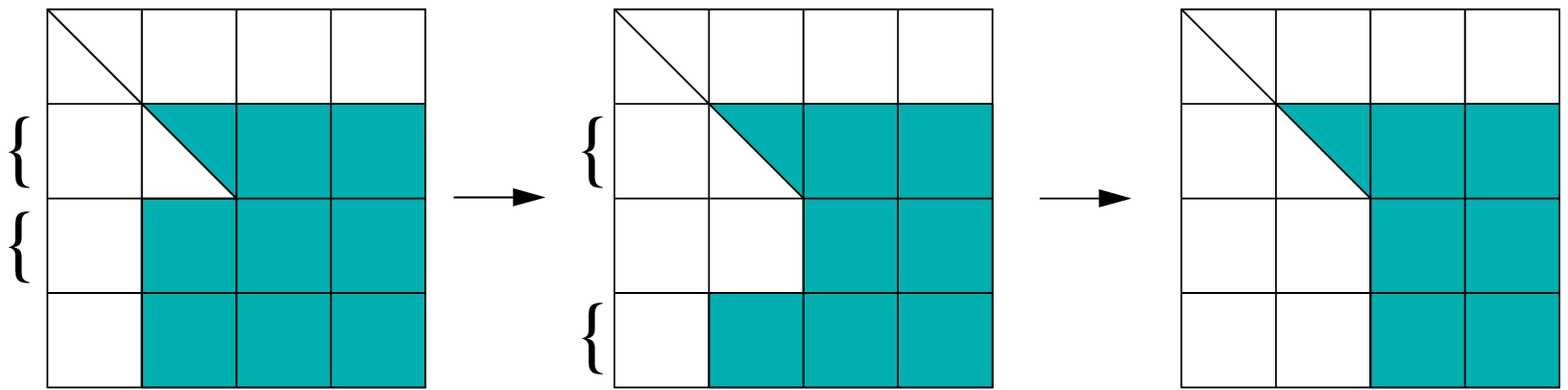} \end{array}
\]
By unitarity the resulting matrix is actually of the form
\[
\begin{array}{l} \includegraphics[width=0.7in]{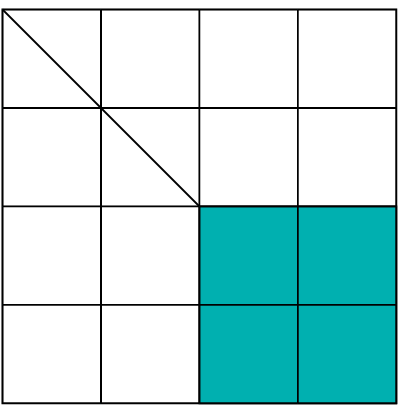} \end{array}
\]
where the lower-right quadrant is an $(m/2) \times (m/2)$ unitary matrix,
and the upper-left quadrant is an $(m/2) \times (m/2)$ diagonal unitary
matrix. We can now apply the inverse of the upper-left quadrant to the
top $m/2$ rows and then apply the inverse of the lower-right quadrant
to the bottom $m/2$ rows. This results in the identity matrix, and we
are done. In total we have used seven steps.
\end{proof}

Examining the preceding construction, we can see the recursive step
uses a constant number of the $M_{n-1}$ operators from the next lower
level of recursion, plus a constant number of $M_n$ operators. Thus,
the number of crossings in the braid grows only exponentially in the
recursion depth. Since each recursion adds one more symbol, we see
that to construct $M_{\mathrm{gen}}(V,W)$ on logarithmically many
symbols requires only polynomially many crossings in the corresponding
braid.

The main remaining task is to work out the base case on which the
recursion rests. Since the base case is for a fixed set of generators
on a fixed number of symbols, we can simply use the Solovay-Kitaev
theorem\cite{Kitaev}.
\begin{theorem}[Solovay-Kitaev]
\label{Solovay-Kitaev}
Suppose matrices $U_1, \ldots, U_r$ generate a dense subgroup in
$SU(d)$. Then, given a desired unitary $U \in SU(d)$, and a precision
parameter $\delta > 0$, there is an algorithm to find a product $V$ of
$U_1, \ldots, U_r$ and their inverses such that $\| V - U \| \leq
\delta$. The length of the product and the runtime of the algorithm
are both polynomial in $\log(1/\delta)$.
\end{theorem}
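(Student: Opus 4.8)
The plan is to give the standard recursive proof underlying the theorem. Since $\langle U_1,\ldots,U_r\rangle$ is dense in $SU(d)$, there is a fixed length $\ell_0$ such that the finitely many words of length at most $\ell_0$ in the generators and their inverses form an $\epsilon_0$-net of $SU(d)$; by enlarging $\ell_0$ (still a constant) we may take $\epsilon_0$ as small as we like, in particular small enough that the recursion below contracts. The construction defines, for each $n\geq 0$, a procedure $\mathrm{SK}(U,n)$ returning a word $U_n$ in the generators and their inverses with $\|U_n-U\|\leq\epsilon_n$, where $\epsilon_{n+1}\leq c\,\epsilon_n^{3/2}$ for a constant $c=c(d)$. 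The base case $\mathrm{SK}(U,0)$ simply searches the constant-size net and returns the nearest word, in $O(1)$ time.

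The recursive step rests on a balanced-commutator lemma: there is a constant $C$ so that every $\Delta\in SU(d)$ with $\|\Delta-I\|\leq\epsilon$ (for $\epsilon$ below a fixed threshold) admits $V,W\in SU(d)$ with $\|V-I\|,\|W-I\|\leq C\sqrt{\epsilon}$ and $\|VWV^\dagger W^\dagger-\Delta\|=O(\epsilon^{3/2})$. I would prove this by passing to the Lie algebra: write $\Delta=\exp(X)$ with $X\in\mathfrak{su}(d)$, $\|X\|=O(\epsilon)$. The bracket $(A,B)\mapsto[A,B]$ on $\mathfrak{su}(d)$ is surjective (a consequence of the simplicity of $\mathfrak{su}(d)$), so its image on the unit ball contains a fixed ball $B_\rho(0)$; combined with the quadratic homogeneity $[tA,tB]=t^2[A,B]$ this lets one write $X=[A,B]$ with $\|A\|,\|B\|=O(\sqrt{\|X\|})=O(\sqrt\epsilon)$. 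Setting $V=\exp(A)$, $W=\exp(B)$, Baker--Campbell--Hausdorff gives $VWV^\dagger W^\dagger=\exp\bigl([A,B]+O(\|A\|\,\|B\|(\|A\|+\|B\|))\bigr)=\exp\bigl(X+O(\epsilon^{3/2})\bigr)$, which is the claim (after absorbing a sign into $A$).

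Now $\mathrm{SK}(U,n+1)$ runs as follows: compute $U_n=\mathrm{SK}(U,n)$; form $\Delta=UU_n^\dagger$, so $\|\Delta-I\|\leq\epsilon_n$; apply the lemma to get $V,W$ within $O(\sqrt{\epsilon_n})$ of $I$; recursively compute $V_n=\mathrm{SK}(V,n)$ and $W_n=\mathrm{SK}(W,n)$, each accurate to $\epsilon_n$; and return $U_{n+1}=V_nW_nV_n^\dagger W_n^\dagger U_n$. The crucial estimate is that perturbing the two factors of a near-identity group commutator $VWV^\dagger W^\dagger$ (with $V,W$ already $O(\sqrt{\epsilon_n})$-close to $I$) by $O(\epsilon_n)$ changes the commutator by only $O(\sqrt{\epsilon_n}\cdot\epsilon_n)=O(\epsilon_n^{3/2})$, because the two first-order terms in the perturbation nearly cancel. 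Hence $\|U_{n+1}-U\|=\|V_nW_nV_n^\dagger W_n^\dagger-\Delta\|=O(\epsilon_n^{3/2})$, so $\epsilon_{n+1}\leq c\,\epsilon_n^{3/2}$. If $\epsilon_0$ satisfies $c^2\epsilon_0<1$ then iterating gives $c^2\epsilon_n\leq(c^2\epsilon_0)^{(3/2)^n}$, so $\epsilon_n\to0$ doubly exponentially and accuracy $\delta$ is reached after $n^*=O(\log\log(1/\delta))$ levels. Each level multiplies the word length by $5$ (five length-$\ell_n$ blocks $V_n,W_n,V_n^\dagger,W_n^\dagger,U_n$), so the output word has length $5^{n^*}\ell_0=5^{O(\log\log(1/\delta))}=\mathrm{polylog}(1/\delta)$, and the running time satisfies the same recursion up to constants, hence is also $\mathrm{polylog}(1/\delta)$; this gives both bounds in the theorem.

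The main obstacle is the balanced-commutator lemma and, within it, securing factors of size $O(\sqrt\epsilon)$ rather than $O(\epsilon)$ — it is exactly this square-root saving that produces an error exponent $3/2>1$ and makes the recursion converge doubly exponentially. This step is where the Lie-theoretic surjectivity of the bracket on $\mathfrak{su}(d)$ and its quadratic scaling are essential, and one must also verify that the base resolution $\epsilon_0$ (equivalently $\ell_0$, still a constant) is small enough that the contraction $\epsilon_{n+1}\leq c\,\epsilon_n^{3/2}$ actually drives $\epsilon_n$ to zero. The remaining length-and-time bookkeeping is routine.
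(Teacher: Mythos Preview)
The paper does not actually prove this theorem; in both places where it appears (the preliminaries in Chapter~\ref{introchap} and Section~\ref{app_efficiency}) it is merely stated and attributed to \cite{Kitaev}, then invoked as a black box. So there is no ``paper's own proof'' to compare against.

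That said, your proposal is the standard and correct proof: the recursive $\mathrm{SK}(U,n)$ procedure, the balanced group-commutator decomposition via the surjectivity and quadratic scaling of the Lie bracket on $\mathfrak{su}(d)$, the resulting recursion $\epsilon_{n+1}\leq c\,\epsilon_n^{3/2}$, and the length/time bookkeeping $\ell_{n+1}=5\ell_n$ are exactly the ingredients of the usual argument (as in Kitaev's original work and the Dawson--Nielsen exposition). The one point worth tightening is your justification that perturbing the factors of $VWV^\dagger W^\dagger$ by $O(\epsilon_n)$ changes the commutator by only $O(\epsilon_n^{3/2})$: the phrase ``the two first-order terms in the perturbation nearly cancel'' is correct in spirit but deserves an explicit computation, since this cancellation is precisely what drives the $3/2$ exponent and hence the entire convergence argument.
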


Because the total complexity of the process is polynomial, it
is only necessary to implement the base case to polynomially small
$\delta$ in order for the final unitary $M_{\mathrm{gen}}(V,W)$ to
have polynomial precision. This follows from simple error propagation.
An analogous statement about the precision of gates needed in
quantum circuits is worked out in \cite{Nielsen_Chuang}. This completes the
proof of proposition \ref{efficiency}.

\section{Zeckendorf Representation}
\label{bijective}

Following \cite{Kauffman}, to construct the Fibonacci representation
of the braid group, we use strings of p and $*$ symbols such that no
two $*$ symbols are adjacent. There exists a bijection $z$ between
such strings and the integers, known as the Zeckendorf
representation. Let $P_n$ be the set of all such strings of length
$n$. To construct the map $z:P_n \to \{0,1,\ldots,f_{n+2} \}$ 
we think of $*$ as one and $p$ as zero. Then, for a given 
string $s = s_n s_{n-1} \ldots s_1$ we associate the integer
\begin{equation}
\label{recap}
z(s) = \sum_{i=1}^n s_i f_{i+1},
\end{equation}
where $f_i$ is the $i\th$ Fibonacci number: $f_1 = 1, f_2=1, f_3 = 2$,
and so on. In this section we'll show the following.
\begin{proposition}
\label{bij}
For any $n$, the map $z: P_n \to \{0, \ldots, f_{n+2} \}$ defined by
  $z(s) = \sum_{i=1}^n s_i f_{i+1}$ is bijective.
\end{proposition}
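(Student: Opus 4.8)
The plan is to prove Proposition \ref{bij} by strong induction on $n$, simultaneously recovering the count $|P_n| = f_{n+2}$ and the fact that the image of $z$ is exactly $\{0,1,\ldots,f_{n+2}-1\}$. (Here the codomain of $z$ should be read as $\{0,\ldots,f_{n+2}-1\}$, which has $f_{n+2}$ elements; the idea is that $z$ is the ``positional notation'' with Fibonacci place values in which the constraint ``no two adjacent $*$'' plays the role that the digit bound $<b$ plays in base $b$.)

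First I would settle the base cases by direct enumeration. For $n=1$, $P_1=\{p,*\}$ with $z(p)=0$ and $z(*)=f_2=1$, so the image is $\{0,1\}=\{0,\ldots,f_3-1\}$. For $n=2$, $P_2=\{pp,p*,*p\}$ (the string $**$ is forbidden) with $z$-values $0$, $f_2=1$, and $f_3=2$, so the image is $\{0,1,2\}=\{0,\ldots,f_4-1\}$. These also confirm $|P_1|=f_3$ and $|P_2|=f_4$.

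For the inductive step, assume the claim for $n-1$ and $n-2$, and partition $P_n$ according to the most significant symbol $s_n$. If $s_n=p$ (i.e.\ $s_n=0$), then $z(s)=\sum_{i=1}^{n-1}s_if_{i+1}=z(s')$ where $s'=s_{n-1}\cdots s_1\in P_{n-1}$ is arbitrary, so by the induction hypothesis these strings map bijectively onto $\{0,\ldots,f_{n+1}-1\}$. If $s_n=*$ (i.e.\ $s_n=1$), then the no-two-adjacent rule forces $s_{n-1}=p$, hence $z(s)=f_{n+1}+\sum_{i=1}^{n-2}s_if_{i+1}=f_{n+1}+z(s'')$ with $s''=s_{n-2}\cdots s_1\in P_{n-2}$ arbitrary; by the induction hypothesis and the identity $f_{n+1}+f_n=f_{n+2}$ these map bijectively onto $\{f_{n+1},\ldots,f_{n+1}+f_n-1\}=\{f_{n+1},\ldots,f_{n+2}-1\}$. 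The two pieces partition $P_n$, their images are disjoint and cover $\{0,\ldots,f_{n+2}-1\}$, and $z$ is injective on each piece, so $z:P_n\to\{0,\ldots,f_{n+2}-1\}$ is a bijection (and $|P_n|=f_{n+1}+f_n=f_{n+2}$).

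There is no genuinely hard step here; the only thing requiring care is the Fibonacci index bookkeeping under the paper's convention $f_1=f_2=1$, and the observation that the constraint forbidding adjacent $*$ symbols must be invoked precisely in the case $s_n=*$ to force $s_{n-1}=p$ — this is exactly what makes the two image ranges meet edge-to-edge without overlap, and thus what makes $z$ injective rather than merely well defined.
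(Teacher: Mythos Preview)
Your proof is correct and follows essentially the same route as the paper: both split on the most significant symbol $s_n$, use the induction hypothesis to identify the $s_n=p$ block with $\{0,\ldots,f_{n+1}-1\}$ and the $s_n=*$ block (where $s_{n-1}=p$ is forced) with $\{f_{n+1},\ldots,f_{n+2}-1\}$, and conclude by the Fibonacci recurrence. The only cosmetic difference is that the paper packages the two cases as auxiliary statements $A_n$ and $B_n$ rather than invoking strong induction on $P_{n-1}$ and $P_{n-2}$ directly.
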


\begin{proof}
We'll inductively show that the following two statements
are true for every $n \geq 2$. 
\\ \\
$\mathbf{A_n:}\quad$
\emph{$z$ maps strings of length $n$ starting with p bijectively to
  $\{0,\ldots, f_{n+1}-1 \}$.}
\\ \\
$\mathbf{B_n:}\quad$
\emph{$z$ maps strings of length $n$ starting with $*$ bijectively to
  $\{f_{n+1}, \ldots, f_{n+2}-1 \}$.}
\\ \\
Together, $A_n$ and $B_n$ imply that $z$ maps $P_n$ bijectively to
$\{0, \ldots, f_{n+2}-1 \}$. As a base case, we can look at $n=2$. 
\begin{eqnarray*}
pp & \leftrightarrow & 0 \\
p* & \leftrightarrow & 1 \\
*p & \leftrightarrow & 2
\end{eqnarray*}
Thus $A_2$ and $B_2$ are true. Now for the induction. Let $s_{n-1} \in
P_{n-1}$. By equation 
\ref{recap},
\[
z(p s_{n-1}) = z(s_{n-1}).
\]
Since $s_{n-1}$ follows a p symbol, it can be any element of
$P_{n-1}$. By induction, $z$ is bijective on $P_{n-1}$, thus $A_n$ is
true. Similarly, by equation 
\ref{recap}
\[
z(* s_{n-1}) = f_{n+1} + z(s_{n-1}).
\]
Since $s_{n-1}$ here follows a $*$, its allowed values are exactly
those strings which start with p. By induction, $A_{n-1}$ tells us
that $z$ maps these bijectively to $\{0, \ldots, f_n-1 \}$. Since
$f_{n+1} + f_n = f_{n+2}$, this implies $B_n$ is true. Together, $A_n$
and $B_n$ for all $n \geq 2$, along with the trivial $n=1$ case,
imply proposition \ref{bij}.
\end{proof}

\chapter{Perturbative Gadgets}
\label{gadgets}

\section{Introduction}
\label{intro}

Perturbative gadgets were introduced to construct a two-local Hamiltonian
whose low energy effective Hamiltonian corresponds to a desired
three-local Hamiltonian. They were originally
developed by Kempe, Kitaev, and Regev in 2004 to prove the
QMA-completeness of the 2-local Hamiltonian problem and to simulate 
3-local adiabatic quantum computation using 2-local adiabatic quantum
computation\cite{Kempe}. Perturbative gadgets have subsequently been
used to simulate spatially nonlocal Hamiltonians using spatially local
Hamiltonians\cite{Oliveira}, and to find a minimal set of set of
interactions for universal adiabatic quantum
computation\cite{Biamonte}. It was also pointed out in \cite{Oliveira}
that perturbative gadgets can be used recursively to obtain $k$-local
effective interactions using a 2-local Hamiltonian. Here we generalize
perturbative gadgets to directly obtain arbitrary $k$-local effective
interactions by a single application of $k\th$ order perturbation
theory. Our formulation is based on a perturbation expansion due to
Bloch\cite{Bloch}.

A $k$-local operator is one consisting of interactions between at most
$k$ qubits. A general $k$-local Hamiltonian on $n$ qubits can always
be expressed as a sum of $r$ terms,
\begin{equation}
\label{hcomp}
H^{\mathrm{comp}} = \sum_{s=1}^r c_s H_s
\end{equation}
with coefficients $c_s$, where each term $H_s$ is a $k$-fold
tensor product of Pauli operators. That is, $H_s$ couples some set of
$k$ qubits according to
\begin{equation}
H_s = \sigma_{s,1} \ \sigma_{s,2} \ \ldots \ \sigma_{s,k},
\end{equation}
where each operator $\sigma_{s,j}$ is of the form
\begin{equation}
\sigma_{s,j} = \hat{n}_{s,j} \cdot \vec{\sigma}_{s,j},
\end{equation} 
where $\hat{n}_{s,j}$ is a unit vector in $\mathbb{R}^3$, and 
$\vec{\sigma}_{s,j}$ is the vector of Pauli matrices operating on the
$j\th$ qubit in the set of $k$ qubits acted upon by $H_s$.

We wish to simulate $H^{\mathrm{comp}}$ using only 2-local
interactions. To this end, for each term $H_s$, we introduce $k$
ancilla qubits, generalizing the technique of \cite{Kempe}. There are
then $rk$ ancilla qubits and $n$ computational qubits, and we choose
the gadget Hamiltonian\footnote{For $H^{\mathrm{gad}}$ to be 
  Hermitian, the coefficient of $V_s$ must be real. We therefore
  choose the sign of each $\hat{n}_{s,j}$ so that all $c_s$ are
  positive.} as
\begin{equation}
H^{\mathrm{gad}} = \sum_{s=1}^r H_s^{\mathrm{anc}} + \lambda
\sum_{s=1}^r \sqrt[k]{c_s} V_s,
\end{equation}
where
\begin{equation}
\label{nonpert}
H_s^{\mathrm{anc}} = \sum_{1 \leq i < j \leq k } \frac{1}{2} 
(I - Z_{s,i} Z_{s,j}),
\end{equation}
and
\begin{equation}
\label{pert}
V_s = \sum_{j = 1}^k \sigma_{s,j} \otimes X_{s,j}.
\end{equation}
For each $s$ there is a corresponding register of $k$ ancilla
qubits. The operators $X_{s,j}$ and $Z_{s,j}$ are Pauli $X$ and
$Z$ operators acting on the $j\th$ ancilla qubit in the ancilla
register associated with $s$. For each ancilla register, the ground
space of $H_s^{\mathrm{anc}}$ is the span of $\ket{000\ldots}$ and
$\ket{111\ldots}$. $\lambda$ is the small parameter in which the
perturbative analysis is carried out.

\begin{figure}
\begin{center}
\includegraphics[width=0.15\textwidth]{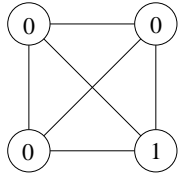}
\caption{The ancilla qubits are all coupled together
  using $ZZ$ couplings. This gives a unit energy penalty for each pair
  of unaligned qubits. If there are $k$ bits, of which $j$ are in the
  state $\ket{1}$ and the remaining $k-j$ are in the state $\ket{0}$,
  then the energy penalty is $j(k-j)$. In the example shown in this
  diagram, the 1 and 0 labels indicate that the qubits are in the
  state $\ket{0001}$, which has energy penalty 3.}
\end{center}
\end{figure}

For each $s$, the operator 
\begin{equation}
X_s^{\otimes k} = X_{s,1} \otimes X_{s,2} \otimes \ldots \otimes X_{s,k}
\end{equation}
acting on the $k$ ancilla qubits in the 
register $s$ commutes with $H^{\mathrm{gad}}$. Since there are $r$
ancilla registers, $H^{\mathrm{gad}}$ can be block diagonalized into
$2^r$ blocks, where each register is in either the $+1$ or $-1$
eigenspace of its $X_s^{\otimes k}$. In this paper, we analyze only the 
block corresponding to the $+1$ eigenspace for every register. This
$+1$ block of the gadget Hamiltonian is a Hermitian operator,
that we label $H_+^{\mathrm{gad}}$. We show that the
effective Hamiltonian on the low energy eigenstates of
$H_+^{\mathrm{gad}}$ approximates $H^{\mathrm{comp}}$. For many
purposes this is sufficient. For example, suppose one wishes to
simulate a $k$-local adiabatic quantum computer using a $2$-local
adiabatic quantum computer. If the initial state of the computer lies
within the all $+1$ subspace, then the system will remain in this
subspace throughout its evolution. To put the initial state of the
system into the all $+1$ subspace, one can initialize each ancilla
register to the state
\begin{equation}
\ket{+} = \frac{1}{\sqrt{2}} (\ket{000\ldots} + \ket{111\ldots}),
\end{equation}
which is the ground state of $\sum_s H_s^{\mathrm{anc}}$ within the
$+1$ subspace. Given the extensive experimental literature on the
preparation of states of the form $\ket{+}$, also known as cat states,
a supply of such states seems a reasonable resource to assume.

The purpose of the perturbative gadgets is to obtain $k$-local
effective interactions in the low energy subspace. To quantify this,
we use the concept of an effective Hamiltonian. We define this
to be
\begin{equation}
H_{\mathrm{eff}}(H,d) \equiv \sum_{j=1}^d E_j \ket{\psi_j}\bra{\psi_j},
\end{equation}
where $\ket{\psi_1},\ldots,\ket{\psi_d}$ are the $d$ lowest energy
eigenstates of a Hamiltonian $H$, and $E_1, \ldots , E_d$ are
their energies.

In section \ref{section_analysis}, we calculate
$H_{\mathrm{eff}}(H^{\mathrm{gad}}_+,2^n)$ perturbatively to $k\th$
order in $\lambda$. To do this, we write $H^{\mathrm{gad}}$ as
\begin{equation}
H^{\mathrm{gad}} = H^{\mathrm{anc}} + \lambda V
\end{equation}
where
\begin{equation}
H^{\mathrm{anc}} = \sum_{s=1}^r H_s^{\mathrm{anc}}
\end{equation}
and
\begin{equation}
V = \sum_{s=1}^r \sqrt[k]{c_s} V_s
\end{equation}
We consider $H^{\mathrm{anc}}$ to be the unperturbed Hamiltonian and
$\lambda V$ to be the the perturbation. We find that $\lambda V$
perturbs the ground space of $H^{\mathrm{anc}}$ in two separate
ways. The first is to shift the energy of the entire space. The second
is to split the degeneracy of the ground space. This splitting arises
at $k\th$ order in perturbation theory, because the lowest power of
$\lambda V$ that has nonzero matrix elements within the ground space
of $H^{\mathrm{anc}}$ is the $k\th$ power. It is this splitting which
allows the low energy subspace of $H^{\mathrm{gad}}_+$ to mimic the
spectrum of $H^{\mathrm{comp}}$.

It is convenient to analyze the shift and the splitting
separately. To do this, we define 
\begin{equation}
\widetilde{H}_{\mathrm{eff}}(H,d,\Delta) \equiv H_{\mathrm{eff}}(H,d)
- \Delta \Pi,
\end{equation}
where $\Pi$ is the projector onto the support of
$H_{\mathrm{eff}}(H,d)$. Thus, $\widetilde{H}_{\mathrm{eff}}(H,d,\Delta)$
differs from $H_{\mathrm{eff}}(H,d)$ only by an energy shift of
magnitude $\Delta$. The eigenstates of
$\widetilde{H}_{\mathrm{eff}}(H,d,\Delta)$ are identical to the
eigenstates of $H_{\mathrm{eff}}(H,d)$, as are all the gaps between
eigenenergies. The rest of this paper is devoted to showing that,
for any $k$-local Hamiltonian $H^{\mathrm{comp}}$ acting on $n$
qubits, there exists some function $f(\lambda)$ such that 
\begin{equation}
\label{result}
\widetilde{H}_{\mathrm{eff}}(H^{\mathrm{gad}}_+,2^n,f(\lambda))
= \frac{-k (-\lambda)^k}{(k-1)!}
H^{\mathrm{comp}} \otimes P_+ + \mathcal{O}(\lambda^{k+1})
\end{equation}
for sufficiently small $\lambda$. Here $P_+$ is an operator acting on
the ancilla registers, projecting each one into the state
$\ket{+}$. To obtain equation \ref{result} we use a formulation of
degenerate perturbation theory due to Bloch\cite{Bloch, Messiah},
which we describe in the next section.

\section{Perturbation Theory}
\label{perturbation}

Suppose we have a Hamiltonian of the form 
\begin{equation}
H = H^{(0)} + \lambda V,
\end{equation}
where $H^{(0)}$ has a $d$-dimensional degenerate ground space
$\mathcal{E}^{(0)}$ of energy zero. As discussed in \cite{Kato, Messiah},
the effective Hamiltonian for the $d$ lowest eigenstates of $H$ can be
obtained directly as a perturbation series in $V$. However, for our
purposes it is more convenient to use an indirect method due to
Bloch \cite{Bloch, Messiah}, which we now describe. As shown in
section \ref{convergence}, the perturbative expansions converge
provided that
\begin{equation}
\label{conv_cond}
\| \lambda V \| < \frac{\gamma}{4},
\end{equation}
where $\gamma$ is the energy gap between the eigenspace in question
and the next nearest eigenspace, and $\| \cdot \|$ denotes the
operator norm\footnote{For any linear operator $M$,
\[
\| M \| \equiv \max_{|\braket{\psi}{\psi}| = 1} | \bra{\psi} M
\ket{\psi} |.
\]}.

Let $\ket{\psi_1}, \ldots, \ket{\psi_d}$ be the $d$ lowest energy
eigenstates of $H$, and let $E_1,\ldots,E_d$
be their energies. For small perturbations, these states lie
primarily within $\mathcal{E}^{(0)}$. Let 
\begin{equation}
\ket{\alpha_j} = P_0 \ket{\psi_j},
\end{equation}
where $P_0$ is the projector onto $\mathcal{E}^{(0)}$. For $\lambda$
satisfying \ref{conv_cond}, the  vectors $\ket{\alpha_1}, \ldots,
\ket{\alpha_d}$ are linearly independent, and there exists a
linear operator $\cu$ such that 
\begin{equation}
\ket{\psi_j} = \cu \ket{\alpha_j} \quad \textrm{for $j=1,2,\ldots, d$}
\end{equation}
and
\begin{equation}
\cu \ket{\phi} = 0 \quad \mathrm{for} \quad \ket{\phi} \in
\mathcal{E}^{(0)\perp}.
\end{equation}
Note that $\cu$ is in general nonunitary. Let
\begin{equation}
\label{afromu}
\ca = \lambda P_0 V \cu.
\end{equation}
As shown in \cite{Messiah, Bloch} and recounted in section
\ref{Blochapp}, the eigenvectors of $\ca$ are
$\ket{\alpha_1},\ldots,\ket{\alpha_d}$, and the corresponding
eigenvalues are $E_1, \ldots, E_d$. Thus,
\begin{equation}
\label{heffudag}
H_{\mathrm{eff}} = \cu \ca \cu^\dag.
\end{equation}
$\ca$ and $\cu$ have the following perturbative expansions. Let $S^l$
be the operator
\begin{equation}
\label{S}
S^l = \left\{ \begin{array}{ll}
\displaystyle \sum_{j \neq 0} \frac{P_j}{(-E^{(0)}_j)^l} & 
\textrm{if $l >  0$} \\ \\
\displaystyle -P_0 & \textrm{if $l = 0$} 
\end{array} \right.
\end{equation}
where $P_j$ is the projector onto the eigenspace of $H^{(0)}$ with
energy $E^{(0)}_j$. (Recall that $E^{(0)}_0 = 0$.) Then
\begin{equation}
\label{ca}
\ca = \sum_{m=1}^\infty \ca^{(m)},
\end{equation}
where
\begin{equation}
\label{caterms}
\ca^{(m)} = \lambda^m \sum_{(m-1)} P_0 V S^{l_1} V S^{l_2} \ldots V
S^{l_{m-1}} V P_0, 
\end{equation}
and the sum is over all nonnegative integers $l_1 \ldots l_{m-1}$
satisfying
\begin{eqnarray}
l_1 + \ldots + l_{m-1} & = & m-1 \\
l_1 + \ldots + l_p & \geq & p \quad (p = 1,2,\ldots,m-2).
\end{eqnarray}
Similarly, $\cu$ has the expansion
\begin{equation}
\label{cu}
\cu = P_0 + \sum_{m=1}^\infty \cu^{(m)},
\end{equation}
where
\begin{equation}
\label{cuterms}
\cu^{(m)} = \lambda^m \sum_{(m)} S^{l_1} V S^{l_2} V \ldots V
S^{l_m} V P_0,
\end{equation}
and the sum is over
\begin{eqnarray}
l_1 + \ldots + l_m & = & m \\
l_1 + \ldots + l_p & \geq & p \quad (p=1,2,\ldots,m-1).
\end{eqnarray}
In section \ref{Blochapp} we derive the expansions for $\cu$ and
$\ca$, and in section \ref{convergence} we prove that condition
\ref{conv_cond} suffices to ensure convergence. The advantage of the
method of \cite{Bloch} over the direct approach of \cite{Kato} is that
$\ca$ is an operator whose support is strictly within
$\mathcal{E}^{(0)}$, which makes some of the calculations more
convenient.

\section{Analysis of the Gadget Hamiltonian}
\label{section_analysis}

Before analyzing $H^{\mathrm{gad}}$ for a general $k$-local
Hamiltonian, we first consider the case where $H^{\mathrm{comp}}$
has one coefficient $c_s = 1$ and all the rest equal to zero. That
is,
\begin{equation}
H^{\mathrm{comp}} = \sigma_1 \sigma_2 \ldots \sigma_k, 
\end{equation}
where for each $j$, $\sigma_j = \hat{n}_j \cdot \vec{\sigma}_j$ for some
unit vector $\hat{n}_j$ in $\mathbb{R}^3$. The corresponding gadget
Hamiltonian is thus 
\begin{equation}
H^{\mathrm{gad}} = H^{\mathrm{anc}} + \lambda V, 
\end{equation}
where
\begin{equation}
\label{nonpert_recap}
H^{\mathrm{anc}} = \sum_{1 \leq i < j \leq k} \frac{1}{2}(I-Z_i Z_j),
\end{equation}
and
\begin{equation}
\label{pert_recap}
V = \sum_{j=1}^k \sigma_j \otimes X_j.
\end{equation}
Here $\sigma_j$ acts on the $j\th$ computational qubit, and $X_j$ and
$Z_j$ are the Pauli $X$ and $Z$ operators acting on the $j\th$ ancilla
qubit. We use $k\th$ order perturbation theory to show that
$\widetilde{H}^{\mathrm{eff}}(H^{\mathrm{gad}}_+,2^k,\Delta)$
approximates $H^{\mathrm{comp}}$ for appropriate $\Delta$.

We start by calculating $\ca$ for $H_+^{\mathrm{gad}}$. For
$H^{\mathrm{anc}}$, the energy gap is 
$\gamma = k-1$, and $\| V \| = k$, so by condition \ref{conv_cond}, we
can use perturbation theory provided $\lambda$ satisfies 
\begin{equation}
\lambda < \frac{k-1}{4k}.
\end{equation}
Because all terms in $\ca$ are sandwiched by $P_0$ operators, the
nonzero terms in $\ca$ are ones in which the $m$ powers of $V$ take a
state in $\mathcal{E}^{(0)}$ and return it to
$\mathcal{E}^{(0)}$. Because we are working in the $+1$ eigenspace of
$X^{\otimes k}$, an examination of equation \ref{nonpert_recap} shows that
$\mathcal{E}^{(0)}$ is the span of the states in which the ancilla qubits
are in the state $\ket{+}$. Thus, $P_0 = I \otimes P_+$, where $P_+$
acts only on the ancilla qubits, projecting them onto the state
$\ket{+}$. Each term in $V$ flips one ancilla qubit. To return to
$\mathcal{E}^{(0)}$, the powers of $V$ must either flip some ancilla
qubits and then flip them back, or they must flip all of them. The
latter process occurs at $k\th$ order and gives rise
to a term that mimics $H^{\mathrm{comp}}$. The former process occurs
at many orders, but at orders $k$ and lower gives rise only to terms
proportional to $P_0$.

As an example, let's examine $\ca$ up to second order for $k > 2$.
\begin{equation}
\ca^{(\leq 2)} = \lambda P_0 V P_0 + \lambda^2 P_0 V S^1 V P_0
\end{equation}
The term $P_0 V P_0$ is zero, because $V$ kicks the state out of
$\mathcal{E}^{(0)}$. By equation \ref{pert_recap} we see that applying
$V$ to a state in the ground space yields a state in the energy $k-1$
eigenspace. Substituting this denominator into $S^1$ yields
\begin{equation}
\ca^{(2)} = -\frac{\lambda^2}{k-1} P_0 V^2 P_0.
\end{equation} 
Because $V$ is a sum, $V^2$ consists of the squares of individual
terms of $V$ and cross terms. The cross terms flip two ancilla qubits,
and thus do not return the state to the ground space. The
squares of individual terms are proportional to the identity, thus
\begin{equation}
\ca^{(2)} = \lambda^2 \alpha_2 P_0
\end{equation}
for some $\lambda$-independent constant $\alpha_2$. Similarly, at any
order $m < k$, the only terms in $V^m$ which project back to
$\mathcal{E}^{(0)}$ are those arising from squares of individual
terms, which are proportional to the identity. Thus, up to order
$k-1$,
\begin{equation}
\ca^{(\leq k-1)} = \left( \sum_m \alpha_m \lambda^m \right)
P_0
\end{equation}
where the sum is over even $m$ between zero and $k-1$ and
$\alpha_0, \alpha_2,\ldots$ are the corresponding coefficients.

At $k\th$ order there arises another type of term. In $V^k$ there are
$k$-fold cross terms in which each of the terms in $V$ appears
once. For example, there is the term
\begin{equation}
\label{goodterm}
\lambda^k P_0 (\sigma_1 \otimes X_1)S^1 (\sigma_2
\otimes X_2) S^1 \ldots S^1 (\sigma_k \otimes
X_k) P_0
\end{equation}
The product of the energy denominators occurring in the $S^1$ operators
is
\begin{equation}
\prod_{j=1}^{k-1} \frac{1}{-j (k-j)} = \frac{(-1)^{k-1}}{((k-1)!)^2}.
\end{equation}
Thus, this term is
\begin{equation}
\frac{(-1)^{k-1} \lambda^k}{((k-1)!)^2} P_0 (\sigma_1 \otimes X_1) (\sigma_2
\otimes X_2) \ldots (\sigma_k \otimes X_k) P_0,
\end{equation}
which can be rewritten as
\begin{equation}
\frac{-(-\lambda)^k}{((k-1)!)^2} P_0 (\sigma_1 \sigma_2 \ldots \sigma_k
\otimes X^{\otimes k}) P_0.
\end{equation}
This term mimics $H^{\mathrm{comp}}$. The fact that all the $S$
operators in this term are $S^1$ is a general feature. Any term in
$\ca^{(k)}$ where $l_1 \ldots l_{k-1}$ are not all equal to 1
either vanishes or is proportional to $P_0$. This is because such 
terms contain $P_0$ operators separated by fewer than $k$ powers of
$V$, and thus the same arguments used for $m < k$ apply.

There are a total of $k!$ terms of the type shown in expression
\ref{goodterm}. 
Thus, up to $k\th$ order
\begin{equation}
\ca^{(\leq k)} = f(\lambda) P_0 + \frac{-k (-\lambda)^k}{(k-1)!} P_0
(\sigma_1 \sigma_2 \ldots \sigma_k \otimes X^{\otimes k}) P_0,
\end{equation}
which can be written as
\begin{equation}
\label{mpauli}
\ca^{(\leq k)} = f(\lambda) P_0 + \frac{-k (-\lambda)^k}{(k-1)!} P_0
(H^{\mathrm{comp}} \otimes X^{\otimes k}) P_0
\end{equation}
where $f(\lambda)$ is some polynomial in $\lambda$. Note
that, up to $k\th$ order, $\ca$ happens to be Hermitian.
The effective Hamiltonian is $\cu \ca \cu^\dag$, thus by equation
\ref{mpauli},
\begin{eqnarray}
H_{\mathrm{eff}}(H_+^{\mathrm{gad}},2^k) & = & \cu f(\lambda) P_0
                       \cu^\dag + \cu \left[ \frac{-k (-\lambda)^k}{(k-1)!} P_0
                       (H^{\mathrm{comp}} \otimes X^{\otimes k}) P_0 
                       +\mathcal{O}(\lambda^{k+1}) \right] \cu^\dag
                       \nonumber \\ 
                       & = & f(\lambda) \Pi + \cu 
                       \left[\frac{-k (-\lambda)^k}{(k-1)!} 
                       P_0 (H^{\mathrm{comp}} \otimes X^{\otimes k}) P_0
                       + \mathcal{O}(\lambda^{k+1}) \right] \cu^\dag
\end{eqnarray}
since $\cu P_0 \cu^\dag = \Pi$. Thus,
\begin{equation}
\widetilde{H}_{\mathrm{eff}}(H_+^{\mathrm{gad}},2^k,f(\lambda)) = \cu \left[
                       \frac{-k (-\lambda)^k}{(k-1)!}
                       P_0(H^{\mathrm{comp}} \otimes X^{\otimes k})P_0
                       +\mathcal{O}(\lambda^{k+1}) \right] \cu^\dag.
\end{equation}
To order $\lambda^k$, we can approximate $\cu$ as $P_0$ since
the higher order corrections to $\cu$ give rise to terms of order 
$\lambda^{k+1}$ and higher in the expression for
$\widetilde{H}_{\mathrm{eff}}(H_+^{\mathrm{gad}},2^k,f(\lambda))$. Thus,
\begin{equation}
\widetilde{H}_{\mathrm{eff}}(H_+^{\mathrm{gad}},2^k,f(\lambda)) =
                        \frac{-k (-\lambda)^k}{(k-1)!} P_0 
                        (H^{\mathrm{comp}} \otimes X^{\otimes k}) P_0 +
                        \mathcal{O}(\lambda^{k+1}).
\end{equation}
Using $P_0 = I \otimes P_+$ we rewrite this as
\begin{equation}
\widetilde{H}_{\mathrm{eff}}(H_+^{\mathrm{gad}},2^k,f(\lambda)) = 
                        \frac{-k (-\lambda)^k}{(k-1)!}
                        H^{\mathrm{comp}} \otimes P_+ +
                        \mathcal{O}(\lambda^{k+1}).
\end{equation}

Now let's return to the general case where $H^{\mathrm{comp}}$ is a
linear combination of $k$-local terms with arbitrary coefficients
$c_s$, as described in equation \ref{hcomp}. Now that we have
gadgets to obtain $k$-local effective interactions, it is tempting to
eliminate one $k$-local interaction at a time, by introducing
corresponding gadgets one by one. However, this approach does not lend
itself to simple analysis by degenerate perturbation theory. This is
because the different $k$-local terms in general act on
overlapping sets of qubits. Hence, we instead consider 
\begin{equation}
\label{vgad}
V^{\mathrm{gad}} = \sum_{s=1}^r \sqrt[k]{c_s} V_s
\end{equation}
as a single perturbation, and work out the effective Hamiltonian in
powers of this operator. The unperturbed part of the total gadget
Hamiltonian is thus
\begin{equation}
H^{\mathrm{anc}} = \sum_{s=1}^r H_s^{\mathrm{anc}},
\end{equation}
which has energy gap $\gamma = k-1$. The full Hamiltonian is
\begin{equation}
H^{\mathrm{gad}} = H^{\mathrm{anc}} + \lambda V^{\mathrm{gad}},
\end{equation}
so the perturbation series is guaranteed to converge under the
condition
\begin{equation}
\lambda < \frac{k-1}{4 \| V^{\mathrm{gad}} \|}
\end{equation}
As mentioned previously, we will work only within the simultaneous
$+1$ eigenspace of the $X^{\otimes k}$ operators acting on each of the
ancilla registers. In this subspace, $H^{\mathrm{anc}}$ has degeneracy
$2^n$ which gets split by the perturbation $\lambda V$ so that it
mimics the spectrum of $H^{\mathrm{comp}}$.

Each $V_s$ term couples to a different ancilla register. Hence, any
cross term between different $V_s$ terms flips some ancilla
qubits in one register and some ancilla qubits in another. Thus, at
$k\th$ order, non-identity cross terms between different $s$ cannot
flip all $k$ ancilla qubits in any given ancilla register, and they
are thus projected away by the $P_0$ operators appearing in the
formula for $\ca$. Hence the perturbative analysis proceeds just as it
did when there was only a single nonzero $c_s$, and one finds,
\begin{equation}
\widetilde{H}_{\mathrm{eff}}(H_+^{\mathrm{gad}},2^n,f(\lambda)) =
                        \frac{-k (-\lambda)^k}{(k-1)!}
                        P_0 \left( \sum_{s=1}^r c_s H_s 
                        \otimes X_s^{\otimes k} \right) P_0 +
                        \mathcal{O}(\lambda^{k+1}), 
\end{equation}
where $X_s^{\otimes k}$ is the operator $X^{\otimes k}$ acting
on the register of $k$ ancilla qubits corresponding to a given
$s$, and $f(\lambda)$ is some polynomial in $\lambda$ of degree at
most $k$. Note that coefficients in the polynomial $f(\lambda)$ depend
on $H^{\mathrm{comp}}$. As before, this can be rewritten as
\begin{equation}
\label{finalheff}
\widetilde{H}_{\mathrm{eff}}(H_+^{\mathrm{gad}},2^n,f(\lambda)) =  
                \frac{-k (-\lambda)^k}{(k-1)!} H^{\mathrm{comp}}
                \otimes P_+ + \mathcal{O}(\lambda^{k+1}),
\end{equation}
where $P_+$ acts only on the ancilla registers, projecting them
all into the $\ket{+}$ state. Hence, as asserted in section
\ref{intro}, the 2-local gadget Hamiltonian $H^{\mathrm{gad}}$
generates effective interactions which mimic the $k$-local Hamiltonian
$H^{\mathrm{comp}}$. We expect that this technique may find many
applications in quantum computation, such as in proving
QMA-completeness of Hamiltonian problems, and constructing physically
realistic Hamiltonians for adiabatic quantum computation. 

\section{Numerical Examples}

In this section we numerically examine the performance of
perturbative gadgets in some small examples. As shown in section
\ref{section_analysis}, the shifted effective Hamiltonian is that given in
equation \ref{finalheff}. We define
\begin{equation}
H^{\mathrm{id}} \equiv \frac{-k (-\lambda)^k}{(k-1)!}
H^{\mathrm{comp}} \otimes P_+.
\end{equation}
$\widetilde{H}_{\mathrm{eff}}$ consists of the ideal piece
$H^{\mathrm{id}}$, which is of order $\lambda^k$, plus an error term of
order $\lambda^{k+1}$ and higher. For sufficiently small $\lambda$ these error
terms are therefore small compared to the $H^{\mathrm{id}}$ term which
simulates $H^{\mathrm{comp}}$. Indeed, by a calculation very similar
to that which appears in section \ref{convergence}, one can easily
place an upper bound on the norm of the error terms. However, in
practice the actual size of the error terms may be smaller than this
bound. To examine the error magnitude in practice, we plot
$ \frac{\| H^{\mathrm{id}} - \widetilde{H}_{\mathrm{eff}}
  \|}{\|H^{\mathrm{id}}\|}$ in figure \ref{ratio} using direct 
numerical computation of $\widetilde{H}_{\mathrm{eff}}$ without
perturbation theory. $f(\lambda)$ was calculated analytically for
these examples. In all cases the ratio of $\| H^{\mathrm{id}} -
  \widetilde{H}_{\mathrm{eff}}\|$ to $\|H^{\mathrm{id}}\|$ scales
approximately linearly with $\lambda$, as one expects since the
error terms are of  order $\lambda^{k+1}$ and higher, whereas
$H^{\mathrm{id}}$ is of order $\lambda^k$.

\begin{figure}
\begin{center}
\includegraphics[width=0.65\textwidth]{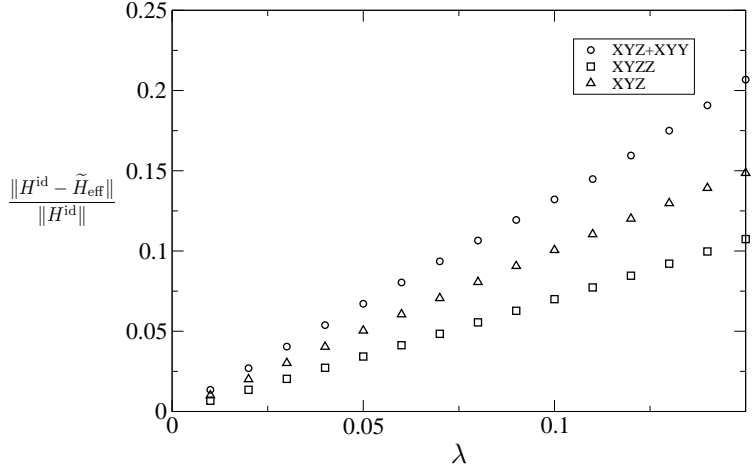}
\caption{\label{ratio} Here the ratio of the error terms to the ideal
  Hamiltonian $H^{\mathrm{id}} \equiv \frac{-k (-\lambda)^k}{(k-1)!}
  H^{\mathrm{comp}}$ is plotted. We examine three examples, a third
  order gadget simulating a single $XYZ$ interaction, a third order
  gadget simulating a pair of interactions $XYZ+XYY$, and a fourth
  order gadget simulating a fourth order interaction
  $XYZZ$. Here $\widetilde{H}_{\mathrm{eff}}$ is calculated by direct
  numerical computation without using perturbation theory. As expected
  the ratio of the norm of the error terms to $H^{\mathrm{id}}$ goes
  linearly to zero with shrinking $\lambda$.}
\end{center}
\end{figure}

\section{Derivation of Perturbative Formulas}
\label{Blochapp}

In this section we give a self-contained presentation of the
derivations for the method of degenerate perturbation theory used in
this paper. We closely follow Bloch\cite{Bloch}. Given a Hamiltonian
of the form
\begin{equation}
H = H^{(0)} + \lambda V
\end{equation}
we wish to find the effective Hamiltonian induced by the perturbation
$\lambda V$ on the ground space of $H^{(0)}$. In what follows, we
assume that the ground space of $H^{(0)}$ has energy zero. This
simplifies notation, and the generalization to nonzero ground energy
is straightforward. To further simplify notation we define
\begin{equation}
\hv = \lambda V.
\end{equation}

Suppose the ground space of $H^{(0)}$ is $d$-dimensional and denote it
by $\mathcal{E}^{(0)}$. Let
$\ket{\psi_1}, \ldots, \ket{\psi_d}$ be the perturbed eigenstates
arising from the splitting of this degenerate ground
space, and let $E_1, \ldots, E_d$ be their energies. Furthermore,
let $\ket{\alpha_j} = P_0 \ket{\psi_j}$ where $P_0$ is the projector
onto the unperturbed ground space of $H^{(0)}$. If $\lambda$ is
sufficiently small, $\ket{\alpha_1},\ldots,\ket{\alpha_d}$ are
linearly independent, and we can define an operator $\cu$ such that
\begin{equation}
\label{cudef}
\cu \ket{\alpha_j} = \ket{\psi_j}
\end{equation}
and
\begin{equation}
\label{uproj}
\cu \ket{\phi} = 0 \quad \forall \ket{\phi} \in \mathcal{E}^{(0)\perp}.
\end{equation}

Now let $\ca$ be the operator
\begin{equation}
\label{pvcu}
\ca = P_0 \hv \cu.
\end{equation}
$\ca$ has $\ket{\alpha_1}, \ldots, \ket{\alpha_d}$ as its eigenstates,
and $E_1, \ldots, E_d$ as its corresponding energies. To see this,
note that since $H^{(0)}$ has zero ground state energy
\begin{equation}
\label{zerotrick}
P_0 \hv = P_0 (H^{(0)}+\hv) = P_0 H.
\end{equation}
Thus,
\begin{eqnarray}
\ca \ket{\alpha_j} & = & P_0 \hv \cu \ket{\alpha_j} \nonumber \\
& = & P_0 \hv \ket{\psi_j} \nonumber \\
& = & P_0 H \ket{\psi_j} \nonumber \\
& = & P_0 E_j \ket{\psi_j} \nonumber \\
& = & E_j \ket{\alpha_j}.
\end{eqnarray}
The essential task in this formulation of
degenerate perturbation theory is to find a perturbative expansion for
$\cu$. From $\cu$ one can obtain $\ca$ by equation \ref{pvcu}. By
diagonalizing $\ca$ one obtains $E_1,\ldots, E_d$, and
$\ket{\alpha_1},\ldots,\ket{\alpha_d}$. Then, by applying $\cu$ to
$\ket{\alpha_j}$ one obtains $\ket{\psi_j}$. So, given a
perturbative formula for $\cu$, all quantities of interest can be
calculated. Rather than diagonalizing $\ca$ to obtain individual
eigenstates and eigenenergies, one can instead compute an effective
Hamiltonian for the entire perturbed eigenspace, defined by
\begin{equation}
H_{\mathrm{eff}}(H,d) \equiv \sum_{j=1}^d E_j \ket{\psi_j} \bra{\psi_j}.
\end{equation}
This is given by
\begin{equation}
H_{\mathrm{eff}}(H,d) = \cu \ca \cu^\dag.
\end{equation}

To derive a perturbative formula for $\cu$, we start with
Schr\"odinger's equation:
\begin{equation}
\label{schro}
H \ket{\psi_j} = E_j \ket{\psi_j}.
\end{equation}
By equation \ref{zerotrick}, left-multiplying this by $P_0$ yields
\begin{equation}
\label{pnaught}
P_0 \hv\ket{\psi_j} = E_j \ket{\alpha_j}.
\end{equation}
By equation \ref{uproj},
\begin{equation}
\label{icup}
\cu P_0 = \cu.
\end{equation}
Thus left-multiplying equation \ref{pnaught} by $\cu$ yields
\begin{equation}
\label{cuv}
\cu \hv \ket{\psi_j} = E_j \ket{\psi_j}.
\end{equation}
By subtracting \ref{cuv} from \ref{schro} we obtain
\begin{equation}
(H - \cu \hv) \ket{\psi_j} = 0.
\end{equation}
The span of $\ket{\psi_j}$ we call $\mathcal{E}$. For any state
$\ket{\beta}$ in $\mathcal{E}$ we have
\begin{equation}
(H - \cu \hv) \ket{\beta} = 0.
\end{equation}
Since $\cu \ket{\gamma} \in \mathcal{E}$ for any state $\ket{\gamma}$,
it follows that
\begin{equation}
(H - \cu \hv) \cu = 0.
\end{equation}
This equation can be rewritten as
\begin{equation}
\label{ho}
H^{(0)} \cu = -\hv \cu + \cu \hv \cu.
\end{equation}
Defining $Q_0 = \id - P_0$ we have
\begin{equation}
\label{icuq}
\cu = P_0 \cu + Q_0 \cu.
\end{equation}
Substituting this into the left side of \ref{ho} yields
\begin{equation}
H^{(0)} Q_0 \cu = - \hv \cu + \cu \hv \cu,
\end{equation}
because $H^{(0)} P_0 = 0$. In $\mathcal{E}^{(0)\perp}$, $H^{(0)}$ has a well
defined inverse and one can write
\begin{equation}
Q_0 \cu = -\frac{1}{H^{(0)}} Q_0 (\hv \cu - \cu \hv \cu). 
\end{equation}
Using equation \ref{icuq}, one obtains
\begin{equation}
\cu = P_0 \cu - \frac{1}{H^{(0)}} Q_0 ( \hv \cu - \cu \hv \cu ).
\end{equation}
By the definition of $\cu$ it is apparent that $P_0 \cu = P_0$, thus
this equation simplifies to
\begin{equation}
\label{imp}
\cu = P_0 - \frac{1}{H^{(0)}} Q_0 ( \hv \cu - \cu \hv \cu ).
\end{equation}

We now expand $\cu$ in powers of $\lambda$ (equivalently, in powers
of $\hv$), and denote the $m\th$ order term by
$\cu^{(m)}$. Substituting this expansion into equation 
\ref{imp} and equating terms at each order yields the following
recurrence relations.
\begin{eqnarray}
\label{rec1}
\cu^{(0)} & = & P_0 \\
\label{rec2}
\cu^{(m)} & = & -\frac{1}{H^{(0)}} Q_0 \left[ \hv \cu^{(m-1)} -
  \sum_{p=1}^{m-1} \cu^{(p)} \hv \cu^{(m-p-1)} \right] \quad (m=1,2,3\ldots)
\end{eqnarray}
Note that the sum over $p$ starts at $p=1$, not $p=0$. This is because
\begin{equation}
\frac{1}{H^{(0)}} Q_0 \cu^{(0)} = \frac{1}{H^{(0)}} Q_0 P_0 = 0.
\end{equation}
Let 
\begin{equation}
S^l = \left\{ \begin{array}{ll}
\frac{1}{\left(-H^{(0)}\right)^l} Q_0 & \textrm{if $l > 0$} \\
- P_0 & \textrm{if $l = 0$}
\end{array} \right. .
\end{equation}
$\cu^{(m)}$ is of the form
\begin{equation}
\cu^{(m)} = {\sum}' S^{l_1} \hv S^{l_2} \hv \ldots S^{l_m} \hv P_0,
\end{equation}
where $\sum'$ is a sum over some subset of $m$-tuples $(l_1, l_2,
\ldots, l_m)$ such that
\begin{equation}
l_i \geq 0 \quad (i=1,2,\ldots, m)
\end{equation}
\begin{equation}
l_1 + l_2 + \ldots + l_m = m.
\end{equation}
The proof is an easy induction. $\cu^{(0)}$ clearly satisfies this,
and we can see that if $\cu^{(j)}$ has these properties for all
$j < m$, then by recurrence \ref{rec2}, $\cu^{(m)}$ also has
these properties. 

All that remains is to prove that the subset of allowed $m$-tuples
appearing in the sum $\sum'$ are exactly those which satisfy
\begin{equation}
\label{convex}
l_1 + \ldots + l_p \geq p \quad (p=1,2,\ldots,m-1).
\end{equation}

Following \cite{Bloch}, we do this by introducing stairstep
diagrams to represent the $m$-tuples, as shown in figure \ref{stairs}.
\begin{figure}
\begin{center}
\includegraphics[width=0.22\textwidth]{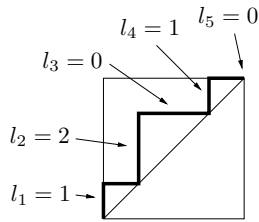}
\caption{\label{stairs} From a given $m$-tuple $(l_1,l_2,\ldots, l_m)$
  we construct a corresponding stairstep diagram by making the $j\th$
  step have height $l_j$, as illustrated above.}
\end{center}
\end{figure}
The $m$-tuples with property \ref{convex} correspond to diagrams in
which the steps lie above the diagonal. Following \cite{Bloch} we call
these convex diagrams. Thus our task is to prove that the sum $\sum'$
is over all and only the convex diagrams. To do this, we consider
the ways in which convex diagrams of order $m$ can be constructed from
convex diagrams of lower order. We then relate this to the way
$\cu^{(m)}$ is obtained from lower order terms in the recurrence
\ref{rec2}.

In any convex diagram, $l_1 \geq 1$. We now consider the two cases
$l_1=1$ and $l_1 > 1$. In the case that $l_1 = 1$, the diagram is as
shown on the left in figure \ref{cases}.
\begin{figure}
\begin{center}
\includegraphics[width=0.5\textwidth]{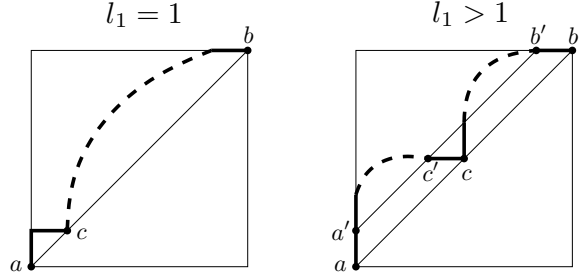}
\caption{\label{cases} A convex diagram must have either $l_1 = 1$ or
  $l_1 > 1$. In either case, the diagram can be decomposed as a
  concatenation of lower order convex diagrams.}
\end{center}
\end{figure}
In any convex diagram of order $m$ with $l_1 = 1$, there is an
intersection with the diagonal after one step, at the point
that we have labelled $c$. The diagram from $c$ to $b$ is a convex
diagram of order $m-1$. Conversely, given any convex diagram of order
$m-1$ we can construct a convex diagram of order $m$ by adding one
step to the beginning. Thus the convex diagrams of order $m$ with $l_1
= 1$ correspond bijectively to the convex diagrams of order $m-1$.

The case $l_1 > 1$ is shown in figure \ref{cases} on the right. Here
we introduce the line from $a'$ to $b'$, which is parallel to the
diagonal, but higher by one step. Since the diagram must end at $b$,
it must cross back under $a'b'$ at some point. We'll label the first
point at which it does so as $c'$. In general, $c'$ can equal
$b'$. The curve going from $a'$ to $c'$ is a convex diagram of order
$p$ with $1 \leq p \leq m-1$, and the curve going from $c$ to $b$ is a
convex diagram of order $n-p-1$ (which may be order zero if $c'$ =
$b'$). Since $c'$ exists and is unique, this establishes a bijection
between the convex diagrams of order $m$ with $l_1 > 1$, and the set
of the pairs of convex diagrams of orders $p$ and $n-p-1$, for $1 \leq
p \leq n-1$.

Examining the recurrence \ref{rec2}, we see that the $l_1 = 1$
diagrams are exactly those which arise from the term
\begin{equation}
\frac{Q_0}{H^{(0)}} \hv \cu^{(m-1)}
\end{equation}
and the $l_1 > 1$ diagrams are exactly those which arise from the term
\begin{equation}
\frac{Q_0}{H^{(0)}} \sum_{p=1}^{m-1} \cu^{(p)} \hv \cu^{(n-p-1)}.
\end{equation}
which completes the proof that $\sum'$ is over the $m$-tuples
satisfying equation \ref{convex}.

\section{Convergence of Perturbation Series}
\label{convergence}

Here we show that the perturbative expansion for $\cu$
given in equation \ref{cu} converges for
\begin{equation}
\| \lambda V \| < \frac{\gamma}{4}.
\end{equation}
By equation \ref{afromu}, the convergence of $\cu$ also implies the
convergence of $\ca$. Applying the triangle inequality to equation
\ref{cu} yields
\begin{equation}
\label{tri1}
\| \cu \| \leq 1 + \sum_{m=1}^{\infty} \| \cu^{(m)} \|.
\end{equation}
Substituting in equation \ref{cuterms} and applying the triangle
inequality again yields
\begin{equation}
\| \cu \| \leq 1 + \sum_{m=1}^{\infty} \lambda^m \sum_{(m)} \| S^{l_1}
\ldots V S^{l_m} V P_0 \|.
\end{equation}
By the submultiplicative property of the operator norm,
\begin{equation}
\label{intermed}
\| \cu \| \leq 1 + \sum_{m=1}^{\infty} \lambda^m \sum_{(m)} \| S^{l_1}
\| \cdot \| V \| \ldots \| V \| \cdot \| S^{l_m} \| \cdot \| V \|
\cdot \| P_0 \|.
\end{equation}
$\| P_0 \| = 1$, and by equation \ref{S} we have
\begin{equation}
\| S^l \| = \frac{1}{(E_1^{(0)})^l} = \frac{1}{\gamma^l}.
\end{equation}
Since the sum in equation \ref{intermed} is over $l_1 + \ldots +
l_m = m$, we have
\begin{equation}
\| \cu \| \leq 1 + \sum_{m=1}^\infty \sum_{(m)} \frac{\| \lambda V
  \|^m}{\gamma^m}. 
\end{equation}
The sum $\sum_{(m)}$ is over a subset of the $m$-tuples
adding up to $m$. Thus, the number of terms in this sum is less than
the number of ways of obtaining $m$ as a sum of $m$ nonnegative
integers. By elementary combinatorics, the number of ways to obtain
$n$ as a sum of $r$ nonnegative integers is $\binom{n+r-1}{n}$, thus
\begin{equation}
\label{notgeom}
\| \cu \| \leq 1 + \sum_{m=1}^\infty \binom{2m-1}{m} \frac{\| \lambda
  V \|^m}{\gamma^m}.
\end{equation}
Since
\begin{equation}
\sum_{j=0}^{2m-3} \binom{2m-1}{j} = 2^{2m-1},
\end{equation}
we have
\begin{equation}
\binom{2m-1}{m} \leq 2^{2m-1}.
\end{equation}
Substituting this into equation \ref{notgeom} converts it into a
convenient geometric series:
\begin{equation}
\| \cu \| \leq 1 + \sum_{m=1}^\infty 2^{2m-1} \frac{\| \lambda V
  \|^m}{\gamma^m}.
\end{equation}
This series converges for
\begin{equation}
\frac{4 \| \lambda V \|}{\gamma} < 1.
\end{equation}

\chapter{Multiplicity and Unity}

\begin{minipage}[c]{\textwidth}
\noindent
\emph{The fox knows many things but the hedgehog knows one big
  thing.}\\
-Archilochus \\
\end{minipage}

In his essay ``The Hedgehog and the Fox,'' Isaiah Berlin, echoing
Archilochus, classified thinkers into two categories, hedgehogs, who
look for unity and generality, and foxes, who revel in the variety and
complexity of the universe. In this chapter I will revisit the content
of my thesis from each of these points of view.

\section{Multiplicity}

\begin{minipage}[c]{\textwidth}
\noindent
\emph{...you shall not add to the misery and sorrow of the world, but
  shall smile to the infinite variety and mystery of it.} \\
-Sir William Saroyan \\
\end{minipage}

In this thesis I have spoken about many models of computation,
especially the adiabatic, topological, and circuit models. Each of
these models of computation can solve exactly the same set of problems
in polynomial time. Thus one might ask: ``why bother''? Why not just
stick to the quantum circuit model? 

As the many examples in this thesis have shown, sticking to only one
model of quantum computation would be a
mistake. The most obvious justification for considering alternative
models of quantum computation is that they provide promising
alternatives for the physical implementation of quantum computers. As
discussed in chapters \ref{introchap} and \ref{FT_AQC}, the main
barrier to practical quantum computation is the effect of noise. Three
types of quantum computer show promise for overcoming this barrier:
quantum circuits, by means of active error correction, adiabatic
quantum computers, by their inherent indifference to local properties,
and topological quantum computers by their energy gap, as discussed in
chapter \ref{FT_AQC}. It is not yet clear which of these strategies
will prove most useful, and this provides justification for
investigating all of them.

A skeptic might protest that this only justifies
investigation of physically realistic models of quantum
computation. Some of the models discussed in the literature, and in
this thesis, are not very physically realistic. For example, it seems
unlikely that adiabatic computation with 4-local Hamiltonians, or
quantum walks on exponentially many nodes will be implemented in
laboratories. However, even models slightly removed from physical
practicality have proven useful in the development of practical
physical implementations. For example, we now know that adiabatic
quantum computation with 2-local Hamiltonians is universal. This was
originally proven by showing that 5-local Hamiltonians are
computationally universal, and then making a series of reductions from
3-local down to 2-local. Even now, the best known \emph{direct}
universality proof is for 3-local adiabiatic quantum
computation\cite{Daniels_thesis}. 

There is a second, less obvious justification for considering multiple
models of quantum computation. Although all of quantum algorithms
could \emph{in principle} be formulated using the quantum circuit
model, this is not how quantum computation actually developed. The
quantum algorithms for factoring and searching were discovered using
the quantum circuit model. The quantum speedups for NAND tree
evaluation and simulated annealing were discovered using quantum
walks. The quantum algorithm for approximating Jones polynomials was discovered
using topological quantum computation. History has shown us that
a particular model of quantum computation gives us a perspective
from which certain quantum algorithms are easily visible, while they
remain obscure from the perspective of other models.

Having argued the virtues of having a multiplicity of models of
quantum computation, the time has come to put this principle into
action. That is, I will propose a direction for further research based
on the premise that formulating additional models of quantum
computation is useful even if the models are not directly practical.

One striking thing about the topological model of quantum computation
is its indifference to the details of the manipulations of the
particles. Only the topology of the braiding matters, and the specific
geometry is irrelevant. Let's now push this a step further and
consider a model where even the topology is irrelevant, and the only
thing that matters is how the particles were permuted. Just as the
braiding of anyons induces a representation of the braid group, we
analogously expect that the permutation of the particles induces a
representation of the symmetric group.

Such a model is not entirely without physical motivation. The exchange
statistics of Bosons and Fermions are exactly the two one-dimensional
representations of the symmetric group. Particles with exchange
statistics given by a higher dimensional representation of the
symmetric group have been proposed in the past. This is called
parastatistics. Such particles have never been observed. However,
based on the presently understood physics, they remain an exotic, but
not inconcievable possibility\cite{Peres}. Furthermore, many
Hamiltonians in nature are symmetric under permutation of the
particles. If a Hamiltonian has symmetry group $G$, then its
degenerate eigenspaces will transform as representations of $G$, and
these representations will generically be
irreducible\cite{Hamermesh}.

More precisely, suppose a Hamiltonian $H$ on $n$ particles has a
$d$-fold degenerate eigenspace spanned by
$\phi_1(x_1,\ldots,x_n),\ldots,\phi_d(x_1,\ldots,x_n)$. We start with a
given state within that space
\[
\psi(x_1,\ldots,x_n) = \sum_{j=1}^d \alpha_j \phi_j(x_1,\ldots,x_n)
\]
Then, if we permute the particles according to some permutation $\pi$
we obtain some other state $\psi(x_{\pi(1)},\ldots,x_{\pi(n)})$.
Because the Hamiltonian is permutation symmetric, this state will lie
within the same eigenspace. That is, there are some coefficients
$\beta_1,\ldots,\beta_d$ such that
\[
\psi(x_{\pi(1)},\ldots,x_{\pi(n)}) = \sum_{j=1}^d \beta_j \phi_j(x_1,\ldots,x_n).
\]
It is clear that the dependence of $\beta_1,\ldots,\beta_d$ on
$\alpha_1,\ldots,\alpha_d$ is linear. Thus there is some matrix
$M^{\pi}$ corresponding to permutation $\pi$:
\[
\beta_i = \sum_{j=1}^d M^{\pi}_{ij} \alpha_j.
\]
It is not hard to see that the mapping from permutations to their
corresponding matrices is a group homomorphism. That is, these
$d \times d$ matrices form a representation of the symmetric group
$S_n$. If the $H$ has no other symmetries, then this representation
will generically be irreducible\cite{Hamermesh}.

As advocated above, I will not worry too much about the physical
justification for the model, but instead consider just two
questions. First, does the model lend itself to the the rapid solution
of any interesting computational problems, and second, can it be
efficiently simulated by quantum circuits? If both answers are yes,
then we obtain new quantum algorithms.

The question of what problem this model of computation lends itself to
has an obvious answer: the computation of representations of
the symmetric group. There are many good reasons to restrict our
consideration to only the irreducible representations. Any finite
group has only finitely many irreducible representations but
infinitely many reducible representations. These reducible
representations are always the direct sum of multiple irreducible
representations. Thus, by performing a computation with a reducible
representation we would merely be performing a superposition of
computations with irreducible representations.  

In chapter \ref{Jones} we saw that if we can apply a representation of
the braid group then, using the Hadamard test, we can estimate the
matrix elements of this representation to polynomial
precision. Furthermore, by sampling over the diagonal matrix elements
we can estimate the normalized trace of the representation to
polynomial precision. If we instead have a representation of the
symmetric group, the situation is precisely analogous. The trace of a
group representation is called its character. Characters of group
representations have many uses not only in mathematics, but also in
physics and chemistry. Note that, unlike the matrix elements of a
representation, its character is basis independent.

Just as with anyonic quantum computation, we can imagine the particles
initially positioned along a line. We then allow ourselves to swap
neighbors. The runtime of an algorithm is the number of necessary
swaps. Interestingly, in the parastistical model of computation, no
algorithm has runtime more than $O(n^2)$, because there are only
finitely many permutations and each of them can be constructed using at
most $O(n^2)$ swaps.

To formulate the concrete computational problems we'll need to delve a
little bit into the specifics of the irreducible representations of
the symmetric group. The irreducible representations of $S_n$ are
indexed by the Young diagrams of $n$ boxes. These are all the possible
partitions of the $n$ boxes into rows, where the rows are arranged in
descending order of length. The example $n=4$ is shown in figure
\ref{young_diagrams}. The matrix elements of these representations
depend on a choice of basis. For our purposes it is essential that the
basis be chosen so that the representation is unitary. The most widely
used such basis is called the Young-Yamanouchi
basis\cite{Hamermesh}. In this basis the irreducible representations
are sparse orthogonal matricees. For the irreducible representation
corresponding to a Young diagram $\lambda$, the Young-Yamanouchi basis
vectors correspond to the set of standard Young tableaux compatible
with lambda. These are all the numberings of boxes so that if we added
the boxes in this order, the configuration would be a valid Young
diagram after every step. This is illustrated in figure
\ref{Young_tableau}. It is not hard to see that for some $\lambda$,
the number of standard tableaux, and hence the dimension of the
representation, is exponential in $n$.

\begin{figure}
 \begin{center}
\includegraphics[width=0.5\textwidth]{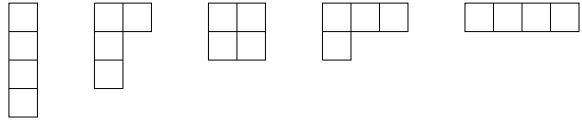}
\caption{\label{young_diagrams} The Young diagrams with four
  boxes. They correspond to the irreducible representations of $S_4$.}
\end{center}
\end{figure}

\begin{figure}
 \begin{center}
\includegraphics[width=0.6\textwidth]{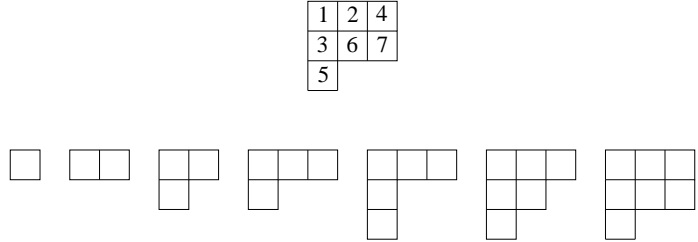}
\caption{\label{Young_tableau} Above we
  show an example Young tableau, and beneath it the corresponding
  sequence of Young diagrams from left to right.}
\end{center}
\end{figure}

Thus we can state the following computational problems regarding
$S_n$, which are solvable in $\mathrm{poly}(n)$ time in the
parastatistical model of computation. \\ \\
\noindent
\begin{minipage}[c]{\textwidth}
\textbf{Problem:} Calculate an irreducible representation for the symmetric
group $S_n$. \\
\textbf{Input:} A Young diagram specifying the irreducible
representation, a permutation from $S_n$, a pair of standard Young
tableaux indicating the desired matrix element, and a polynomially small
parameter $\epsilon$.\\
\textbf{Output:} The specified matrix element to within $\pm
\epsilon$. \\ \\
\end{minipage}
\noindent
\begin{minipage}[c]{\textwidth}
\textbf{Problem:} Calculate a character for the symmetric group $S_n$.\\
\textbf{Input:} A Young diagram $\lambda$ specifying the irreducible
representation, a permutation $\pi$ from $S_n$, and a polynomially
small parameter $\epsilon$.\\
\textbf{Output:} Let $\chi_\lambda(\pi)$ be the character, and let
$d_\lambda$ be the dimension of the irreducible representation. The
output is $\chi_\lambda(\pi)/d_\lambda$ to within $\pm \epsilon$.\\ \\
\end{minipage}
Next, we must find out how hard these problems are classically. If
classical polynomial time algorithms for these problems are known then
we have not discovered anything interesting. Without looking into the
classical computation literature there are already two things we can
say. First, as noted above, for some Young diagrams of $S_n$, the
corresponding representation has dimension exponential in $n$. Thus
the above problems cannot be solved classically in polynomial
time by directly using matrix multiplication. Second, the irreducible
representations have both positive and negative matrix elements in all
of the bases discussed in standard references\cite{Hamermesh,
  Boerner}. Thus interference effects are important, so naive Markov
chain methods will not work.

To go beyond these simple observations, we must consult the relevant
experts and body of literature. As shown in \cite{Hepler}, the problem
of exactly evaluating the characters of the irreducible
representations of the symmetric group is \#P-complete. This rules out
the possibility that some ingenious closed form expression for the
characters is buried in the math literature somewhere. The characters
are obtained in the parastatistical model of computation to only
polynomial precision. This corresponds to the precision one could
obtain by classically sampling from the diagonal matrix elements of
the representation. Thus, the most likely scenario by which these
algorithms could fail to be interesting is if the individual matrix
element of the irreducible representations of the symmetric group are
easy to compute classically, and the only reason that computing the
characters is hard is that there are exponentially many of these
matrix elements to add up. However, the sources I have consulted do
not provide any way to obtain matrix elements of the irreducible
representations of the $S_n$ in $\mathrm{poly}(n)$ time for arbitrary
permutations\cite{Moore_personal, Hamermesh, Boerner}. Furthermore,
there is a body of work on how to improve the efficiency of
exponential time algorithms for this problem\cite{Wu_Zhang1,
Wu_Zhang2, Egecioglu, Clifton}. Unless this entire body of work is
misguided, no polynomial-time methods for computing such matrix
elements were known when these papers were written.

A completeness result would provide even stronger evidence of the
difficulty of these problems. The problems of estimating Jones
polynomials discussed in section \ref{Jones} were each BQP or DQC1
complete. In general one could conjecture that for any representation
dense in a unitary group of exponentially large dimension, the problem
of estimating matrix elements to polynomial precision is BQP-complete
and the problem of estimationg normalized characters to polynomial
precision is DQC1-complete. However, because the symmetric group is
finite, no representation of it can be dense in a continuous
group. Thus it seems unlikely that the problems of estimating the
matrix elements and characters of the symmetric group are BQP-complete
or DQC1-complete. Furthermore, the fact that no parastatistical
algorithm on $n$ particles requires more than $O(n^2)$ computational
steps makes it seem unlikely\footnote{Probably one could prove a
  precise no-go theorem along these lines using the Heirarchy theorem
  for BQP. (See chapter \ref{introchap}.)} that this model is
universal.

Next we must see whether the parastatistical model of computation can
be simulated in polynomial time by standard quantum computers. If so
we obtain two new polynomial time quantum algorithms apparently
providing exponential speedup over known classical
algorithms. Normally the search for quantum algorithms is a pursuit
fraught with frustration. However, in this case we have a win-win
situation. If the parastatistical model cannot be simulated by 
quantum computers in polynomial time, then instead of quantum
algorithms we have a physically plausible model of computation not
contained in BQP, which is also very exciting.

A detailed examination of the Young-Yamanouchi matrices in
\cite{Hamermesh} makes me fairly convinced that the parastatistical
model of computation can be simulated in polynomial time by quantum
circuits. Specifically, it appears that this can be done very
analogously to the implementation of the Fibonacci representation of
the braid group in chapter \ref{Jones}. However, this is not the place
to discuss such details. Instead I will now switch teams, and take the
side of the hedgehog.

\section{Unity}

\begin{minipage}[c]{\textwidth}
\noindent
\emph{\ldots the world will somehow come clearer and we will grasp the
  true strangeness of the universe. And the strangeness will all prove
  to be connected and make sense.} \\
-E.O. Wilson \\
\end{minipage}

Upon examining the catalogue of quantum algorithms in chapter
\ref{introchap}, a striking pattern emerges. Although the quantum
algorithms are superficially widely varied, the exponential speedups
for non-oracular problems generally fall into two broad families:
the speedups obtained by reduction to hidden subgroup problems, and the
speedups related to knot invariants. Interestingly, both of these
families of speedups rely on representation theory. The speedups for
the hidden subgroup related problems are based on the Fourier
transform over groups. Such a Fourier transform goes between the
computational basis, and a basis made from matrix elements of
irreducible representations. The speedups for the evaluation of knot
invariants are based on implementating a unitary representation of the
braid group using quantum circuits. It is therefore tempting to look
for some grand unification to unite all exponential speedups into one
representation-theoretic framework.

I do not know whether such a grand unification is possible, nevermind
how to carry it out. However, I will offer a bold speculation as to a
possible route forward. Rather than starting with the Fourier
transform, lets first consider the Schur transform. Like the Fourier
transform, the Schur transform can be efficiently implented using
quantum circuits \cite{Aramsthesis}, and has applications in quantum
information processing. Furthermore, it has a more obvious connection
to multiparticle physics than does the Fourier transform. For example,
suppose we have two spin-1/2 particles. The Hilbert space of quantum
states for these spins is four dimensional. One basis we can choose
for this Hilbert space is obtained by taking the tensor product of
$\sigma_z$-basis for each spin:
\[
\begin{array}{l}
\ket{\uparrow} \ket{\uparrow} \\
\ket{\uparrow} \ket{\downarrow} \\
\ket{\downarrow} \ket{\uparrow} \\
\ket{\downarrow} \ket{\downarrow}
\end{array}
\]
Another basis can be obtained as the simultaneous eigenbasis of the
total angular momentum $\sigma_x^{(1)} \sigma_x^{(2)} + \sigma_y^{(1)}
\sigma_y^{(2)} + \sigma_z^{(1)} \sigma_z^{(2)}$ and the total
azimuthal angular momentum $\sigma_z^{(1)} + \sigma_z^{(2)}$. This
basis is
\[
\begin{array}{l}
\ket{\uparrow} \ket{\uparrow} \\
\frac{1}{\sqrt{2}} \left( \ket{\uparrow} \ket{\downarrow} + \ket{\downarrow}
\ket{\uparrow} \right) \\
\ket{\downarrow} \ket{\downarrow} \\
\frac{1}{\sqrt{2}} \left( \ket{\uparrow} \ket{\downarrow} + \ket{\downarrow}
\ket{\uparrow} \right)
\end{array}
\]
The Schur transform in this case is just the unitary change of basis
between these two bases. The general case is explained in
\cite{Aramsthesis}, and is fairly analogous.

The matrix elements appearing in this change of basis are known as
Clebsch-Gordon coefficients, and are tabulated in most undergraduate
quantum mechanics textbooks. According to \cite{Nayak}, the
coefficients appearing in the fusion rules of a topological quantum
field theory are essentially a generalization of Clebsch-Gordon
coefficients. Thus, I offer the conjecture that one could implement the
Schur transform directly using the fusion rules of some TQFT. (Since
some TQFTs are universal for quantum computation, and Schur transforms
can be efficiently computed, one could always take the circuit for
finding Schur transforms and convert it into some extremely
complicated braiding of anyons. This is not what we are looking for.)

The Schur transform is related to the Fourier transform over the
symmetric group\cite{Aramsthesis}. Thus, if a direct TQFT implementation of
the Schur transform were found then one could next look for
a direct anyonic implementations of Fourier transforms. I therefore
propose the conjecture that the two classes of quantum algorithms
correspond to the two components of a topological quantum field
theory: the knot invariant algorithms correspond to the braiding
rules, and the hidden subgroup problems correspond to the fusion
rules.

\appendix
\chapter{Classical Circuit Universality}
\label{classical_universality}

Consider an arbitrary function $f$ from $n$ bits to one bit. $f$ can
be specified by listing each bitstrings $x \in \{0,1\}^n$ such that
$f(x) = 1$. To show the universality of the 
$\{ \mathrm{AND, NOT, OR, FANOUT} \}$ gate set, we wish to construct a 
circuit out of these gates that implements $f$. We'll diagrammatically
represent these gates using
\begin{center}
\includegraphics[width=0.6\textwidth]{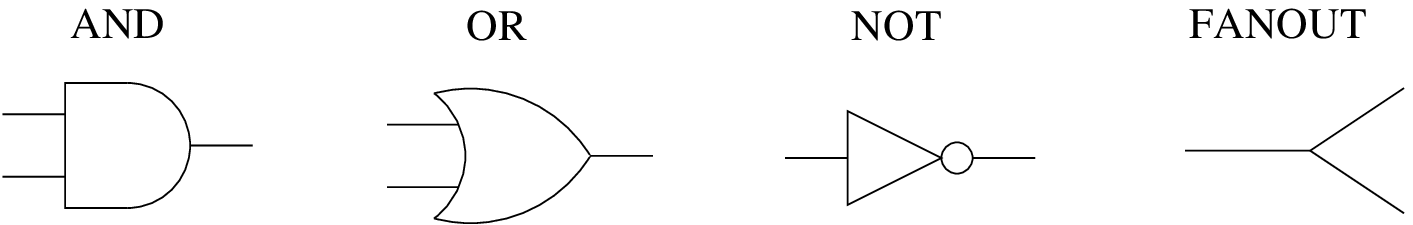}
\end{center}
From two-input AND gates we can construct an $m$-input AND gate for
any $m$. The example $m=4$ is shown below.
\begin{center}
\includegraphics[width=0.25\textwidth]{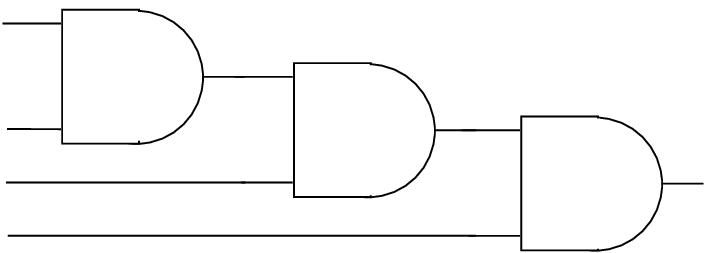}
\end{center}
An $m$-input OR gate can be constructed from 2-input OR gates, and an
$m$-output FANOUT gate can be constructed from 2-output FANOUT gates
similarly. The $m$-input AND gate accepts only the string
$111\ldots$. To accept a different string, one can simply attach NOT
gates to all the inputs which should be 0. Using AND and NOT gates,
one can thus make a circuit to accept each bitstring accepted by
$f$. These can then be joined together using a multi-input OR
circuit. FANOUT gates are used to supply the inputs to each of
these. The resulting circuit simulates $f$ as shown in figure
\ref{universal_circuit}.

\begin{figure}
\begin{center}
\includegraphics[width=0.9\textwidth]{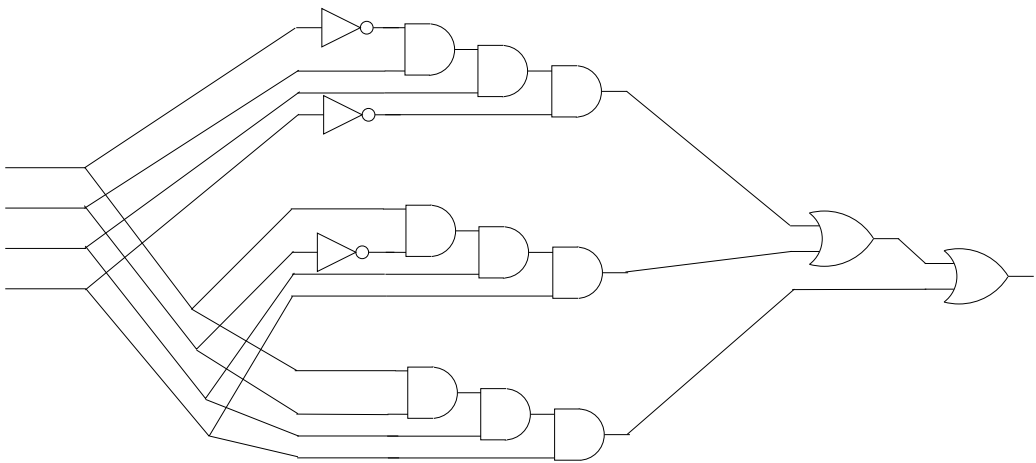}
\caption{\label{universal_circuit} The Boolean circuit with four bits
  of input and one bit of output which accepts only the inputs $0110$,
  $1011$, and $1111$ is implemented by the shown circuit. Any Boolean
  function can be implemented similarly, as described in the text.}
\end{center}
\end{figure}

To implement a function with multiple bits of output, one can consider
each bit of output to be a separate single-bit Boolean function of the
inputs, and construct a circuit for each bit of output accordingly.

\chapter{Optical Computing}
\label{optical}

Quantum mechanics bears a close resemblance to classical optics. An
optical wave associates an amplitude with each point in space. A
quantum wavefunction associates an amplitude with each possible
configuration of the system. Whereas an optical wave is a function of
at most three spatial variables, a quantum wavefunction of a system of
particles is a function of all the parameters needed to describe the
configuration of the particles. Thus classical optics lacks the
exponentially high-dimensional state space of quantum mechanics. The
similarity between classical optics and quantum mechanics is of course
no coincidence, as photons are governed by quantum mechanics. Essentially,
classical optics treats the case where these photons are independent
and unentangled, so that the intensity of light on at a given point on
the detector is simply proportional to the probability density for a
single photon to be detected at that point if it were sent through
the optical apparatus by itself.

\begin{figure}
\begin{center}
\includegraphics[width=0.45\textwidth]{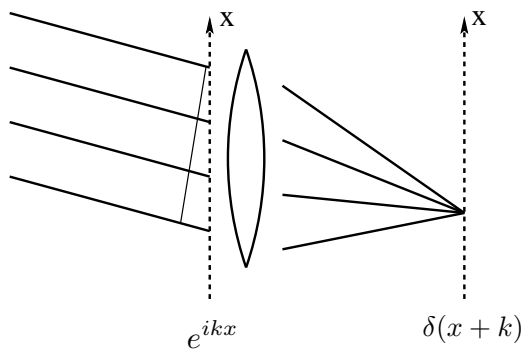}
\caption{\label{lens} A collimated beam shined at an angle toward a
  lens will have phases varying linearly across the face of the lens,
  due to the extra distance that some rays must travel. Such a beam
  gets focused to a point on the focal plane whose location depends on
  the angle at which the beam was shined onto the lens. Thus, the
  amplitude across the lens of $e^{ikx}$ gets transformed to an
  amplitude across the focal plane of $\delta(x+k)$. Since a lens is a
  linear optical device, any superposition of plane waves will be
  mapped to the corresponding superposition of delta functions. Thus
  the amplitude across the focal plane will be the Fourier transform
  of the amplitude across the face of the lens.}
\end{center}
\end{figure}

Fourier transforms are an important primitive in quantum computing,
and lie at the heart of the factoring algorithm, and several other
quantum algorithms. As shown in \cite{Shor_factoring}, quantum
computers can perform Fourier transforms on $n$-qubits using $O(n^3)$
gates\footnote{This has subsequently been improved. As shown in
  \cite{Coppersmith}, approximate Fourier transforms
  can be performed on $n$ qubits using $O(n)$ gates.}. Consider the
computational basis states of $n$-qubits as corresponding to the
numbers $\{0,1,\ldots,2^n-1\}$ via place value. The quantum Fourier
transform on $n$ qubits is the following unitary transformation.
\[
U_F \sum_{x=0}^{2^n-1} a(x) \ket{x} = 
\frac{1}{\sqrt{2^n}} \sum_{k=0}^{2^n-1} \sum_{x=0}^{2^n-1} e^{i 2 \pi
  x k/2^n} a(x) \ket{k}.
\]

The Fourier can be performed optically with a single lens, as shown in
figure \ref{lens}. Before digital computers became as powerful as they
are today, people used to develop optical schemes for analog
computation. Many of them were based in some way on the optical
Fourier transform.

Another important primitive in quantum algorithms is phase
kickback. Typically, an oracle $U_f$ for a give function $f$ acts
according to
\[
U_f \ket{x} \ket{y} = \ket{x} \ket{(y + f(x)) \ \mathrm{mod} \ 2^{n_o}},
\]
where $n_o$ is the number of output bits. Normally, one chooses $y=0$
so that the output register contains $f(x)$ after applying the
oracle. However, if one instead prepares the output register in the
state
\[
\ket{\psi} = \frac{1}{\sqrt{2^{n_o}}} \sum_{y=0}^{2^{n_o}} e^{i 2\pi
  y/2^{n_o}} \ket{y},
\]
$f(x)$ will be written into the phase instead of the qubits:
\[
U_f \ket{x} \ket{\psi} = e^{i 2 \pi f(x)/2^{n_o}} \ket{x} \ket{\psi}.
\]
This is because, for the process of adding $z$ modulo $2^{n_o}$,
$\ket{\psi}$ is an eigenstate with eigenvalue 
$e^{i 2 \pi z/2^{n_o}}$.

Phase kickback also has an optical analogue. Simply constuct a sheet
of glass whose thickness is proportional to $f(x)$. Because light
travels more slowly through glass than through air, the beam
experiences a phase lag proportional to the thickness of glass it
passes through. Now consider the following optical ``algorithm''. For
simplicity we'll demonstrate it in two dimensions, althogh it could
also be done in three. We are given a piece of glass whose thickness
is given by a smooth function $f(x)$. Because it is smooth, $f(x)$ is
locally linear, and so the sheet looks locally like a prism. By
inserting this glass between two lenses, as shown in figure
\ref{prism}, we can determine the angle of this prism
(\emph{i.e.} $\frac{df}{dx}$) by the location of the spot made on the
detector.
\begin{figure}
\begin{center}
\includegraphics[width=0.45\textwidth]{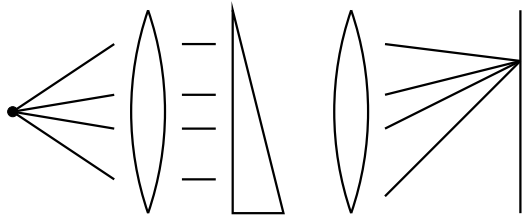}
\caption{\label{prism} At the left, a point source produces
  light. This is focused into a collimated beam by a lens. The beam
  then passes through a glass sheet which deflects the beam. The
  second lens then focuses the beam to a point on the detector whose
  location depends on the thickness gradient of the glass sheet.}
\end{center}
\end{figure}
By the analogies discussed above, this has an alternative description
in terms of Fourier transforms, and an analogous quantum algorithm, as
shown in figure \ref{quantum_prism}.

\begin{figure}
\begin{center}
\includegraphics[width=0.45\textwidth]{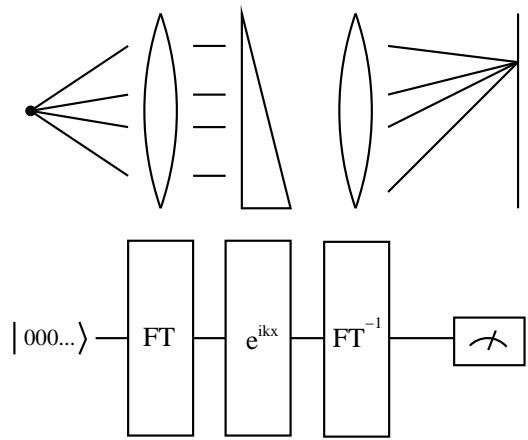}
\caption{\label{quantum_prism} Intially one starts with a given basis
  state. This is then Fourier transformed to yield the uniform
  superposition. A phase shift proportional to $f(x)$ is then
  performed. If $f(x)$ is locally linear than, the resulting shifted
  state is a plane wave $e^{i f'(x) x}$. The second Fourier transform
  then converts this to a basis state $\ket{f'(x)}$.}
\end{center}
\end{figure}

So far, this is an unimpressive accomplishment. Both determining the
angle of a prism and approximating the derivative of an oracular
function of one variable are easy tasks. Now recall the difference
between optical and quantum computing. Namely, the quantum analogue
can be extended to an arbitrary number of dimensions. In particular, a
phase proportional to $f(\vec{x})$ can easily be obtained using phase
kickback, and a $d$-dimensional quantum Fourier transform can be
achieved by applying a quantum Fourier transform to each vector
component. As a result, on a quantum computer, one can obtain the
gradient of a function of $d$ variables by generalizing the above
algorithm. This requires only a single query to the oracle to do the
phase kickback. In contrast, classically estimating the gradient of an
oracular function requires at least $d+1$ queries. This is described
in \cite{Jordan_gradient}.

A literature exists on analog optical computation. It might be
interesting to investigate whether some existing optical algorithms have
more powerful quantum analogues. Also, it seems that quantum computers
would be naturally suited to the simulation of classical optics
problems. Perhaps a quantum speedup could be obtained for optical
problems of practical interest, such as ray tracing.

\chapter{Phase Estimation}
\label{phase_estimation}

Phase estimation for unitary operators was introduced in \cite{Kit95},
and is explained nicely in \cite{Nielsen_Chuang}. Here I will describe
this method, and discuss how it can be used to measure in the
eigenbasis of physical observables.

Suppose you have a quantum circuit of $\mathrm{poly}(n)$ gates on $n$
qubits which implements the unitary transformation $U$. Suppose also
you are given some eigenstate $\ket{\psi_j}$ of $U$. You can always
efficiently estimate the corresponding eigenvalue to polynomial
precision using phase estimation. The first step in constructing the
phase estimation circuit is to construct a circuit for controlled-$U$.

Given an polynomial size quantum circuit for $U$, one can always
construct a polynomial size quantum circuit for controlled-$U$, which
is a unitary $\widetilde{U}$ on $n+1$ qubits defined by
\begin{eqnarray*}
\widetilde{U} \ket{0} \ket{\psi} & = & \ket{0} \ket{\psi} \\
\widetilde{U} \ket{1} \ket{\psi} & = & \ket{1} U \ket{\psi}
\end{eqnarray*}
for any $n$-qubit state $\ket{\psi}$. One can do this by taking the
quantum circuit for $U$ and replacing each gate with the corresponding
controlled gate. Each gate in the original circuit acts on at most $k$
qubits, where $k$ is some constant independent of $n$. Then, the
corresponding controlled gate acts on $k+1$ qubits. By general gate
universality results, any unitary on a constant number of qubits can
be efficiently implemented.

The phase estimation algorithm uses a series of controlled-$U^{2^m}$
operators for successive values of $m$. Given a quantum circuit for
$U$, one can construct a quantum circuit for $U^k$ by concatenating
$k$ copies of the original circuit. The resulting circuit for $U^k$
can then be converted into a circuit for controlled-$U^k$ as described
above.

Now consider the following circuit, which performs phase estimation to
three bits of precision.
\[
\mbox{ 
\Qcircuit @C=1em @R=.7em{
\lstick{\ket{0}}    & \qw    & \gate{H} & \ctrl{3}   & \qw        & \qw      & \multigate{2}{FT^{-1}} & \qw    & \qw \\
\lstick{\ket{0}}    & \qw    & \gate{H} & \qw        & \ctrl{2}   & \qw      & \ghost{FT^{-1}}        & \qw    & \qw \\
\lstick{\ket{0}}    & \qw    & \gate{H} & \qw        & \qw        & \ctrl{1} & \ghost{FT^{-1}}        & \qw    & \qw \\
\lstick{\ket{\psi_j}} & {/}\qw & \qw      & \gate{U^4} & \gate{U^2} & \gate{U} & \qw                    & {/}\qw & \qw
} }
\]
The top three qubits collectively form the control register. The
bottom $n$ qubits are initialized to the eigenstate $\ket{\psi_j}$ and
form the target register. The initial array of Hadamard gates puts the
control register into the uniform superposition. Thus the state of the
$n+3$ qubits after this step is 
\[
\frac{1}{\sqrt{8}} \sum_{x=0}^7 \ket{x} \ket{\psi_j}.
\]
Here we are using place value to make a correspondence between the
strings of 3 bits and the integers from 0 to 7. The
contolled-$U^{2^m}$ circuits then transform the state to
\[
\frac{1}{\sqrt{8}} \sum_{x=0}^7 \ket{x} U^x \ket{\psi_j}.
\]
$\ket{\psi_j}$ is an eigenstate of $U$, thus this is equal to
\[
\frac{1}{\sqrt{8}} \sum_{x=0}^7 \ket{x} e^{i x \theta_j} \ket{\psi_j} 
\]
where $e^{i \theta_j}$ is the eigenvalue of $\ket{\psi_j}$. Notice that
the control register is not entangled with the target
register. Performing an inverse Fourier transform on the control
register thus yields
\[
\ket{ \lceil \frac{8}{2 \pi} \theta_j \rfloor},
\]
where the notation $\lceil \cdot \rfloor$ indicates rounding to the
nearest integer. Thus we obtain $\theta$ with three bits of
precision.

Similarly, with $b$ qubits in the control register, one can obtain
$\theta_j$ to $b$ bits of precision. The necessary Hadamard and Fourier
transforms require only $\mathrm{poly}(b)$ gates to perform. However,
in order to obtain $b$ bits of precision, one must have a circuit for
controlled-$U^{2^b}$. The only known completely general way to
construct a circuit for $U^{2^b}$ from a circuit for $U$ is to
concatenate $2^b$ copies of the circuit for $U$. Thus, the total size
of the circuit for phase estimation will be on the order of $2^b g +
\mathrm{poly}(b)$, where $g$ is the number of gates in the circuit for
$U$. Thus, $\theta_j$ is obtained by the phase estimation algorithm to
within $\pm 2^{-b}$, and the dominant contribution to the total
runtime is proportional to $2^b$. In other words, $1/\mathrm{poly}(n)$
precision can be obtained in $\mathrm{poly}(n)$ time. 

For some special $U$, it is possible to construct a circuit for
$U^{2^b}$ using $\mathrm{poly}(b)$ gates. For example, this is true
for the operation of modular exponentiation. The quantum algorithm for
factoring can be formulated in terms of phase estimation of the
modular exponentiation operator, as described
in chapter five of \cite{Nielsen_Chuang}.

The phase estimation algorithm can be thought of as a special type of
measurement. Suppose you are given a superposition of eigenstates of
$U$. By linearity, the phase estimation circuit will perform the
following transformation
\[
\ket{00\ldots} \sum_j a_j \ket{\psi_j} \to \sum_j a_j \ket{\lceil \frac{8}{2 \pi}
  \theta_j \rfloor} \ket{\psi_j}.
\]
By measuring the control register in the computational basis, one
obtains the result $\lceil \frac{8}{2 \pi} \theta_j \rfloor$ with
probability $|a_j|^2$. If the eigenvalues of $U$ are separated by at
least $2^{-b}$, then one has thus performed a measurement in the
eigenbasis of $U$.

Often, one is interested in measuring in the eigenbasis of some
observable defined by a Hermition operator. For example, one may wish
to measure in the eigenbasis of a Hamiltonian, or an angular momentum
operator. This can be done by combining the techniques of phase
estimation and quantum simulation. As discussed in section
\ref{algorithms}, it is generally believed that any physically
realistic observable can be efficiently simulated using a quantum
circuit. More precisely, for any observable $H$, one can
construct a circuit for $U = e^{i H t}$, where the number of gates is
polynomial in $t$. Using this $U$ in the phase estimation algorithm,
one can measure eigenvalues of $H$ to polynomial precision.

\chapter{Minimizing Quadratic Forms}
\label{quadratic_forms}

A quadratic form is a function of the form $f(x) = x^{T} M x + b \cdot
x + c$, where $M$ is a $d \times d$ matrix, $x$ and $b$ are
$d$-dimensional vectors, and $c$ is a scalar. Here we consider
$M$, $x$, $b$, and $c$ to be real. Without loss generality we may
assume that $M$ is symmetric. If $M$ is also positive definite, then
$f$ has a unique minimum. In addition to its intrinsic interest, the
problem of finding this minimum by making queries to a blackbox for
$f$ can serve as an idealized mathematical model for numerical
optimization problems. 

Andrew Yao proved in 1975 that $\Omega(d^2)$ classical queries are
necessary to find the minimum of a quadratic form\cite{Yao}. In 2004,
I found that a single quantum query suffices to estimate the
gradient of a blackbox function, whereas a minimum of $d+1$ queries
are required classically\cite{Jordan_gradient}. One natural application for
this is gradient based numerical optimization. In 2005, David Bulger
applied quantum gradient estimation along with Grover search to the
problem of numerically finding the minimum to an objective function
with many basin-like local minima\cite{Bulger}. In the same paper he
suggested that it would be interesting to analyze the speedup
obtainable by applying quantum gradient estimation to the problem of
minimizing a quadratic form. In this appendix, I show that a simple
quantum algorithm based on gradient estimation can find the minimum of
a quadratic form using only $O(d)$ queries, thus beating the classical
lower bound.

For the blackbox for $f$ to be implemented on a digital computer, its
inputs and outputs must be discretized, and represented with some
finite number of bits. For present purposes, we shall ignore the
``numerical noise'' introduced by this discretization. In addition, on
quantum computers, blackboxes must be implemented as unitary
transformations. The standard way to achieve this is to let the
unitary transformation 
$U_f \ket{x}\ket{y} = \ket{x} \ket{(y + f(x)) \ \mathrm{mod} \ N_o}$
serve as the blackbox for $f$. Here $\ket{x}$ is the 
input register, $\ket{y}$ is the output register, and $N_o$ is the
allowed range of $y$. By choosing $y=0$, one obtains $f(x)$ in the
output register. As discussed in appendix \ref{optical}, by choosing
\[
\ket{y} \propto \sum_z e^{-i z} \ket{z},
\]
one obtains $U_f \ket{x}\ket{y} = e^{i f(x)} \ket{x}
\ket{y}$, since, in this case, $\ket{y}$ is an eigenstate of
addition modulo $N_o$. This is a standard technique in quantum
computation, known as phase kickback.

There are a number of methods by which one can find the minimum of a
quadratic form using $O(d)$ applications of quantum gradient
estimation. Since each gradient estimation requires only a single
quantum query to the blackbox, such methods only use $O(d)$ queries. 
One such method is as follows. First, we evaluate $\nabla f$ at $x =
0$. If $M$ is symmetric, $\nabla f = 2 M x
+ b$. Thus, this first gradient evaluation gives us $b$. Next we 
evaluate $\nabla f$ at $x = (1/2,0,0,\ldots)^T$. This yields $2 M
(1/2,0,0,\ldots)^T+b$. After subtracting $b$ we obtain the first column
of $M$. Similarly, we then evaluate $\nabla f$ at $(0,1/2,0,0,\ldots)^T$
and subtract $b$ to obtain the second column of $M$, and so on, until
we have full knowledge of $M$. Next we just compute $-\frac{1}{2}
M^{-1} b$ to find the minimum of $f$. This process uses a total of
$d+1$ gradient estimations, and hence $d+1$ quantum queries to the
blackbox for $f$. Note that $M$ is always invertible since by
assumption it is symmetric and positive definite.

\chapter{Principle of Deferred Measurement}
\label{deferred}

The principle of deferred measurement is a simple but conceptually
important fact which follows directly from the postulates of quantum
mechanics\cite{Nielsen_Chuang}. It says:
\\ \\
\emph{Any measurement performed in the course of a quantum computation
can always be deferred until the end of the computation. If any
operations are conditionally performed depending on the measurement
outcome they can be replaced by coherent conditional operations.}
\\ \\
For example, consider a process on two qubits, where one qubit is
measured in the computational basis, and if it is found to be in the
state $\ket{1}$ then the second qubit is flipped.
\[
\Qcircuit @C=1em @R=.7em {
& \qw & \meter \cwx[1] & \\
& \qw & \gate{X} & \qw
}
\] 
By the principle of deferred measurement, this is exactly equivalent
to
\[
\Qcircuit @C=1em @R=.7em {
& \qw & \ctrl{1} & \meter\\
& \qw & \targ & \qw
}
\]

The principle of deferred measurement allows us to immediately see
that the measurement based model of quantum computation (as described
in section \ref{measurement_based}) can be efficiently simulated by quantum
circuits. In addition, the principle of deferred measurement implies
that uniform families of quantum circuits generated in polynomial time
by quantum computers are no more powerful than uniform families of
quantum circuits generated in polynomial time by classical computers,
as shown below.

A quantum circuit can be described by series of bits corresponding to
possible quantum gates. If a bit is 1 then the corresponding gate is
present in the quantum circuit. Otherwise it is absent. Any quantum
circuit on $n$ bits with $\mathrm{poly}(n)$ gates chosen from a finite
set can be described by $\mathrm{poly}(n)$ classical bits in this
way. We can think of the quantum circuit as a series of classically
controlled gates, one for each bit in the description. Now suppose
these bits are generated by a quantum computer, which I'll call
the control circuit. By the principle of deferred measurement, these
classically controlled gates can be replaced by quantum controlled
gates. We now consider the control circuit and these controlled gates
together as one big quantum circuit. It is still of polynomial size,
and it is controlled by the classical computer that generated the
control circuit. Thus it is in BQP.



\chapter{Adiabatic Theorem}
\label{adiabatic_theorem}
\section{Main Proof}

This appendix give a proof of the adiabatic theorem due to Jeffrey
Goldstone \cite{Jeffrey}.

\begin{theorem}
Let $H(s)$ be a finite-dimensional twice differentiable Hamiltonian on
$0 \leq s \leq 1$ with a nondegenerate ground state $\ket{\phi_0(s)}$
separated by an energy energy gap $\gamma(s)$. Let $\ket{\psi(t)}$ be
the state obtained by Schr\"odinger  time evolution with Hamiltonian
$H(t/T)$ starting with state $\ket{\phi_0(0)}$ at $t = 0$. Then, with
appropriate choice of phase for $\ket{\phi_0(t)}$,
\[
\begin{array}{l}
\| \Ket{\psi(T)} - \Ket{\phi_0(T)} \| \leq
\frac{1}{T} \left[ \frac{1}{\gamma(0)^2} \left\| \frac{\ud H}{\ud s} \right\|_{s=0}
+ \frac{1}{\gamma(1)^2} \left\| \frac{\ud H}{\ud s} \right\|_{s=1}
+ \int_0^1 \ud s \left( \frac{5}{\gamma^3} \left\| \frac{\ud H}{\ud s}
\right\|^2 + \frac{1}{\gamma^2} \left\| \frac{\ud^2 H}{\ud s^2}
\right\| \right) \right]. 
\end{array}
\]
\end{theorem}

\begin{proof}
Let $E_0(t)$ be the ground state energy of $H(t)$. Let $H_f(t) = H(t)
- E_0(t) \id$. It is easy to see that if $\ket{\psi(t)}$ is the state
obtained by time evolving an initial state $\ket{\psi(0)}$ according
to $H(t)$ then
\[
\ket{\psi_f(t)} = e^{i \int_0^t E_0(\tau) \ud \tau} \ket{\psi(t)}
\] 
is the state obtained by time evolving the initial state
$\ket{\psi(0)}$ according to $H_f(t)$. Since these states differ by only
a global phase, the problem of proving the adiabatic theorem reduces
to the case where $E_0(t) = 0$ for $0 \leq t \leq T$. We assume this
from now on. Thus
\[
H(s) \ket{\phi_0(s)} = 0,
\]
so
\begin{equation}
\label{phidot}
\frac{\ud}{\ud s} \ket{\phi_0} = -G \frac{\ud H}{\ud s} \ket{\phi_0},
\end{equation}
where
\begin{equation}
\label{realG}
G = \sum_{j \neq 0} \frac{\ket{\phi_j} \bra{\phi_j}}{E_j},
\end{equation}
and $\ket{\phi_0(s)},\ket{\phi_1(s)},\ket{\phi_2(s)},\ldots$ is an
eigenbasis for $H(s)$ with corresponding energies
$E_0(s)=0,E_1(s),E_2(s),\ldots$.

We start with the Schr\"odinger equation
\[
i \frac{\ud}{\ud t} \ket{\psi} = H(t/T) \ket{\psi},
\]
and rescale the time coordinate to obtain
\[
\frac{i}{T} \frac{\ud}{\ud s} \ket{\psi} = H(s) \ket{\psi}.
\]
The corresponding unitary evolution operator $U_T(s,s')$ is the
solution of
\begin{eqnarray*}
& & \frac{i}{T} \frac{\ud}{\ud s} U_T(s,s') = H(s) U_T(s,s') \\
& & U(s,s)=1,
\end{eqnarray*}
and also satisfies
\[
-\frac{i}{T} \frac{\ud}{\ud s'} U_T(s,s') = U_T(s,s') H(s').
\]
We wish to bound the norm of
\begin{eqnarray*}
\delta_T & = & \ket{\phi_0(1)} - U_T(1,0) \ket{\phi_0(0)} \\
& = & \int_0^1 \frac{\ud}{\ud s} \left[ U_T(1,s) \ket{\phi_0(s)} \right]
\ud s.
\end{eqnarray*}
Integration by parts yields
\[
\delta_T = \frac{i}{T} \left[ U_T(1,s) G^2 \frac{\ud H}{\ud s}
\ket{\phi_0} \right]_{s=0}^{s=1} - \frac{i}{T} \int_0^1 \ud s \ U_T(1,s)
\frac{\ud}{\ud s} \left( G^2 \frac{\ud H}{\ud s} \ket{\phi_0} \right).
\]
Thus, by the triangle inequality
\begin{equation}
\label{normint}
\| \delta_T \| \leq  \frac{1}{T} \left( \left\| G^2 \frac{\ud
    H}{\ud s} \right\|_{s=1} + \left\| G^2 \frac{\ud H}{\ud s}
    \right\|_{s=0} + \int_0^1 \ud s \left\| \frac{\ud}{\ud s} \left( G^2
    \frac{\ud H}{\ud s} \ket{\phi_0} \right) \right\| \right).
\end{equation}
A straightforward calculation gives
\begin{eqnarray*}
\frac{\ud}{\ud s} \left( G^2 \frac{\ud H}{\ud s} \ket{\phi_0} \right)
& = & \ket{\phi_0} \bra{\phi_0} \frac{\ud H}{\ud s} G^3 \frac{\ud
  H}{\ud s} \ket{\phi_0} - G \frac{\ud H}{\ud s} G^2 \frac{\ud H}{\ud
  s} \ket{\phi_0} + G^3 \frac{\ud H}{\ud s} \ket{\phi_0} \bra{\phi_0}
\frac{\ud H}{\ud s} \ket{\phi_0} \\
& & + G^2 \frac{\ud^2 H}{\ud s^2} \ket{\phi_0} - 2 G^2 \frac{\ud
  H}{\ud s} G \frac{\ud H}{\ud s} \ket{\phi_0}
\end{eqnarray*}
as shown in section \ref{calc}. Thus, by the triangle inequality and
submultiplicativity,
\begin{equation}
\label{normintegrand}
\left\| \frac{\ud}{\ud s} \left( G^2
    \frac{\ud H}{\ud s} \ket{\phi_0} \right) \right\| \leq 5 \left\|
    G^3 \right\| \left\| \frac{\ud H}{\ud s} \right\|^2 + \left\| G^2
    \right\| \left\| \frac{\ud^2 H}{\ud s^2} \right\|.
\end{equation}
Substituting equation \ref{normintegrand} into equation \ref{normint}
and noting that $\| G \| = 1/\gamma$ completes the proof. 
\end{proof}

\section{Supplementary Calculation}
\label{calc}

In this section we calculate
\[
\frac{\ud}{\ud s} \left( G^2 \frac{\ud H}{\ud s} \ket{\phi_0} \right).
\]
$G$ as described in equation \ref{realG} is not convenient to work
with. Roughly speaking, $G$ represents the ``operator'' $\frac{Q}{H}$,
where $Q$ is the projector
\[
Q = \id - \ket{\phi_0}\bra{\phi_0}.
\]
However, $H$ has a zero eigenvalue and is therefore not invertible. We
can define
\[
\widetilde{H} = H + \epsilon \ket{\phi_0} \bra{\phi_0},
\]
where $\epsilon$ is some arbitrary real constant. $\widetilde{H}$ is
invertible and furthermore, 
\[
G = Q \widetilde{H}^{-1} = \widetilde{H}^{-1}Q = Q \widetilde{H}^{-1} Q.
\]
Thus,
\begin{eqnarray*}
\frac{\ud}{\ud s} G 
& = & \frac{\ud}{\ud s} \left( Q \widetilde{H}^{-1} Q \right) \\
& = & \frac{\ud Q}{\ud s} \widetilde{H}^{-1} Q + Q
\frac{\ud \widetilde{H}^{-1}}{\ud s} Q + Q \widetilde{H}^{-1} 
\frac{\ud Q}{\ud s} \\
& = & \frac{\ud Q}{\ud s} G + Q \frac{\ud \widetilde{H}^{-1}}{\ud s} Q
+ G \frac{\ud Q}{\ud s}
\end{eqnarray*}
For any invertible operator $M$,
\[
\frac{\ud}{\ud s} M^{-1} = - M^{-1} \frac{\ud M}{\ud s} M^{-1}.
\]
Thus,
\[
\frac{\ud G}{\ud s} = \frac{\ud Q}{\ud s} G - Q \widetilde{H}^{-1}
\frac{\ud \widetilde{H}}{\ud s} \widetilde{H}^{-1} Q + G \frac{\ud
  Q}{\ud s}.
\]
$\epsilon$ is an $s$-independent constant, so $\frac{\ud
  \widetilde{H}}{\ud s} = \frac{\ud H}{\ud s}$, thus
\begin{equation}
\label{intermediate_eq}
\frac{\ud G}{\ud s} = \frac{\ud Q}{\ud s} G - G \frac{\ud H}{\ud s} G
+ G \frac{\ud Q}{\ud s}.
\end{equation}
Using
\[
\frac{\ud Q}{\ud s} = - \frac{\ud \ket{\phi_0}}{\ud s} \bra{\phi_0} -
\ket{\phi_0} \frac{\ud \bra{\phi_0}}{\ud s}
\]
and equation \ref{phidot} yields
\[
\frac{\ud Q}{\ud s} = G \frac{\ud H}{\ud s} \ket{\phi_0}\bra{\phi_0} +
\ket{\phi_0}\bra{\phi_0} \frac{\ud H}{\ud s} G.
\]
Substituting this into equation \ref{intermediate_eq} yields
\begin{equation}
\label{gdot}
\frac{\ud G}{\ud s} = G^2 \frac{\ud H}{\ud s} \ket{\phi_0}
\bra{\phi_0} + \ket{\phi_0} \bra{\phi_0} \frac{\ud H}{\ud s} G^2 - G
\frac{\ud H}{\ud s} G.
\end{equation}
With this expression for $\frac{\ud G}{\ud s}$ we can now easily
calculate $\frac{\ud}{\ud s} \left( G^2 \frac{\ud H}{\ud s}
\ket{\phi_0} \right)$. Specifically,
\begin{eqnarray*}
\frac{\ud}{\ud s} \left( G^2 \frac{\ud H}{\ud s}
\ket{\phi_0} \right) & = & \frac{\ud G}{\ud s} G \frac{\ud H}{\ud s}
\ket{\phi_0} + G \frac{\ud G}{\ud s} \frac{\ud H}{\ud s} \ket{\phi_0}
+ G^2 \frac{\ud^2 H}{\ud s^2} \ket{\phi_0} + G^2 \frac{\ud H}{\ud s}
\frac{\ud \ket{\phi_0}}{\ud s} \\
& = & \ket{\phi_0} \bra{\phi_0} \frac{\ud H}{\ud s} G^3 \frac{\ud
  H}{\ud s} \ket{\phi_0} - G \frac{\ud H}{\ud s} G^2 \frac{\ud H}{\ud
  s} \ket{\phi_0} + G^3 \frac{\ud H}{\ud s} \ket{\phi_0} \bra{\phi_0}
\frac{\ud H}{\ud s} \ket{\phi_0} \\
& & + G^2 \frac{\ud^2 H}{\ud s^2} \ket{\phi_0} - 2 G^2 \frac{\ud
  H}{\ud s} G \frac{\ud H}{\ud s} \ket{\phi_0}
\end{eqnarray*}

\begin{singlespace}
\bibliographystyle{plain}
\bibliography{thesis}
\end{singlespace}

\end{document}